\documentclass{article}

\usepackage[english]{babel}

\usepackage[letterpaper,top=2cm,bottom=2cm,left=3cm,right=3cm,marginparwidth=1.75cm]{geometry}

\usepackage{amsmath}
\usepackage{times}
\usepackage{comment}
\usepackage{makecell}
\usepackage{booktabs}
\usepackage{epigraph}
\usepackage[normalem]{ulem}

\usepackage{authblk}

\usepackage{amssymb}
\usepackage{mathtools}
\usepackage{amsthm}
\usepackage{physics}
\usepackage{bbm}
\usepackage[square,numbers]{natbib}
\usepackage{graphicx}
\usepackage{subcaption}
\usepackage[colorlinks=true, allcolors=teal]{hyperref}
\usepackage[capitalise]{cleveref}
\usepackage[ruled,vlined]{algorithm2e} 

\usepackage{qcircuit}

\newcommand{\defeq}{\vcentcolon=}

\newcommand{\polylog}{\mathrm{polylog}~}

\newcommand{\dual}[1]{\hat{#1}}
\newcommand{\row}{\mathrm{row}}
\newcommand{\Irr}{\mathrm{Irr}}
\newcommand{\bC}{\mathbb{C}}
\newcommand{\N}{\mathbb{N}}

\newcommand{\Hh}{\mathcal{H}}
\newcommand{\mZ}{\mathbb{Z}}
\newcommand{\NOT}{\mathrm{NOT}}
\newcommand{\NOR}{\mathrm{NOR}}

\newcommand{\mul}{\mathrm{mul}}
\newcommand{\C}{\mathrm{C}}
\newcommand{\ceil}[1]{\lceil #1 \rceil}
\newcommand{\floor}[1]{\lfloor #1 \rfloor}
\newcommand{\bo}{\mathbbm{1}}

\newcommand{\bzr}{\mathbf{0}}

\newcommand{\mspan}{\mathrm{span}}
\newcommand{\lcm}{\mathrm{lcm}}
\newcommand{\mend}{\mathsf{END}}

\newtheorem{theorem}{Theorem}[section]
\newtheorem{proposition}[theorem]{Proposition}
\newtheorem{lemma}[theorem]{Lemma}
\newtheorem{fact}[theorem]{Fact}

\theoremstyle{definition}
\newtheorem{definition}[theorem]{Definition}

\theoremstyle{remark}
\newtheorem{remark}[theorem]{Remark}

\usepackage{xcolor}
\definecolor{yuhan}{rgb}{0.9, 0, 0.5}

\definecolor{arm}{rgb}{0.1, 0.4, 0.6}

\definecolor{jose}{rgb}{0.4, .1, 1}

\definecolor{newtext}{rgb}{0.05,0.30,0.65}

\newcommand{\dg}{\dagger}
\newcommand{\ra}{\rightarrow}
\renewcommand{\epsilon}{\varepsilon}

\SetKwRepeat{Do}{do}{while}

\title{The power of oracle access: Optimal sample and query complexity of the abelian state hidden subgroup problem}

\author[1,2]{Yuhan Liu}
\author[3]{Jose Carrasco}
\author[3]{Jens Eisert}
\author[1,2]{Armando Bellante\thanks{armando.bellante@mpq.mpg.de}}

\affil[1]{Max-Planck-Institut für Quantenoptik, Hans-Kopfermann-Straße 1, D-85748 Garching, Germany}
\affil[2]{Munich Center for Quantum Science and Technology (MCQST), Schellingstraße 4, D-80799 Munich, Germany}
\affil[3]{Dahlem Center for Complex Quantum Systems, Freie Universit\"at Berlin, 14195 Berlin, Germany}

\begin{document}
\maketitle

\begin{abstract}
In the quest to identify further quantum algorithms exhibiting superpolynomial speed-ups, a recurring theme is that the complexity of a problem is largely shaped by the input access model.
Here, we study this phenomenon for the \emph{state hidden subgroup problem} (StateHSP), a quantum generalization of the hidden subgroup problem in which the goal is to identify the symmetries of an unknown quantum state. 
For finite abelian groups, existing Fourier-sampling algorithms use $O(\log(|G|)/\epsilon)$ copies of the state, but whether this scaling is optimal has remained open. We settle the complexity of the abelian StateHSP in both the previously studied sample model and a new query model, which is a stronger and operationally natural generalization that provides access to the state-preparation unitary and its inverse. 
In the query model, we give a time-efficient quantum algorithm using $O(\log(|G/H|)/\sqrt{\epsilon})$ forward and inverse queries, and prove a matching $\Omega(\log(|G/H|)/\sqrt{\epsilon})$ lower bound which holds even in the stronger conjugate-query and controlled-query settings. 
By contrast, we show that in the sample model, $\Theta(\log(|G/H|)/\epsilon)$ copies are both sufficient and information-theoretically necessary, even if one allows for arbitrary collective measurements. Thus, the quadratic improvement in $\epsilon$ genuinely arises from coherent access to the preparation circuit. 
As applications, we obtain faster algorithms for learning stabilizer groups, locating unentanglement, and identifying hidden translation symmetries.
\end{abstract}

\epigraph{\textit{Is this a dagger which I see before me,
The handle toward my hand? Come, let me clutch thee.}}{{Shakespeare, Macbeth}}

\hypersetup{linkcolor=black}
\tableofcontents
\hypersetup{linkcolor=teal}

\section{Introduction}
The overarching challenge in quantum algorithms over the last couple of decades has not merely been to find more quantum algorithms, but to understand what makes quantum speedups possible in the first place \cite{MindTheGaps}. 
Which computational primitives are genuinely responsible for an advantage? What constitutes the input to a quantum algorithm, and how may the algorithm access it?
These questions are particularly pressing in quantum learning, where the distinction between classical descriptions, copies of quantum states, and coherent access to state-preparation procedures can fundamentally change the complexity of a task~\cite{biamonte2017quantum,QuantumData}.

We make progress on these broad questions by tightly characterizing the complexity of a fundamental quantum algorithmic problem in two distinct yet natural input models.
We focus on the abelian \emph{state hidden subgroup problem} (StateHSP), a quantum-state generalization of the celebrated abelian \emph{hidden subgroup problem} (HSP), the framework underlying Shor's algorithms for factoring and discrete logarithms~\cite{Shor-1994}. Introduced by Bouland, Giurgi\c{c}\u{a}-Tiron, and Wright~\cite{bouland2024state} and further developed in 
Ref.~\cite{hinsche2025povm}, StateHSP provides a common framework for quantum learning problems in which the 
task is to identify the symmetries of an unknown quantum state. 
More precisely, given a finite abelian group $G$ with a unitary representation $R$ and an unknown state vector $|\varphi\rangle$, the goal is to identify a subgroup $H\leq G$ such that
\begin{equation}
    \begin{aligned}
        R(h)|\varphi\rangle &=|\varphi\rangle \qquad ~~~~ \text{for every }h\in H, \text{ and}
    \\
    |\langle\varphi|R(g)|\varphi\rangle|& \leq 1-\epsilon
    \qquad \text{for every }g\in G\setminus H.
    \end{aligned}
\end{equation}
Thus, $H$ consists precisely of the exact symmetries of the state, while the promise gap $\epsilon$
separates them from every other group element.    
This formulation fully generalizes the standard HSP, which has a constant gap $\epsilon=1$~\cite{bouland2024state}, and encompasses a variety of gapped, symmetry-learning problems, including learning stabilizer groups, locating cuts across which a state is unentangled, and identifying hidden translation symmetries~\cite{hinsche2025povm}.

Existing algorithms based on weak Fourier sampling solve the problem using $O(\log(|G|)/\epsilon)$ copies of the input state~\cite{bouland2024state,hinsche2025povm}.
Although a $\Omega(\log |G|)$ lower bound for constant $\epsilon=1$ follows by reduction from Simon's problem, previous work did not determine the optimal dependence on $\epsilon$, nor did it reveal whether the size of the hidden subgroup $|H|$ influences the complexity.
This leaves two natural questions open: (1) is the $1/\epsilon$ dependence optimal, and (2) can larger hidden subgroups be easier to learn?

We add a third question: (3) can we get better quantum algorithms for StateHSP if we provide access to the input state in a model that goes beyond simple copies? 
We consider access through a state-preparation unitary $U_\varphi$ and its inverse $U_\varphi^\dagger$, where
$U_\varphi|0\rangle=|\varphi\rangle$, rather than only to independent copies of
$|\varphi\rangle$. 
Such access is natural, for instance, when the state is produced by a known unitary circuit, whose gates can be run in reverse. 
Moreover, this model is at least as powerful as having access to the copies\footnote{Indeed, one could always just prepare the copies by running the process forward.}, but the ability to run the state-preparation backwards might enable coherent primitives that are unavailable in the copy model.
Our investigation is therefore guided by the following question:

\begin{center}
    \emph{How much faster can we learn about hidden symmetries of an unknown quantum state when one may coherently prepare and unprepare it, rather than access it only through independent copies?}
\end{center}

We answer this question exactly by tightly characterizing the quantum complexity of StateHSP in both input models, and in all the problem parameters at once: the group size $|G|$, the unknown hidden subgroup size $|H|$, and the promise gap $\epsilon$.

We present algorithms that succeed with constant probability $\geq2/3$ and identify the hidden symmetries using $O({\log(|G/H|)}/{\epsilon})$ copies of the input state or $O({\log(|G/H|)}/{\sqrt{\epsilon}})$ queries to the state-preparation unitary and inverse.
Our upper bounds are both achieved by time-efficient algorithms, of which we carefully study both the quantum and classical additional costs, for any abelian group of the general $G=\mZ_{M_1} \times \mZ_{M_2} \times \dots \mZ_{M_n}$ form.
The copy-based algorithm does not require accessing more than one copy at a time, and the state-preparation-based one does not need controlled access to the state-preparation unitary. We complement these algorithms with strong matching lower bounds that require $\Omega({\log(|G/H|)}/{\epsilon})$ copies, even with collective measurements, and $\Omega({\log(|G/H|)}/{\sqrt{\epsilon}})$ queries, even in the stonger case in which the algorithm has access to conjugate queries, $U_\varphi^*$ and $U_\varphi^T$, and to the controlled versions of all these oracles. 

Taken together, these results settle both the sample and query complexity of StateHSP, tightly. 
They answer all three questions above and isolate both the source and exact extent of the advantage afforded by coherent access to the input state.
Both algorithms recover the hidden subgroup by first accumulating span-increasing generators of its dual group $H^\perp$, whose worst-case number is governed by $\log |G/H|$, and then using them to classically solve for $H$.
With copy access, finding a new and span-increasing generator incurs a $1/\epsilon$ cost.
On the other hand, access to $U_\varphi$ and $U_\varphi^\dagger$ makes the missing generators coherently detectable and hence amplitude-amplifiable, reducing the search cost to $1/\sqrt{\epsilon}$.
Our worst-case matching lower bounds show that this quadratic separation is intrinsic to the access models, and that amplitude amplification is all there is to exploit.
The input model that we introduce completes the view of StateHSP as a proper generalization of HSP, with comparable input access. 
Indeed, at $\epsilon=1$ the bounds recover the ordinary HSP complexity, while the complexities in the two input models pull apart smoothly as the promised gap weakens. 

In the remainder of this introductory section, we comment on the importance of the hidden subgroup problem and its state version, discuss the power of different input models and connect our approach to related work. Finally, we summarize our results.

\subsection{The hidden subgroup problem and its state version}

Quantum computers promise superpolynomial speedups for important computational problems.
However, today, only a few dozen quantum algorithms that exhibit substantial speedups over their classical counterparts are known
\cite{MontanaroOverview,dalzell2023quantum,GrandChallenge, childs2010quantum}, and useful quantum algorithms remain in short supply
\cite{MindTheGaps,king2025quantum,huang2025vast,kapit2025roadblocks}.
This shortfall is becoming increasingly conspicuous as the prospect of building fault-tolerant quantum computers moves closer to technological reality \cite{GoogleDynamicSurfaceCodes}.
Looking back at the roots of our field, two leading directions for advantage emerge: simulating quantum mechanics and solving problems with strong algebraic structure. 
Although the core idea of quantum computing is older and dates back to proposals for efficiently simulating quantum mechanics without the apparent exponential overhead faced by classical machines~\cite{manin1980vychislimoe,Feynman,Lloyd}, Peter Shor largely launched the field of quantum algorithms by showing that factoring and discrete logarithms can be solved in polynomial time, whereas the best known classical algorithms require superpolynomial time \cite{Shor-1994}.

The algorithmic techniques behind Shor's success were tightly linked to the ones used in the work of \citet{bernstein1993quantum} and \citet{Simonproblem94}.
Soon, these techniques were understood to generalize to a bigger problem, the abelian \emph{hidden subgroup problem} (HSP)~\cite{kitaev1995quantum, mosca1998hidden, jozsa2001quantum, NielsenChuang}. 
The formal HSP statement is as follows.

\begin{definition}[Hidden subgroup problem (HSP)] 
    Let $G$ be a finite group and let $H\leq G$ be a subgroup of $G$.
    Let $f:G\to X$ be a function from the group to a finite set $X$, with the  promise that
    \begin{align}
        \forall g_1, g_2 \in G, \quad f(g_1) = f(g_2) \iff \exists h \in H\, \text{ s.t }\, g_1=g_2h.
    \end{align}
    The problem is to identify $H$.
\end{definition}

In words, $f$ is constant on the cosets of $H$ and takes distinct values on distinct cosets, hiding the subgroup this way.
In the quantum setting, algorithms access the function $f$ through the standard reversible oracle $O_f\colon\ket{g}\ket{b}\mapsto \ket{g}\ket{b +f(g)}$, where the elements of $X$ are represented by mutually orthogonal computational-basis states, and the addition is defined by the bitwise $\operatorname{XOR}$.

We say that an algorithm solves the HSP on a group $G$ \emph{efficiently} if, with high probability, it outputs a generating set for $H$ using $\polylog |G| $ queries to the self-inverse oracle $O_f$ and $\polylog |G|$ additional quantum and classical operations.
When $G$ is finite abelian, HSP can be solved efficiently by \emph{weak Fourier sampling}, the algorithmic technique behind Shor, Simon, and Bernstein-Vazirani.
On the other hand, the finite non-abelian case, which contains graph isomorphism and central lattice problems, has resisted three decades of efforts and remains a big open question in general~\cite{ettinger2000quantum, ettinger2004quantum, regev2004quantum, moore2008symmetric, childs2005quantum, childs2025lecture, childs2010quantum}.
Today, HSP remains a useful template to search for quantum advantage, but progress on this framework remains hard-won.

Recently, Bouland, Giurgi\c{c}\u{a}-Tiron, and Wright~\cite{bouland2024state} introduced the \emph{state hidden subgroup problem} (StateHSP). 
This can be seen as a many-body version of HSP, in which a quantum state takes over the role of the hiding function. 
Instead of evaluating a function on group elements, one acts on the state with a unitary representation of the group: acting with an element of the hidden subgroup leaves the state invariant, while acting with any element outside the subgroup perturbs the state by at least $\epsilon$ in fidelity.
The problem is defined as follows.

\begin{definition}[State hidden subgroup problem (StateHSP)~{(formulation of Ref.~\cite{hinsche2025povm}, sample model)}]
\label{def: StateHSP copies}
    Let $G$ be a finite group with a unitary representation $R:G\to \mathrm{U}(\mathcal{H})$ acting on a Hilbert space $\mathcal{H}$, and let $H \leq G$ be a subgroup of $G$. 
    Assume access to copies of an unknown quantum state vector $\ket{\varphi} \in \mathcal{H}$ that is promised to satisfy the following properties, for a known parameter $\epsilon\in(0,1]$:
    \begin{enumerate}
        \item $\forall h \in H, \quad R(h)\ket{\varphi} = \ket{\varphi}.$
        \item $\forall g \not\in H, \quad \abs{\bra{\varphi}R(g)\ket{\varphi}} \leq 1-\epsilon.$
    \end{enumerate}
    The problem is to identify $H$.
\end{definition}

The new formulation generalizes HSP, which reduces to StateHSP.
The reduction proceeds as follows.
The input state vector $\ket{\varphi}={|G|^{-1/2}}\sum_{g \in G} \ket{g}\ket{f(g)}$ can be created with one query to $O_f$, and if we act on the first register with the regular representation $R(g_1)\ket{g} = \ket{g+g_1}$, the promise is satisfied with a constant gap $\epsilon=1$.

At the same time, this new formulation paves the way for efficient algorithms for applications in physics, connecting the two oldest research lines in quantum computing. 
As of today, researchers have shown how StateHSP encompasses several symmetry-learning problems, including learning stabilizer groups, locating hidden tensor-product structure, and identifying translational symmetries~\cite{bouland2024state, hinsche2025povm}.
Its non-abelian version, instead, has recently been leveraged by \citet{lee2025learning} to provide algorithms for learning stabilizers beyond Pauli, and by \citet{gheorghiu2026quantum} to explore the complexity of quantum state isomorphism under a group action.
Any improvement to StateHSP would directly translate to its applications.

While the general non-abelian problem remains hard, \citet{bouland2024state} and \citet{hinsche2025povm} extended weak Fourier sampling to solve finite abelian StateHSP using $O(\log(|G|)/\epsilon)$ copies of the input state, focusing on time-efficient implementations for additive $\mZ_2^n$.
However, the exact complexity of this problem remained open.
Prior to our work, \citet{bouland2024state} gave a sample complexity lower bound of $\Omega(\log(|G|)/\log\log(|G|))$ for constant $\epsilon <1/2$, by reduction from testing for bipartite productness~\cite{jones2025testing}.
For constant $\epsilon=1$, one can establish a stronger $\Omega(\log|G|)$ lower bound by reduction from Simon's problem~\cite{koiran2005quantum}.

Beyond determining the exact dependency on the promise gap $\epsilon$, one can also wonder if the size of the hidden subgroup, relative to the size of the group, $|G/H|$ plays a role in the problem's complexity.
The problem might be easier for larger $|H|$, and light hints come from at least two different places.
First, Simon's classical hardness proof heavily relies on $|H|$ being very small, so that the function could hide one of exponentially many candidates~\cite{Simonproblem94, childs2025lecture}.
Second, \citet{jones2025testing} showed that bipartite productness testing is harder than multipartite productness testing, and in StateHSP, this corresponds to small and large values of $|H|$, respectively.
Integrating $|H|$ in the solution of StateHSP might not look straightforward. 
Indeed, one needs to do so without knowing the cardinality of the hidden subgroup beforehand.
In this work, we succeed in taking $|G/H|$ into account by introducing a new stopping strategy.

\subsection{The power of different access models}

In the previous section, we remarked how HSP reduces to StateHSP.
However, the reduction could be made cleaner by slightly modifying the input access to StateHSP and making the state-preparation unitary and its inverse available to the algorithms.

Indeed, access to a state-preparation unitary is already implicit in the standard formulation of HSP. 
Given an oracle $O_f\colon\ket{g}\ket{b}\mapsto \ket{g}\ket{b +f(g)}$ taking distinct values on the cosets of a hidden subgroup $H$, the canonical quantum algorithm uses the oracle to prepare coset states; it does not receive copies of these states for free. 
From this perspective, the copy-access formulation of StateHSP discards part of the coherent access available in the original HSP. 
The model studied here retains it as a more natural generalization.

\begin{definition}[StateHSP with 
 access to the state-preparation unitaries (query model)]
\label{def: StateHSP with circuit}
    Let $G$ be a finite group with a unitary representation $R:G\to \mathrm{U}(\mathcal{H})$ acting on a Hilbert space $\mathcal{H}$, and let $H \leq G$ be a subgroup of $G$. 
    Assume query access to a unitary $U_\varphi$ that prepares an unknown quantum state vector $\ket{\varphi} \in \mathcal{H}$ (\emph{i.e.,} $|\varphi\rangle=U_\varphi|0\rangle$) and to its inverse $U_\varphi^{-1}$. 
    This state is promised to satisfy the following properties, for a known parameter $\epsilon \in (0,1]$:
    \begin{enumerate}
        \item $\forall h \in H, \quad R(h)\ket{\varphi} = \ket{\varphi}.$
        \item $\forall g \not\in H, \quad \abs{\bra{\varphi}R(g)\ket{\varphi}} \leq 1-\epsilon.$
    \end{enumerate}
    The problem is to identify $H$.
\end{definition}

This is not merely a formal strengthening of the input model. In many quantum-algorithmic and experimental settings, the state is produced by a circuit or device that can be run coherently, and the preparation procedure -- rather than a collection of independently supplied states -- is the natural object to which one has access.
Moreover, the distinction between receiving copies of a state and accessing its preparation circuit is operationally fundamental. Independent copies permit repeated measurements, including arbitrary collective measurements, but they do not allow the algorithm to coherently reverse the preparation process. Access to $U_\varphi$ and $U_\varphi^{\dagger}$, by contrast, enables interference between different calls and makes routines like
amplitude amplification available.
Recent work has begun to reveal the importance of such distinctions. 
Tang and Wright have recently studied how access to a state-preparation unitary $U$ and its inverse $U^\dagger$ changes the complexity of amplitude amplification and estimation~\cite{TangWriteInputModels,RandomPurificationTang,RandomPurificationSimple}.
Along similar lines, previous work by Kothari and O'Donnell has shown how access to the state-preparation circuit can help in mean estimation~\cite{kothari2023mean};  \citet{van2023quantum} showed how state-preparation circuits can help improve quantum state tomography; and Grewal and Liang
have investigated the task of learning unknown quantum channels under different forms of query access~\cite{grewal2025query}.
On a similar spirit, Tang, Wright, and Zhandry have investigated how access to $U^*$ and $U^T$ can change the complexity of a problem, also giving rise to the powerful idea of random purifications~\cite{RandomPurificationTang, RandomPurificationSimple}.
Finally, while it might be impossible to build controlled queries for all black-box oracles~\cite{araujo2014quantum}, Tang and Wright recently showed that control does not help for a large class of problems~\cite{tang2025controlled}.

These results suggest that access models are not merely technical choices in the formulation of a problem: they can determine which quantum algorithmic primitives are available and, ultimately, which speedups are possible.
In the context of HSP, Brassard and H{\o}yer have been the first to exploit this additional structure explicitly: combining Simon's algorithm with amplitude amplification, they have obtained an exact worst-case algorithm for Simon's problem \cite{brassard1997exact}. 

Our work continues this line of thought by extending the underlying amplification strategy from $\mathbb Z_2^n$ and the exact promise $\epsilon=1$ to arbitrary finite abelian groups and the entire range $\epsilon\in(0,1]$. The resulting quadratic improvement in $\epsilon$, together with its matching lower bound, which holds even with access to conjugate queries and their controlled versions, shows precisely how much computational power this coherent access provides.

Further evidence for the naturalness of this access model comes from concurrent work on \emph{quantum state isomorphism} \cite{AlexandruQuantumStateIsomorphism}. There, one is given states $\ket\psi$ and $\ket\varphi$ together with a group action $R$, and asked to identify a hidden element $s\in G$ satisfying $R(s)\ket\psi=\ket\varphi$. Access to the preparation unitaries $U_\psi$ and $U_\varphi$ makes it possible to coherently combine the states into superpositions, like $\frac{1}{\sqrt2}(\ket0\ket\psi+\ket1\ket\varphi)$, that cannot straightforwardly be prepared from independent copies alone. 
Although the algorithmic techniques differ from our amplitude-amplification-based filtering procedure, the two results point towards a common principle: for symmetry-identification problems, the preparation circuit may be the natural quantum input, and retaining coherent access to it may expose algorithmic possibilities hidden by the copy model.

Throughout the remainder of this work, in both models, the representation is also considered part of the problem specification and is provided as the controlled unitary
\begin{align}
    U_R=\sum_{g\in G}\ketbra{g}{g}\otimes R(g).
\end{align}
We assume that the algorithm has access to $U_R$ and $U_R^{-1}$, and the overall time-efficiency of the approach depends on the availability of polylogarithmic-size implementations of the controlled representation action.
For instance, the applications we consider in \cref{sec: applications} admit efficient implementations.

\subsection{Summary of results}

We determine the complexity of the abelian StateHSP in both access models, and simultaneously in all problem parameters, tightly up to constant factors.
Throughout, $G$ is a finite abelian group, written as $G \cong Z_{M_1} \times \dots \times Z_{M_n}$ for arbitrary positive integer $M_1, \dots, M_n$, and $H \leq G$ is the hidden subgroup, whose order is not known to the algorithm. 
\Cref{tab: results summary} summarizes the results concisely.
What follows is a technical overview.

\begin{table}[t]
    \centering
    \resizebox{\linewidth}{!}{%
    \begin{tabular}{|c|c|c|c|}
    \hline
        Access model & Previously known & Upper bound (this work) & Lower bound (this work)  \\ \hline
        \makecell{Copies of the state vector $\ket{\varphi}$\\ (sample complexity)} 
        & \makecell{$O\left(\frac{\log |G|}{\epsilon}\right)$ \small \cite{bouland2024state,hinsche2025povm}\\[4pt] $\Omega \left(\frac{\log |G|}{\log\log |G|}\right)$ \small \cite{bouland2024state}}
        & \makecell{ $O\left(\frac{\log|G/H|}{\epsilon}\right)$ \\[4pt] \small  [\cref{theorem: stateHSP copies z2}]} 
        & \makecell{$\Omega\left(\frac{\log|G/H|}{\epsilon}\right)$ \\ [4pt] \small [\cref{theorem: copy lower bound}]} 
        \\
        \hline
        \makecell{State-preparation unitary $U_\varphi^{\pm1}$\\ (query complexity)} 
        & \makecell{\small previously \\ \small not studied} 
        & \makecell{$O\left(\frac{\log |G/H|}{\sqrt{\epsilon}}\right)$ \\ [4pt] \small [\cref{theorem: stateHSP unified}]} 
        & \makecell{$\Omega\left(\frac{\log |G/H|}{\sqrt{\epsilon}}\right)$ \\ [4pt] \small  [\cref{thm: lower bound}]} 
        \\
        \hline
    \end{tabular}
    }
    \caption{Sample and query complexity of the abelian StateHSP at success probability $\geq \frac{2}{3}$. 
    Previous work established a $O({\log(|G|)}/{\epsilon})$ sample complexity upper bound via Fourier sampling~\cite{bouland2024state,hinsche2025povm} and a $\Omega \left(\frac{\log |G|}{\log\log |G|}\right)$ sample complexity lower bound by reduction to testing bipartite entanglement~\cite{bouland2024state, jones2025testing}. However, the optimal dependency on the $\epsilon$ gap and the hidden subgroup's size $|H|$, as well as the query complexity, were open.
    Our work settles both complexities tightly, up to constants.
    Our query lower bound holds with controlled access to $U_\varphi^{\pm1}$, yet our upper bound does not require controlled access, showing that control does not help further.
    Both our upper bounds are achieved by time-efficient algorithms whenever the representation unitary $U_R$ admits a polylogarithmic-size circuit implementation.}
    \label{tab: results summary}
\end{table}

\subsubsection{Upper bounds}
Our main new algorithmic contributions are (1) an adaptive stopping strategy to make the complexity actually scale with $\log |G/H|$, without previous knowledge of $|H|$, and (2) a Simon-meets-Grover approach that introduces fixed-point amplitude amplification in the weak Fourier sampling scheme to quadratically improve the dependency on the promise gap $\epsilon$ in the query model.
Besides optimizing for sample and query complexity, we carefully bound the additional quantum and classical resources for all of our algorithms.
Our tight bounds became possible through a detailed understanding of the primitives underlying abelian HSP and StateHSP algorithms. 
We provide our summary and intuitions below.

\paragraph{Fourier sampling background}
The standard approach of solving abelian HSP is \emph{weak Fourier sampling}. 
Using the quantum Fourier transform and one query to the input oracle $O_f$, one constructs a quantum circuit to sample from the hidden subgroup's dual subgroup $H^\perp$, namely the subgroup of characters that are trivial on $H$.
After collecting sufficiently many samples from $H^\perp$, a classical algorithm can output a generating set for the hidden subgroup $H$.
\Cref{fig: dual sampling} shows this process.

\begin{figure}[t]
    \centering
    \begin{subfigure}{0.3\linewidth}
        \includegraphics[width=\linewidth]{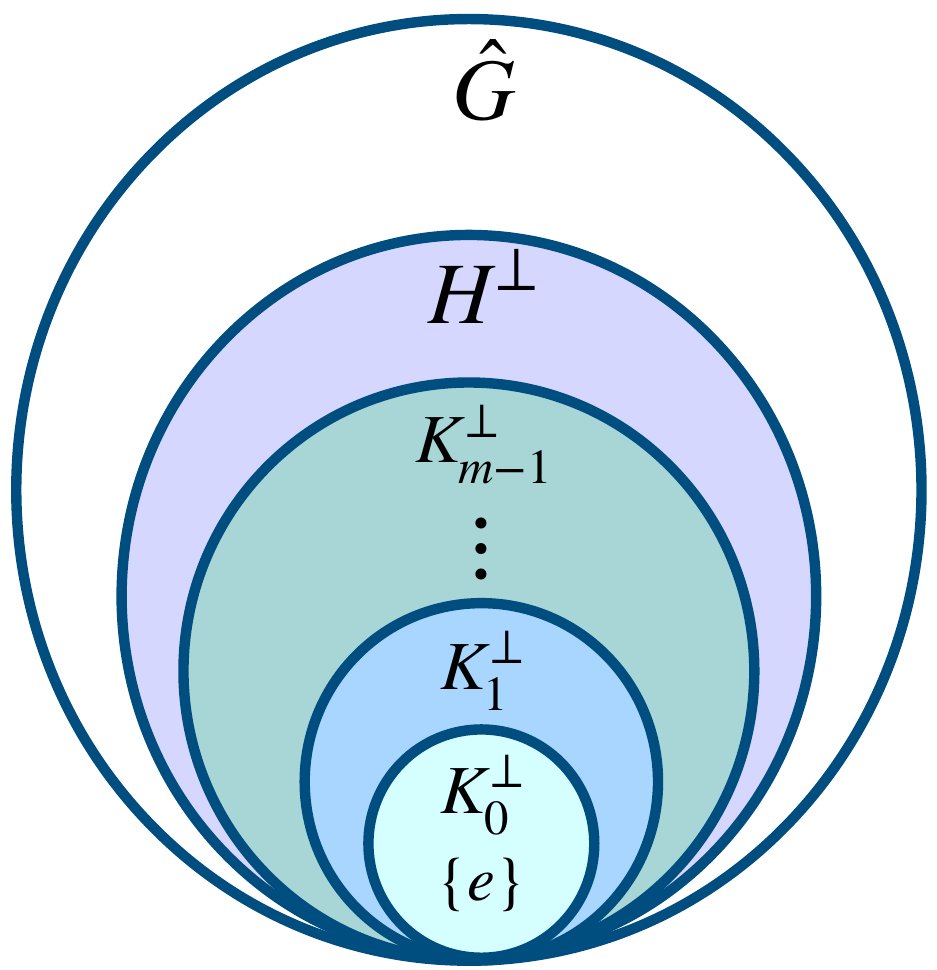}
        \caption{Dual subgroups.}
        \label{fig: dual hierarchy}
    \end{subfigure}
    \hspace{3cm}
    \centering
    \begin{subfigure}{0.3\linewidth}
        \includegraphics[width=\linewidth]{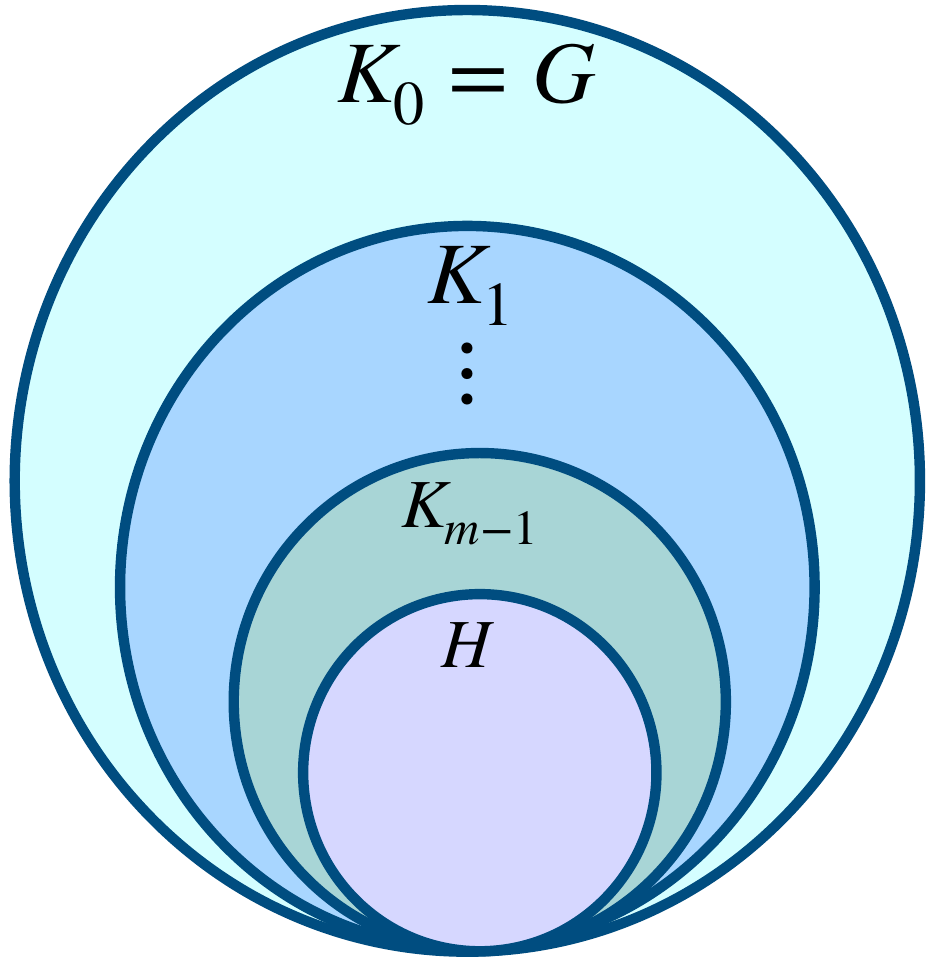}
        \caption{Main subgroups.}
        \label{fig: main hierarchy}
    \end{subfigure} 
    \caption{High-level picture of the Fourier sampling strategy. Fourier sampling enables sampling elements of the dual group $H^\perp$.
    At the beginning, before the first sample, one assumes that $H^\perp$ is just the identity $\{e\}$, and that the hidden subgroup is the whole group $G$. 
    Every successive span-increasing sample updates our beliefs by increasing the size of the current dual $K^\perp$ and reducing the size of the current main $K$.
    Eventually, the samples collectively span the whole dual $H^\perp$ and exactly pinpoint $H$. The figure shows the evolution of the beliefs throughout the algorithm, highlighting corresponding dual and main subgroups with matching colors.
    }    
    \label{fig: dual sampling}
\end{figure}

The reason why $O(\log|G/H|)$ queries to $O_f$ suffice is as follows.
The sampling circuit uses one oracle call and produces samples \emph{uniformly}: each element of $H^\perp$ is observed with probability $|H|/|G|$. 
By Lagrange's theorem, any set of $\floor{\log_2|G/H|}$ independent samples spans $H^\perp$ entirely.
Additionally, by the same theorem, a span-increasing sample appears with probability $\geq 1/2$ until the span is complete.
Therefore, $O(\log{|G/H|})$ samples suffice to obtain an independent set of generators for $H^\perp$ with high constant success probability.

In the abelian StateHSP, one can construct an analogous sampler using the quantum Fourier transform and one copy of the input state.
However, the promise gap $\epsilon \in (0,1]$ perturbs the sampling distribution, which is not necessarily uniform anymore.
The recent work of \citet{hinsche2025povm} strengthens the perturbation analysis of \citet{bouland2024state}, showing that a span-increasing sample appears with probability $\geq \epsilon/2$.
For $\epsilon=1$, this recovers the standard HSP setting, and we offer a qualitative visual interpretation in \Cref{fig: probability distributions}.
Intuitively, the algorithm must collect $O(\log{|G/H|})$ span-increasing generators, but each span-increasing sample appears with probability $\geq \epsilon/2$. 
This leads to a copy complexity of $O(\log{(|G/H|)}/\epsilon)$.

Since the cardinality of $H$ is unknown in advance, previous algorithms aimed for $O(\log |G|)$ samples, fixing a redundant sampling budget before running the algorithm.

\paragraph{Adaptive budgeting: knowing when to stop}
To make the complexity actually scale as $O(\log{|G/H|})$ without knowing $|H|$, we propose a simple yet effective stopping strategy: allocate an \emph{adaptive} sampling budget. 
First, we initialize the budget to a constant. 
Then, we decrease the budget by $1$ after collecting a batch of $\ceil{{2\ln(2)}/{\epsilon}}$ many samples, 
which contains a span-increasing sample with probability $\geq 1/2$, until no more span-increasing samples are available. 
If any of the samples in the batch increases the current span, which can be checked efficiently classically, we encourage further discovery by increasing the remaining budget by $3$. 
Eventually, $H^\perp$ is completely spanned and the budget runs out.
By modeling this process like a random walk, we bound the overall failure probability as a function of the initial budget.
If we want the algorithm to succeed with arbitrary probability $\geq 1-\delta$, the budget shall be initialized to $3\ceil{\log_2 \frac{1}{\delta}}$.

This mechanism allows us to improve the \emph{sample} complexity of Abelian StateHSP.

\begin{theorem}[StateHSP with copies, informal version of~\cref{theorem: stateHSP copies z2}]
\label{thm:stateHSP-c}
    Consider a \textsc{StateHSP} instance as in \cref{def: StateHSP copies}, with $G$ finite abelian and hidden subgroup $H\le G$. Let $\delta\in(0,1]$. 
    Then, there exists a quantum algorithm that identifies $H$ with probability at least $1-\delta$, using 
    \begin{equation}
    t=O\left(\frac{\log|G/H| + \log\frac{1}{\delta}}{\epsilon}\right)
    \end{equation}
    copies of the input state $|\varphi\rangle$, together with $t$ applications of $U_R$ and Quantum Fourier Transforms (QFTs), and $O(t \log^3(|G|)~\polylog M)$ classical operations. 
    $M$ is the least common multiple of the cyclic orders of $G$. 
\end{theorem}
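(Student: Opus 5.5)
The plan combines three ingredients: the single-copy weak Fourier sampler, the span-increase guarantee of \citet{hinsche2025povm}, and the adaptive budget described above. The budget is analysed as a random walk.

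\textbf{The sampler.} For $G\cong\mZ_{M_1}\times\dots\times\mZ_{M_n}$ we use one copy of $\ket{\varphi}$ per sample:
\begin{enumerate}
\item prepare $|G|^{-1/2}\sum_g\ket{g}$ with a QFT;
\item apply $U_R$;
\item apply the inverse QFT on the group register and measure a character $\chi\in\dual G$.
\end{enumerate}
The outcome distribution is $p(\chi)=\bra{\varphi}\Pi_\chi\ket{\varphi}$, where $\Pi_\chi=|G|^{-1}\sum_g\overline{\chi(g)}R(g)$ is the isotypic projector. Since $R(h)\ket\varphi=\ket\varphi$ for all $h\in H$, $p$ is supported on $H^\perp$.

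Now let $K^\perp\le H^\perp$ be the span of the current samples, with $K^\perp\neq H^\perp$. The probability of landing in $K^\perp$ is $\bra\varphi\Pi_{K}\ket\varphi$, where $\Pi_K=|K|^{-1}\sum_{k\in K}R(k)$. Because $H\lneq K$, at least half of the elements of $K$ lie outside $H$, and each of these has $|\bra\varphi R(k)\ket\varphi|\le 1-\epsilon$. Hence $\bra\varphi\Pi_K\ket\varphi\le 1-\epsilon/2$, which is the bound of \citet{hinsche2025povm}. So each sample is span-increasing with probability at least $\epsilon/2$. A batch of $b=\ceil{2\ln 2/\epsilon}$ independent samples then contains a span-increasing one with probability at least $1-(1-\epsilon/2)^b\ge 1/2$.

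\textbf{Counting span increases.} By Lagrange's theorem every span increase at least doubles $|K^\perp|$. Hence at most $\log_2|G/H|$ batches can be successful.

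\textbf{Budget and random walk.} The budget starts at $B_0=3\ceil{\log_2(1/\delta)}$. Each batch costs $1$, and a successful batch adds $3$. Failure means the budget hits $0$ while $K^\perp\neq H^\perp$. Up to that point, each batch independently succeeds with probability at least $1/2$, so the net budget change per batch stochastically dominates a walk with steps $+2$ (probability $1/2$) and $-1$ (probability $1/2$). This walk has positive drift $1/2$.

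I will bound the probability that the walk started at $B_0$ ever reaches $0$ by the gambler's-ruin root. This is the root $r\in(0,1)$ of $\tfrac12 r^{2}+\tfrac12 r^{-1}=1$, giving $r=(\sqrt5-1)/2$, so the failure probability is at most $r^{B_0}$. Since $r^3<1/2$, we get $r^{B_0}\le\delta$. The coupling is needed because success probabilities are only lower-bounded and change with $K$. I will handle it with the supermartingale $r^{B_t}$, stopped when $K^\perp=H^\perp$.

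\textbf{Sample count.} Once $K^\perp=H^\perp$, every further batch decreases the budget. The total number of batches is therefore at most $B_0+3\log_2|G/H|$, because the budget is spent one unit per batch and receives at most $3$ per increase. This gives $t=b\cdot(B_0+3\log_2|G/H|)=O((\log|G/H|+\log(1/\delta))/\epsilon)$ copies, with one $U_R$ and QFTs per copy.

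\textbf{Classical cost.} Deciding whether a new $\chi$ enlarges the span, and finally computing $H=(K^\perp)^\perp$, are membership and kernel problems over $\mZ_{M_1}\times\dots\times\mZ_{M_n}$. I would keep the generators in Hermite/Smith normal form over $\mZ_M$, with $M=\lcm(M_i)$, embedding each $\mZ_{M_i}$ via the scaling $M/M_i$. Incrementally updating an $n\times n$ HNF costs $O(n^3)$ operations on $O(\log M)$-bit integers. Since $n\le\log_2|G|$, this gives $O(\log^3|G|\,\polylog M)$ per sample and $O(t\log^3(|G|)\polylog M)$ in total, including the final kernel computation.

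\textbf{Main obstacle.} I expect the delicate point to be the random-walk argument rather than the sampler. The per-batch success probabilities depend on the history, and the stopping time is random, so the supermartingale/optional-stopping argument must be set up carefully. The classical HNF bookkeeping over mixed moduli is the secondary technical burden.
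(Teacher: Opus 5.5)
Your proposal is correct and follows the paper's proof. You use the same one-copy Fourier sampler, the same $\epsilon/2$ anticoncentration bound via $\frac{1}{|K|}\sum_{k\in K}\bra{\varphi}R(k)\ket{\varphi}$, batches of $\lceil 2\ln 2/\epsilon\rceil$ samples, and the same $+2/-1$ budget walk started at $3\lceil\log_2(1/\delta)\rceil$; your supermartingale $r^{B_t}$ with $r=(\sqrt5-1)/2$ is the paper's finite-horizon induction with potential $2^{-k/3}$, just with the optimal base.

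One caveat on the classical side: over $\mZ_M$ with composite $M$, a Hermite-type echelon form does not support greedy membership testing. For example, over $\mZ_4$ the row $(2,1)$ spans $(0,2)$, yet reducing $(0,2)$ against it leaves a nonzero remainder. You therefore need the Howell normal form with its extended-row property, as the paper uses, or equivalently the integer Hermite form of the full-rank lattice generated by the lifted rows together with $M\mathbb{Z}^n$.
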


\paragraph{Simon-meets-Grover}
Now turning to the upper bound in the query model. We keep the same adaptive budgeting strategy as in the sample model. 
However, to gain the quadratic advantage in $\epsilon$, we speed up the procedure that obtains a span-increasing sample with probability $\geq 1/2$.
Instead of collecting a batch of $O(1/\epsilon)$ samples, we coherently flag the span-increasing samples with a quantum circuit---a \emph{subspace identifier}---and boost the probability of sampling a new one through fixed-point amplitude amplification~\cite{yoder2014fixed, gilyen2019quantum}. 
The result is that a span-increasing sample can be obtained with a single deep circuit that makes $O(1/\sqrt{\epsilon})$ calls to the state-preparation unitary and its inverse.
We study both the classical and quantum resources required to implement the subspace identifier and execute the fixed-point amplification, ensuring that our algorithms remain time-efficient.
In the process, we had to modify fixed-point amplitude amplification to suit our needs (the details are in \cref{apx: fixed point amp amp}).
The additional classical and quantum resources keep a polylogarithmic scaling in the group size.

\begin{theorem}[StateHSP with state-preparation unitaries, informal version of~\cref{theorem: stateHSP unified}]
\label{thm:stateHSP-u}
    Consider a \textsc{State\-HSP} instance as in \cref{def: StateHSP with circuit}, with $G$ finite abelian and hidden subgroup $H\le G$. Let $\delta\in(0,1]$. 
    Then, there exists a quantum algorithm that identifies $H$ with probability at least $1-\delta$, using 
    \begin{equation}
    T=O\left(\frac{\log|G/H| + \log \frac{1}{\delta}}{\sqrt{\epsilon}}\right)
    \end{equation}
    queries to $U_\varphi^{\pm1}$, together with $T$ applications of $U_R^{\pm1}$ and QFTs,  $O(T\log^2(|G|)~\polylog M)$ additional elementary quantum gates, and $O(T_0\log^3(|G|)~\polylog M)+O(\frac{1}{\sqrt{\epsilon}}~\polylog\frac{1}{\sqrt{\epsilon}})$ classical operations, where $M$ is least common multiple of cyclic orders of $G$ and $T_0=O(\log |G/H| + \log\frac{1}{\delta})$.
\end{theorem}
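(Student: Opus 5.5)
The plan is to reuse the adaptive-budget outer loop from \cref{thm:stateHSP-c} unchanged, and replace only the inner subroutine. That subroutine must return a span-increasing element of $H^\perp$ with probability $\geq 1/2$. In the sample model it was a batch of $\ceil{2\ln(2)/\epsilon}$ Fourier samples; here it becomes a single fixed-point amplitude-amplified Fourier sample costing $O(1/\sqrt{\epsilon})$ queries.

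\textbf{Unitary sampler and good subspace.} First I write the weak Fourier sampling circuit as a unitary $A$: prepare a uniform superposition over $G$ (QFT on $\ket{0}$), apply $U_\varphi$ to the second register, apply $U_R$, and apply the inverse QFT to the group register. Measuring $A\ket{0}$ yields a character $\chi$ with probability $p(\chi)=\frac{1}{|G|}\sum_g \overline{\chi(g)}\bra{\varphi}R(g)\ket{\varphi}$. This distribution is supported on $H^\perp$, since averaging over $H$ kills characters nontrivial on $H$.

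Given the current span $K^\perp=\langle \chi_1,\dots,\chi_k\rangle$, I take the good subspace to be spanned by $\ket{\chi}$ with $\chi\notin K^\perp$. By the perturbation analysis of \citet{hinsche2025povm} cited above, its weight is at least $\epsilon/2$ whenever $K^\perp\neq H^\perp$. A single reflection about it is $I-2\Pi_{\notin K^\perp}$, which I implement by a reversible membership test. Concretely, $\chi\in K^\perp$ if and only if $\chi$ is trivial on $K$. I would classically compute generators $k_1,\dots,k_m$ of $K$ (Smith normal form, $O(\log^3|G|\,\polylog M)$ operations per update), and then coherently check $\chi(k_i)=1$ for each $i$ by modular arithmetic. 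This costs $O(\log^2|G|\,\polylog M)$ gates per call, which matches the gate count in the statement.

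\textbf{Amplification.} Next I apply fixed-point amplitude amplification, in the variant of \cref{apx: fixed point amp amp}, with lower bound $\epsilon/2$ on the good amplitude and target success probability $1/2$. This uses $O(1/\sqrt{\epsilon})$ calls to $A$ and $A^\dagger$, hence to $U_\varphi^{\pm1}$ and $U_R^{\pm1}$. Controlled access is not needed, because $U_\varphi$ appears only inside $A$, which is invoked unconditionally. The fixed-point property matters for two reasons: the good weight is unknown and only lower bounded, and when $K^\perp=H^\perp$ the good weight is $0$ and the output is simply a non-increasing element, which the classical check rejects. The phase sequence is computed classically once, at cost $O(\frac{1}{\sqrt\epsilon}\polylog\frac{1}{\sqrt\epsilon})$.

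\textbf{Budget analysis.} Finally I rerun the random-walk argument. Each round decrements the budget by $1$, and with probability $\geq 1/2$ (while $H^\perp$ is not yet spanned) finds a span-increasing element and adds $3$. The number of span-increasing steps is at most $\log_2|G/H|$ by Lagrange. The number of rounds is therefore $T_0=O(\log|G/H|+\log\frac1\delta)$ with failure probability at most $\delta$, and the total query count is $T=T_0\cdot O(1/\sqrt\epsilon)$. After the loop, $H=(K^\perp)^\perp$ is computed classically.

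I expect the main obstacle to be the interface between the amplified subroutine and the failure analysis. Fixed-point amplification guarantees success probability $\geq 1/2$ only for good weight above the threshold, so I must check that the analysis of \cite{hinsche2025povm} really lower bounds the mass outside every proper subgroup $K^\perp\subsetneq H^\perp$ by $\epsilon/2$. I must also ensure that the modified fixed-point routine preserves this guarantee uniformly over all intermediate spans.
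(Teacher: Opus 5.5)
Your proposal is correct and follows the paper's proof: the adaptive-budget loop of Algorithm~\ref{alg: HSP-Sampling framework} with $\tilde{\delta}=1/2$, driven by a fixed-point amplitude-amplified Fourier sampler whose good subspace is recognized by coherently checking the character equations on classically computed generators of $K=(K^\perp)^\perp$. The uniform guarantee you were worried about is exactly \cref{lemma: anticoncentration}, which gives $P(H^\perp\setminus K^\perp)\ge\epsilon/2$ for every proper $K^\perp<H^\perp$; note that the amplitude threshold is then $\gamma=\sqrt{\epsilon/2}$, not $\epsilon/2$ as you wrote. The remaining differences are inessential: you use Smith normal form where the paper uses RREF/Howell form (lifting mixed moduli to $\mZ_M^n$), and you classically test the measured label instead of reading a flag qubit, which is equally exact because the amplified state stays in the span of the good and bad components of $U_F\ket{0}$, both supported on $H^\perp$.
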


\begin{figure}[t]
    \centering
    \begin{subfigure}{0.35\linewidth}
        \includegraphics[width=\linewidth]{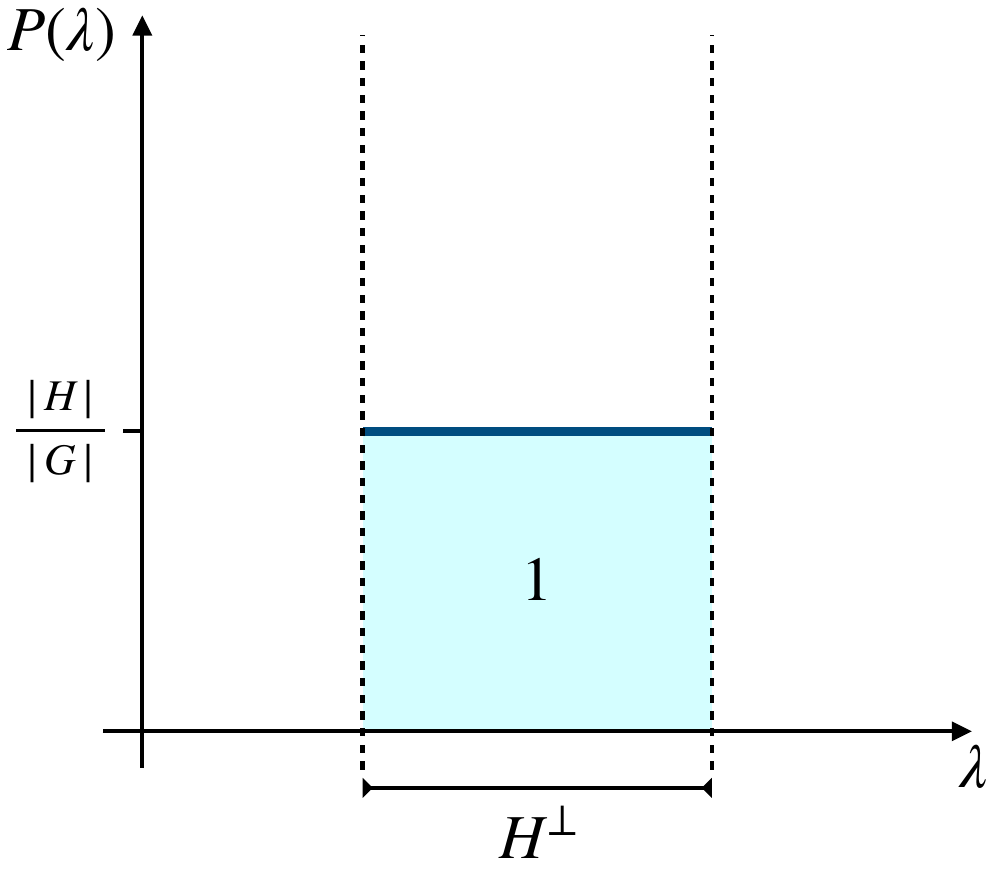}
        \caption{HSP's sampling probability.}
        \label{fig: prob HSP}
    \end{subfigure}
        \qquad\qquad
    \centering
    \begin{subfigure}{0.35\linewidth}
        \includegraphics[width=\linewidth]{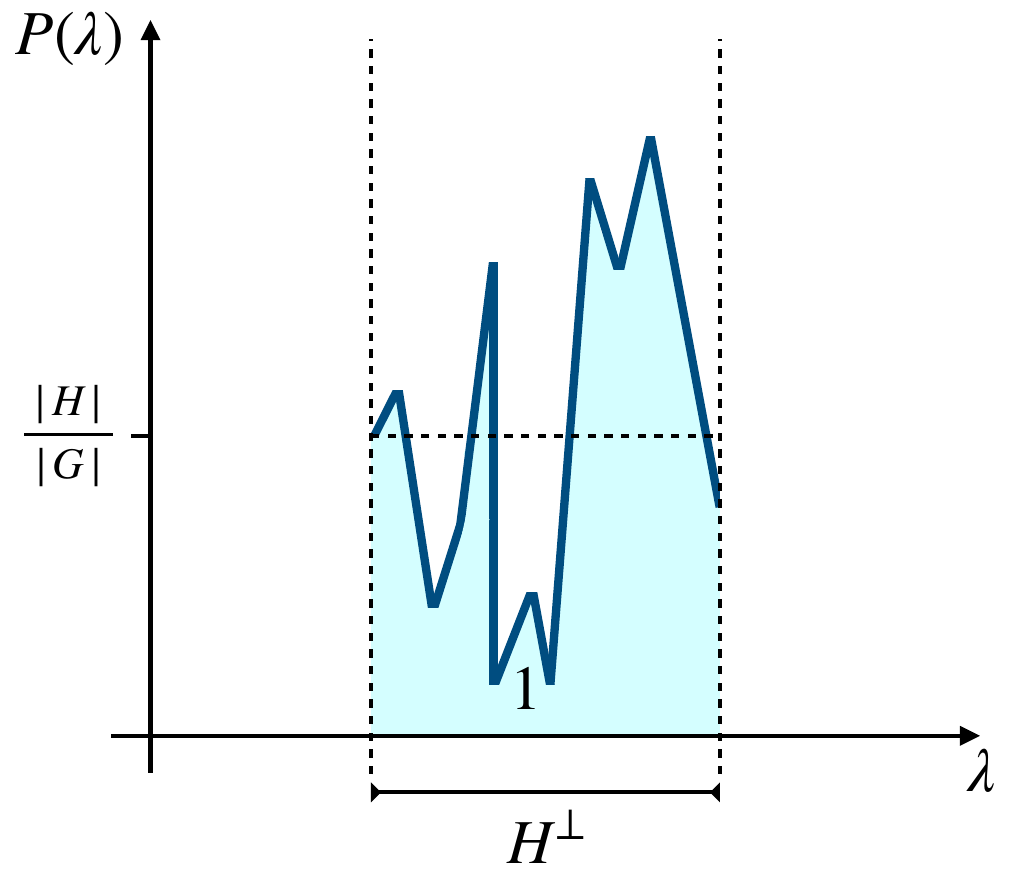}
        \caption{StateHSP's sampling probability.}
        \label{fig: prob SHSP}
    \end{subfigure} 
        \hfill
    \centering
    \begin{subfigure}{0.35\linewidth}
        \includegraphics[width=\linewidth]{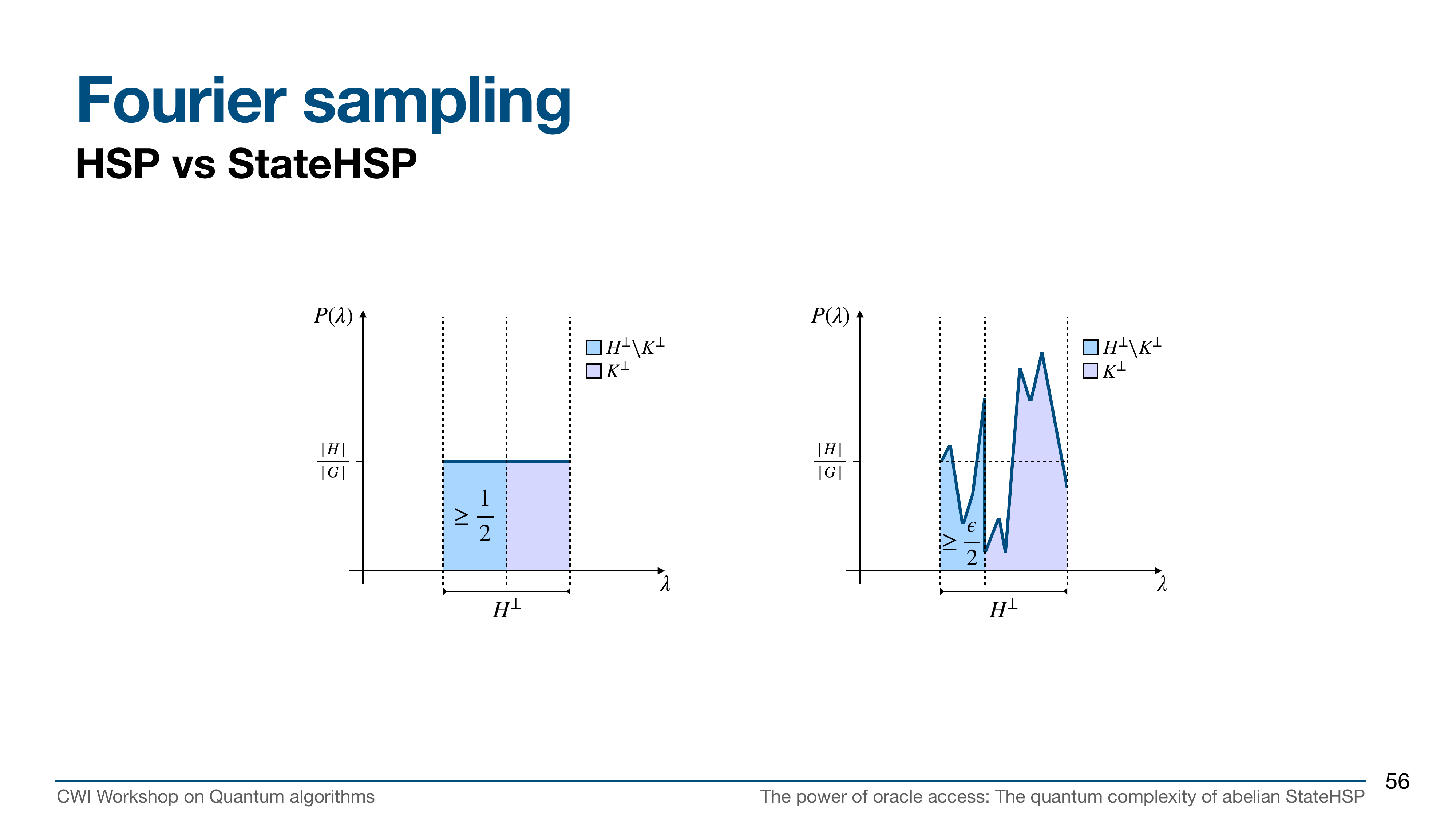}
        \caption{HSP's probability of sampling a span-increasing element.}
        \label{fig: hsp prob Lemma}
    \end{subfigure} 
        \qquad\qquad
    \centering
    \begin{subfigure}{0.35\linewidth}
        \includegraphics[width=\linewidth]{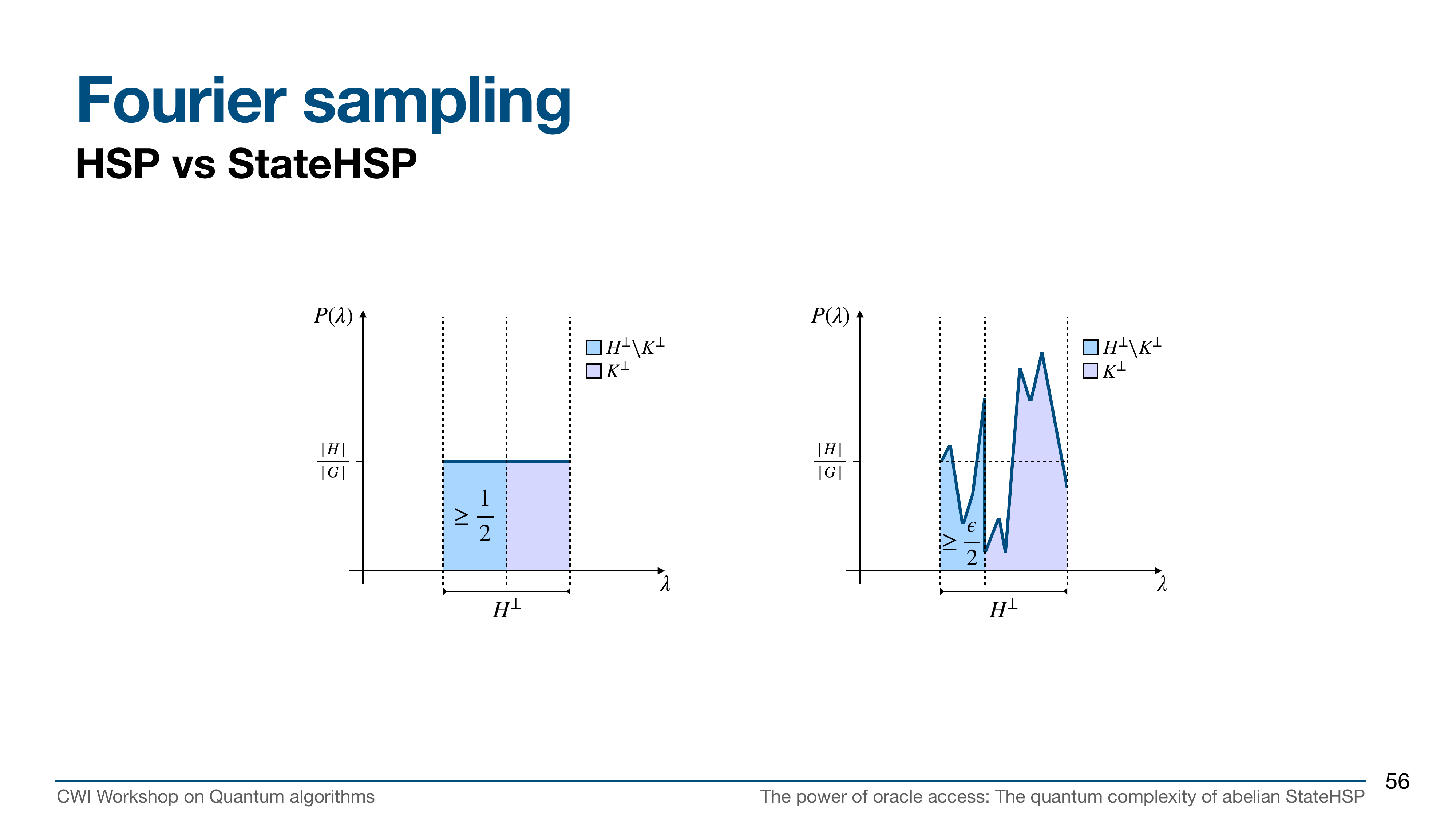}
        \caption{StateHSP's probability of sampling a span-increasing element.}
        \label{fig: prob Lemma}
    \end{subfigure} 
        \hfill
    \caption{This figure illustrates how the probability distribution $P(\lambda)$ over the dual subgroup's elements $\lambda$ \eqref{eqn:varphi-lambda} changes from abelian HSP $(\epsilon=1)$ to abelian State HSP $(\epsilon \in (0,1])$. (d) give a visual interpretation of StateHSP's anticoncentration \Cref{lemma: anticoncentration}.
    Here, $K^\perp < H^\perp$ could be any proper subgroup of $H^\perp$.
    The colored area under the curve represents the cumulative probability over the underlying set; \emph{i.e., } $P(H^\perp) = \sum_{\lambda \in H^\perp} P(\lambda) = 1$.}
    \label{fig: probability distributions}
\end{figure}

The idea of using Grover inside a Simon's-like strategy had already been used in 1997 by \citet{brassard1997exact}.
In that work, the authors give a polynomial-time exact algorithm for Simon's problem~\cite{Simonproblem94}, an emblematic instance of abelian HSP over $G=\mZ_2^n$.
In that case, each span-increasing sample is \emph{guaranteed} to appear with probability \emph{exactly} $1/2$, so with an appropriate subspace identifier for $\mZ_2^n$, one can use the plain Grover to make the algorithm succeed with certainty.
While their method makes Simon's algorithm exact, it does not yield any asymptotic improvements in query complexity, since in standard HSP the probability of obtaining a new span-increasing sample is constant.

Our strategy can be seen as an extension of their approach to StateHSP over arbitrary finite abelian groups.
There are two main differences: we target \emph{arbitrary} abelian groups, and we only have a \emph{lower bound} for the probability of obtaining a span-increasing sample.
For the first reason, we need to extend the subspace identifiers to efficiently flag span-increasing samples in more complicated algebraic structures.
For the second reason, we need to use a modified fixed-point amplitude amplification instead of plain Grover.
The lower bound's scaling with $\epsilon$, however, is what promotes the idea to an effective strategy for an asymptotic advantage.
Whenever the \emph{exact} probability is available in advance, our approach can likewise be made exact by combining an exact Fourier transform~\cite{mosca2004exact} with standard amplitude amplification~\cite{brassard2000quantum}.

\subsubsection{Lower bounds}
We complement our algorithms with two lower bounds that explain why neither access model permits a better dependence on $\epsilon$.
An adversary argument based on a fractional, padded version of Simon's problem establishes the query lower bound, while an information-theoretic construction, combined with Fano's inequality and the Holevo bound, proves the sample lower bound. 
Together with our upper bounds, they paint an essentially complete picture of the abelian StateHSP.

\paragraph{An adversary bound for the queries}
Our query lower bound shows that the amplification advantage is everything we can gain in this stronger access model.
To do so, we build a family of worst-case instances inspired by Simon's problem and prove the joint scaling with the promise gap $\epsilon$ through the adversary bound.

\begin{theorem}[Query lower bound; \cref{thm: lower bound}]
\label{thm: query lb informal}
    For every $\epsilon \in (0,1]$ and integers $n\geq m>1$, there is a family of abelian \textsc{StateHSP} instances with $G=\mZ_2^n$ and known $\log|G/H|=m-1$ such that any quantum algorithm identifying the hidden subgroup with probability at least $2/3$ requires $\Omega\!\left(\frac{\log|G/H|}{\sqrt{\epsilon}}\right)$ oracle queries in the worst case.
    This bound holds with access to the state-preparation unitary $U_\varphi$ and its inverse $U_\varphi^{-1}$, their complex conjugates $U_\varphi^*$ and $U_\varphi^T$, and the controlled version of all these oracles.
\end{theorem}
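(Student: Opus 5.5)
We will prove the lower bound with the (general, negative-weight or positive-weight) adversary method, applied to a hard family obtained by combining Simon's problem with a fractional ``padding'' that makes the promise gap equal to $\epsilon$. The approach has three parts: construct the instance family, show that oracle distinguishability scales like $\sqrt{\epsilon}$, and compose this with the $\Omega(m)$ hardness of identifying an $(m-1)$-dimensional subgroup.

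\textbf{Construction.} Take $G=\mZ_2^n$ acting on $\mathcal{H}=(\bC^2)^{\otimes n}\otimes\mathcal{H}'$ by $X$-translations on the first register. For a hidden subgroup $H\le\mZ_2^n$ of codimension $m-1$, we use a Simon-type function $f_H$ that is constant exactly on the cosets of $H$. We then set
\begin{equation}
\ket{\varphi_H}=\sqrt{1-\epsilon'}\,\ket{\psi_0}+\sqrt{\epsilon'}\,\ket{\psi_H},
\end{equation}
where $\ket{\psi_0}$ is fully $G$-invariant (e.g.\ $\ket{+}^{\otimes n}\ket{\bot}$), $\ket{\psi_H}=2^{-n/2}\sum_g\ket{g}\ket{f_H(g)}$ is the coset state, and $\epsilon'=\Theta(\epsilon)$. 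This gives $\bra{\varphi_H}R(g)\ket{\varphi_H}=1-\epsilon'$ for $g\notin H$ and $1$ for $h\in H$, which realizes the promise with gap exactly $\epsilon$ for a suitable $\epsilon'$. We restrict $H$ to a structured family, for instance all $H$ containing a fixed subgroup of dimension $n-m$ (this is the padding), so that $\log|G/H|=m-1$ is known. The oracle $U_{\varphi_H}$ is chosen canonically, e.g.\ a reflection-type or rotation unitary mapping $\ket0$ to $\ket{\varphi_H}$ and acting as the identity outside a two-dimensional span. This choice ensures that $U,U^{-1},U^*,U^T$ and their controlled versions all differ between instances by an operator-norm amount of $O(\sqrt{\epsilon}\,\|\ket{\psi_H}-\ket{\psi_{H'}}\|)$.

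\textbf{Adversary step.} The cleanest route is a reduction rather than a fresh adversary matrix. We view $U_{\varphi_H}$ as being implementable with $O(1)$ calls to a ``$\sqrt{\epsilon}$-weakened'' version of a Simon oracle, and show that any $T$-query algorithm for the family yields an algorithm for identifying $H$ from a Simon oracle in which each query has effective strength $\sqrt{\epsilon}$. Concretely, we apply the composition/multiplicative property of the adversary bound. First, the problem of identifying $H$ among codimension-$(m-1)$ subgroups (via the coset-state oracle, with conjugates and control allowed, since all of these are just fixed unitaries in the instance) has adversary value $\Omega(m)$. This follows from Koiran et al.'s $\Omega(\log|G/H|)$ bound for Simon, or directly from a rank/information argument: each query can reveal $O(1)$ bits about $H$. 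Second, distinguishing $\ket{\psi_0}$ from its $\sqrt{\epsilon}$-perturbation inside each query costs a factor $\Theta(1/\sqrt{\epsilon})$, by the standard hybrid/adversary bound for amplitude estimation, as in the $\Omega(1/\sqrt{\epsilon})$ search lower bound. Combining the two via adversary composition gives $\Omega(m/\sqrt{\epsilon})$.

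\textbf{Main obstacle.} The hard step is making the composition rigorous. The query family is not a literal composed function, so the off-the-shelf adversary composition theorem does not apply directly. I would try an explicit positive-weight adversary matrix $\Gamma$ indexed by pairs $(H,H')$. Here $\Gamma[H,H']$ is nonzero when $H,H'$ differ in a single generator, weighted to obtain a spectral norm of order $m$. For each such pair, the operator norm of the difference of all oracle variants (including conjugates and controls) is bounded by $O(\sqrt{\epsilon})$. The state-conversion adversary bound $T=\Omega(\|\Gamma\|/\max\|\Gamma\circ\Delta\|)$ then yields the $m/\sqrt{\epsilon}$ scaling. Two checks are required to justify using the plain bound with all oracle variants: verifying the $O(\sqrt{\epsilon})$ difference bound uniformly across every oracle type, and verifying that the constant-error requirement $2/3$ is compatible with the spectral gap.
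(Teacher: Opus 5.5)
Your hard family is essentially the paper's Fractional Padded Simon family, and your intuition that each query carries only a $\sqrt{\epsilon}$ fraction of a Simon query is right. The gap is in the step that is supposed to multiply $m$ by $1/\sqrt{\epsilon}$. You concede that adversary composition does not apply. Your fallback, an explicit positive-weight $\Gamma$ together with a uniform $O(\sqrt{\epsilon})$ bound on oracle differences, provably cannot give more than $O(1/\sqrt{\epsilon})$.

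In your family the differences are also $\Omega(\sqrt{\epsilon})$. Two Simon functions with different periods agree on at most half of $\mZ_2^m$, so $\braket{\psi_H}{\psi_{H'}}\le 1/2$ and $\|U_H-U_{H'}\|\ge\|\ket{\varphi_H}-\ket{\varphi_{H'}}\|\ge\sqrt{\epsilon}$.
\begin{itemize}
\item If $\Delta$ is the matrix of these norms, then $\Gamma\circ\Delta\ge\sqrt{\epsilon}\,\Gamma$ entrywise. Hence $\|\Gamma\circ\Delta\|\ge\sqrt{\epsilon}\,\|\Gamma\|$ for every nonnegative $\Gamma$, however it is weighted.
\item The finer block-matrix form $\sum_{H,H'}\Gamma[H,H']\ket{H}\bra{H'}\otimes(U_H-U_{H'})$ does not rescue this. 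Let $u$ be the Perron vector of $\Gamma$. The matrix element between $\sum_H u_H\ket{H}\ket{\psi_H}$ and $u\otimes\ket{0}$ equals $\sqrt{\epsilon}\sum_{H,H'}u_Hu_{H'}\Gamma[H,H'](1-\braket{\psi_H}{\psi_{H'}})\ge\frac{\sqrt{\epsilon}}{2}\|\Gamma\|$.
\end{itemize}
This is the property-testing barrier. Positive weights cannot reproduce even the $\Omega(m)$ Simon bound at $\epsilon=1$, so negative weights are unavoidable. Your alternative justification of the base case, that each query reveals $O(1)$ bits, is also false: Bernstein--Vazirani learns $m$ bits from a single query.

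The missing idea is to never construct $\Gamma$ at all. The paper chooses the oracle $O_f^\epsilon=\begin{bmatrix}\sqrt{1-\epsilon}\,I & -\sqrt{\epsilon}\,O_f\\ \sqrt{\epsilon}\,O_f & \sqrt{1-\epsilon}\,I\end{bmatrix}$, with $U_\varphi=O_f^\epsilon W$ for a fixed $W$. This makes $O_f^\epsilon-O_g^\epsilon=\sqrt{\epsilon}\,(O_f^1-O_g^1)$ hold exactly, as in \eqref{eq: oracle difference}, where $O_f^1$ is single-query equivalent to the padded Simon oracle. Belovs' relative $\gamma_2$ norm is a negative-weight adversary for unitary oracles with inverse and controlled access. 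It characterizes query complexity up to constants (\cref{thm: belovs 37}) and scales exactly as $\gamma_2(A\mid c\Delta)=\gamma_2(A\mid\Delta)/|c|$ (\cref{fact: gamma2 scaling}). The chain is then:
\begin{enumerate}
\item the $\Theta(m)$ Koiran--Nesme--Portier bound for $\{O_f^1\}$ gives norm $\Theta(m)$;
\item scaling gives norm $\Theta(m/\sqrt{\epsilon})$ for $\{O_f^\epsilon\}$;
\item tightness turns this back into the query lower bound.
\end{enumerate}

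Your rotation-in-a-plane oracle lacks this exact scaling. Each $U_H-U_{H'}$ has rank at most $4$ and carries $O(\epsilon)$ corrections, whereas Simon-oracle differences have exponentially large rank. So the oracle choice is not cosmetic. Finally, conjugate queries need no uniform difference bounds: the paper's $U_\varphi$ is real, so $U_\varphi^*=U_\varphi$ and $U_\varphi^T=U_\varphi^\dagger$.
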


In Simon's problem, one is granted access to a function $f: \mZ_2^m \to \mZ_2^m$ that hides a secret bitstring $s\in \mZ_2^m$ through the promise that for any two inputs $x\neq y$, we have $f(x) = f(y)$ if and only if $x=y+s$. 
This function defines an HSP with hidden subgroup $H=\{0, s\}$, and it has long been known that the quantum query complexity of locating $s$ is $\Theta(m)$~\cite{Simonproblem94, koiran2005quantum}.
This bound holds even with inverse and controlled access; indeed, the quantum oracle $O_f$ is self-inverse, and one can show that controlled-$O_f$ can be built from a single query.

We build on this result, pad the problem to study the scaling with $\log |G/H|$, and build a fractional oracle to study the scaling in $\epsilon$.
Specifically, we construct a hard family of StateHSP states, the Fractional Padded Simon's states:
\begin{align}
\label{eq: hard query state}
    |\varphi_f^\epsilon\rangle=\sqrt{1-\epsilon}\ket{0}\frac{1}{\sqrt{2^n}}\sum_{x\in \mZ_2^n}\ket{x}\ket{0} + \sqrt{\epsilon}\ket{1} \frac{1}{\sqrt{2^n}}\sum_{x\in \mZ_2^n}\ket{x}\ket{f(x)}.
\end{align}
Here, $f$ only depends on the last $m\leq n$ bits of $x$ and is a Simon's function hiding a bitstring $s\in \mZ_2^m$.
To complete the instance, the abelian group $\mZ_2^n$ acts on the state through its regular representation as a modular adder $R(g)\ket{x} = \ket{x+g}$ on the middle register.
By definition, this is a StateHSP instance with promise gap $\epsilon$ and hidden group $H = \mZ_2^{n-m}\times \{0,s\}$. 
The choice of $n$ and $m$ regulates the scaling of $|H|$ and $|G|$, while $\epsilon$ hides the branch with information on the hidden string $s$. 
For $\epsilon = 1$, this state is the same as in Simon's problem.

To prove the bound for this family, we use the adversary argument and apply two properties of the adversary bound: (1) query complexity and adversary bound coincide up to constants, and (2) rescaling the difference of every two problems' oracles $O_f- O_g$ by a non-zero scalar $c$ suppresses the adversary bound by $1/c$~\cite{belovs2015variations}. 
First, we observe that $|\varphi_f^\epsilon\rangle$ can be prepared through a single call to an oracle $O_f^\epsilon$, reducing the problem of \emph{locating} $s$ \emph{through} $O_f^\epsilon$ to \emph{solving StateHSP through} $U_\varphi$.
Then, we show that for any two Simon's functions $f$ and $g$, it holds that $O_f^\epsilon -O_g^\epsilon = \sqrt{\epsilon} (O_f^1 - O_g^1)$.
Moreover, for any function $f$, the oracle $O_f^1$ can be implemented through a single query to Simon's oracle $O_f$, and vice versa.
By the two-way reduction from Simon's problem, the query complexity (and therefore the adversary bound) of \emph{locating} $s$ \emph{through} $O_f^1$ is $\Theta(m)$.
Through the rescaling property of the adversary bound, we determine that the query complexity of \emph{locating} $s$ \emph{through} $O_f^\epsilon$ is $\Theta(m/\sqrt{\epsilon})$.
The reduction to StateHSP lets us state the lower bound.

By the definition of the query model used in the adversary bound, the lower bound holds even with access to the inverse and with controlled queries.
Moreover, since our worst-case state-preparation unitaries $U_\varphi$ are real, the bound holds even with access to conjugate $U_\varphi^*=U_\varphi$ and $U_\varphi^T=U_\varphi^\dagger$ queries, and their controlled versions.

\paragraph{An information-theoretic bound for the copies}
Our sample lower bound, on the other hand, shows that the quadratic improvement in $\epsilon$ genuinely requires access to the preparation circuit: in the standard sample model, the $1/\epsilon$ scaling is information-theoretically necessary. This settles the sample complexity of the abelian StateHSP at $\Theta(\log(|G/H|)/\epsilon)$, a question left open by Refs.~\cite{bouland2024state,hinsche2025povm}.

\begin{theorem}[Sample lower bound; \cref{theorem: copy lower bound}]
\label{thm: copy lb informal}
    For any $\epsilon \in (0,1]$ and $r,m \in \N$, with $r \geq 0$ and $m \geq 1$, there is a family of abelian \textsc{StateHSP} instances on $G=\mZ_2^{r+2m}$, with known $\log |G/H|=m$, such that any algorithm that identifies the hidden subgroup with probability at least $2/3$ from copies of the input state vector $\ket{\varphi}$, even with collective measurements, requires $\Omega(\log(|G/H|)/\epsilon)$ copies.
\end{theorem}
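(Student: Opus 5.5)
We prove the bound information-theoretically: build a large family of instances whose states are pairwise hard to tell apart, then combine Fano's inequality with the Holevo bound. The group is $G=\mZ_2^{r+2m}$. We first take $r=0$, since the extra $r$ coordinates can be made trivially symmetric by tensoring with a fixed invariant factor, which only adds a fixed subgroup $\mZ_2^r$ to $H$ and does not change $|G/H|$.

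\textbf{Construction.} We split $\mZ_2^{2m}=\mZ_2^m\times\mZ_2^m$. The hidden subgroups will be indexed by codewords, that is, by a packing of $2^{\Omega(m)}$ index-$2^m$ subgroups of $G$. For each index $k$ we choose an $(m)$-dimensional subgroup $H_k$, and we want these to be pairwise far, e.g. graphs $\{(x,A_kx)\}$ of linear maps $A_k$. The state is a fractional coset-type state,
\begin{equation}
\ket{\varphi_k}=\sqrt{1-\epsilon}\,\ket{\psi_0}+\sqrt{\epsilon}\,\ket{\chi_k},
\end{equation}
where $\ket{\psi_0}$ is a fixed state invariant under all of $G$, for example $\ket{+}^{\otimes 2m}$ under the regular representation on a register orthogonal to the $\ket{\chi}$ branch. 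Here $\ket{\chi_k}=|H_k|^{-1/2}\sum_{h\in H_k}\ket{h}\ket{\text{coset label}}$, or more simply the uniform superposition over the characters in $H_k^\perp$. We then check the promise. For $h\in H_k$ both branches are invariant. For $g\notin H_k$, the $\chi$ branch overlap is $0$, because distinct cosets, or a nontrivial character averaged over $H_k^\perp$, give orthogonal vectors. Hence $|\bra{\varphi_k}R(g)\ket{\varphi_k}|=1-\epsilon$. This mirrors \eqref{eq: hard query state} with $\ket{0}$ in place of $\ket{f(x)}$ in the first branch.

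\textbf{Information bound.} We let $k$ be uniform over a set of size $N=2^{\Omega(m)}$; we can take $N=2^m$, using the $2^m$ subgroups given by $m$-dimensional subspaces in general position, or simply all graphs of maps $x\mapsto A x$ for a suitable family of $A$. Fano's inequality says that success probability $2/3$ requires $I(k;\text{outcome})\ge \Omega(\log N)=\Omega(m)$. By Holevo, this mutual information is at most the Holevo quantity of the ensemble $\{\ket{\varphi_k}^{\otimes t}\}$, which is at most $S(\bar\rho_t)$ where $\bar\rho_t=\mathbb{E}_k\,\varphi_k^{\otimes t}$. The key estimate is $S(\bar\rho_t)\le O(t\epsilon\, m)+O(t\,h(\epsilon))$, or better $O(t\epsilon)$ times the entropy per non-trivial branch. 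Expand $\ket{\varphi_k}^{\otimes t}$ by the number $j$ of $\chi$ factors. The $t$-copy state is a superposition over subsets $S\subseteq[t]$ with weight $\epsilon^{|S|}(1-\epsilon)^{t-|S|}$, and the reference branch is $k$-independent. We then bound the Holevo information by that of the dephased ensemble in which $S$ is measured first. This is legitimate if the branches are orthogonal across different $S$ (a flag qubit ensures this) and the measurement of $S$ commutes with the encoding, since $S$ is independent of $k$. In that case $\chi\le \sum_S p(S)\,\chi(\{\chi_k^{\otimes|S|}\})\le \mathbb{E}|S|\cdot\max_k \log\operatorname{rank}\le t\epsilon\cdot m$, because each $\ket{\chi_k}$ lives in a $2^{O(m)}$-dimensional space. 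This gives only $t\epsilon m\ge\Omega(m)$, i.e. $t\ge\Omega(1/\epsilon)$, which is the wrong order. We need per-copy information $O(1)$ rather than $O(m)$, as in Simon's problem, where one coset state reveals about one bit. So we choose $\ket{\chi_k}$ to be the coset state $2^{-m}\sum_{x}\ket{x}\ket{f_k(x)}$ with $f_k$ hiding $H_k$. Its reduced ensemble average has Holevo quantity $O(1)$ per copy: the average of $t'$ coset-state copies has entropy at most $O(t')$ plus the fixed entropy of the uniform coset labels, as in Koiran et al.'s Simon lower bound. Concretely, we use that the average over hidden $H$ of $\rho_H^{\otimes t'}$ has Holevo quantity $O(t')$ against $H$. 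This gives $\chi\le O(\mathbb{E}|S|)=O(t\epsilon)$, and combined with Fano, $t=\Omega(m/\epsilon)=\Omega(\log|G/H|/\epsilon)$.

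\textbf{Main obstacle.} The hard step is the per-copy $O(1)$ Holevo bound for the coset-state ensemble, i.e. ruling out that $t'$ collective copies reveal more than $O(t')$ bits about $H$ among $2^{\Omega(m)}$ candidates. I would first try a direct computation. Since the hidden subgroups are index $2^m$, the coset-label register reduced state of $\rho_H$ is $2^{-m}\sum_{\lambda\in H^\perp}\ket{\lambda}\!\bra{\lambda}$ in the Fourier basis. Its entropy is $m$ for every $H$, while the average over $H$ of $t'$ copies has entropy at most $m+t'$-ish. The Holevo quantity $S(\bar\rho)-\mathbb{E}S(\rho_H^{\otimes t'})$ can then be estimated using a family of $H$ chosen so that the $H^\perp$'s all contain a common index-2 structure. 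The fallback is the known $\Omega(m)$ Simon sample lower bound with $\epsilon=1$: the argument above is then exactly the $\mathbb{E}|S|$ reduction applied to that bound.
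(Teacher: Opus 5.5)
Your architecture matches the paper's: a $G$-invariant flagged branch plus an $\epsilon$-weighted informative branch, Fano against Holevo, and sectors indexed by the number of informative copies. There is nonetheless a genuine gap. The estimate that carries the bound is never established, and the route you switch to does not work as stated.

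\textbf{The undercount and the replacement claim.} Your first plan stalls only because you undercount candidates. With $N=2^m$, Fano asks for $\Omega(m)$ bits against $O(m)$ bits per informative copy, which gives only $t=\Omega(1/\epsilon)$. Yet the family you mention, all graphs $\{(x,Ax)\}$ with $A\in\mZ_2^{m\times m}$ (the row spans of $(I_m\;A)$), has $2^{m^2}$ members. That is exactly the paper's key choice: $\widetilde H$ uniform over all $m$-dimensional subspaces of $\mZ_2^{2m}$. Fano then gives $I(H;Z)\ge m^2/6$, and the crude rank bound (rank at most $2^{2mk}$ in the sector with $k$ informative copies) gives $I(H;Z)\le 2t\epsilon m$. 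Together these force $t=\Omega(m/\epsilon)$, with no $O(1)$-bits-per-copy estimate needed.

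The claim you lean on instead is false for index-$2^m$ subgroups of $\mZ_2^{2m}$:
\begin{itemize}
\item The mixed coset state $\rho_{H_k}$ is maximally mixed on $H_k^\perp$ in the Fourier basis, so it has entropy $m$. Its average over all $m$-dimensional $H_k$, or over a random family of $2^m$ of them, has entropy close to $2m$, so one copy carries $\Theta(m)$ bits.
\item If all $H_k^\perp$ contain a common index-$2$ subgroup $L$, a Fourier sample outside $L$ occurs with probability $1/2$ and determines $H_k$ outright.
\item A pure coset state with a fixed labeling $f_k$ can leak up to $\log N$ bits per copy.
\end{itemize}
An $O(1)$ bound would need something like a padded-Simon family, with all $H_k^\perp$ hyperplanes of one fixed $(m+1)$-dimensional space, plus a label register proven not to leak. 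The Koiran et al.\ result you cite is a query lower bound and gives no such Holevo estimate. Your fallback of pushing an $\epsilon=1$ bound through $\mathbb{E}|S|$ also needs more than you supply: either an information-theoretic $\epsilon=1$ bound, or an explicit simulation of the $t$-copy state from $\mathrm{Binomial}(t,\epsilon)$ many informative copies. The simulation exists only once cross-sector coherences are removed.

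\textbf{The dephasing step runs the wrong way.} Removing those coherences is the second gap. Measuring $S$ is a channel, so data processing gives $\chi(\text{dephased})\le\chi(\text{original})$, the opposite of the inequality you use. The reverse genuinely fails even when both of your stated conditions hold. Take $t=1$ and two orthogonal informative branches. Then $\ket{\varphi_1},\ket{\varphi_2}$ have overlap $1-\epsilon$, so the Holevo quantity is $\mathrm{H}(\epsilon/2)\approx\tfrac{\epsilon}{2}\log\tfrac{1}{\epsilon}$. This exceeds the dephased value $\epsilon$ for every $\epsilon\in(0,1)$.

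The paper fixes this by multiplying the informative branch by a uniformly random, unreported phase $e^{i\alpha}$. Averaging over $\alpha$ makes the $t$-copy state exactly $\sigma_H=\sum_k p_k^\epsilon\ketbra{\phi^t_{H,k}}{\phi^t_{H,k}}$, block diagonal in the number of informative branches. It follows that $\chi=\sum_k p_k^\epsilon S(\bar\rho_k)\le 2t\epsilon m$.

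Without the phase, you could instead use $\chi=S(\bar\rho)\le S(\Delta(\bar\rho))$ for the pure-state ensemble, with $\Delta$ the pinching onto Hamming-weight sectors. This costs only an additive $\mathrm{H}(\mathrm{Binomial}(t,\epsilon))=O(\log(2+t\epsilon))$, which is harmless. Pinching onto subsets $S$, as you propose, costs $t\,\mathrm{H}(\epsilon)\approx t\epsilon\log\tfrac{1}{\epsilon}$ and can lose a $\log(1/\epsilon)$ factor.
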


We build a worst-case family and use information-theoretical tools: Fano's inequality and Holevo's bound.
To construct the hard family, we partition $\mZ_2^n$ using $n=r+2m$ bits, take a phase $\alpha \in [0,2\pi)$, and hide an $m$-dimensional subgroup $\widetilde{H} \cong \mZ_2^m$ of $\mZ_2^{2m}$ in this state:
\begin{align}
    \label{eq: hard sample state}
    \sqrt{1-\epsilon}\ket{0}\frac{1}{\sqrt{2^n}}\sum_{x\in \mZ_2^n}\ket{x} + e^{i\alpha}\sqrt{\epsilon}\ket{1} \frac{1}{\sqrt{2^{n-m}}}\sum_{k\in \mZ_2^r}\ket{k}\sum_{h\in \widetilde{H}}\ket{h}.
\end{align}
The abelian group $\mZ_2^n$ acts on this state through the regular representation, as in the previous example.
One can check that this defines a StateHSP with promise gap $\epsilon$ and hidden subgroup $H=\mZ_2^r \times \widetilde{H}$.
Similar to Padded Fractional Simon's family, the choice of $n$ and $m$ regulates the scaling of $|H|$ and $|G|$, while $\epsilon$ hides the branch with information on the true hidden subgroup $\widetilde{H}$.

To prove the bound, we consider a distribution of the above states that is uniform in $H$ and in $\alpha$. 
Consider an algorithm that applies a POVM on $t$ copies of the state and outputs a subgroup label $Z$, which equals the hidden subgroup label $H$ with probability at least $2/3$.
We sandwich the mutual information $I(H;Z)$, bounding it from both above and below. This is then used to establish the minimal number of copies $t$.

The large number of candidate hidden subgroups $N\geq 2^{m^2}$ allows us to use Fano's inequality and obtain $I(H;Z) \geq \frac{m^2}{6}$; \emph{i.e.,} the algorithm needs roughly $m^2$ bits of information about the subgroup in order to achieve a constant success probability.
Then, through Holevo's bound, we can upper bound the mutual information as a function of the number of copies, $I(H;Z) \leq 2t\epsilon m$; \emph{i.e.,} each copy contributes to roughly $\epsilon m$ bits of information\footnote{The hidden phases $\alpha$ play a crucial role in this upper bound, but we leave this technical discussion for \cref{sec: sample lower bound}.}.
Finally, solving $\frac{m^2}{6} \leq 2t \epsilon m$ for $t$, we obtain the $t \in \Omega(\frac{m}{\epsilon})$ lower bound.

\subsection{Document structure}
\Cref{sec: preliminaries} introduces the preliminaries required to understand our technical results.
\Cref{sec: algorithms} details the algorithmic results and proves the upper bounds in the two input models. 
It contains the adaptive budgeting strategy and the construction of the subspace identifiers that efficiently flag the new span-increasing Fourier samples.
\Cref{sec: query lower bound} presents the query model and the adversary lower bound.
\Cref{sec: sample lower bound} presents the sample model and the information-theoretic lower bound.
\Cref{sec: applications} summarizes how to reinterpret the applications of Refs.~\cite{bouland2024state, hinsche2025povm} through the state-preparation unitary results.
\Cref{sec: outlook} wraps up this work by discussing the results and further research directions. 
Finally, the appendices detail some useful group theoretical facts, dive into integer matrix forms and the computation of the Howell Normal Form, and present an alternative subspace identifier closer to the original one of \citet{brassard1997exact}.

\section{Preliminaries}
\label{sec: preliminaries}
In this section, we provide the preliminaries required to understand our work.
First, we introduce basic concepts in group and representation theory. 
Then, we move to canonical forms for matrices in integer rings and fields.
Finally, we describe two quantum arithmetic operations that we assume are available throughout.

\subsection{Tools in group and representation theory}
We begin by setting the notation used throughout this work. 
Given an integer $n\in \mathbb{Z}_+$, we denote $[n] = \{0, \dots, n-1\}$. 
All logarithms are base $2$ unless otherwise stated.
Throughout, $G$ denotes a finite abelian group, written additively. By the fundamental theorem of finite abelian groups, every such $G$ decomposes as a direct product of cyclic groups,
\begin{equation}\label{eq:abelian-decomposition}
    G\cong \mZ_{M_1}\times\mZ_{M_2}\times\cdots\times\mZ_{M_n},
\end{equation}
for some moduli $M_1,\dots,M_n\in\N$. We accordingly label a group element by a tuple $g=(g_1,\dots,g_n)$ with $g_l\in [M_l]$, and the group operation is componentwise addition, $(g+g')_l := g_l+g_l'\pmod{M_l}$. Two special cases recur in this work: the uniform-modulus group $G=\mZ_M^n$, whose elements are strings in $\{0,\dots,M-1\}^n$ with componentwise addition modulo $M$, and the Boolean group $G=\mZ_2^n$, whose elements are bitstrings in $\{0,1\}^n$ with addition given by bitwise XOR.
 
Every finite abelian group $G$ admits a dual group $\hat{G}$, consisting of its irreducible representations $\Irr(G)$, all of which are one-dimensional. The dual is isomorphic to the group itself, $\hat{G}\cong G$, so we likewise label its elements by tuples $\lambda=(\lambda_1,\dots,\lambda_n)$ with $\lambda_l\in[M_l]$. The irreducible representation $\rho_\lambda$ (equivalently, the character $\chi_\lambda$) indexed by $\lambda$ acts as
\begin{equation}
    \chi_\lambda(g)=\rho_\lambda(g)=\prod_{l=1}^n e^{\frac{2\pi i}{M_l}\, g_l \lambda_l}.
\end{equation}
For the uniform-modulus group $G=\mZ_M^n$ this reduces to $\chi_\lambda(g)=e^{\frac{2\pi i}{M}\, g\cdot\lambda}$ with the inner product $g\cdot\lambda:=\sum_{l=1}^n g_l\lambda_l \pmod M$, and for the Boolean group $G=\mZ_2^n$ it further reduces to $\chi_\lambda(g)=(-1)^{g\cdot\lambda}$.

We now recall Lagrange's theorem and use it to bound the size of any independent generating set of a finite group.
For a set of group elements $S=\{s^{(1)},\dots,s^{(m)}\}$, we write $\mspan(S):=\langle s^{(1)},\dots,s^{(m)}\rangle$ for the subgroup generated by $S$.

\begin{theorem}[Lagrange's theorem]
\label{theorem: lagrange}
    If $G$ is a finite group and $H \leq G$ is a subgroup, then $|H|$ divides $|G|$.
\end{theorem}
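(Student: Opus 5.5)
The plan is the classical coset-partition argument. Fix the finite group $G$ and a subgroup $H \le G$. For each $g\in G$ consider the left coset $gH=\{gh : h\in H\}$. We will show three things: every coset has exactly $|H|$ elements, the cosets cover $G$, and any two cosets are either equal or disjoint. Together these say that $G$ is a disjoint union of, say, $k$ cosets each of size $|H|$, hence $|G| = k\,|H|$, so $|H|$ divides $|G|$. (The paper writes abelian groups additively, but the statement is for an arbitrary finite group, so we argue multiplicatively; for the abelian case read $gH$ as $g+H$.)

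First, the size of each coset. The map $\phi_g\colon H\to gH$, $h\mapsto gh$, is surjective by definition of $gH$. It is injective because $gh=gh'$ implies $h=h'$ after left-multiplying by $g^{-1}$. Hence $|gH|=|H|$.

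Second, covering. Since $e\in H$, every $g\in G$ lies in its own coset, $g = ge\in gH$.

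Third, equal-or-disjoint, which is the only step needing a little care and hence the main (mild) obstacle. Suppose $x\in g_1H\cap g_2H$, say $x=g_1h_1=g_2h_2$. Then $g_1 = g_2 h_2 h_1^{-1}$ with $h_2h_1^{-1}\in H$. Therefore, for any $h\in H$,
\[
g_1h = g_2\,(h_2h_1^{-1}h)\in g_2H ,
\]
using closure of $H$ under products and inverses. This gives $g_1H\subseteq g_2H$, and the symmetric argument gives the reverse inclusion, so $g_1H=g_2H$.

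Equivalently, the relation $g_1\sim g_2 \iff g_1^{-1}g_2\in H$ is an equivalence relation whose classes are exactly the left cosets. Reflexivity, symmetry and transitivity correspond respectively to $e\in H$, closure under inverses, and closure under products. So the distinct cosets partition $G$ into blocks of equal size $|H|$, and counting gives $|G|=[G:H]\,|H|$. This concludes the divisibility claim.
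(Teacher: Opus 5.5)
The paper gives no proof of this statement: it recalls Lagrange's theorem as a standard fact and only applies it, in the subgroup-chain argument of \cref{fact: log-generators}, in the bound $|H|/|K|\le 1/2$ of \cref{lemma: anticoncentration}, and implicitly through $[G:H]=|G|/|H|$ in \cref{proposition: cardinality of perp}. Your coset-partition argument is the standard textbook proof and is correct and complete as written. It also yields exactly the index identity $|G|=[G:H]\,|H|$ that the paper relies on later.
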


\begin{fact}[Generators size]
\label{fact: log-generators}
    Let $G$ be a finite group. Any independent generating set of $G$ has size at most $\floor{\log_2|G|}$, where a set is \emph{independent} if none of its elements lies in the subgroup generated by the others (\cref{def: redundant}).
    Equivalently, any set of more than $\floor{\log_2|G|}$ elements of $G$ is redundant: at least one of its elements lies in the subgroup generated by the others.
\end{fact}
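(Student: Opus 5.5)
The plan is a chain-of-subgroups argument powered by Lagrange's theorem (\cref{theorem: lagrange}). Let $S=\{s^{(1)},\dots,s^{(k)}\}$ be an independent generating set of $G$, and consider the partial spans
\begin{equation}
    K_0=\{e\},\qquad K_j=\mspan\bigl(s^{(1)},\dots,s^{(j)}\bigr),\qquad j=1,\dots,k,
\end{equation}
so that $K_k=G$ and $K_0\le K_1\le\cdots\le K_k$ is a chain of subgroups.

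The key step is to show that every inclusion in this chain is strict. Suppose $K_{j-1}=K_j$ for some $j$. Then $s^{(j)}\in K_{j-1}$, which is contained in the subgroup generated by $S\setminus\{s^{(j)}\}$. That contradicts independence in the sense of \cref{def: redundant}. One small point needs attention: the case $j=1$ means $s^{(1)}=e$, and the identity lies in the subgroup generated by any set, including the empty one, so it is also excluded by independence. Hence $K_{j-1}<K_j$ for every $j$.

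Next we apply Lagrange's theorem to $K_{j-1}\le K_j$. It gives that $|K_{j-1}|$ divides $|K_j|$, and since the inclusion is strict, $|K_j|\ge 2|K_{j-1}|$. Inducting from $|K_0|=1$ yields $|G|=|K_k|\ge 2^k$. Therefore $k\le\log_2|G|$, and since $k$ is an integer, $k\le\floor{\log_2|G|}$.

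For the equivalent formulation, take any set $T$ with more than $\floor{\log_2|G|}$ elements. If $T$ were independent, then the same chain argument applied to the subgroup $\mspan(T)\le G$, for which $T$ is an independent generating set, would give $2^{|T|}\le|\mspan(T)|\le|G|$, a contradiction. So $T$ is redundant: some element of $T$ lies in the subgroup generated by the others.

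This argument uses neither commutativity nor finiteness of anything beyond $G$ itself. I expect no real obstacle; the only care needed is the degenerate identity and empty-span case handled above.
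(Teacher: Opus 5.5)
Your proof is correct and follows essentially the same approach as the paper: a strictly increasing chain of partial spans, where Lagrange's theorem forces each strict step to at least double the order, so $2^k\le|G|$. Your handling of the identity element and your passage through $\mspan(T)$ for the second formulation are harmless refinements; the paper simply runs the chain on any independent elements of $G$ directly.
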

\begin{proof}
    Let $g^{(1)},\ldots,g^{(m)}$ be independent elements of $G$.
    Then the chain
    \begin{equation}
        \{e\} < \langle g^{(1)}\rangle < \cdots < \langle g^{(1)},\ldots,g^{(m)}\rangle
    \end{equation}
    is strictly increasing.
    By Lagrange's theorem, each strict inclusion increases the size by a factor of at least two.
    Hence $2^m \leq |G|$, so $m \leq \floor{\log_2|G|}$, since $m$ is an integer.
\end{proof}

The dual of a subgroup, sometimes called its annihilator, plays an important role in the HSP literature.
For finite abelian groups, it is defined as follows.

\begin{definition}[Dual subgroup $H^\perp$]
    Let $G$ be an abelian group and $H\leq G$ be a subgroup. 
    The dual subgroup of $H$ consists of the characters of $G$ that are trivial on $H$.
    Namely, $H^\perp := \{\lambda \in \dual{G}: \chi_\lambda(h) = 1, \; \forall h \in H\}$.
\end{definition}

For a subgroup $H$ of a finite abelian group $G$, its dual $H^\perp$ is a subgroup of $\hat{G}$ and satisfies $|H^\perp|=|G|/|H| $. 
Taking the dual reverses subgroup inclusion: let $H,K$ be two subgroups of $G$, then $H\leq K$ if and only if $K^\perp\leq H^\perp$. 
Details are presented in~\Cref{app:group}.

We next introduce the group Fourier transform, which maps the basis indexed by group elements to one indexed by irreducible representations and their matrix entries.
We state the definition for arbitrary finite groups before specializing to the abelian case.

\begin{definition}[Group Fourier transform]
Let $G$ be a finite group, and let $\Irr(G)$ denote a complete set of inequivalent irreducible representations (irreps) $\rho_\lambda:G\to \mathrm{U}(\mathbb{C}^{d_\lambda})$. The group Fourier transform is defined by
\begin{equation}\label{eq:QFT}
    \mathcal{F}_G \ket{g}
    = \sum_{\lambda \in \Irr(G)} \sqrt{\frac{d_\lambda}{|G|}}
    \sum_{i,j=1}^{d_\lambda} \rho_{\lambda}(g)_{i,j} \ket{\lambda}\ket{i}\ket{j}.
\end{equation}
Its inverse can be explicitly spelled out and is given by $\mathcal{F}_G^{-1}\ket{\lambda}\ket{i}\ket{j}
    =\sqrt{\frac{d_\lambda}{|G|}}\sum_{g\in G} \rho_\lambda(g^{-1})_{j,i}\ket{g}$.
\end{definition}

For abelian groups, every irreducible representation is one-dimensional, so the registers $\ket{i}$ and $\ket{j}$ are trivial and can be omitted.
The Fourier transform therefore takes the simpler form
\begin{equation}
\mathcal{F}_G\ket{g}
=\frac{1}{\sqrt{|G|}}\sum_{\lambda\in\hat{G}}\chi_\lambda(g)\ket{\lambda},
\end{equation}
mapping each group basis state to a superposition of irrep labels.
We refer to the unitary $\mathcal{F}_G$ as the \emph{quantum Fourier transform} (QFT) over $G$.
For a given cyclic decomposition, we have $\mathcal{F}_G = \bigotimes_l \mathcal{F}_{\mZ_{M_l}}$.
QFTs of arbitrary cyclic order admit efficient approximate implementations~\cite{kitaev1995quantum}, and exact circuits are available when suitably chosen rotations are allowed~\cite{mosca2004exact}.
Throughout the remainder of the paper, we assume exact QFTs and bound the number of calls to them without choosing a specific circuit implementation.
The analysis could be extended to approximate QFTs and concrete resource estimates.

We will use the following two character orthogonality relations.
We prove both below; the second also appears as Fact~1 in Ref.~\cite{hinsche2025povm}.

\begin{lemma}[Character orthogonality]
\label{lemma: character orthogonality}
    Let $H \leq G$ be a subgroup of an abelian group $G$ and $H^\perp \leq \dual{G}$ the corresponding dual subgroup, then 
    \begin{align}
        \sum_{h \in H} \chi_\lambda(h) = \begin{cases}
            |H| & \lambda \in H^\perp, \\
            0 & \lambda \notin H^\perp,
        \end{cases} 
        \qquad \text{ and } \qquad
        \sum_{\lambda \in H^\perp} \chi_\lambda(g) = \begin{cases}
            |H^\perp| & g\in H, \\
            0 & g \notin H.
        \end{cases}
    \end{align}
\end{lemma}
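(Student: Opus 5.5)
Both identities will come out of the standard ``shift'' trick for sums of a character over a group, together with the fact that characters are homomorphisms: $\chi_\lambda(g+g')=\chi_\lambda(g)\chi_\lambda(g')$ and $\chi_{\lambda+\mu}(g)=\chi_\lambda(g)\chi_\mu(g)$. Both are immediate from the explicit product formula $\chi_\lambda(g)=\prod_l e^{2\pi i g_l\lambda_l/M_l}$. I will prove the first identity directly and then reduce the second to the same argument with the roles of group and dual exchanged.

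For the first sum, the case $\lambda\in H^\perp$ is immediate: by definition every term equals $1$, so the sum is $|H|$. If $\lambda\notin H^\perp$, there is some $h_0\in H$ with $\chi_\lambda(h_0)\neq 1$. Since $h\mapsto h+h_0$ is a bijection of $H$, we get
\begin{equation}
S:=\sum_{h\in H}\chi_\lambda(h)=\sum_{h\in H}\chi_\lambda(h+h_0)=\chi_\lambda(h_0)\,S,
\end{equation}
so $(1-\chi_\lambda(h_0))S=0$ and therefore $S=0$.

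For the second sum, the case $g\in H$ again gives each term equal to $1$, by the definition of $H^\perp$, so the sum is $|H^\perp|$. If $g\notin H$, I need some $\mu\in H^\perp$ with $\chi_\mu(g)\neq 1$. Given such a $\mu$, the same shift argument applies: $\lambda\mapsto\lambda+\mu$ permutes $H^\perp$, and $\chi_{\lambda+\mu}(g)=\chi_\lambda(g)\chi_\mu(g)$, so the sum vanishes.

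The main obstacle is producing this $\mu$, i.e.\ showing that $(H^\perp)^\perp=H$ when $H^\perp$ is viewed as characters evaluated at $g$. I would argue by counting, using the stated fact $|K^\perp|=|G|/|K|$ for subgroups $K\le G$. Let $K=\langle H,g\rangle$. Since $g\notin H$, $H$ is strictly contained in $K$, so $|K^\perp|=|G|/|K|<|G|/|H|=|H^\perp|$. Also $K^\perp\le H^\perp$, because inclusion is reversed. Hence there exists $\mu\in H^\perp\setminus K^\perp$. Such a $\mu$ is trivial on $H$ but not on all of $K$. Since $K$ is generated by $H$ and $g$ and $\chi_\mu$ is a homomorphism, $\chi_\mu$ must be nontrivial on $g$, i.e.\ $\chi_\mu(g)\neq 1$. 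This completes the second identity; the only inputs are the cardinality and inclusion-reversal facts already cited from the appendix.
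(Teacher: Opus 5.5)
Your proof is correct and follows the paper's approach: the shift trick handles both sums, and the only nontrivial input is a witness $\mu\in H^\perp$ with $\chi_\mu(g)\neq 1$ for $g\notin H$. The one difference is how you get $\mu$. The paper appeals directly to $(H^\perp)^\perp=H$, which it asserts but never proves, whereas your counting argument with $K=\langle H,g\rangle$ derives exactly the needed inclusion from the appendix's proven cardinality fact $|K^\perp|=|G|/|K|$ together with inclusion reversal, so your version is slightly more self-contained.
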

\begin{proof} 
    Let us start with the first equation, and define $S := \sum_{h \in H} \chi_\lambda(h)$. 
    \begin{enumerate}
        \item For every $\lambda \in H^\perp$, we have $\chi_\lambda(h) = 1$ for all $h \in H$. 
        Hence, $\sum_{h \in H} \chi_\lambda(h) = |H|$.
        \item For every $\lambda \notin H^\perp$, there exists $h^{(0)} \in H$ such that $\chi_\lambda (h^{(0)}) \neq 1$.
        Then,
        \begin{align}
            S = \sum_{h \in H} \chi_\lambda(h) = \sum_{h \in H} \chi_{\lambda}(h^{(0)} + h) = \chi_{\lambda}(h^{(0)})\sum_{h \in H} \chi_{\lambda}(h) =\chi_{\lambda}(h^{(0)})S.
        \end{align}
        Equivalently, $(1- \chi_{\lambda}(h^{(0)}))S = 0$. 
        Since $\chi_\lambda (h^{(0)}) \neq 1$, it must follow that $S=0$.
    \end{enumerate}
    
    To prove the second equation, the argument is similar. When $g\in H$, it is clear that $\chi_\lambda(g)=1$ for all $\lambda\in H^\perp$ and therefore $S=\sum_{\lambda\in H^\perp}\chi_\lambda(g)=|H^\perp|$. When $g\not\in H$ there must exist at least one element $\lambda^{(0)}\in H^\perp$ such that $\chi_{\lambda^{(0)}}(g)\neq 1$, for otherwise $g$ would belong to $(H^\perp)^\perp=H$. 
    Note that since $\chi_{\lambda^{(0)} + \lambda}(g)=\chi_{\lambda^{(0)}}(g)\chi_\lambda(g)$, which holds from the group structure of $\hat{G}$, one can proceed by the same argument as above to have $(1- \chi_{\lambda^{(0)}}(g))S = 0$ and conclude that $S=0$ when $g\not\in H$.
\end{proof}

Finally, we define the direct sum of two subgroups. 
\begin{definition}[Direct sum of subgroups]
\label{def:direct-sum-decomposition}
    Let $G$ be an Abelian group and let $H,K\leq G$ be two subgroups. We say $G$ is the direct sum of $H$ and $K$ if (1) $H\cap K=\{0\}$ where $0$ is the group identity element; (2) $G$ is generated by $H$ and $K$, $G=H\oplus K:=\{h+k:h\in H, k\in K\}$.
\end{definition}

For example, fix $0\leq m \leq n$ and let $H=\mZ_2^{n-m} \times \{0\}^m$ and $K= \{0\}^{n-m}\times \mZ_2^{m}$.
These subgroups satisfy $H \cap K = \{0\}$, and every bitstring in $\mZ_2^n$ decomposes uniquely as a sum of an element of $H$ and an element of $K$.
Thus, $\mZ_2^n = H \oplus K$.

\subsection{Integer matrix forms and their kernels}
\label{sec:matrix-form-kernel}
We now turn to discussing integer matrices and their manipulation by row operations.
Given a matrix $A\in\mathbb{Z}_M^{i\times n}$, we seek a canonical form reachable by row operations that preserves its row span, $\mspan(A):=\langle A_1,A_2\cdots, A_i\rangle$, where $A_k$ denotes the $k$-th row. 
When $M$ is prime, $\mathbb{Z}_M$ is a field and $A$ admits a reduced row echelon form.
When $M$ is composite, $\mathbb{Z}_M$ is a ring and the appropriate substitute is the Howell normal form.
We defer all proofs in this subsection to \cref{apx: RREF} and \cref{apx: howell normal form}. 

The reduced row echelon form is defined as follows.

\begin{definition}[Pivot]
\label{def: pivot}
    The first non-zero entry of a matrix row is called a pivot. The column index of the pivot in row $k$ is denoted $j_k$, and the corresponding pivot is $A_{k,j_k}$. 
\end{definition}

\begin{definition}[Reduced row echelon form (RREF)]
\label{def: RREF}
    Let $M$ be a prime number. A matrix $A\in\mZ_M^{i\times n}$ is said to be in RREF if:
    \begin{enumerate}
        \item each pivot is the unique nonzero entry in its column, and the pivot is 1;
        \item pivot columns are strictly increasing left-to-right across rows.
    \end{enumerate}
\end{definition}

Every matrix over $\mZ_M$ with $M$ prime has a unique RREF.
After zero rows are omitted, this form is uniquely determined by the row span.
Thus, if $A$ and $B$ are both in RREF and have no zero rows, then $\mspan(A)=\mspan(B)$ if and only if $A=B$.

The following statements describe how to test membership in the row span of an RREF matrix and maintain RREF when a new row is added.

\begin{proposition}[Membership testing]
\label{prop:membership-testing-RREF}
Let $M$ be prime, $1\leq i<n$, and $A\in\mZ_M^{i\times n}$ in RREF with no all-zero rows and pivot columns $j_1<\dots<j_i$. 
Then every $g\in \mZ_M^n$ decomposes uniquely as $ g=\sum_{k=1}^{i} c_k A_k + q$, where $c_k\in\mZ_M$, $q\in\mZ_M^n$, and $q_{j_k}=0$ for all $k$. 
Moreover, $g\in \mspan(A)$ iff $q=0$. Such a decomposition costs $O((n-i)i~\polylog M)$ classical binary operations. 
\end{proposition}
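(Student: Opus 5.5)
Since $A$ is in RREF, the pivot columns carry all the information: for each $k$, $A_{k,j_k}=1$ and $A_{k',j_k}=0$ for $k'\neq k$. The natural choice is therefore $c_k := g_{j_k}$ and $q := g-\sum_{k=1}^i c_k A_k$. I would prove the statement by checking, in order, existence, uniqueness, the membership criterion, and the cost.

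\emph{Existence.} Evaluate $q$ at a pivot column $j_\ell$:
\begin{equation}
q_{j_\ell} = g_{j_\ell} - \sum_k g_{j_k} A_{k,j_\ell} = g_{j_\ell} - g_{j_\ell} = 0,
\end{equation}
using the first RREF property of \cref{def: RREF}.

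\emph{Uniqueness.} Suppose $\sum_k c_k A_k + q = \sum_k c'_k A_k + q'$, with both $q$ and $q'$ vanishing on all pivot columns. Evaluating at column $j_\ell$ gives $c_\ell = c'_\ell$, because only row $\ell$ is nonzero there and its entry is $1$. Hence $q=q'$. This step works over any ring, so primality of $M$ is not even needed here.

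\emph{Membership criterion.} If $q=0$, then clearly $g\in\mspan(A)$. Conversely, if $g\in\mspan(A)$, write $g=\sum_k d_k A_k$ for some coefficients $d_k$. This is itself a decomposition with remainder $0$, which vanishes on pivots, so by uniqueness $q=0$.

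\emph{Cost.} Reading off $c_k$ is free. For $q$, no arithmetic is needed at the pivot coordinates, since we already know $q_{j_k}=0$. At each of the $n-i$ non-pivot coordinates $j$, we compute $g_j-\sum_k c_k A_{k,j}$. That is $i$ multiplications and additions modulo $M$, each costing $\polylog M$ bit operations. The total is $O((n-i)\,i~\polylog M)$.

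The only point I expect to need care is this cost accounting: one must explicitly skip the pivot columns to obtain $(n-i)$ rather than $n$. Everything else follows directly from the RREF properties.
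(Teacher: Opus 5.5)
Your proof is correct and follows essentially the same route as the paper: you use the same coefficients $c_k=g_{j_k}$ (the paper writes $g_{j_k}(A_{k,j_k})^{-1}$, which is the same since pivots equal $1$), prove uniqueness by reading off the pivot coordinates, and obtain the membership criterion from the pivot-column structure. The only difference is that you derive the membership criterion from uniqueness, whereas the paper uses a direct contradiction argument that rests on the same observation; your explicit existence check and your cost accounting that skips the pivot coordinates fill in details the paper leaves implicit.
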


\begin{theorem}[Incremental Gauss-Jordan elimination]
\label{theorem: gauss-jordan}
    Let $M$ be prime, $1\leq i<n$, and $A\in\mZ_M^{i\times n}$ in RREF with no all-zero rows. For a
    new vector $b\in\mZ_M^n$, the matrix obtained by appending $b$ to the rows of $A$ can be returned in RREF
    using $O((n-i) i~\polylog M)$ binary operations.
\end{theorem}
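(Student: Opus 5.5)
The plan is to use \cref{prop:membership-testing-RREF} to split off the part of $b$ that lies outside $\mspan(A)$, and then run a single pivot step of Gauss--Jordan elimination on it.

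First, decompose $b=\sum_{k=1}^{i} c_k A_k + q$ with $q_{j_k}=0$ for every pivot column $j_k$, at cost $O((n-i)i\,\polylog M)$. Concretely, set $c_k=b_{j_k}$, which is valid because each pivot is $1$ and is the only nonzero entry in its column. Then $q = b-\sum_k c_k A_k$. Only the $n-i$ non-pivot coordinates of $q$ need to be computed, since the pivot coordinates vanish by construction. Each such coordinate is a sum of $i$ products in $\mZ_M$, which gives the stated count.

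If $q=0$, then $b\in\mspan(A)$. Appending $b$ and reducing it to zero leaves $A$, padded with a zero row, in RREF, so nothing further is needed.

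If $q\neq 0$, let $j^\ast$ be the first nonzero coordinate of $q$; it is not a pivot column of $A$. Since $M$ is prime, $q_{j^\ast}$ is invertible. Define $r=q_{j^\ast}^{-1}q$, so $r_{j^\ast}=1$, and computing $r$ costs $O(n-i)$ field operations. Note that $\mspan(A\cup\{b\})=\mspan(A\cup\{r\})$, because $r$ is obtained from $b$ by subtracting row combinations and scaling by a unit.

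Next, clear column $j^\ast$ in the existing rows by replacing $A_k\mapsto A_k - A_{k,j^\ast}\, r$ for each $k$. Because $r_{j_l}=0$ for every old pivot column $j_l$, these updates leave the old pivot columns untouched: they remain unit columns with pivot $1$. Moreover, only the non-pivot coordinates of each row change. Each update therefore costs $O(n-i)$ operations, for $O((n-i)i)$ field operations in total. Finally, insert $r$ at the position that keeps the pivot columns sorted. This is a permutation of rows, which preserves the span.

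It remains to verify the two RREF conditions and the cost bound. For the first condition, the pivot $1$ of $r$ is now alone in column $j^\ast$. For the old pivots, the leading entry of each updated row is still at $j_k$, since $r$ vanishes at $j_k$ and $r$'s earlier coordinates that could alter row $k$ before $j_k$ are zero. This holds because $A_k$ is zero before $j_k$, and any nonzero entry of $r$ lies at or after $j^\ast$; if $j^\ast<j_k$, then $A_{k,j^\ast}=0$ and row $k$ is unchanged. For the second condition, pivot order holds by sorted insertion. Each field operation in $\mZ_M$, including inversion via the extended Euclidean algorithm, costs $\polylog M$ binary operations, giving $O((n-i)i\,\polylog M)$ overall.

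The main point requiring care is this pivot-preservation check: the cost bound depends on touching only non-pivot coordinates, which is justified precisely because $q$ vanishes on all old pivot columns.
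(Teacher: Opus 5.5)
Your proposal is correct and follows essentially the same route as the paper: reduce $b$ against the pivot columns to get the residual $q$; return $A$ if $q=0$; otherwise normalize the leading entry of $q$, clear that column from the other rows, and insert the new row in pivot order. Your explicit checks make the $O((n-i)i\,\polylog M)$ count more transparent than the paper's brief cost accounting: you verify that the old pivots stay in place and that only the $n-i$ non-pivot coordinates are ever touched.
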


When $M$ is not prime, a matrix $A \in \mathbb{Z}_M^{i \times n}$ need not admit an RREF under row operations alone. 
In this case, the appropriate canonical form is the Howell normal form, which preserves the row span and can be obtained via row operations over $\mathbb{Z}_M$, possibly after creating new rows.

\begin{definition}[Howell normal form]
\label{def: howell normal form}
    Let $M\in\mathbb{Z}_+$. A matrix $A\in\mathbb{Z}_M^{i\times n}$ over $\mathbb{Z}_M$ is in the Howell normal form if 
    the following is true:
    \begin{enumerate}
        \item There is no zero row. 
        \item The pivot indices $j_1 < j_2 < \cdots < j_i$ are strictly increasing.
        \item For each $1 \leq k \leq i$, the pivot $A_{k,j_k}$ divides $M$.
        \item For each $1 \leq k \leq i$: the entries above each pivot satisfy
              $0 \leq A_{k',j_k} < A_{k,j_k}$ for all $k' < k$,
              and the entries below each pivot satisfy $A_{k',j_k} = 0$ for all $k' > k$.
        \item  (Extended rows property) Let $v$ be an element in the row span of $A$, i.e., $v\in\langle A_1,A_2,\cdots,A_i\rangle$ where $A_k$ denotes the $k$-th row of $A$. If the first $j_k$ components of $v$ are zero for some $1 \leq k \leq i$,
              then $v \in \langle A_{k+1}, \ldots, A_i \rangle$.
    \end{enumerate}
\end{definition}

Specifically, multiplying row $k$ by $M/A_{k,j_k}$ yields a row in 
$\langle A_{k+1}, \ldots, A_i \rangle$. Consequently, the subgroup generated by the rows of $A$ has cardinality $\prod_{k=1}^{i} (M/d_k)$, where $d_k := A_{k,j_k}$ is the $k$-th pivot. We note that given a matrix $A$ over $\mathbb{Z}_M$, its Howell normal form is unique. Furthermore, if $A$ and $B$ are both in Howell normal form, then $\mspan(A)=\mspan(B)$ leads to $A=B$.

\begin{proposition}[Membership testing]
\label{prop:membership-testing-HNF}
    Let $A \in \mathbb{Z}_M^{i \times n}$ in Howell normal form with pivot columns $j_1 < \cdots < j_{i}$. Then any $g \in \mathbb{Z}_M^n$ decomposes 
    uniquely as $g = \sum_{k=1}^{i} c_k A_k + q$, 
    where $c_k \in [M/A_{k,j_k}]$ for all $k$ and $q_{j_k}\in [A_{k,j_k}]$ for all $k$. 
    Moreover, $g\in \mspan(A)$ iff $q=0$. Such a decomposition costs $O(ni~\polylog M)$ binary operations. 
\end{proposition}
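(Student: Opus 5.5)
We prove the statement by a greedy reduction of $g$ against the rows of $A$, going left to right through the pivot columns. It has three parts: existence of the decomposition, uniqueness, the membership criterion, followed by the cost bound.

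\emph{Existence.} Set $q^{(0)} := g$ and process $k=1,\dots,i$ in order. At step $k$, divide the current entry $q^{(k-1)}_{j_k}$ by the pivot $d_k := A_{k,j_k}$, viewing the entry as an integer in $[M]$:
\[
q^{(k-1)}_{j_k} = c_k d_k + r_k, \qquad 0\le r_k<d_k .
\]
Then set $q^{(k)} := q^{(k-1)} - c_k A_k$. Since $d_k \mid M$, the quotient satisfies $c_k\in[M/d_k]$. By property 4 of \cref{def: howell normal form}, rows $k'>k$ vanish in column $j_k$, so later steps never change the entries at columns $j_1,\dots,j_k$. After the last step we obtain $q := q^{(i)}$ with $q_{j_k}=r_k\in[d_k]$ for every $k$, and $g=\sum_k c_kA_k+q$ by construction.

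\emph{Uniqueness.} Suppose $\sum_k c_kA_k+q=\sum_k c_k'A_k+q'$, with both sides satisfying the stated constraints. We compare columns from the left. Let $k$ be the smallest index with $c_k\neq c_k'$. All rows with index at least $k$ vanish before column $j_k$, and rows with index greater than $k$ vanish at column $j_k$. Reading column $j_k$ therefore gives
\[
(c_k-c_k')d_k + (q_{j_k}-q'_{j_k}) \equiv 0 \pmod M .
\]
Here $|q_{j_k}-q'_{j_k}|<d_k$, and $d_k$ divides both $M$ and $(c_k-c_k')d_k$, so $d_k$ divides $q_{j_k}-q'_{j_k}$. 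This forces $q_{j_k}=q'_{j_k}$, hence $(c_k-c_k')d_k\equiv0 \pmod M$. Because $c_k,c_k'\in[M/d_k]$, this gives $c_k=c_k'$, contradicting the choice of $k$. Hence all $c_k=c_k'$, and subtracting gives $q=q'$.

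\emph{Membership criterion.} If $q=0$, then clearly $g\in\mspan(A)$. For the converse, which is the main obstacle, we use the extended rows property (property 5). The point is that an arbitrary combination $\sum_k e_kA_k$, with $e_k\in\mZ_M$ unrestricted, can be rewritten in the canonical form with $c_k\in[M/d_k]$ and remainder $q=0$. We argue by induction from the last row upward. Write $e_k = a_k (M/d_k) + c_k$ with $c_k\in[M/d_k]$. The row $(M/d_k)A_k$ vanishes on the first $j_k$ columns, so by property 5 it lies in $\langle A_{k+1},\dots,A_i\rangle$. The excess $a_k(M/d_k)A_k$ can therefore be pushed to the later rows, and processing $k=1,\dots,i$ in order normalizes every coefficient. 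This yields a representation of $g$ with admissible $c_k$ and remainder $0$; by uniqueness, the computed remainder is then $q=0$.

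\emph{Cost.} Each of the $i$ steps consists of one integer division and one row update of length $n$ over $\mZ_M$. This gives $O(n~\polylog M)$ per step, and $O(ni~\polylog M)$ binary operations in total.
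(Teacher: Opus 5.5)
Your proof is correct. The existence step, the uniqueness step and the cost count follow the paper's argument. The existence step divides greedily at the pivot columns, left to right. The uniqueness step reads the first pivot column where the coefficients differ. In uniqueness the paper notes that $c_kd_k+q_{j_k}$ already lies in $[M]$ as an integer and appeals to uniqueness of Euclidean division, while you derive $d_k\mid(q_{j_k}-q'_{j_k})$ from $d_k\mid M$. The two are equivalent.

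You genuinely diverge in the converse of the membership criterion.

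\emph{The paper's route.} It argues directly on the remainder. Since $g-q\in\mspan(A)$, it suffices to show that a vector $q$ with $q_{j_k}\in[d_k]$ lying in $\mspan(A)$ must vanish. Reading column $j_1$ shows that $q_{j_1}$ is a multiple of $d_1$ smaller than $d_1$, hence zero. Property 5 then places $q$ in $\langle A_2,\dots,A_i\rangle$, and the pivots are peeled off one at a time.

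\emph{Your route.} You first prove a normalization lemma: every $\sum_k e_kA_k$ can be rewritten with coefficients in $[M/d_k]$ by pushing the excess $a_k(M/d_k)A_k\in\langle A_{k+1},\dots,A_i\rangle$ down to later rows. Uniqueness then forces the greedy remainder to be $0$.

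\emph{Comparison.} Both routes use property 5 in exactly the same way, on a span element vanishing on the first $j_k$ columns. The paper's version is a little shorter and does not depend on uniqueness. Yours additionally shows that $(c_1,\dots,c_i)\mapsto\sum_k c_kA_k$ is a bijection from $\prod_k[M/d_k]$ onto $\mspan(A)$. This immediately yields $|\mspan(A)|=\prod_k M/d_k$, which the paper asserts after the definition of the Howell normal form without a detailed argument.

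One small point of presentation. You say you argue ``by induction from the last row upward'' but then process $k=1,\dots,i$ in order. These describe the same argument: an induction over the suffixes $\langle A_k,\dots,A_i\rangle$, carried out top-down. Each push changes only coefficients of later rows, and at $k=i$ property 5 forces $(M/d_i)A_i=0$. Saying this explicitly would avoid the apparent contradiction.
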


\begin{theorem}[Howell normal form algorithm~{\cite{storjohann1998fast}}]
\label{theorem: howell algorithm}
    Let $A\in\mathbb{Z}_M^{i\times n}$ be a matrix over $\mathbb{Z}_M$.
    There exists an algorithm that brings $A$ into Howell normal form using $O(n^2\max(n,i) ~\polylog M)$ binary operations.
\end{theorem}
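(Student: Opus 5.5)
The statement is the Howell normal form algorithm of \cite{storjohann1998fast}. I would prove it constructively, by a column-by-column elimination over $\mZ_M$ whose only ring primitive is a gcd-based $2\times 2$ unimodular transformation.

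\textbf{Step 1: pad.} I would first append zero rows so that $A$ has at least $n$ rows. Zero rows do not change the row span, and they supply room for the extra rows the extended rows property will force. After this, every working matrix has $\max(n,i)$ rows.

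\textbf{Step 2: triangularize.} I would process columns $j=1,\dots,n$ in order, keeping a current pivot row $k$. To combine two rows $r,s$ whose entries $a,b$ sit in column $j$, compute via the extended Euclidean algorithm $g=\gcd(a,b,M)$ and coefficients with $ua+vb\equiv g \pmod M$. One then finds a matrix
\[
\begin{pmatrix}u & v\\ -b/g & a/g\end{pmatrix}
\]
that is invertible over $\mZ_M$; when it is not, a standard stabilization trick (replacing $u$ by $u+t\,b/g$ for a suitable $t$) fixes this, and its cost is $\polylog M$. Applying such a matrix puts $g$ in row $r$ and zero in row $s$. Each such operation preserves the row span and costs $O(n\,\polylog M)$. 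Sweeping one column over all rows below the pivot costs $O(\max(n,i)\, n\,\polylog M)$. Afterward, I multiply the pivot row by a unit so the pivot becomes $\gcd(\text{pivot},M)$, which divides $M$; this gives property 3.

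\textbf{Step 3: extended rows.} This is the step I expect to be the main obstacle. After fixing the pivot $d_k$ in column $j_k$, the row $(M/d_k)A_k$ has zeros in the first $j_k$ coordinates but may not lie in the span of later rows. Storjohann's fix is to place this row into a spare zero row and continue eliminating it in the later columns. The invariant I would maintain is this: after processing column $j$, the span of the rows not yet used as pivots equals the set of vectors in $\mspan(A)$ that vanish on columns $\le j$.

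To prove the invariant, I would argue as follows. Any $v\in\mspan(A)$ vanishing on columns $<j$ can be written as $cA_k+w$ with $w$ in the later span. Vanishing at column $j$ then forces $c\,d_k\equiv 0$, so $c$ is a multiple of $M/d_k$. Hence $cA_k$ lies in the span of the added row, and the invariant holds.

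Since at most one new row is created per column, the padding suffices. Once all columns are processed, the rows remaining at the bottom are zero and can be discarded, which gives properties 1 and 5.

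\textbf{Step 4: reduce above pivots.} Finally, for each pivot from last to first, I reduce the entries above it modulo $d_k$ by subtracting $\lfloor A_{k',j_k}/d_k\rfloor A_k$ from each row $k'<k$. This gives property 4 and does not alter zero patterns below or to the left of the pivot. It costs $O(n\cdot n\,\polylog M)$ per pivot.

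\textbf{Complexity.} Steps 2 and 3 together cost $n$ columns times $O(\max(n,i))$ row operations of cost $O(n\,\polylog M)$, and Step 4 is dominated by this. The total is $O(n^2\max(n,i)\,\polylog M)$, as claimed.
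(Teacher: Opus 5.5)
Your ingredients are the same as the paper's (Storjohann's) algorithm: gcd-based unimodular $2\times 2$ row operations, a unit multiplier that makes each pivot divide $M$, quotient reduction above pivots, and the annihilator row $(M/d_k)A_k$ pushed into later columns. You interleave triangularization and annihilator insertion in one pass, whereas the paper splits the work into Part A (triangularize) and Part B (normalize, reduce, and eliminate a single scratch row). Your span invariant is a clean justification of property 5 that the paper does not spell out.

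However, Step 4 fails as stated. Reducing pivots from last to first does not give property 4 when $M$ is composite. Subtracting $qA_k$ from a row $k'<k$ also changes that row in every later pivot column $j_{k''}$, $k''>k$, where $A_{k,j_{k''}}\neq 0$, and this undoes reductions already made there. For example, over $\mathbb{Z}_8$ take the rows $(2,5,7),(0,2,3),(0,0,2)$. They are triangular with pivots $2\mid 8$ and already satisfy the extended rows property, since $4A_1=2A_2+3A_3$ and $4A_2=2A_3$. So they are a legitimate output of your Steps 2--3; in fact they are what those steps return on this input with the standard Euclid coefficients. Last-to-first reduction then produces $(2,1,7),(0,2,1),(0,0,2)$, which has $7\notin[0,2)$ above the third pivot.

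The fix is to process pivots in increasing order, as the paper's Part B.1 does (at iteration $k$ it reduces rows $1,\dots,k-1$). Subtracting multiples of $A_k$ only touches columns $\geq j_k$, so columns $j_{k''}$ with $k''<k$ stay reduced. On the example this gives the Howell form $(2,1,1),(0,2,1),(0,0,2)$. These operations also preserve every tail span $\langle A_{l+1},\dots,A_r\rangle$, so property 5 survives. Over a field the problem is invisible, because all pivots are $1$ and the entries above them become $0$.

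Two smaller inaccuracies do not affect the bound but should be corrected.

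First, ``the padding suffices'' is false for your interleaved scheme. Over $\mathbb{Z}_4$ with rows $(2,1,0),(0,0,1),(0,0,1)$, after column $1$ both non-pivot rows are nonzero. Yet the annihilator row $(0,2,0)$, which is not in their span, must be stored, so a fourth row is needed. Just append such rows: there is at most one per pivot, hence at most $\max(n,i)+n$ rows in total. The paper avoids this by triangularizing first. After that a single scratch row suffices, because Part B.2 eliminates it against rows $k+1,\dots,n$ until it vanishes, and it is then reused.

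Second, with $g=\gcd(a,b,M)$ and only a congruence $ua+vb\equiv g \pmod M$, your $2\times 2$ matrix can be singular. In $\mathbb{Z}_6$, $a=2$, $b=4$, $u=4$, $v=0$ gives determinant $4$. Your stabilization $u\mapsto u+t\,b/g$ gives determinant $4+2t$, which is never a unit mod $6$. Instead, run the integer extended Euclidean algorithm on the representatives in $[M]$, so that $ua+vb=\gcd(a,b)$ holds over $\mathbb{Z}$ and the determinant is exactly $1$. This is the guarantee $sv-tu=1$ of the paper's $\mathrm{Gcdex}$.
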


\begin{table}[t]
    \centering
    \begin{tabular}{|c|c|c|c|}
    \hline
                  &  Update the normal form & Generate the kernel & Membership testing\\
                  \hline
        $\mZ_M^n$, $M$ prime & $O((n-i)i ~\polylog M)$ & $O((n-i)i ~\polylog M)$ & $O((n-i)i ~\polylog M)$ \\ \hline
        $\mZ_M^n$, $M$ composite & $O(n^2\max(n,i) ~\polylog M)$ & $O(n i^2 ~\polylog M)$ & $O(ni~\polylog M)$ \\ \hline
    \end{tabular}
    \caption{Classical cost of integer matrix algorithms with sparse representation of vectors.  }
    \label{tab:classical}
\end{table}

We comment that when bringing a matrix $A\in\mathbb{Z}_M^{i\times n}$ to RREF, the number of rows would not increase; while when bringing a matrix $A\in\mathbb{Z}_M^{i\times n}$ to Howell normal form, the number of rows may increase up to $n$. 

\paragraph{Generating the kernel.} 
Given a matrix in the RREF or Howell normal form, one can generate its kernel efficiently. 
The proofs are in \cref{apx: RREF} and \cref{apx: howell normal form}.

\begin{definition}[Kernel]
    Let $A\in\mathbb{Z}_M^{i\times n}$ be a matrix over $\mathbb{Z}_M$. We define $\ker(A)$ as $\ker(A)=\{x\in \mathbb{Z}_M^{n\times 1}| Ax=0\}$. 
\end{definition}

\begin{theorem}[Generating the kernel, $\mZ_M^n$ prime]
\label{theorem: kernel generation}
    Let $M \in \mathbb{Z}_+$ be prime and $A \in \mZ_{M}^{i\times n}$, for $i \leq n$, in RREF with no all-zero rows.
    Its kernel $\ker(A)$ can be generated by $n-i$ vectors with at most $i+1$ non-zero entries each, and there exists an algorithm that outputs these vectors in $O((n-i)n~\polylog M)$ classical binary operations, or in $O((n-i)i~\polylog M)$ classical binary operations if using sparse representation of the vectors.
\end{theorem}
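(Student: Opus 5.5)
The plan is to use the standard free-variable construction from linear algebra over the field $\mZ_M$. Let $A$ be in RREF with pivot columns $j_1<\dots<j_i$, and let $F=[n]\setminus\{j_1,\dots,j_i\}$ be the set of $n-i$ free columns. For each free column $f\in F$, I will define a vector $v^{(f)}\in\mZ_M^{n}$ by the following rules:
\begin{itemize}
\item set $v^{(f)}_f=1$;
\item set $v^{(f)}_{f'}=0$ for all other free columns $f'\neq f$;
\item for each pivot row $k$, set $v^{(f)}_{j_k}=-A_{k,f}$.
\end{itemize}
By construction, each $v^{(f)}$ has at most $i+1$ nonzero entries: the free coordinate $f$ together with at most one entry per pivot.

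\textbf{Step 1: each $v^{(f)}$ lies in the kernel.} Fix a row $k$. By the RREF conditions of \cref{def: RREF}, row $A_k$ has entry $1$ at its pivot column $j_k$ and zeros at every other pivot column. Hence
\begin{equation}
(Av^{(f)})_k=\sum_{c}A_{k,c}v^{(f)}_c = A_{k,j_k}\,v^{(f)}_{j_k}+A_{k,f}\cdot 1 = -A_{k,f}+A_{k,f}=0,
\end{equation}
since all other free coordinates of $v^{(f)}$ vanish.

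\textbf{Step 2: the $v^{(f)}$ generate the kernel.} Take any $x\in\ker(A)$ and set $y=x-\sum_{f\in F}x_f v^{(f)}$. Then $y\in\ker(A)$ and $y$ vanishes on all free coordinates. For each row $k$, the equation $(Ay)_k=0$ now reads $y_{j_k}=0$, because every other nonzero column of $A_k$ is a free column. So $y=0$, and $x=\sum_{f} x_f v^{(f)}$. Linear independence is not needed, but it follows anyway from the free coordinates; this gives exactly $n-i$ generators. Everything here uses only the ring structure plus the normalized pivot $1$, so primality enters only through the existence of the RREF.

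\textbf{Step 3: cost.} Computing the pivot set $F$ takes $O(n)$ index operations (or $O(i)$ from a stored pivot list). For each of the $n-i$ free columns, we read $i$ entries $A_{k,f}$ and negate each modulo $M$, at $O(\polylog M)$ per entry. In sparse representation, each vector is written as at most $i+1$ (index, value) pairs, for a total of $O((n-i)i\,\polylog M)$. In dense representation, we must also write $n$ entries per vector, giving $O((n-i)n\,\polylog M)$.

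\textbf{Main obstacle.} There is essentially none beyond the bookkeeping in Step 2: checking that every nonpivot nonzero entry of a row sits in a free column, so that each row equation isolates a single pivot coordinate. This is immediate from condition 1 of \cref{def: RREF}.
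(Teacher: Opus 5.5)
Your proposal is correct and follows essentially the same route as the paper. The paper also assigns to each free column $f_l$ the vector with a $1$ at $f_l$ and $-(A_{k,j_k})^{-1}A_{k,f_l}$ at each pivot $j_k$. It verifies $Av^{(l)}=0$ row by row, argues generation from the fact that kernel elements are determined by their free coordinates, and counts $O((n-i)i)$ arithmetic operations. Your version just uses the RREF normalization $A_{k,j_k}=1$ to drop the inverse, and your subtraction argument in Step 2 spells out the generation claim that the paper states in one line.
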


\begin{theorem}[Generating the kernel, $\mZ_M^n$]
\label{theorem:kernel-M}
    Let $M\in\mathbb{Z}_+$ and $A \in \mZ_{M}^{i\times n}$, for $i \leq n$, in Howell normal form with no all-zero rows.
    Its kernel $\ker(A)$ can be generated by $n$ vectors with at most $i+1$ non-zero entries each, and there exists an algorithm that outputs these vectors in $O((ni^2+n^2) ~\polylog M)$ classical binary operations, or in $O(n i^2 ~\polylog M)$ classical binary operations if using sparse representation of the vectors.
\end{theorem}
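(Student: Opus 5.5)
We prove \cref{theorem:kernel-M} by reading the kernel off the staircase structure of the Howell normal form through back-substitution. The pivot columns are $j_1<\dots<j_i$ with pivots $d_k=A_{k,j_k}$, each dividing $M$. For $x\in\mZ_M^{n}$, the condition $Ax=0$ decomposes row by row. Row $k$ reads
\[
d_k x_{j_k}+\sum_{j>j_k} A_{k,j}x_j\equiv 0 \pmod M,
\]
because every entry of row $k$ to the left of $j_k$ vanishes. We therefore solve for the pivot coordinates from the bottom row upward, treating the non-pivot coordinates as free parameters.

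Step one handles the non-pivot columns. For each free column $f\notin\{j_1,\dots,j_i\}$ we set $x_f=1$ and set every other free coordinate to zero. We then solve for $x_{j_i},x_{j_{i-1}},\dots,x_{j_1}$ in that order. In the row-$k$ congruence $d_k x_{j_k}\equiv -r_k \pmod M$, the quantity $r_k$ is already known, and it is solvable exactly when $d_k\mid r_k$. We must argue that divisibility always holds, and this is the main obstacle; we use the extended rows property (item 5 of \cref{def: howell normal form}). Multiplying row $k$ by $M/d_k$ gives a vector that vanishes in its first $j_k$ columns, so by item 5 it equals a combination $\sum_{k'>k}c_{k'}A_{k'}$. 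Assume the lower rows are already satisfied by the partial solution, i.e.\ $A_{k'}x=0$ for all $k'>k$. Then $(M/d_k)A_k x\equiv (M/d_k)\,r_k\equiv 0 \pmod M$, hence $d_k\mid r_k$ and $x_{j_k}$ exists; we fix $x_{j_k}=-(r_k/d_k)$. Step two handles the pivot columns themselves. For each pivot column $j_k$ we add the kernel vector with $x_{j_k}=M/d_k$, all free coordinates zero, and the remaining pivot coordinates solved upward exactly as in step one. Together these give at most $(n-i)+i=n$ vectors.

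Step three shows the vectors generate $\ker(A)$. Take any kernel element $y$ and subtract the appropriate integer combination of the free-column vectors, so that every free coordinate of $y$ becomes zero. The last row then forces $d_i y_{j_i}\equiv0$, so $y_{j_i}$ is a multiple of $M/d_i$, and we subtract that multiple of the $j_i$ pivot vector. Proceeding upward by induction eliminates every pivot coordinate, so $y$ lies in the span. For sparsity, the free vector for column $f$ can only be nonzero at $f$ and at pivot columns, which is at most $i+1$ entries; the same count holds for the pivot vectors.

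For cost, each vector needs $i$ back-substitution steps. Each step is an inner product over at most $i$ nonzero pivot coordinates, with lookups of $A_{k,f}$ and one exact division of $r_k$ by $d_k$, costing $O(i\polylog M)$. Over $n$ vectors this totals $O(ni^2\polylog M)$ in sparse representation. Writing out dense vectors adds $O(n^2)$, which gives the $O((ni^2+n^2)\polylog M)$ bound stated in \cref{theorem:kernel-M}.
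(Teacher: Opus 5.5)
Your proof is correct and follows the paper's argument essentially step by step: free-column generators built by bottom-up back-substitution, the divisibility $d_k \mid r_k$ obtained from the extended rows property applied to $(M/d_k)A_k$, pivot generators with $x_{j_k}=M/d_k$, the same bottom-up elimination to show that the vectors generate $\ker(A)$, and the same $O(ni^2)$ sparse and $O(n^2)$ dense cost accounting. One detail is worth stating explicitly: your pivot generator for $j_k$ must vanish on the lower pivot coordinates $j_{k'}$ with $k'>k$, because your step-three induction relies on this. Your canonical choice $x_{j_{k'}}=-(r_{k'}/d_{k'})$ already gives $0$ there, whereas the paper imposes these zeros directly.
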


\subsection{Quantum arithmetic primitives}
\label{sec: quantum arithmetic primitives}
Finally, we introduce two quantum arithmetic primitives and specify the resource bounds we assume for their construction and implementation.

\begin{definition}[Quantum NOR]
\label{def: NOR gate}
    Let $M,n \in \mZ_+$ and let $a_1, \dots, a_n \in [M]$ be encoded in $n$ $M$-level quantum systems $\ket{a_1}\dots\ket{a_n}$. A quantum NOR gate is a circuit $U_{\NOR}$ that implements 
    \begin{align}
        U_{\NOR}:\ket{a_1}\dots\ket{a_n} \ket{0} \mapsto 
        \begin{cases}
            \ket{a_1}\dots\ket{a_n} \ket{1} & \text{if } a_1=\dots = a_n=0,\\
            \ket{a_1}\dots\ket{a_n} \ket{0} & \text{otherwise}.
        \end{cases}
    \end{align}
\end{definition}

\begin{definition}[Quantum modular multiply-adder]
\label{def: modular multiplier}
    Let $M \in \mZ_+$ and let $s \in [M]$ be a classically known value. 
    A quantum modular multiply-adder is a circuit $U_{\mul(s)}$ acting on two $M$-level 
    quantum systems as
    \begin{align}
        U_{\mul(s)}:\ket{a} \ket{b} \mapsto \ket{a} \ket{b + s \cdot a \bmod M},
        \qquad a, b \in [M].
    \end{align}
\end{definition}
For every fixed $s$, the map $(a,b) \mapsto (a,\, b + s \cdot a \bmod M)$ is a bijection on $[M]^{2}$, with inverse $(a,b) \mapsto (a,\, b - s \cdot a \bmod M)$. 
Thus, $U_{\mul(s)}$ permutes the computational basis and is unitary, even when $s$ and $M$ are not coprime.

We assume that descriptions of the NOR and modular multiply-adder circuits can be generated classically using $O(n~\polylog M)$ and $O(\polylog M)$ classical binary operations, respectively.
Using reversible binary arithmetic, we also assume that these circuits can be implemented with $O(n~\polylog M)$ and $O(\polylog M)$ elementary one- and two-qubit gates, respectively.

\section{Algorithms}
\label{sec: algorithms}
Following the approach of weak Fourier sampling, our algorithms for StateHSP recover the hidden subgroup $H\leq G$ by first finding a generating set $S^\perp$ for its dual subgroup $H^\perp\leq\hat{G}$.
Once $\langle S^\perp\rangle=H^\perp$, we recover $H$ by classical postprocessing using the identity $(H^\perp)^\perp=H$.

We introduce a sampling framework that automatically stops once the sampled elements generate $H^\perp$, without requiring prior knowledge of $|H|$.
The framework uses three primitives: $\textsc{HSP-Sampler}$, $\textsc{Update}$, and $\textsc{Dual-Solver}$.
The classical routines $\textsc{Update}$ and $\textsc{Dual-Solver}$ follow readily from the integer-matrix machinery of \cref{sec:matrix-form-kernel}, whereas $\textsc{HSP-Sampler}$ represents the quantum core of the algorithm and has different implementations depending on the access model. 

After presenting the framework, we develop these implementations in stages.
We start from the Fourier sampling circuit and show how to use it when the algorithm has access to copies of the input state. 
We then consider access to a state-preparation unitary and inverse, which allows us to combine the Fourier sampling circuit with amplitude amplification.
We explain how to implement the required flagging procedure efficiently.
The main text presents a flagging method based on solving a system of linear equations, while an alternative extending the approach of \citet{brassard1997exact} appears in~\cref{app:second-approach}.

We begin by formalizing the three abstract primitives used by the framework.

\begin{definition}[$\textsc{HSP-Sampler}$]
\label{def: HSP-sampler}
    Consider a StateHSP instance with hidden subgroup $H$ and dual subgroup $H^\perp$.
    Let $S^\perp$ be a data structure encoding a generating set for a subgroup $K^\perp \leq H^\perp$, so that $\langle S^\perp\rangle=K^\perp$, and let $\tilde{\delta} \in (0,1)$.
    An \emph{HSP-sampler} is a randomized classical-quantum routine $\textsc{HSP-Sampler}(S^\perp,\tilde{\delta})$ with the following behavior:
    \begin{enumerate}
        \item If $K^\perp = H^\perp$, it outputs $\mathsf{END}$ with certainty.
        \item If $K^\perp < H^\perp$, it outputs an element of
        $H^\perp \setminus K^\perp$ with probability $\geq 1-\tilde{\delta}$, and outputs $\mathsf{END}$ otherwise.
    \end{enumerate}
    Thus, the routine fails only when it outputs $\mathsf{END}$ before the encoded generators span $H^\perp$.
\end{definition}

The data structure encoding the generating set is maintained by a classical routine that performs the update.

\begin{definition}[$\textsc{Update}$]
\label{def: update}
    The classical deterministic routine $\textsc{Update}(S^\perp, s^\perp)$ takes a data structure encoding a generating set $S^\perp$ and updates it to encode a generating set for $\langle S^\perp\cup \{s^\perp\}\rangle$.
\end{definition}

Finally, a classical routine converts generators of a subgroup of $\widehat{G}$ into generators of its dual in $G$.

\begin{definition}[$\textsc{Dual-Solver}$]
\label{def: dual solver}
    Let $S^\perp$ be a data structure encoding a generating set for a subgroup $K^\perp \leq \widehat{G}$. 
    A \emph{Dual-solver} is a deterministic classical routine $\textsc{Dual-Solver}(S^\perp)$ that outputs a data structure encoding a generating set $S$ for the corresponding group $K \defeq (K^\perp)^\perp \leq G$.
\end{definition}

Using these three primitives, we solve abelian StateHSP by repeatedly sampling elements of $H^\perp$ that enlarge the subgroup generated so far, until we generate the whole subgroup.
The specifics of $\textsc{HSP-Sampler}$ allow us to stop after $O(\log |G/H|)$ samples, with the special $\mathsf{END}$ symbol indicating when the subgroup $H^\perp$ is fully generated.
The only possible error is premature termination: the sampler may return $\mathsf{END}$ before $S^\perp$ spans all of $H^\perp$.

To control the total failure probability, we use an adaptive stopping strategy.
The algorithm starts with a fixed iteration budget and increases it whenever the sampler returns a new element.
We show that this strategy achieves the desired running time and success probability without prior knowledge of $|H|$ or an additional $\polylog|G|$ overhead.

\begin{theorem}[HSP-Sampling framework]
\label{theorem: HSP-sampling framework}
    Consider a StateHSP instance over a finite abelian group $G$ with hidden subgroup $H\leq G$, and let $\delta\in(0,1]$.
    Then, Algorithm~\ref{alg: HSP-Sampling framework} outputs a generating set for $H$ with probability at least $1-\delta$.
    Moreover, the algorithm makes at most $3(\floor{\log_2|G/H|}+\ceil{\log_2\tfrac{1}{\delta}})$ calls to $\textsc{HSP-Sampler}$ with failure parameter $\tilde{\delta}=1/2$, at most $\floor{\log_2 |G/H|}$ calls to $\textsc{Update}$, and one call to $\textsc{Dual-Solver}$.
\end{theorem}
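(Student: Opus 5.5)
The algorithm is not shown, but the introduction specifies it. Initialize a budget $B_0 = 3\ceil{\log_2\frac1\delta}$ and $S^\perp=\emptyset$. While $B>0$, call $\textsc{HSP-Sampler}(S^\perp,1/2)$ and decrement $B$. If the call returns an element $s^\perp$, call $\textsc{Update}(S^\perp,s^\perp)$ and add $3$ to $B$. If it returns $\mathsf{END}$, stop sampling. The algorithm then outputs $\textsc{Dual-Solver}(S^\perp)$. I would prove the three resource claims deterministically and the success probability via a random-walk argument.

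\textbf{Deterministic counts.} Each returned element lies in $H^\perp\setminus K^\perp$, so by \cref{def: HSP-sampler} it strictly enlarges the current span. The generated chain $\{e\}<K_1^\perp<\dots$ is therefore strictly increasing inside $H^\perp$. By \cref{fact: log-generators}, together with $|H^\perp|=|G/H|$, there are at most $d:=\floor{\log_2|G/H|}$ successes, hence at most $d$ calls to \textsc{Update}. The total budget ever granted is at most $B_0+3d$, and each sampler call consumes one unit. This bounds the number of calls by $3(d+\ceil{\log_2\frac1\delta})$. Once $K^\perp=H^\perp$, the next call returns $\mathsf{END}$ with certainty, or the budget has already run out. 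In either case the loop terminates, and \textsc{Dual-Solver} is called exactly once. Correctness of the output given success follows from $(H^\perp)^\perp=H$.

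\textbf{Failure probability.} Failure means the loop halts while $K^\perp<H^\perp$. This happens in one of two ways: the sampler returns $\mathsf{END}$ prematurely, or the budget reaches zero. The clean way to handle both is a coupling. While the span is incomplete, each call independently succeeds with probability at least $1/2$, conditionally on the past. A premature $\mathsf{END}$ counts as a failed trial; I would treat it as the process being stopped, which only helps the bound if I show that even the budget-exhaustion event is rare. I would first try to bound the premature-$\mathsf{END}$ event directly. Before completion there are at least the $d'\le d$ needed successes, and the walk ends at the first $\mathsf{END}$. So failure requires at least one $\mathsf{END}$ among the trials before the $d'$-th success.

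This bound is too weak when $d'$ is large, so it is the main obstacle. The rough estimate is that the probability that all of the first $d'$ trials succeed is only $2^{-d'}$. The resolution must be that $\mathsf{END}$ itself is the "decrement" event in the intended reading. The sampler's second clause says it outputs an element with probability at least $1/2$ and $\mathsf{END}$ otherwise, so $\mathsf{END}$ should mean "this call failed, decrement budget". Termination is then triggered only by exhaustion of the budget, or by certainty once $K^\perp=H^\perp$. I would adopt this reading: a sampler $\mathsf{END}$ decreases $B$ by $1$, a success increases it net by $2$, and the loop runs until $B=0$.

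Failure then means a walk with steps $+2$ (probability $\ge 1/2$) and $-1$ hits $0$ from $B_0$ before $d'$ up-steps. The probability of ever hitting $0$ is $\rho^{B_0}$, where $\rho$ is the root in $(0,1)$ of $\rho=\frac12\rho^{3}+\frac12$, namely $\rho=\frac{\sqrt5-1}{2}$. Since $\rho^3\le\frac12$ gives $\rho^{B_0}\le 2^{-B_0/3}\le\delta$, this bounds the failure probability. I would justify this with the supermartingale $\rho^{B_t}$ and optional stopping, and use stochastic domination to handle success probabilities strictly above $1/2$. The call count bound survives under this reading as well. Each call changes $B$ by $-1$ or $+2$, $B$ never exceeds $B_0+2d$ in total, and the number of calls is $B_0+3(\#\text{successes})\le 3(d+\ceil{\log_2\frac1\delta})$.
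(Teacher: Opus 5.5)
Your final reading is the correct one: Algorithm~\ref{alg: HSP-Sampling framework} never stops on $\mathsf{END}$, since the loop test $t<3(i+b)$ only counts iterations, so drop your first version and rephrase the termination sentence in your deterministic-counts paragraph (after completion the loop keeps decrementing until the budget is exhausted); the $\floor{\log_2|G/H|}$ bound on successes and the call count $B_0+3s\le 3(\floor{\log_2|G/H|}+\ceil{\log_2\frac1\delta})$ are unaffected and match the paper. Your ruin bound for the $(+2,-1)$ budget walk is essentially the paper's proof, which establishes the potential $2^{-k/3}$ by induction on the horizon, whereas you use the exact harmonic potential $\rho^{k}$, $\rho=(\sqrt5-1)/2$, with optional stopping; this gives the slightly sharper $\rho^{3b}$ and handles success probabilities above $1/2$ directly, without the monotonicity step $p_{N-1}(k-1)\ge p_{N-1}(k+2)$ that the paper invokes.
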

\begin{proof}
    We begin by studying the algorithm's running time.
    Let $i_t$ be the value of $i$ after $t$ iterations, and define the remaining iteration budget $B_t\coloneqq 3(i_t+b)-t$. 
    The loop continues exactly while $B_t> 0$.
    The initial budget is $B_0 = 3b$, while in general we have
    \begin{align}
        B_t = B_{t-1} + 
        \begin{cases}
            +2 & \text{if } \textsc{HSP-Sampler} \text{ outputs a new generator,}\\
            -1 & \text{if } \textsc{HSP-Sampler} \text{ outputs $\mathsf{END}$}.
        \end{cases}
    \end{align}

    Every non-$\mathsf{END}$ output of the $\textsc{HSP-Sampler}$ lies in $H^\perp \setminus \langle S^\perp\rangle$, and it strictly increases the subgroup generated by $S^\perp$, the counter $i$, and, consequently, the remaining budget.
    After $S^\perp$ generates the entire subgroup, the sampler only outputs $\mathsf{END}$, decreasing the budget and bringing the algorithm to termination.
    Since $|H^\perp|=|G/H|$, every sequence $\langle s_1^\perp\rangle
    < \langle s_1^\perp,s_2^\perp\rangle
    < \cdots
    < H^\perp$ of span-increasing generators of $H^\perp$ has length at most $\floor{\log_2|G/H|}$ (\cref{fact: log-generators}). 
    This upper bounds the value of $i$ by $\floor{\log_2|G/H|}$ and the total number of iterations by $3(\floor{\log_2|G/H|}+\ceil{\log_2\tfrac{1}{\delta}})$, proving our running time claims.

    We now turn to the algorithm's correctness.
    One call to the sampler can only fail by yielding an $\mathsf{END}$ even if there are new generators to output.
    To bound our overall failure probability, it remains to bound the probability that the loop terminates before $S^\perp$ spans the whole subgroup $H^\perp$.
    
    For $N,k\geq 0$, let $p_N(k)$ denote the supremum of the conditional probability: conditioned on the current budget being $k$, the budget reaches 0 within the next $N$ iterations and $\langle S^\perp\rangle<H^\perp$ at termination. The supremum is taken over all possible sampler outputs throughout the algorithm.  
    When the budget is over the algorithm terminates, so $p_N(0)=1$.
    We claim that, for every $N\geq 0$ and for all $k \geq 1$,
    \begin{align}
    \label{eq: algo probability bound}
        p_N(k) < 2^{-k/3}, \quad \text{independently of $N$.}
    \end{align}
    We prove this by induction on $N$.

    \begin{enumerate}
        \item \emph{Base case.} When $N=0$ and $k\geq 1$, we have $p_0(k)=0 < 2^{-k/3}$.
        \item \emph{Induction step.} Now suppose the claim holds for $N-1$.
        By \cref{def: HSP-sampler}, let $a \geq \frac{1}{2}$ be the probability that the next sampler call returns a new generator, which is valid since $\langle S^\perp\rangle<H^\perp$.
        If a new generator is returned, then the budget becomes $k+2$; if $\mathsf{END}$ is returned, then the budget becomes $k-1$. 
        Then,
        \begin{align}
            p_N(k) &\leq a p_{N-1}(k+2) + (1-a)p_{N-1}(k-1)\\
            \nonumber
            &\leq \frac{1}{2} p_{N-1}(k+2) + \frac{1}{2}p_{N-1}(k-1) \qquad (\text{since }p_{N-1}(k-1) \geq p_{N-1}(k+2))\\
              \nonumber
            &\leq  \frac{1}{2}2^{-\frac{k+2}{3}} + \frac{1}{2}2^{-\frac{k-1}{3}}\qquad\qquad\qquad\quad (\text{by induction, and note that $p_{N-1}(0)=1$})\\
              \nonumber
            & = \frac{3}{2^{\frac{5}{3}}} 2^{-\frac{k}{3}} < 2^{-\frac{k}{3}}.
              \nonumber
        \end{align}
    \end{enumerate}
    The algorithm starts with budget $B_0=3b$. 
    Therefore, for every finite horizon $N$, the probability of terminating the loop before $S^\perp$ entirely generates $H^\perp$ is at most $p_N(3b) < 2^{-3b/3} = 2^{-b} \leq \delta.$
\end{proof}

\begin{algorithm}[t]
  \caption{HSP-Sampling framework.}
  \label{alg: HSP-Sampling framework}
  \DontPrintSemicolon          
  \SetAlgoLined                
  \SetKwInOut{Input}{Input}
  \SetKwInOut{Output}{Output}
  \BlankLine          

  \Input{An instance of the StateHSP problem over a finite abelian group $G$, hiding the subgroup $H \leq G$; a failure probability $\delta \in (0,1]$; an HSP-sampler (\cref{def: HSP-sampler}); an Update (\cref{def: update}); and a Dual-solver (\cref{def: dual solver}).}
  \Output{A generating set for the hidden subgroup $H \leq G$.}

  Set $S^\perp \gets \emptyset$;\;
  Set $t \gets 0$; \tcp*[l]{Iteration counter}
  Set $b \gets \ceil{\log\tfrac{1}{\delta}}$; \tcp*[l]{Initial budget}
  Set $i \gets 0$; \tcp*[l]{span-increasing generators counter}

  \While{$t < 3(i+b)$}{
      Set $s^\perp \gets \textsc{HSP-Sampler}(S^\perp,\tfrac{1}{2})$;\;
      \If{$s^\perp \neq \mathsf{END}$}{
        Set $S^\perp \gets \textsc{Update}(S^\perp, s^\perp)$;\;
        Set $i \gets i+1$;\;
      }
      Set $t \gets t+1$;\;
  }
  \Return{$\textsc{Dual-solver}(S^\perp)$.}
\end{algorithm}

A simpler strategy would use a known upper bound $B \geq \floor{\log |G/H|}$, call the sampler with failure probability $\tilde{\delta}=\frac{\delta}{B}$, and stop at the first $\mathsf{END}$. 
For example, one could take $B=\floor{\log|G|}$.
This would introduce a multiplicative $\log(B/\delta)$ overhead in place of the additive $\log(1/\delta)$ term achieved by our stopping strategy, preventing us from matching the query lower bounds up to constant factors.

The following sections will focus on the implementation of the three primitives used in the sampling framework: $\textsc{HSP-Sampler}$, $\textsc{Update}$, and $\textsc{Dual-Solver}$. 

\subsection{Fourier sampling circuit}
\label{sec: fourier sampling}
The $\textsc{HSP-Sampler}$ is the only primitive that requires a quantum construction. 
Its foundation is the standard \emph{weak Fourier sampling} circuit, whose output distribution is supported on $H^\perp$.
As long as the sampled elements generate a proper subgroup of $H^\perp$, each circuit evaluation returns an element outside that subgroup with probability $\Omega(\epsilon)$.
This dependence on $\epsilon$ accounts for the $O(\log(|G|)/\epsilon)$ copy complexity of standard StateHSP algorithms.
With access to $U_\varphi$ and $U_\varphi^{-1}$, we can use amplitude amplification to boost the probability of obtaining a new generator.
This requires a flagging procedure that distinguishes elements inside the current span from those outside it.
We begin by reviewing the Fourier sampling circuit and its output distribution~\cite{bouland2024state,hinsche2025povm}. 

For the abelian StateHSP, the Fourier sampling circuit is
\begin{align}
\label{eq: fourier sampling circuit}
    U_{F} \defeq \widetilde{U}_{F}(I\otimes U_\varphi) \defeq (\mathcal{F}_G \otimes I) U_R (U_G \otimes U_\varphi),
\end{align}
where $U_\varphi$ prepares the input state $\ket{\varphi}$, and $U_G$ prepares the uniform superposition over $G$: 
\begin{align}
\label{eq:U_G}
U_G \ket{0}^{\otimes n}=\frac{1}{\sqrt{|G|}}\sum_{g \in G} \ket{g}.
\end{align}
Here, $U_R=\sum_{g \in G} \ketbra{g}{g} \otimes R(g)$ implements the controlled representation, and $\mathcal{F}_G$ is the group Fourier transform.
The circuit $\widetilde{U}_F$ acts on an already-prepared copy of $\ket{\varphi}$, so it can be used in the copy-access model.

Applying $U_F$ to the all-zero state gives
\begin{equation}
\label{eq: state weak fourier sampling}
\begin{aligned}
    \ket{\varphi_F} &= (\mathcal{F}_G \otimes I) U_R (U_G \otimes U_\varphi)|0\rangle^{\otimes n}|0\rangle\\
    & = \frac{1}{|G|} \sum_{g \in G} \sum_{\lambda \in \Irr(G)} \chi_\lambda(g) \ket{\lambda} \otimes R(g) \ket{\varphi}.
\end{aligned}
\end{equation}

Access to the group Fourier transform also allows us to implement $U_G$: we may take $U_G=\mathcal{F}_G^{-1}$, since the inverse Fourier transform maps the trivial character label to the uniform superposition over $G$.
This choice suffices for our asymptotic results, but we keep $U_G$ as a separate primitive to allow more precise resource estimates and potentially simpler implementations.

Indeed, $U_G$ only needs to prepare the uniform superposition from a fixed input state, whereas $\mathcal{F}_G$ must implement the Fourier transform on every basis state.
For some groups, this state-preparation task admits a smaller circuit.
For example, for $G=\mathbb{Z}_M$ encoded in qubits, even when $M$ is not a power of two, the uniform superposition $\frac{1}{\sqrt{M}} \sum_{i=0}^{M-1}\ket{i}$ can be prepared using $O(\log M)$ elementary gates~\cite{shukla2024efficient,bellante2026compiling}, while the full quantum Fourier transform may require a larger circuit.

The first register of $\ket{\varphi_F}$ encodes elements of the dual group $\widehat{G}=\Irr(G)$.
We review the distribution obtained by measuring this register~\cite{bouland2024state,hinsche2025povm}, starting with an explicit expression for the probability of each outcome.

\begin{fact}
    Measuring the first register of $\ket{\varphi_F}$ in~\eqref{eq: state weak fourier sampling} yields $\lambda\in\widehat{G}$ with probability
    \begin{align}
    \label{eq: lambda probability}
        P(\lambda) = \frac{1}{|G|} \sum_{g \in G} \chi_\lambda(g) \bra{\varphi}R(g)\ket{\varphi}.
    \end{align}
\end{fact}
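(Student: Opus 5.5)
The plan is a direct computation with the Born rule applied to the first register of $\ket{\varphi_F}$. First I will write the state in \eqref{eq: state weak fourier sampling} as $\ket{\varphi_F}=\sum_{\lambda\in\widehat{G}}\ket{\lambda}\otimes\ket{\psi_\lambda}$, grouping terms by the character label, with the unnormalized post-measurement vector
\begin{equation}
\ket{\psi_\lambda}=\frac{1}{|G|}\sum_{g\in G}\chi_\lambda(g)\,R(g)\ket{\varphi}.
\end{equation}
The labels $\ket{\lambda}$ are orthonormal, so measuring the first register yields $\lambda$ with probability $P(\lambda)=\langle\psi_\lambda|\psi_\lambda\rangle$.

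Next I expand this norm as a double sum:
\begin{equation}
P(\lambda)=\frac{1}{|G|^2}\sum_{g,g'\in G}\overline{\chi_\lambda(g')}\chi_\lambda(g)\bra{\varphi}R(g')^\dagger R(g)\ket{\varphi}.
\end{equation}
I simplify it using two structural facts. Since $R$ is a unitary representation, $R(g')^\dagger R(g)=R(-g')R(g)=R(g-g')$. Characters of the abelian group are unimodular homomorphisms, so $\overline{\chi_\lambda(g')}\chi_\lambda(g)=\chi_\lambda(g-g')$. Substituting $u=g-g'$, each $u\in G$ arises from exactly $|G|$ pairs $(g,g')$. This cancels one factor of $|G|$ and leaves
\begin{equation}
P(\lambda)=\frac{1}{|G|}\sum_{u\in G}\chi_\lambda(u)\bra{\varphi}R(u)\ket{\varphi},
\end{equation}
which is the claimed formula.

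There is no real obstacle. The one step to state carefully is the reindexing $u=g-g'$: it is a bijection of $G\times G$ onto $G\times G$ via $(g,g')\mapsto(u,g')$, so summing over $g'$ contributes exactly the factor $|G|$. I also need $\ket{\varphi}$ to be normalized, but that is not used beyond the definition of the Born probabilities. As a sanity check, summing over $\lambda$ and applying \cref{lemma: character orthogonality} with $H=\{0\}$ keeps only $u=0$, which gives total probability $1$.
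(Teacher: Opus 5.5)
Your proposal is correct and is essentially the paper's proof: the paper also evaluates $\bra{\varphi_F}(\ketbra{\lambda}{\lambda}\otimes I)\ket{\varphi_F}$ as a double sum over $G\times G$, merges $R(g')^\dagger R(g)$ and $\overline{\chi_\lambda(g')}\chi_\lambda(g)$ into functions of the difference $g-g'$, and uses that each difference arises from exactly $|G|$ pairs. Your explicit bijection $(g,g')\mapsto(g-g',g')$ justifies that counting step more cleanly than the paper's informal hint, and your normalization check is a harmless extra.
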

\begin{proof}
Consider the measurement operators $\{M_\lambda = \ketbra{\lambda}{\lambda} \otimes I\}_{\lambda \in \Irr(G)}$. 
The probability of measuring $\lambda$ is given by 
\begin{align}
    P(\lambda) &= \bra{\varphi_F}M_\lambda^\dagger M_\lambda \ket{\varphi_F} \\
    \nonumber
    &= \frac{1}{|G|^2} \sum_{g^{(1)} \in G}\sum_{g^{(2)} \in G} \overline{\chi}_\lambda(g^{(1)}) \chi_\lambda(g^{(2)}) \bra{\varphi}R^\dagger(g^{(1)})R(g^{(2)})\ket{\varphi}\\
    \nonumber
    &= \frac{1}{|G|^2} \sum_{g^{(1)} \in G}\sum_{g^{(2)} \in G} \chi_\lambda((g^{(1)})^{-1}g^{(2)}) \bra{\varphi}R((g^{(1)})^{-1}g^{(2)})\ket{\varphi}.
\end{align}
Using the double sum on the group, we obtain \eqref{eq: lambda probability}. 
Indeed, each $g^{(3)} \in G$ is the result of $|G|$ different multiplications $(g^{(1)})^{-1}g^{(2)}$ (Hint: picture the $|G| \times |G|$ matrix whose entries are group multiplications).
\end{proof}

This formula shows that Fourier sampling always returns an element of $H^\perp$. 

\begin{fact}[Support of $P$]
\label{fact:supportP}
    The probability distribution $P$ is only supported on $H^\perp$. In other words, $P(\lambda) = 0$ for all $\lambda \notin H^\perp$, and $P(H^\perp) = \sum_{\lambda \in H^\perp} P(\lambda)  = 1$.
\end{fact}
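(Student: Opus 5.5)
The plan is to start from the explicit formula \eqref{eq: lambda probability}, $P(\lambda)=\frac{1}{|G|}\sum_{g\in G}\chi_\lambda(g)\bra{\varphi}R(g)\ket{\varphi}$, and exploit the invariance of $\ket{\varphi}$ under $H$. Decompose $G$ into cosets of $H$ by writing each $g$ as $g=c+h$, with $c$ ranging over a fixed set $C$ of coset representatives and $h\in H$. Since $R$ is a representation and $R(h)\ket{\varphi}=\ket{\varphi}$ for all $h\in H$ by the first promise, we get $\bra{\varphi}R(c+h)\ket{\varphi}=\bra{\varphi}R(c)R(h)\ket{\varphi}=\bra{\varphi}R(c)\ket{\varphi}$. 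Because characters are homomorphisms, $\chi_\lambda(c+h)=\chi_\lambda(c)\chi_\lambda(h)$. The sum therefore factorizes as
\begin{equation}
P(\lambda)=\frac{1}{|G|}\sum_{c\in C}\chi_\lambda(c)\bra{\varphi}R(c)\ket{\varphi}\sum_{h\in H}\chi_\lambda(h).
\end{equation}

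Next we apply the first identity of \cref{lemma: character orthogonality}. The inner sum $\sum_{h\in H}\chi_\lambda(h)$ vanishes whenever $\lambda\notin H^\perp$, so $P(\lambda)=0$ for every such $\lambda$.

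For the normalization, $P$ is the outcome distribution of a projective measurement $\{\ketbra{\lambda}{\lambda}\otimes I\}_\lambda$ on the normalized state $\ket{\varphi_F}$. Its total mass over $\widehat{G}$ is therefore $\bra{\varphi_F}\varphi_F\rangle=1$. We can also verify this directly: summing \eqref{eq: lambda probability} over all $\lambda$ and using the orthogonality relation on the full group (take $H=G$ in the second identity of the lemma, so $G^\perp=\{0\}$ and $\sum_\lambda\chi_\lambda(g)=|G|\,\delta_{g,0}$) gives $\bra{\varphi}R(0)\ket{\varphi}=1$. Since $P$ vanishes off $H^\perp$, this yields $P(H^\perp)=1$.

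None of these steps is a serious obstacle. The only point needing care is that the decomposition $g=c+h$ is a bijection $C\times H\to G$, and this is exactly the coset partition of $G$.
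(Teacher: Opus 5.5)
Your proof is correct and follows the paper's argument: the same coset factorization using the $H$-invariance of $\ket{\varphi}$ and the homomorphism property of $\chi_\lambda$, then the first orthogonality identity to get $P(\lambda)=0$ for $\lambda\notin H^\perp$, then normalization of the measurement distribution to get $P(H^\perp)=1$. One small slip in your optional cross-check: to obtain $\sum_{\lambda\in\widehat{G}}\chi_\lambda(g)=|G|\,\delta_{g,0}$ from the second identity, plug in the trivial subgroup $\{0\}$, whose dual is all of $\widehat{G}$, rather than $H=G$, whose dual $G^\perp=\{0\}$ makes the sum trivially equal to $1$; the formula you wrote is still correct, and your main argument does not depend on it.
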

\begin{proof}
    We can rewrite \eqref{eq: lambda probability} as
    \begin{align}
        P(\lambda) &= \frac{|H|}{|G|} \sum_{c \in G/H} \frac{1}{|H|}\sum_{h \in H} \chi_\lambda(ch) \bra{\varphi}R(ch)\ket{\varphi}\\
        \nonumber
        &= \frac{|H|}{|G|} \sum_{c \in G/H} \chi_\lambda(c) \bra{\varphi}R(c)\ket{\varphi} \frac{1}{|H|}\sum_{h \in H} \chi_\lambda(h).
         \nonumber
    \end{align}
    Then, $\sum_{h \in H} \chi_\lambda(h) = 0$ whenever $\lambda \notin H^\perp$ (\Cref{lemma: character orthogonality}).

    Since $P$ is a probability distribution that does not have support outside $H^\perp$, 
    it follows that $P(H^\perp) = \sum_{\lambda \in H^\perp} P(\lambda) = 1$.
\end{proof}

By~\cref{fact:supportP}, one can write 
\begin{align}
\label{eqn:varphi-lambda}
    \ket{\varphi_F} &= \frac{1}{|G|} \sum_{g \in G} \sum_{\lambda \in H^\perp} \chi_\lambda(g) \ket{\lambda} \otimes R(g) \ket{\varphi}\nonumber\\
    &=\frac{1}{|G|}\sum_{\lambda\in H^\perp} |\lambda\rangle\otimes (\sum_{g\in G}\chi_\lambda(g) R(g)|\varphi\rangle)\nonumber\\
    &=\sum_{\lambda\in H^\perp}  \sqrt{P(\lambda)}|\lambda\rangle\otimes |\varphi_\lambda\rangle,
\end{align}
where, for $P(\lambda)>0$, the normalized state $\ket{\varphi_\lambda}$ is $\frac{1}{\sqrt{P(\lambda)}}\frac{1}{|G|}\sum_{g\in G}\chi_\lambda(g) R(g)|\varphi\rangle$.

For a subset $A\subseteq H^\perp$, define
$P(A)
    \defeq
    \left\|
        \left(
            \sum_{\lambda\in A}\ketbra{\lambda}{\lambda}
            \otimes I
        \right)
        \ket{\varphi_F}
    \right\|^2.$
Equivalently, $P(A)$ is the probability that measuring the first register returns an element of $A$. 
We next show the anti-concentration of $P$, which is essential for amplitude amplification. 

\begin{lemma}[Anti-concentration of $P$~{\cite[Lemma 3]{hinsche2025povm}}] 
\label{lemma: anticoncentration}
For every  proper subgroup $K^\perp < H^\perp$, the probability of sampling an element from $H^\perp$ that is outside $K^\perp$ is at least $\epsilon/2$.
In formula, $P(H^\perp \setminus K^\perp) \geq \epsilon/2$.
\end{lemma}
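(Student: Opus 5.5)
We will compute $P(K^\perp)$ exactly and show it is at most $1-\epsilon/2$; since $P(H^\perp)=1$ by \cref{fact:supportP}, this gives the claim. Let $K:=(K^\perp)^\perp$. Because duality reverses inclusion and $K^\perp<H^\perp$ is proper, $K$ strictly contains $H$, so $|K/H|\geq 2$.

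\textbf{Step 1 (exact formula for $P(K^\perp)$).} Summing \eqref{eq: lambda probability} over $\lambda\in K^\perp$, exchanging the sums and applying the second relation of \cref{lemma: character orthogonality} (to the pair $K$, $K^\perp$) gives
\begin{align}
P(K^\perp)=\frac{1}{|G|}\sum_{g\in G}\bra{\varphi}R(g)\ket{\varphi}\sum_{\lambda\in K^\perp}\chi_\lambda(g)=\frac{|K^\perp|}{|G|}\sum_{k\in K}\bra{\varphi}R(k)\ket{\varphi}=\frac{1}{|K|}\sum_{k\in K}\bra{\varphi}R(k)\ket{\varphi},
\end{align}
using $|K^\perp|=|G|/|K|$.

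\textbf{Step 2 (reduce to the quotient).} Since $R(h)\ket{\varphi}=\ket{\varphi}$ for $h\in H$, the function $k\mapsto\bra{\varphi}R(k)\ket{\varphi}$ is constant on cosets of $H$. Hence
\begin{align}
P(K^\perp)=\frac{1}{|K/H|}\sum_{c\in K/H}\bra{\varphi}R(c)\ket{\varphi}.
\end{align}
The identity coset contributes exactly $1$. Each of the remaining $|K/H|-1$ cosets lies outside $H$, so by the promise its term has real part at most $|\bra{\varphi}R(c)\ket{\varphi}|\leq 1-\epsilon$. Note that $P(K^\perp)$ is real, so only real parts matter.

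\textbf{Step 3 (conclude).} Combining the two steps,
\begin{align}
P(K^\perp)\leq\frac{1+(|K/H|-1)(1-\epsilon)}{|K/H|}=1-\epsilon\Bigl(1-\tfrac{1}{|K/H|}\Bigr)\leq 1-\frac{\epsilon}{2},
\end{align}
where the last inequality uses $|K/H|\geq 2$. Therefore $P(H^\perp\setminus K^\perp)=1-P(K^\perp)\geq\epsilon/2$.

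The main point needing care is the identification in Step 1. It uses $(K^\perp)^\perp=K$ together with the orthogonality relation applied to $K$ rather than to $H$. Once $P(K^\perp)$ is written as an average of $\bra{\varphi}R(k)\ket{\varphi}$ over $K$, the rest follows directly.
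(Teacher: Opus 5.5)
Your proof is correct and follows the paper's argument: both compute $P(K^\perp)=\frac{1}{|K|}\sum_{k\in K}\bra{\varphi}R(k)\ket{\varphi}$ via character orthogonality and $|K^\perp||K|=|G|$, then bound the terms in $H$ by $1$ and the rest by $1-\epsilon$ using $|K|\geq 2|H|$. The only cosmetic difference is that you average over the cosets in $K/H$, whereas the paper splits the sum directly into $H$ and $K\setminus H$; both arrive at the same bound $P(H^\perp\setminus K^\perp)\geq\epsilon(1-|H|/|K|)$.
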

\begin{proof}
We have $P(H^\perp) = P(K^\perp) + P(H^\perp \setminus K^\perp) = 1$.
Then,
\begin{align}
    P(H^\perp \setminus K^\perp) &= 1 - \sum_{\lambda \in K^\perp} P(\lambda)\\
    \nonumber
    &= 1 - \frac{1}{|G|} \sum_{g \in G} \left(\sum_{\lambda \in K^\perp} \chi_\lambda(g) \right) \bra{\varphi}R(g)\ket{\varphi}\\
     &= 1 - \frac{1}{|K|} \sum_{k \in K} \bra{\varphi}R(k)\ket{\varphi}.
      \nonumber
\end{align}
Here, we have used character orthogonality (\Cref{lemma: character orthogonality}) and $|K^\perp| |K| = |G|$ (\Cref{proposition: cardinality of perp}).
Since $K^\perp < H^\perp$, we have $K > H$ (\Cref{proposition: duality inequalities}). 
Then,
\begin{align}
    P(H^\perp \setminus K^\perp) &=  1 - \frac{1}{|K|} \Big( \sum_{h \in H} \bra{\varphi}R(h)\ket{\varphi} + \sum_{k \in K \setminus H} \bra{\varphi}R(k)\ket{\varphi} \Big)\\
     \nonumber
    &\geq  1 - \frac{1}{|K|} \Big( |H| + \sum_{k \in K \setminus H} |\bra{\varphi}R(k)\ket{\varphi}| \Big)\\
     \nonumber
    &\geq  1 - \frac{1}{|K|} \Big( |H| + (1-\epsilon)(|K| - |H|) \Big)\\
     \nonumber
    &=  1 - (1 - \epsilon) - \frac{1}{|K|} \Big( |H| - (1-\epsilon)|H| \Big)\\
     \nonumber
    &= \epsilon - \frac{|H|}{|K|}\epsilon \geq \frac{\epsilon}{2}.
\end{align}
Here we used $\bra{\varphi}R(h)\ket{\varphi}=1$, $|\bra{\varphi}R(k)\ket{\varphi}| \leq 1-\epsilon$, and $\frac{|H|}{|K|} \leq \frac{1}{2}$, which follows from Lagrange's theorem.
\end{proof}

The lemma guarantees that, whenever $K^\perp<H^\perp$, Fourier sampling returns an element outside the current span with probability at least $\epsilon/2$.
This lower bound is the key ingredient for the two $\textsc{HSP-Sampler}$ implementations developed below.

\subsection{Copy-based HSP-Samplers}

\begin{figure}[t]
    \centering

    \begin{subfigure}[b]{0.49\textwidth}
        \centering
        \scalebox{1}{ 
            \Qcircuit @C=1.2em @R=1.0em {
                \barrier[-0cm]{2}
                & 
                & 
                & \mbox{$\widetilde{U}_{F}$}
                & \barrier[-0cm]{2}
                & \\
              \lstick{\ket{0}_\lambda} 
                & \qw 
                & \gate{U_G} 
                & \ctrl{1}              
                & \gate{\mathrm{QFT}} 
                & \qw
                & \meter \\
              \lstick{\ket{\varphi}_\varphi} 
                & \qw
                & \qw
                & \gate{R(g)}
                & \qw 
                & \qw
                & \qw\\
            }
        }
        \caption{Standard Fourier sampling circuit.}
        \label{fig:fourier-sampling-circuit}
    \end{subfigure}
    \hfill
    \begin{subfigure}[b]{0.49\textwidth}
        \centering
        \scalebox{1}{ 
            \Qcircuit @C=1.2em @R=1.0em {
                \\ & & \{\Pi_\lambda\}\\
                \lstick{\ket{\varphi}_{\varphi}}
                & \qw & \meter
            }
        }
        \caption{Character POVM Fourier circuit.}
        \label{fig:character POVM}
    \end{subfigure}

    \caption{
        Two implementations of weak Fourier sampling.
        The circuit $\widetilde{U}_{F}$ in
        (\subref{fig:fourier-sampling-circuit}) uses a single copy of the input state to prepare $\ket{\varphi_F}
            =
            \sum_{\lambda\in H^\perp}
            \sqrt{P(\lambda)}
            \ket{\lambda}\otimes\ket{\varphi_\lambda}$.
        Alternatively, when an efficient implementation is available, the character POVM $\Pi_\lambda = \frac{1}{|G|}\sum_{g \in G} \chi_\lambda(g) R(g)$ can be applied directly, as described by \citet{hinsche2025povm} and illustrated in (\subref{fig:character POVM}). 
        This avoids the explicit controlled representation action and Fourier transform. 
        Both approaches sample from $P$.
        By \cref{lemma: anticoncentration}, for every proper subgroup $K^\perp<H^\perp$, the probability of obtaining $\lambda\in H^\perp\setminus K^\perp$ is at least $\epsilon/2$.
}
    \label{fig:copy-based-fourier-sampler}
\end{figure}
In this subsection, we construct an HSP-sampler using copies of the input state, without access to a state-preparation unitary, as in StateHSP's \cref{def: StateHSP copies}. 
The construction combines weak Fourier sampling, the anticoncentration bound from the previous subsection, and a classical procedure to test for membership in the subgroup generated so far.
Together with the adaptive stopping strategy of Algorithm~\ref{alg: HSP-Sampling framework}, this sampler improves the sample complexity of previous algorithms from $O\left(\tfrac{\log|G|+\log\frac{1}{\delta}}{\epsilon}\right)$~\cite{bouland2024state,hinsche2025povm} to $O\left(\tfrac{\log|G/H|+\log\frac{1}{\delta}}{\epsilon}\right)$, without requiring prior knowledge of $|H|$.
\cref{sec: sample lower bound} proves that this sample complexity is optimal, and together these results formalize the intuition that smaller hidden subgroups are harder to locate than larger hidden subgroups.

Algorithm~\ref{alg: copy-based HSP-Sampler} presents the copy-based $\textsc{HSP-Sampler}$, and the following theorem shows that the routine satisfies \cref{def: HSP-sampler}.

\begin{algorithm}[t]
  \caption{Copy-based HSP-Sampler}
  \label{alg: copy-based HSP-Sampler}
  \DontPrintSemicolon          
  \SetAlgoLined                
  \SetKwInOut{Input}{Input}
  \SetKwInOut{Output}{Output}
  \BlankLine          

 \Input{An instance of StateHSP over a finite abelian group $G$, with gap $\epsilon$, and with access to copies of the input state vector $\ket{\varphi}$; a generating set $S^\perp$ for a subgroup $K^\perp \leq H^\perp$;
  a failure probability $\tilde{\delta} \in (0,1]$.}
  \Output{If $K^\perp=H^\perp$, it outputs $\mathsf{END}$.
    If $K^\perp < H^\perp$, it outputs an element of $H^\perp\setminus K^\perp$ with probability at least $1-\tilde{\delta}$, and otherwise outputs $\mathsf{END}$.}

  Set $i \gets 0$; 

  \While{$i < \lceil\frac{2}{\epsilon}\ln(\frac{1}{\tilde{\delta}})\rceil$}{
      Set $s^\perp \gets$ weak Fourier sampling from $\ket{\varphi}_{\varphi}$;\;
      \If{$s^\perp\notin \mspan(S^\perp)$}
         {\Return{$s^\perp$.}}
      
      Set $i \gets i+1$;\;
  }
  
  \Return{$\mathsf{END}$.}
\end{algorithm}

\begin{theorem}[Copy-based $\textsc{HSP-Sampler}$]
\label{theorem: copy hsp-sampler}
    Consider an abelian StateHSP instance with gap parameter $\epsilon\in(0,1]$, access to copies of $|\varphi\rangle$, and hidden subgroup $H$. 
    Let $S^\perp$ be a data structure encoding a generating set for a subgroup $K^\perp \leq H^\perp$, and let $\tilde{\delta} \in (0,1]$. 

    Then, Algorithm~\ref{alg: copy-based HSP-Sampler} implements an $\textsc{HSP-Sampler}(S^\perp, \tilde{\delta})$ using at most $m=\lceil\frac{2}{\epsilon}\ln \frac{1}{\tilde{\delta} }\rceil$ calls to a Fourier sampling routine and at most $m$ classical membership tests.
    Each test checks whether the sampled element $s^\perp$ belongs to $\langle S^\perp\rangle$, or equivalently, whether $\langle S^\perp\cup\{s^\perp\}\rangle=\langle S^\perp\rangle$.
\end{theorem}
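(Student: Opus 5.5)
The proof checks the two cases of \cref{def: HSP-sampler} separately, then counts resources. Throughout, write $m=\lceil\frac{2}{\epsilon}\ln\frac{1}{\tilde\delta}\rceil$. Each loop iteration uses one copy of $\ket{\varphi}$ to perform weak Fourier sampling (the circuit $\widetilde{U}_F$ followed by measurement of the first register, or equivalently the character POVM). It then makes one classical membership test. The loop runs at most $m$ times, which gives the resource bound immediately. The membership test itself is realized by \cref{prop:membership-testing-RREF} or \cref{prop:membership-testing-HNF}, depending on the modulus, but the statement only asks us to count tests, so we need not cost them here.

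\emph{Case $K^\perp=H^\perp$.} By \cref{fact:supportP}, every weak Fourier sample $s^\perp$ lies in $H^\perp=\langle S^\perp\rangle$ with certainty. Hence the membership test never triggers a return. The loop exhausts its $m$ iterations and outputs $\mathsf{END}$ deterministically.

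\emph{Case $K^\perp<H^\perp$.} Again by \cref{fact:supportP}, any sample failing the membership test lies in $H^\perp\setminus K^\perp$. So whenever the routine returns an element, it is a valid one, and the only failure mode is outputting $\mathsf{END}$. That happens exactly when all $m$ independent samples fall in $K^\perp$. Because each iteration uses a fresh copy, the samples are i.i.d.\ from $P$. By \cref{lemma: anticoncentration}, each lands in $K^\perp$ with probability at most $1-\epsilon/2$. Therefore the failure probability is at most
\[
(1-\epsilon/2)^m\le e^{-\epsilon m/2}\le e^{-\ln(1/\tilde\delta)}=\tilde\delta,
\]
using $1-x\le e^{-x}$ and $m\ge \frac{2}{\epsilon}\ln\frac1{\tilde\delta}$.

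The only subtle step is the independence and identical distribution of the samples. This holds because each iteration consumes a new copy of $\ket{\varphi}$ and $K^\perp$ stays fixed during the call. The anticoncentration lemma then supplies the per-sample bound uniformly, so the argument is short.
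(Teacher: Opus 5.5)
Your proof is correct and follows essentially the same route as the paper: the support fact settles the $K^\perp=H^\perp$ case, and in the $K^\perp<H^\perp$ case the anticoncentration lemma, applied to independent samples from fresh copies, gives the failure bound $(1-\epsilon/2)^m\le e^{-\epsilon m/2}\le\tilde\delta$. Your remarks that every returned element automatically lies in $H^\perp\setminus K^\perp$ and that the samples are i.i.d.\ spell out steps the paper leaves implicit, without changing the argument.
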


\begin{proof}
    From~\cref{fact:supportP}, every weak Fourier sampling output $s^{\perp(i)}$ belongs to the dual subgroup $H^\perp$. 
    If $K^\perp=H^\perp$, every output $s^{\perp(i)}$ gets us $\langle S^\perp \cup \{s^\perp\} \rangle = \langle S^\perp \rangle$, so the sampler outputs $\mend$ with certainty.
   
   If $K^\perp<H^\perp$, from~\cref{lemma: anticoncentration}, the probability that $s^{\perp(i)}\in K^\perp$ is at most $1-\epsilon/2$. The probability that $s^{\perp(i)}\in K^\perp$ for all $i\in[m]$ is then 
    \begin{equation}
        P\leq (1-\frac{\epsilon}{2})^m\leq e^{-\frac{\epsilon}{2}m}.
    \end{equation}
    Therefore, choosing $m=\lceil\frac{2}{\epsilon}\ln(\frac{1}{\tilde{\delta} })\rceil$ leads to $P\leq \tilde{\delta} $. 
    That is, the probability of outputting a $s^\perp \in H^\perp\setminus K^\perp$ is at least $1-\tilde{\delta} $. 
\end{proof}

Each call to the Fourier sampling circuit in \cref{fig:fourier-sampling-circuit} consumes a single copy of $\ket{\varphi}$, while the membership test is entirely classical and requires no additional copies.
Taking $\tilde{\delta}=1/2$ and combining this $\textsc{HSP-Sampler}$ with the HSP-Sampling framework (Algorithm~\ref{alg: HSP-Sampling framework}) therefore gives the claimed sample complexity.
We discuss the cost of the classical membership tests alongside the flagging procedure for the amplification-based sampler, and include it in the overall cost analysis in~\cref{subsec: overall cost}.

Our construction also allows alternative implementations of the sampling step.
The approaches of \citet{bouland2024state} and \citet{hinsche2025povm} offer two possibilities:
\begin{enumerate}
    \item The algorithm of \citet[Algorithm 2 in the arXiv version]{bouland2024state} can be viewed as an $\textsc{HSP-Sampler}$ for $\mZ_2^n$.
    It applies a single weak Fourier sampling circuit collectively to $O(1/\epsilon)$ copies, followed by one span test.
    In contrast, our construction performs $O(1/\epsilon)$ single-copy sampling steps, each followed by a membership test.
    The collective approach reduces the number of classical tests at the cost of a wider quantum circuit.
    Our sequential approach requires less quantum space and may therefore be preferable when quantum memory is limited.
    
    The algorithm of \citet{bouland2024state} also adaptively updates the unitary $U_G$ used to prepare the group register. 
    The Fourier sampling analysis reviewed in \cref{sec: fourier sampling}, due to \citet{hinsche2025povm}, shows that this update is not necessary for our construction.
    \item \citet{hinsche2025povm} obtain the weak Fourier sampling distribution through a direct measurement called the \emph{Character POVM} (\cref{fig:character POVM}).
    This measurement avoids the explicit irrep register, controlled representation action, and Fourier transform.
    Whenever it admits an efficient implementation, it can replace the Fourier sampling circuit in \cref{fig:fourier-sampling-circuit}.
\end{enumerate}
Both prior works use some anticoncentration bounds to choose a fixed copy budget before the algorithm begins.
Our HSP-sampling framework instead adapts the budget as new generators are found.
Combined with classical membership testing, this stopping strategy replaces the dependence on $\log|G|$ with $\log|G/H|$, without prior knowledge of $|H|$ or additional $\polylog |G|$ overhead.

\subsection{Amplification-based HSP-Samplers}
In the StateHSP formulation of \cref{def: StateHSP with circuit}, we can apply the state-preparation unitary $U_\varphi$ and its inverse.
This allows us to prepare and unprepare the Fourier sampling state, and therefore use amplitude amplification to obtain new generators more efficiently.
The goal of the following subsections is to construct an amplification-based HSP-sampler satisfying \cref{def: HSP-sampler}. 
Specifically, given a generating set $S^\perp$, the sampler should either return an element in $H^\perp\setminus \langle S^\perp\rangle$, or output $\mathsf{END}$, with the required success guarantee.

The construction combines the Fourier sampling circuit $U_F$ from Eq.~(\ref{eq: fourier sampling circuit}), 
fixed-point amplitude amplification, and an efficient subspace identification procedure.
The role of the latter is to coherently mark Fourier labels outside the subgroup generated so far.  

We use a fixed-point amplitude amplification construction inspired by the block-encoding formulation of \citet{gilyen2019quantum}; see also the earlier construction of \citet{yoder2014fixed}. 
In contrast to standard amplitude amplification, the fixed-point version does not require an exact estimate of the success probability: it just requires a lower bound on the success amplitude and it avoids overshooting.
By \cref{lemma: anticoncentration}, whenever the current span is a proper subgroup of $H^\perp$, the success probability is at least $\epsilon/2$.
We can therefore use the amplitude lower bound $\sqrt{\epsilon/2}$ to obtain an $\textsc{HSP-Sampler}(S^\perp,\tilde{\delta})$ with cost $O\left(\frac{\log\mathopen{(} 1/\tilde{\delta} \mathclose{)}}{\sqrt{\epsilon}}\right)$.

We note that our technical construction of amplification differs from that of \citet{gilyen2019quantum} in two respects. 
First, we identify the good subspace through a flag qubit, without requiring a description of the actual good state.
Indeed, although $H^\perp$ is unknown, the Fourier sampling state is supported on it, so testing whether a label lies outside the known subgroup $\langle S^\perp\rangle$ suffices to identify the good subspace.
Second, our construction guarantees that, conditioned on measuring the flag qubit to be $0$, the data register is \emph{exactly} in the good state, rather than just close enough.
The flag lets us understand if the amplification succeeded, ensuring that every accepted sample belongs to $H^\perp\setminus\langle S^\perp\rangle$.
This modification is very important, as allowing samples outside $H^\perp$ would introduce errors into the constraints used by the classical postprocessing, leading to a noisy reconstruction problem related, in the Boolean case, to Learning Parity with Noise.

\begin{theorem}[Fixed-point amplitude amplification]
\label{theorem: fixed point amp amp}
    Let $U_0\ket{0}=\ket{\psi_0}$ and suppose that
    \begin{align}
        U_\perp \ket{\psi_0}\ket{0}_f
        =
        c\ket{\psi_G}\ket{0}_f
        +
        \sqrt{1-c^2}\ket{\psi_B}\ket{1}_f,
    \end{align}
    for normalized states $\ket{\psi_G},\ket{\psi_B}$ and $c\in \mathbb{R}$.
    For any $\gamma,\tilde{\delta}\in(0,1)$, there is a unitary circuit
    $\widetilde U_{\gamma,\tilde{\delta}}$ such that, whenever $|c|\ge \gamma$, measuring the
    flag qubit (the second register) of $\widetilde U_{\gamma,\tilde{\delta}}\ket{\psi_0}\ket{0}$ gives outcome $0$
    with probability at least $1-\tilde{\delta}$, and conditioned on this outcome the remaining
    state is $\ket{\psi_G}$. If $c=0$, the flag is measured to be $1$
    with probability one.
 
    The circuit  $\widetilde U_{\gamma,\tilde{\delta}}$ uses two auxiliary qubits and
    $O(\log\mathopen{(} 1/\tilde{\delta} \mathclose{)}/\gamma)$ calls to $U_0$, $U_0^\dagger$, $U_\perp$, and
    $U_\perp^\dagger$,
    plus $O(n \log\mathopen{(} 1/\tilde{\delta} \mathclose{)}/\gamma)$ elementary gates where $n$ is the number of qudits in $|\psi_0\rangle$. Computing the circuit requires $O\left(\frac{\log\mathopen{(} 1/\tilde{\delta} \mathclose{)}}{\gamma}~\polylog \frac{\log\mathopen{(} 1/\tilde{\delta} \mathclose{)}}{\gamma \tilde{\delta}}\right)$ classical binary operations. 
\end{theorem}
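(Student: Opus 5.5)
The plan is to cast the problem in the quantum singular value transformation (QSVT) framework of \citet{gilyen2019quantum}. First, define $W := U_\perp (U_0\otimes I_f)$ acting on $\ket{0}\ket{0}_f$, and let $\Pi := \ketbra{0}{0}\otimes\ketbra{0}{0}_f$ be the input projector. Let $\widetilde\Pi := I\otimes \ketbra{0}{0}_f$ be the output (flag) projector. Then $\widetilde\Pi W \Pi = c\,\ket{\psi_G}\ket{0}_f\bra{0}\bra{0}$ is a rank-one projected unitary encoding whose single singular value is $|c|$.

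Both reflections are implementable. The reflection $2\Pi-I$ only needs a multi-controlled phase on the all-zero string, which costs $O(n)$ gates; this, together with the auxiliary qubit used to apply the phases, accounts for the $O(n)$ gates per round. The reflection $2\widetilde\Pi - I$ is a single-qubit $Z$ on the flag.

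Next I would choose an odd real polynomial $p$ of degree $d=O(\log(1/\tilde\delta)/\gamma)$ satisfying $|p(x)|\le 1$ on $[-1,1]$ and $|p(x)|\ge \sqrt{1-\tilde\delta}$ for $\gamma\le |x|\le 1$. The standard choice is the approximation of $\mathrm{sign}(x)$ from \cite{gilyen2019quantum}, or equivalently the Yoder--Low--Chuang polynomial \cite{yoder2014fixed}. QSVT with phase angles $\phi_1,\dots,\phi_d$ alternates $W$ and $W^\dagger$ with projector-controlled phase rotations $e^{i\phi_j(2\Pi-I)}$ and $e^{i\phi_j(2\widetilde\Pi-I)}$, each implemented using one auxiliary qubit. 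It uses $d$ applications of $W^{\pm1}$, hence $O(d)$ calls to $U_0^{\pm1}$ and $U_\perp^{\pm1}$. Because $\Pi$ has rank one, the transformed operator satisfies
\begin{equation}
\widetilde\Pi\, \widetilde U\, \Pi = p(|c|)\,\ket{\psi_G}\ket{0}_f\bra{0}\bra{0}
\end{equation}
up to a phase.

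The key structural point, and the reason for the \emph{exactness} claim, is as follows. Since $d$ is odd, the output lies in the image of $\widetilde\Pi$, which coincides with the image of $\widetilde\Pi W\Pi$, namely $\ket{\psi_G}\ket{0}_f$. So conditioned on the flag reading $0$, the data register is exactly $\ket{\psi_G}$ and not merely an approximation of it. The invariant two-dimensional subspace spanned by $W\ket{0}\ket{0}$ and $\ket{\psi_G}\ket{0}_f$ (Jordan's lemma) makes this rigorous. The success probability is $|p(|c|)|^2\ge 1-\tilde\delta$ whenever $|c|\ge\gamma$. If $c=0$, then $\widetilde\Pi W\Pi=0$, so $\widetilde\Pi\widetilde U\Pi=p(0)\cdot 0=0$, and the flag equals $1$ with certainty.

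One technicality is that real QSVT typically yields $\mathrm{Re}$ of a complex polynomial and would need an extra ancilla with a linear combination of unitaries. That would spoil both the exact conditional state and the ancilla count. To avoid this, I would use the Yoder--Low--Chuang construction directly, which produces exactly the desired amplitude on the good subspace using generalized Grover iterates $S_\Pi(\alpha_j)$ and $S_{\widetilde\Pi}(\beta_j)$. The two auxiliary qubits would then serve to implement these controlled phases. If that is not sufficient, I would fall back on the $\mathrm{Re}$-trick, using the second auxiliary qubit and absorbing the resulting factor of $2$ into the constants.

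The classical cost is that of computing the $d$ phase angles to the required precision. For the Yoder--Low--Chuang sequence these angles come in closed form through Chebyshev functions, giving $O(d\,\polylog(d/\tilde\delta))$ bit operations. For general sign polynomials one would instead cite a phase-finding algorithm with the same complexity.

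The main obstacle is ensuring that finite-precision phases do not break exactness. Exactness holds for any choice of phases because it comes from the support structure: the flag-$0$ component always lies in the span of $\ket{\psi_G}\ket{0}_f$. Errors in the phases therefore affect only the success probability, and budgeting half of $\tilde\delta$ to rounding closes the argument.
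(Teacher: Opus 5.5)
Your proposal is correct and follows essentially the paper's route: the paper also uses the phased alternating sequence of Gily\'en et al.\ with the Yoder--Low--Chuang phases, so only $|P_\Phi(x)|^2\ge 1-\tilde\delta$ is needed and no real-part/LCU trick arises. Like you, it obtains exactness from the rank-one identity $\widetilde\Pi U_\Phi\Pi=P^{(SV)}(\widetilde\Pi U\Pi)=P(c)\ket{\psi_G}\ket{0}\bra{\psi_0}$ and absorbs phase rounding into the error budget. One small correction: it is the flag-$0$ component $\widetilde\Pi\,\widetilde U\,\Pi\ket{0}\ket{0}$, not the image of $\widetilde\Pi$ itself, that lies in the one-dimensional range of $\widetilde\Pi W\Pi$, which is what your displayed formula actually gives.
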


The construction and proof appear in \cref{apx: fixed point amp amp}, specifically in \cref{prop:amp-psi0} and its proof. 
The classical computation determines the phases used for the quantum singular value transformation (QSVT) function.
For fixed $\gamma$ and $\tilde{\delta}$, these phases need to be computed only once; see~\cref{prop:explicit-phase}.

In our application, the initial state $\ket{\psi_0}$ is the Fourier sampling state, together with any auxiliary registers required by the flagging procedure.
More precisely, if the procedure uses $a$ auxiliary qubits, we set $\ket{\psi_0} = \ket{0}_aU_F\ket{0}_{\lambda,\varphi} = \ket{0}_a\ket{\varphi_F}_{\lambda,\varphi}$ (\ref{eqn:varphi-lambda}).
Preparing this state costs exactly one forward query to the state-preparation unitary $U_\varphi$ and the ability to implement $U_R$ and the group QFT.  

It remains to construct the unitary $U_\perp$ that flags the good subspace, such that $|\psi_G\rangle\in H^\perp\setminus \langle S^\perp\rangle$ and $|\psi_B\rangle\in \langle S^\perp\rangle$.
We package this requirement into the following abstract primitive.

\begin{definition}[Subspace identifier]
\label{def: subspace identifier}
    Consider a StateHSP instance with hidden subgroup $H$ and dual subgroup $H^\perp\leq\widehat{G}$. 
    Let $S^\perp$ be a data structure encoding a generating set for a subgroup $K^\perp \leq H^\perp$.
    A \emph{subspace identifier} is a classical routine that, given $S^\perp$, outputs a positive integer $a$ and the circuit description of a unitary $U_\perp$ acting on the Fourier sampling state $|\varphi_F\rangle_{\lambda,\varphi}$, $a$ auxiliary qubits, and a flag qubit, as
    \begin{align}
    \label{eqn:subspace-identifier-def}
        U_\perp \ket{\varphi_F}_{\lambda,\varphi}\ket{0}_a\ket{0}_f
        =
        \sqrt{P(H^\perp\setminus K^\perp)}
        \ket{\psi_G}_{\lambda,\varphi,a}\ket{0}_f
        +
        \sqrt{P(K^\perp)}
        \ket{\psi_B}_{\lambda,\varphi,a}\ket{1}_f .
    \end{align}
    Here $\ket{\psi_G}$ and $\ket{\psi_B}$ are normalized states, and measuring the group irrep register of $\ket{\psi_G}$ returns an element of $H^\perp\setminus K^\perp$ with probability one.
\end{definition}

In practice, a subspace identifier isolates the elements in $H^\perp \setminus K^\perp$ by entangling the flag register in state $0$.
Using this primitive, we can implement an amplification-based HSP sampler.

\begin{algorithm}[t]
  \caption{Amplification-based HSP-sampler.}
  \label{alg: amplification-based-hsp-sampler}
  \DontPrintSemicolon
  \SetAlgoLined
  \SetKwInOut{Input}{Input}
  \SetKwInOut{Output}{Output}
  \BlankLine

  \Input{
  A StateHSP instance over a finite abelian group $G$, with gap $\epsilon$ and access to the state-preparation unitary and inverse; a generating set $S^\perp$ for a subgroup $K^\perp \leq H^\perp$;
  a failure probability $\tilde{\delta} \in (0,1]$; and a subspace identifier as in \cref{def: subspace identifier}.
  }
  \Output{
    If $K^\perp=H^\perp$, it outputs $\mathsf{END}$.
    If $K^\perp < H^\perp$, it outputs an element of $H^\perp\setminus K^\perp$ with probability at least $1-\tilde{\delta}$, and otherwise outputs $\mathsf{END}$.
  }
  
  Call the subspace identifier on input $S^\perp$ to obtain an integer $a$ and a unitary $U_\perp$;\;
  Construct the fixed-point amplitude amplification circuit $\widetilde{U}$ for the pair $U_\perp$ and $U_0 = (U_F \otimes I_a )$, with amplitude lower bound $\gamma=\sqrt{\epsilon/2}$ and precision $\tilde{\delta}$;\;
  Prepare $\ket{0}\ket{0}_f$ and apply $\widetilde{U}$;\;
  Measure the flag qubit;\;
  \If{the outcome is $1$}{
    \Return{$\mathsf{END}$}\;
  }
  Measure the group irrep register\;
  \Return{the measured element $s^\perp \in H^\perp \setminus K^\perp$;}\;
\end{algorithm}

\begin{theorem}[Amplification-based HSP-sampler]
\label{theorem: amplification hsp-sampler}
    Consider an abelian StateHSP instance with gap parameter $\epsilon \in (0, 1]$, hiding a subgroup $H$ with dual $H^\perp$. 
    Let $S^\perp$ be a data structure encoding a generating set for a subgroup $K^\perp \leq H^\perp$, and let $\tilde{\delta} \in (0,1)$.
    Assume access to a subspace identifier as in \cref{def: subspace identifier}.
    
    Then, Algorithm~\ref{alg: amplification-based-hsp-sampler} implements an $\textsc{HSP-Sampler}(S^\perp,\tilde{\delta})$.
    It makes one classical call to the subspace identifier to obtain a circuit for $U_\perp$.
    It then runs a quantum circuit using $O\left(\frac{\log\mathopen{(} 1/\tilde{\delta} \mathclose{)}}{\sqrt{\epsilon}}\right)$ queries to $U_\varphi^{\pm1}$, $U_R^{\pm1}$, $U_\perp^{\pm1}$, and the group $QFT^{\pm1}$, as well as other $O\left(\frac{(\log |G|+ n) \log\mathopen{(} 1/\tilde{\delta} \mathclose{)}}{\sqrt{\epsilon}}\right)$ elementary gates, where $n$ is the number of qudits of $\ket{\varphi}$. 
\end{theorem}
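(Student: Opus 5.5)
The proof is a direct instantiation of \cref{theorem: fixed point amp amp}, with $U_0 = U_F\otimes I_a$ (so $\ket{\psi_0}=\ket{\varphi_F}_{\lambda,\varphi}\ket{0}_a$) and $U_\perp$ the unitary returned by the subspace identifier. By \cref{def: subspace identifier}, $U_\perp\ket{\psi_0}\ket{0}_f$ has exactly the form required in \cref{theorem: fixed point amp amp}, with real amplitude $c=\sqrt{P(H^\perp\setminus K^\perp)}$ and with the good state $\ket{\psi_G}$ supported on irrep labels in $H^\perp\setminus K^\perp$. I then split into the two cases of \cref{def: HSP-sampler}.

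\emph{Case $K^\perp=H^\perp$.} Here $H^\perp\setminus K^\perp=\emptyset$, so $P(H^\perp\setminus K^\perp)=0$ and $c=0$. The last clause of \cref{theorem: fixed point amp amp} says the flag is measured as $1$ with probability one, so Algorithm~\ref{alg: amplification-based-hsp-sampler} outputs $\mathsf{END}$ with certainty.

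\emph{Case $K^\perp<H^\perp$.} By \cref{lemma: anticoncentration}, $c^2=P(H^\perp\setminus K^\perp)\ge\epsilon/2$, so $|c|\ge\gamma=\sqrt{\epsilon/2}$. \cref{theorem: fixed point amp amp} then gives flag outcome $0$ with probability at least $1-\tilde\delta$. Conditioned on that outcome, the post-measurement state is exactly $\ket{\psi_G}$, so measuring the irrep register returns an element of $H^\perp\setminus K^\perp$ with probability one. Otherwise the flag reads $1$ and the algorithm outputs $\mathsf{END}$, which is the permitted failure mode.

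\emph{Resource count.} \cref{theorem: fixed point amp amp} uses $O(\log(1/\tilde\delta)/\gamma)=O(\log(1/\tilde\delta)/\sqrt{\epsilon})$ calls to $U_0^{\pm1}$ and $U_\perp^{\pm1}$. Each call to $U_0^{\pm1}$ expands, by \eqref{eq: fourier sampling circuit}, into one call each to $U_\varphi^{\pm1}$, $U_R^{\pm1}$, $\mathcal F_G^{\pm1}$ and $U_G^{\pm1}$. Taking $U_G=\mathcal F_G^{-1}$, the $U_G$ calls are also QFT calls. The additional elementary gates are the $O(n'\log(1/\tilde\delta)/\gamma)$ gates from \cref{theorem: fixed point amp amp}, where $n'$ is the number of qudits of $\ket{\psi_0}$. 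Since $\ket{\psi_0}$ spans the irrep register ($O(\log|G|)$ qubits), the $n$ qudits of $\ket{\varphi}$, and the $a$ auxiliary qubits, this gives the stated $O((\log|G|+n)\log(1/\tilde\delta)/\sqrt{\epsilon})$ bound.

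The main obstacle is the size of $a$. The statement's gate count has no $a$-dependent term, so I will either absorb $a=O(\log|G|)$ into the $\log|G|$ term, which holds if the subspace identifier built later uses auxiliary registers of size comparable to the irrep register, or note that the reflections only need to act on the flag and irrep registers. The second option would make $a$ irrelevant to the gate count. The remaining steps are immediate. The classical work consists of one call to the subspace identifier and the phase computation for the amplification circuit; both are accounted for separately in the overall cost analysis, so they do not enter this statement.
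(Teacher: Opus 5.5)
Your proposal is correct and is essentially the paper's proof. It uses the same instantiation of \cref{theorem: fixed point amp amp} with $U_0=U_F\otimes I_a$ and $\gamma=\sqrt{\epsilon/2}$, the same two cases (the $c=0$ clause when $K^\perp=H^\perp$, \cref{lemma: anticoncentration} when $K^\perp<H^\perp$), and the same resource count via \eqref{eq: fourier sampling circuit}. The paper's proof never mentions the auxiliary register, so your concern about $a$ is a fair point about the statement. Of your two fixes, though, only the first is sound, and only under your added assumption $a=O(\log|G|)$. The reflection about $\Pi$ must act on every register that starts in $\ket{0}$: this includes the $\varphi$ register, and also the $a$ register unless $U_\perp$ returns its ancillae clean. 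So without a bound on $a$, the honest gate count is $O\bigl((\log|G|+n+a)\log(1/\tilde{\delta})/\sqrt{\epsilon}\bigr)$.
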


\begin{proof}
    The subspace identifier returns a unitary $U_\perp$ such that~\cref{eqn:subspace-identifier-def} holds, 
    where measuring the group register of $\ket{\psi_G}$ returns an element of $H^\perp\setminus K^\perp$ with probability one.
    The algorithm applies fixed-point amplitude amplification to this decomposition, using the state-preparation unitary $U_0=U_F\otimes I_a$, amplitude lower bound $\gamma=\sqrt{\epsilon/2}$, and failure parameter $\tilde{\delta}$.

    To see that this implements an $\textsc{HSP-Sampler}(S^\perp,\tilde{\delta})$, observe that:
    \begin{enumerate}
        \item If $K^\perp=H^\perp$, then $P(H^\perp\setminus K^\perp)=0$. By the zero-overlap guarantee of fixed-point amplitude amplification (\cref{theorem: fixed point amp amp}, $c=0$ case), the flag qubit is measured as $1$ with probability one, so the algorithm outputs $\mathsf{END}$.
        \item If $K^\perp < H^\perp$, then by the anticoncentration guarantee of \cref{lemma: anticoncentration}, $P(H^\perp\setminus K^\perp)\geq \epsilon/2$. Thus the good amplitude is at least $\sqrt{\epsilon/2}$, and we identify $\gamma=\sqrt{\epsilon/2}$. Fixed-point amplitude amplification with precision $\tilde{\delta}$ therefore makes the flag outcome $0$ occur with probability at least $1-\tilde{\delta}$. Conditioned on this outcome, the group register is supported on $H^\perp\setminus K^\perp$, so measuring it returns an element of $H^\perp\setminus K^\perp$.
    \end{enumerate}
    
    The resource bound follows directly from \cref{theorem: fixed point amp amp} and the decomposition of the Fourier sampling circuit $U_F$ into more elementary resources~(\ref{eq: fourier sampling circuit}). Implementing $U_\perp$ requires elementary gates, and we will specify the gate complexity in the following with the concrete construction. 
\end{proof}

\begin{figure}[t]
    \centering
    \scalebox{1}{ 
            \Qcircuit @C=1.2em @R=1.0em {
                \barrier[-0cm]{4}
                & 
                & 
                & \mbox{$U_{F}$}
                & \barrier[-0cm]{4}
                & 
                &
                & \\
              \lstick{\ket{0}_f} 
                & \qw 
                & \qw 
                & \qw 
                & \qw
                & \qw
                & \multigate{2}{U_\perp}
                & \qw
                & \meter \\
              \lstick{\ket{0}_a} 
                & \qw 
                & \qw 
                & \qw 
                & \qw
                & \qw
                & \ghost{U_\perp} 
                & \qw
                & \\
              \lstick{\ket{0}_\lambda} 
                & \qw 
                & \gate{U_G} 
                & \ctrl{1}              
                & \gate{\mathrm{QFT}} 
                & \qw
                & \ghost{U_\perp} 
                & \qw 
                & \meter \\
              \lstick{\ket{0}_\varphi} 
                & \qw
                & \gate{U_{\varphi}}
                & \gate{R(g)}
                & \qw 
                & \qw
                & \qw
                & \qw
            }
    }
    \caption{Circuit schematic for the amplification-based HSP-sampler.
    The block $U_F$ prepares the Fourier sampling state $\ket{\varphi_F} =\sum_{\lambda\in H^\perp}  \sqrt{P(\lambda)}|\lambda\rangle\otimes |\varphi_\lambda\rangle$ from the all-zero input.
    Using the auxiliary register, $U_\perp$ flags labels outside the current subgroup $K^\perp=\langle S^\perp\rangle$ with flag value $0$.
    When $K^\perp<H^\perp$, fixed-point amplitude amplification increases the probability of measuring this flag value.
    Conditioned on outcome $0$, measuring the irrep register returns an element of $H^\perp\setminus K^\perp$ with certainty.}
    \label{fig: amplification circuit base}
\end{figure}

\Cref{fig: amplification circuit base} summarizes the circuit underlying our amplification-based HSP-sampler.

We present two subspace identification strategies. 
The first uses a system of linear equations to implement a membership test for $\langle S^\perp \rangle$ and flag labels in  $H^\perp\setminus K^{\perp}$; we develop this method in the next subsection. 
The second exploits a direct-sum decomposition $H^\perp=K^{\perp}\oplus Q$ to restrict the irrep register to labels in $Q$, with the $0$ label treated separately. This approach is closer to the classical membership tests discussed in \cref{sec:matrix-form-kernel}, and we describe it in \cref{app:second-approach}.  

\subsection{Subspace identifier}
A small generating set $S^\perp$ can span a large subgroup $K^\perp$, so our membership test must work directly with the generators, leveraging their information efficiently without enumerating the subgroup explicitly.
We construct a subspace identifier $U_\perp$ that performs this test by coherently checking a system of equations.
This is analogous to testing membership in a code specified by the generators $S^\perp$.
Specifically, we implement
\begin{align}
\label{eq: cpinot}
    U_\perp \ket{\lambda} \ket{0}_a\ket{0}_f = \begin{cases}
        \ket{\lambda} \ket{b}_a\ket{1}_f & \text{if }\lambda \in K^{\perp},\\
        \ket{\lambda} \ket{c}_a \ket{0}_f & \text{otherwise}.\\
    \end{cases}
\end{align}
Here, $\ket{0}_a$ is an ancillary register, and $\ket{b}_a$ and $\ket{c}_a$ are two normalized quantum states.
The last qubit is the flag that records the outcome of the membership test.

The main idea is to test membership in $K^\perp$ using generators of its annihilator $K=(K^\perp)^\perp$. 
Recall the definition of the dual subgroup $K^{\perp} = \{\lambda \in \hat{G}: \chi_\lambda(h) = 1,\forall h \in K\}$ and the group homomorphism property: $\forall g^{(1)},g^{(2)}\in G:\;\chi_\lambda(g^{(1)}+g^{(2)}) = \chi_\lambda(g^{(1)})\chi_\lambda(g^{(2)}).$ 
Then, for the membership test it suffices to test the character equations on a generating set $S=\{s^{(1)}, \dots, s^{(d)}\}$ for $K$:
\begin{align}
\label{eq: system check characters}
    \lambda \in  K^{\perp} \iff 
    \begin{cases}
        \chi_\lambda(s^{(1)}) = 1,\\
        \chi_\lambda(s^{(2)}) = 1,\\
        \vdots \\
        \chi_\lambda(s^{(d)}) = 1.
    \end{cases}
\end{align}
The quantum circuit $U_\perp$ checks these $d$ equations coherently and sets the flag to $1$ if all are satisfied.

We next explain how to compute a generating set $S$ for $K$ and implement these checks.
Along the way, we describe the data structure encoding $S^\perp$ and the associated $\textsc{Update}$ and $\textsc{Dual-Solver}$ routines.
We begin with the Boolean group $G=\mZ_2^n$, then extend the construction to general finite abelian groups.

\subsubsection{Bitstrings and parity checks}
Consider the Boolean group $G =\mZ_2^n$, with bitwise XOR as the group operation. 
We encode $S^\perp$ as a boolean matrix whose rows generate $K^\perp$. 
The data structure starts as an empty matrix.
Each call to $\textsc{Update}(S^\perp,s^\perp)$ incorporates the new generator and restores RREF using incremental Gauss-Jordan elimination (\cref{theorem: gauss-jordan}).

For the $\textsc{Dual-Solver}$, recall that $\chi_\lambda(g) = (-1)^{\lambda \cdot g}$.
Thus, $g$ belongs to $K=(K^\perp)^\perp$ exactly when it is orthogonal to every row of the encoding matrix.
In other words, $K$ is the kernel of this matrix over $\mZ_2$.
Accordingly, the routine $\textsc{Dual-Solver}(S^\perp)$ coincides with the kernel-finding procedure described in \cref{theorem: kernel generation}, and it outputs a generating set $S$ for $K$.

Given $S$, the conditions in Eq.~(\ref{eq: system check characters}) become parity checks
\begin{align}
\label{eq: system check binary}
    \lambda \in  K^{\perp} \iff 
    \sum_{l =1}^n\lambda_{l}s_l =0 \pmod 2 ,\quad \forall s\in S.
\end{align}
Armed with our data structure for $S^\perp$ and the $\textsc{Dual-Solver}$, we can use (\ref{eq: system check binary}) to implement $U_\perp$. 

\begin{figure}[t]
    \centering
    \scalebox{1}{ 
            \Qcircuit @C=1.2em @R=1.0em {
                   \barrier[-0cm]{7}
                &  
                &  
                &  \mbox{Eq.~$1$}
                &  \barrier[-0cm]{7}
                &  
                &  \mbox{\qquad Eq.~$2$}
                &  \barrier[-0cm]{7}
                &
                &  
                &  \mbox{System check} 
                &  \barrier[-0cm]{7}
                &
                &
                \\
              \lstick{\ket{\lambda_{1}}} 
                & \qw 
                & \ctrl{4}              
                & \qw 
                & \qw
                & \qw 
                & \qw
                & \qw
                & \qw
                & \qw
                & \qw
                & \qw
                & \qw
                & \qw \\
              \lstick{\ket{\lambda_{2}}} 
                & \qw 
                & \qw
                & \qw 
                & \qw
                & \qw
                & \ctrl{4}              
                & \qw
                & \qw
                & \qw
                & \qw
                & \qw
                & \qw
                & \qw \\
              \lstick{\ket{\lambda_{3}}} 
                & \qw 
                & \qw
                & \ctrl{2}              
                & \qw              
                & \qw
                & \qw
                & \ctrl{3}              
                & \qw
                & \qw
                & \qw
                & \qw
                & \qw 
                & \qw \\
              \lstick{\ket{\lambda_{4}}} 
                & \qw 
                & \qw
                & \qw
                & \ctrl{1}              
                & \qw
                & \qw
                & \qw
                & \qw
                & \qw
                & \qw
                & \qw
                & \qw 
                & \qw \\
              \lstick{\ket{0}_{a_1}} 
                & \qw 
                & \targ              
                & \targ              
                & \targ              
                & \qw 
                & \qw 
                & \qw
                & \qw
                & \gate{X}
                & \ctrl{2}
                & \gate{X}
                & \qw 
                & \qw \\
              \lstick{\ket{0}_{a_2}} 
                & \qw 
                & \qw
                & \qw
                & \qw
                & \qw 
                & \targ               
                & \targ               
                & \qw
                & \gate{X}
                & \ctrl{1}
                & \gate{X}
                & \qw 
                & \qw \\
              \lstick{\ket{0}_f} 
                & \qw 
                & \qw
                & \qw
                & \qw
                & \qw 
                & \qw 
                & \qw 
                & \qw
                & \qw
                & \targ
                & \qw
                & \qw
                & \qw 
            }
    }
    \caption{Parity-check circuit for the binary subspace identifier of \cref{theorem: cpinot binary}. Here, $G=\mZ_2^4$ and $K$ is generated by $s^{(1)}=1011$ and $s^{(2)}=0110$. The first two blocks compute $\lambda\cdot s^{(1)}$ and $\lambda\cdot s^{(2)}$ modulo $2$ into auxiliary qubits $a_1$ and $a_2$. 
    The final block sets the initially zero flag qubit $f$ to $1$ if both parities are zero, identifying $\lambda\in K^\perp$ as in~\eqref{eq: system check binary}. The auxiliary qubits can be reset by undoing their $X$ gates and reversing the parity-check blocks.}
    \label{fig: quantum circuit system checker Z2}
\end{figure}

\begin{theorem}[Parity subspace identifier]
\label{theorem: cpinot binary}
    Let $S^\perp$ be an RREF matrix with $i$ rows, encoding the generators of $K^\perp$ in its rows.
    We can implement the $U_\perp$ unitary of Eq.~(\ref{eq: cpinot}) using $O((n-i)i)$ one- and two-qubit gates, and $O(n-i)$ ancillae.
    A classical computer can output a description of the circuit in $O((n-i)i)$ binary operations.
\end{theorem}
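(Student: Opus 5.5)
The plan is to build $U_\perp$ directly from the parity checks in~\eqref{eq: system check binary}, using the kernel generators produced by the $\textsc{Dual-Solver}$. Since $S^\perp$ is an RREF matrix over $\mZ_2$ with $i$ rows and no zero rows, \cref{theorem: kernel generation} returns $d=n-i$ vectors $s^{(1)},\dots,s^{(n-i)}$ generating $K=\ker(S^\perp)=(K^\perp)^\perp$. Each vector has at most $i+1$ nonzero entries. In sparse representation this costs $O((n-i)i)$ binary operations, since $\polylog M=O(1)$ for $M=2$. By the discussion around~\eqref{eq: system check characters}, $\lambda\in K^\perp$ if and only if $\lambda\cdot s^{(j)}=0 \pmod 2$ for all $j$.

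\textbf{Circuit.} I would allocate $n-i$ ancilla qubits $a_1,\dots,a_{n-i}$, all initialized to $\ket{0}$. For each $j$ and each index $l$ with $s^{(j)}_l=1$, apply a CNOT from $\lambda_l$ to $a_j$. After this, $a_j$ holds $\lambda\cdot s^{(j)} \bmod 2$. This step uses at most $(n-i)(i+1)=O((n-i)i)$ CNOTs.

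Next, flip every ancilla with $X$ gates. Then apply an $(n-i)$-controlled NOT onto the flag $f$, and undo the $X$ gates. The flag becomes $1$ exactly when all parities vanish, which is precisely~\eqref{eq: cpinot}. The ancilla state $\ket{b}_a$ or $\ket{c}_a$ is the computational basis state of the parities, which is allowed. Optionally one can uncompute the parity blocks to reset the ancillas to $\ket{0}$.

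The multi-controlled Toffoli on $n-i$ controls decomposes into $O(n-i)$ one- and two-qubit gates (for instance Toffoli ladders reduced to two-qubit gates). Some constructions need at most $O(n-i)$ extra ancillas, which keeps the total at $O(n-i)$ ancillae. The overall gate count is therefore $O((n-i)i+(n-i))=O((n-i)i)$ for $i\ge1$. The degenerate case $i=0$ has $K=G$; there the flag should be set only for $\lambda=0$, so I would treat it as an $n$-input NOR, as in \cref{def: NOR gate}, costing $O(n)$ gates, and note this separately.

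\textbf{Main obstacle: classical cost of the description.} Writing down the circuit requires listing the CNOT pattern, whose size is bounded by the total support of the kernel vectors, together with the $O(n-i)$ gates of the multi-controlled NOT. The step to watch is making sure the kernel vectors are taken in sparse form, so the total stays within $O((n-i)i)$ rather than $O((n-i)n)$. For this I would rely on the sparse-representation bound of \cref{theorem: kernel generation}: each kernel vector is a free-column unit vector plus entries in the $i$ pivot positions.
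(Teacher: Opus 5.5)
Your proposal is correct and follows the paper's proof essentially step by step: sparse kernel generators from the $\textsc{Dual-Solver}$, one ancilla per parity check filled by CNOTs, then $X$ gates and an $(n-i)$-controlled NOT onto the flag (the paper decomposes it using one clean ancilla), with the $\textsc{Dual-Solver}$ dominating the classical cost. Your remark that the gate count is really $O((n-i)(i+1))$, so the $i=0$ case costs $O(n)$, is a valid minor refinement that the paper's stated bound glosses over.
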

\begin{proof}
    Using the $\textsc{Dual-Solver}$ routine of \cref{theorem: kernel generation}, we can obtain a description of $S$ in sparse representation in $O((n-i)i)$ classical binary operations. 
    The description consists of $n-i$ group elements with $O(i)$ non-zero entries each.

    We can then implement $U_\perp$ through the $n-i$ equations from Eq.~(\ref{eq: system check binary}). We allocate one ancilla per equation and use it to evaluate the result of each individual parity check: for each generator $s$, if the $l$-th component is equal to $1$, meaning $s_l = 1$, we insert a $\mathrm{CNOT}$ controlled on $|\lambda_{l}\rangle$ and targeting the dedicated ancilla.
    Finally, we make sure the parity is satisfied for all the equations simultaneously by applying an $X$ gate to all the auxiliary systems and using an $(n-i)$-multicontrolled-$\mathrm{NOT}$ targeting the flag register.
    This way, the flag is equal to $1$ if and only if $\lambda \in K^\perp$. 
    \Cref{fig: quantum circuit system checker Z2} shows an example of such circuit.
    
    This circuit requires $O((n-i)i)$ $\mathrm{CNOT}$s, $(n-i)$ $X$ gates, and a multicontrolled-$\mathrm{NOT}$ that can be decomposed with $O(n-i)$ one and two qubit gates using one additional clean ancilla~\cite{khattar2025rise}.
    Optionally, one can uncompute all the auxiliary systems at the same asymptotic costs.
    A description of this circuit can be generated in linear $O((n-i)i)$ classical time. 
    Thus, the $\textsc{Dual-Solver}$ dominates the classical cost.
\end{proof}

\subsubsection{Uniform moduli and equations}
We extend the construction to abelian groups $G = \mZ_M^n$, which are products of $n$ cyclic groups of the same order $M$.
The implementations of $\textsc{Update}$ and $\textsc{Dual-Solver}$ depend on whether $M$ is prime or composite.
The subspace identifier $U_\perp$, however, uses the same construction in both cases.

As in the binary case, we encode $S^\perp$ in a matrix $A \in \mZ_{M}^{i \times n}$ whose rows generate $K^\perp$.
The characters are
\begin{align}
    \chi_\lambda(g) = e^{i2\pi (\sum_{l =1}^n \frac{\lambda_{l}g_{l}}{M})}.
\end{align}

The routine $\textsc{Update}(S^\perp,s^\perp)$ incorporates a new generator and restores the matrix to canonical form while preserving the enlarged row span.
When $M$ is prime $\mZ_M$ is a field, and we maintain RREF using incremental Gauss-Jordan elimination \cref{theorem: gauss-jordan}, as in the binary case. When $M$ is composite, $\mZ_M$ is only a ring, so we maintain the Howell normal form (\cref{def: howell normal form}) using the algorithm of \cref{theorem: howell algorithm}.
Unlike the RREF, we do not have an incremental procedure for the Howell normal form. 
This introduces a slight overhead in $n$ compared to the prime case. 

The $\textsc{Dual-Solver}(S^\perp)$ routine computes a generating set for $K = (K^\perp)^\perp$, which is the kernel of $A$ over $\mZ_M$.
For prime $M$, it uses the RREF kernel-finding procedure of \cref{theorem: kernel generation}.
For composite $M$, it uses the Howell-form procedure of \cref{theorem: kernel generation}.

Let $S = \{s^{(1)}, \dots, s^{(d)}\} \subset \mZ_M^n$ be the resulting generating set for $K=(K^\perp)^\perp$, as outputted by \textsc{Dual-Solver}($S^\perp$). For prime $M$, assuming $A$ has no zero rows, we have $d=n-i$.
For composite $M$, the routine returns a list of $n$ generators.
The membership test $\lambda \in K^{\perp}$ of Eq.~(\ref{eq: system check characters}) reduces to the system of equations
\begin{align}
\label{eq: system check uniform}
    \lambda \in  K^{\perp} \iff 
    \sum_{l =1}^n\lambda_l s_l^{(k)} \equiv 0 \pmod M,\quad \forall k\in\{1,2,\cdots,d\}.
\end{align}
These equations generalize the binary parity checks in~\eqref{eq: system check binary}.
We implement them coherently using modular arithmetic.

\begin{theorem}[Subspace identifier, $\mZ_M^n$]
\label{theorem: general system checker}
Let $A\in \mathbbm{Z}_M^{i\times n}$ be a matrix in RREF if $M$ is prime, or in Howell normal form if $M$ is composite, encoding the generators $S^\perp$ of $K^\perp$ in its rows. Let $S = \{s^{(1)}, \dots, s^{(d)}\} \subset \mZ_M^n$ be a set generators for $K \leq \mZ_M^n$, output by \textsc{Dual-Solver}($S^\perp$), which guarantees that each generator has at most $i+1$ nonzero elements. We can implement the subspace identifier unitary $U_\perp$ of Eq.~(\ref{eq: cpinot}) for the dual subgroup $K^{\perp}$ using $O(di)$ quantum arithmetic operations, and $O(d)$ $M$-level auxiliary systems.
A classical computer can output a description of the circuit in $O(dn~\polylog M)$ classical binary operations, or in $O(di~\polylog M)$ using sparse access to $S$.    
\end{theorem}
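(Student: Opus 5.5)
We will generalize the parity-check construction of \cref{theorem: cpinot binary}, replacing CNOTs with the modular multiply-adders of \cref{def: modular multiplier} and the final multi-controlled NOT with the quantum NOR of \cref{def: NOR gate}. Correctness rests on the equivalence \eqref{eq: system check uniform}. That equivalence holds because $S$ generates $K=(K^\perp)^\perp$, which is guaranteed by the correctness of \textsc{Dual-Solver} (\cref{theorem: kernel generation} for prime $M$, \cref{theorem:kernel-M} for composite $M$). It also uses the fact that $\chi_\lambda$ is a homomorphism, so checking the characters on generators suffices.

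\textbf{Construction.} First, call \textsc{Dual-Solver} to obtain $S=\{s^{(1)},\dots,s^{(d)}\}$ in sparse form, with at most $i+1$ nonzero entries per generator. Allocate one $M$-level ancilla $a_k$ per generator, initialized to $\ket{0}$.

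For each $k$ and each $l$ with $s^{(k)}_l\neq 0$, apply $U_{\mul(s^{(k)}_l)}$ with control register $\ket{\lambda_l}$ and target $a_k$. Since every $U_{\mul(s)}$ acts as $\ket{a}\ket{b}\mapsto\ket{a}\ket{b+sa \bmod M}$, these gates commute on a common target. After all of them, $a_k$ holds $\sum_l \lambda_l s^{(k)}_l \bmod M$ exactly. This uses $O(d(i+1))=O(di)$ multiply-adders.

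Next, apply $U_{\NOR}$ on $a_1,\dots,a_d$ targeting the flag. By \eqref{eq: system check uniform}, the flag becomes $1$ iff every residue vanishes, i.e., iff $\lambda\in K^\perp$. This gives exactly the form \eqref{eq: cpinot}, with $\ket{b}_a=\ket{0\cdots0}$ and $\ket{c}_a$ the computational-basis state of the nonzero residue vector. By linearity, when $U_\perp$ is applied to $\ket{\varphi_F}$, the flag-$0$ branch is supported on $\lambda\in H^\perp\setminus K^\perp$ (using \cref{fact:supportP}), as \cref{def: subspace identifier} requires.

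Optionally, the ancillas can be uncomputed by applying the inverse multiply-adders after the NOR, since the flag is a function of $\lambda$ alone. This at most doubles the cost.

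\textbf{Resource accounting.} This is the step needing the most care, mainly to separate the classical cost of writing the circuit from the cost of \textsc{Dual-Solver}, which the statement treats as given.

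\emph{Quantum cost.} There are $O(di)$ multiply-adders and one NOR on $d$ registers. Under the stated conventions, that is $O(di)$ quantum arithmetic operations plus an $O(d\,\polylog M)$-gate NOR. The ancillas number $d$, plus possibly one clean ancilla for decomposing the NOR.

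\emph{Classical cost.} The circuit description lists, for each $k$, the positions and values of the nonzero entries of $s^{(k)}$. Each multiply-adder description costs $O(\polylog M)$, and the NOR description costs $O(d\,\polylog M)$.
\begin{itemize}
\item With sparse access, the total is $O(di\,\polylog M)$.
\item With dense vectors, one must scan all $n$ coordinates of each generator to find the nonzeros, giving $O(dn\,\polylog M)$.
\end{itemize}

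The only subtlety I anticipate is the composite-$M$ case, where $d=n$ and generators of the kernel need not be independent. The membership equivalence still holds for any generating set, so no independence is required, and the bounds above remain valid.
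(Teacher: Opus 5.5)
Your proposal is correct and follows essentially the same construction as the paper. Both use one $M$-level accumulator per kernel generator, filled by modular multiply-adders over its at most $i+1$ nonzero entries, then a NOR into the flag with optional inverse-adder uncomputation. Both also give the same dense-versus-sparse classical accounting, with \textsc{Dual-Solver} treated as given.
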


\begin{proof}
    Given the output $S$ from \textsc{Dual-Solver}($S^\perp$), 
    we need to implement the system of equations from Eq.~(\ref{eq: system check uniform}). 
    The construction is a generalization of the one in \cref{theorem: cpinot binary}. 
    We allocate $d$ $M$-level ancillary registers $\ket{0}_{a_1}\dots\ket{0}_{a_d}$ to 
    store the results of the $d$ equations in the system, and then use a 
    $U_{\mathrm{NOR}}$ circuit (\cref{def: NOR gate}) controlling on the equations and 
    targeting the flag qubit to ensure that all $d$ equations are satisfied.
 
    Let $\ket{\lambda} = \ket{\lambda_1}\ket{\lambda_2}\dots \ket{\lambda_n}$. 
    Consider the equation involving $s^{(k)}$. Since each component $s_l^{(k)}$ is a 
    classically known constant, the register $\ket{0}_{a_k}$ can accumulate the sum 
    $\sum_{l=1}^n \lambda_l s_l^{(k)} \bmod M$ directly: at the $l$-th term, 
    we apply the modular multiply-adder $U_{\mul(s_l^{(k)})}$ of 
    \cref{def: modular multiplier} to the pair of registers 
    $(\ket{\lambda_l}, \ket{\cdot}_{a_k})$, implementing
    \begin{align}
    \ket{\lambda_l}\,\big|\textstyle\sum_{l' < l}\lambda_{l'} s_{l'}^{(k)}\big\rangle_{a_k}
    \to \ket{\lambda_l}\,\big|\textstyle\sum_{l' \leq l}\lambda_{l'} s_{l'}^{(k)}\big\rangle_{a_k},
    \end{align}
    with all sums taken modulo $M$. Terms with $s_l^{(k)} = 0$ can be skipped. After the $U_\NOR$ circuit, the auxiliary systems can be uncomputed by running the multiply-adders in reverse with $-s_l^{(k)}$. 
    Since we have $O(d)$ equations and each has at most $i+1$ non-zero terms, we need $O(di)$ 
    quantum arithmetic operations.
    The classical complexity comes from reading the generators and outputting one 
    arithmetic operation per non-zero entry, each costing $O(\polylog M)$: reading 
    the generators as dense vectors costs $O(dn)$ entries, while with sparse access 
    only the $O(di)$ non-zero entries are read, giving the two stated bounds. Therefore, $\textsc{Dual-Solver}$ dominates the classical cost.
\end{proof}

\subsubsection{Mixed moduli}
Finally, we consider a general abelian group, $G=\mZ_{M_1}\times\cdots\times\mZ_{M_n}$, where $M_1,\dots,M_n\in\mZ_+$. 
We reduce the computations to the uniform composite moduli case.
The character of $G$ are 
\begin{align}
\label{eq: character mixed moduli}
    \chi_\lambda(g) = e^{i2\pi (\sum_{l =1}^n \frac{\lambda_{l}g_{l}}{M_l})}.
\end{align}
Let $M=\lcm(M_1, \dots, M_n)$ be the least common multiple of the cyclic orders and take $c_l = \frac{M}{M_l}$.  
Given a generating set $S = \{s^{(1)}, \dots, s^{(d)}\}$ for $K$, the membership conditions become
\begin{align}
\label{eq: system check mixed}
    \lambda \in  K^{\perp} \iff 
    \sum_{l =1}^nc_l\lambda_l s_l^{(k)} \equiv 0 \pmod M,\quad \forall k\in\{1,2,\cdots,d\},  
\end{align}
Each coordinate satisfies $\lambda_l\in[M_l]$, as $|\lambda\rangle=|\lambda_1\rangle|\lambda_2\rangle\cdots|\lambda_n\rangle$ is stored in $n$ $M_l$-level registers.
We evaluate these equations using $M$-level auxiliary accumulators.
The required multiply-adder (\cref{def: modular multiplier}) acts as
\begin{align}
\label{eqn:m-adder-mixed}
    \ket{\lambda_l}\,\big|\textstyle\sum_{l' < l} c_{l'}\lambda_{l'} s_{l'}^{(k)}\big\rangle_{a_k}
    \to \ket{\lambda_l}\,\big|\textstyle\sum_{l' \leq l} c_{l'} \lambda_{l'} s_{l'}^{(k)}\big\rangle_{a_k},
    \end{align}
with all sums taken modulo $M$.

As previous subsections, we first describe the $\textsc{Update}$ and $\textsc{Dual-Solver}$ that produce $S$, and then the subspace identifier $U_\perp$, which is essentially unchanged.
The $\textsc{Update}(S^\perp,s^\perp)$ and $\textsc{Dual-Solver}(S^\perp)$ both reduce to the composite uniform-moduli case by lifting the encoding matrix to $\mZ_M^n$: the $\textsc{Update}$ maintains the lifted matrix in Howell normal form (\cref{theorem: howell algorithm}), and the $\textsc{Dual-Solver}$ extracts a generating set for $K$ in $\mZ_M^n$ and then unlifts it. We formalize this below.
 
\begin{theorem}[\textsc{Dual-Solver}, mixed moduli $\mZ_{M_1} \times \dots \times \mZ_{M_n}$]
\label{theorem: dual solver mixed}
    Let $M_1, \dots, M_n \in \mZ_+$ and $G=\mZ_{M_1} \times \dots \times \mZ_{M_n}$. Given a generating set $S^\perp=\{s^{\perp(1)},\cdots s^{\perp(i)}\}$ of the subgroup $K^\perp$, one can obtain a set $S = \{s^{(1)}, \dots, s^{(d)}\} \subset \mZ_{M_1} \times \dots \times \mZ_{M_n}$ that generates $K$ with $O(n^3~\polylog M)$ classical binary operations, where $M=\lcm(M_1, \dots, M_n)$.
\end{theorem}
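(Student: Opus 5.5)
The plan is to reduce to the uniform composite-modulus case over $\mZ_M$ by a lifting map, and then unlift. I will use the map $\iota:\mZ_{M_l}\to\mZ_M$, $x\mapsto c_l x$ with $c_l=M/M_l$, which is an injective group homomorphism onto the subgroup $c_l\mZ_M$.

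\textbf{Step 1: set up the lifted matrix.} Form $A\in\mZ_M^{i\times n}$ with entries $A_{k,l}=c_l s^{\perp(k)}_l \bmod M$. For $g\in G$, choose any integer representative $\tilde g\in\mZ_M^n$ with $\tilde g_l\equiv g_l \pmod{M_l}$. Then by \eqref{eq: character mixed moduli},
\[
\chi_{s^{\perp(k)}}(g)=e^{2\pi i (A_k\cdot \tilde g)/M}.
\]
This is well defined, because changing $\tilde g_l$ by a multiple of $M_l$ changes $c_l s_l \tilde g_l$ by a multiple of $M$. Hence $g\in K=(K^\perp)^\perp$ iff $A\tilde g\equiv 0\pmod M$, i.e., iff $\tilde g\in\ker(A)$. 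So the reduction map $\pi:\mZ_M^n\to G$, $\pi(x)_l=x_l\bmod M_l$, satisfies $\pi(\ker A)=K$. Note that $\ker A$ contains $\ker\pi=\prod_l M_l\mZ_M$, so membership of $\tilde g$ is independent of the lift, and $\pi$ restricted to $\ker A$ is surjective onto $K$. Consequently, the images under $\pi$ of any generating set of $\ker A$ generate $K$. This correctness step is the conceptual core; I will argue it carefully via the homomorphism property of characters, so that testing on generators suffices.

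\textbf{Step 2: compute the kernel.} Bring $A$ to Howell normal form using \cref{theorem: howell algorithm}. Since we may first drop redundant rows, or since the Howell form has at most $n$ rows, the cost is $O(n^2\max(n,i)\polylog M)$. To keep this at $O(n^3\polylog M)$ even when $i>n$, I will insert rows incrementally and re-normalize, so that the working matrix never exceeds $O(n)$ rows. Then apply \cref{theorem:kernel-M} to the Howell form, with at most $n$ rows, which outputs $n$ kernel generators in $O(n\cdot n^2\polylog M)$ operations.

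\textbf{Step 3: unlift and count.} Reduce each generator coordinatewise modulo $M_l$, at a cost of $O(n^2\polylog M)$. This yields $d\le n$ generators $s^{(k)}$ of $K$ by Step 1. The total cost is $O(n^3\polylog M)$.

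\textbf{Main obstacle.} The expected obstacle is the dependence on $i$ in the Howell step. If the input has $i\gg n$ generators, I would handle it by incremental insertion, in the spirit of \textsc{Update} maintaining the Howell form with at most $n$ rows. Alternatively, I would note that in the framework $i\le\log|G|$ anyway, and state the bound with rows capped at $n$.
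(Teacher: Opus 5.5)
Your proof is correct, but it handles the key step differently from the paper. The paper explicitly declines to take the kernel of the $c_l$-weighted matrix $A$ over $\mZ_M^n$, on the grounds that it is larger than $K$. Instead it embeds $G$ via $\phi(g)=(c_1g_1,\dots,c_ng_n)$ and rewrites the constraints as $A_0y\equiv 0$, using the unweighted entries $s^{\perp(k)}_l$. It then appends the rows $\mathrm{diag}(M_1,\dots,M_n)$ to force $y\in\mathrm{im}(\phi)$, so the kernel of the $(i+n)\times n$ matrix $\tilde A$ is exactly $\phi(K)$. Finally it unlifts by exact division by $c_l$.

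You instead use the reduction map $\pi$ and the identity $\ker A=\pi^{-1}(K)$. The excess the paper worries about is precisely $\ker\pi$, and it disappears once kernel generators are reduced coordinatewise modulo $M_l$. This buys a shorter argument: there is no augmentation and no exactness-of-division check, and the Howell computation is on an $i\times n$ rather than an $(i+n)\times n$ matrix. Moreover, your $A$ is literally the lifted encoding matrix of \eqref{eqn:lift-A}, whose Howell form \textsc{Update} already maintains. Since the kernel depends only on the row span, within the framework you could skip re-normalization and apply \cref{theorem:kernel-M} directly.

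The paper's route buys a faithful copy $\tilde K=\phi(K)\cong K$ inside $\mZ_M^n$, so nothing collapses on unlifting. In your version $|\ker A|=|K|\prod_l c_l$, and some output generators may reduce to $0$ or become redundant. For example, with $G=\mZ_2\times\mZ_4$ and $S^\perp=\{(1,0)\}$, the Part-B generator $(2,0)$ maps to $(0,0)$. This is harmless for correctness and for the bound of $n$ generators.

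One claim in your Step 2 does not hold. Inserting rows one at a time and re-normalizing costs $O(n^3~\polylog M)$ per insertion by \cref{theorem: howell algorithm}, so $O(i\,n^3~\polylog M)$ in total. That is worse than the direct $O(n^2 i~\polylog M)$. No method can reach $O(n^3~\polylog M)$ when $i\gg n^2$, since just reading the input costs $\Theta(in)$ entries. The stated bound therefore has to be read with $i=O(n)$. The paper's own proof makes the same implicit assumption, since its $\tilde A$ has $i+n$ rows. Your fallback is the right resolution: cap the rows at $n$, which a Howell-form encoding maintained by \textsc{Update} guarantees because its pivots strictly increase.
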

\begin{proof}
    We leverage Eq.~(\ref{eq: character mixed moduli}) and (\ref{eq: system check mixed}) to obtain the set of generators $S$.
    Using the set $S^\perp$ and the coefficients $c_l = \frac{M}{M_l}$ for $l \in \{1, \dots, n\}$, we define the encoding matrix
    \begin{align}
    \label{eqn:lift-A}
        A = \begin{bmatrix}
            c_1s_1^{\perp(1)} & c_2s_2^{\perp(1)} & \dots & c_ns_n^{{\perp(1)}}\\
            c_1s_1^{\perp(2)} & c_2s_2^{\perp(2)} & \dots & c_ns_n^{{\perp(2)}}\\
            \vdots & \vdots & \cdots & \vdots\\
            c_1s_1^{\perp(i)} & c_2s_2^{\perp(i)} & \dots & c_ns_n^{{\perp(i)}}
        \end{bmatrix} \in \mZ_M^{i \times n}.
    \end{align}
    The target subgroup $K$ is exactly the kernel of $A$ over $G$, $K = \{g \in G : Ag \equiv 0 \bmod M\}$. 
    However, we cannot directly invoke a kernel routine for matrix $A$: doing so would instead return the larger set $Q = \{x \in \mZ_M^n : Ax \equiv 0 \bmod M\} \supseteq K$, which includes solutions whose $l$-th component is not a valid element of $\mZ_{M_l}$.

    To resolve this, we lift the problem to $\mZ_M^n$ through the map $\phi: G \to \mZ_M^n$,
    \begin{align}
        \phi((g_1, \dots, g_n)) = (c_1g_1, \dots, c_ng_n).
    \end{align}
    This is an injective group homomorphism: for each coordinate, $c_l g_l \equiv c_l g'_l \bmod M \iff g_l \equiv g'_l \bmod M_l$ (since $c_l = M/M_l$), so $\phi$ is injective. Its image is the subgroup
    $
        \mathrm{im}(\phi) = \langle c_1 \rangle \times \dots \times \langle c_n \rangle, \abs{\langle c_l \rangle} = \frac{M}{\gcd(M, c_l)} = M_l $, 
    so that $\abs{\mathrm{im}(\phi)} = \prod_l M_l = \abs{G}$. On this image, $\phi$ is an isomorphism onto $G$ with inverse $\phi^{-1}((y_1, \dots, y_n)) = (y_1/c_1, \dots, y_n/c_n)$, where the division is exact since each coordinate is a multiple of $c_l$.

    Writing $y = \phi(g)$, the kernel condition $Ag \equiv 0$ becomes $\sum_l c_l s_l^{\perp(k)} g_l = \sum_l s_l^{\perp(k)} y_l \equiv 0 \bmod M$ for every $k$, which removes the factors $c_l$ from $A$. We therefore define
    \begin{align}
        A_0 = \begin{bmatrix}
            s_1^{\perp(1)} & s_2^{\perp(1)} & \dots & s_n^{{\perp(1)}}\\
            s_1^{\perp(2)} & s_2^{\perp(2)} & \dots & s_n^{{\perp(2)}}\\
            \vdots & \vdots & \cdots & \vdots\\
            s_1^{\perp(i)} & s_2^{\perp(i)} & \dots & s_n^{{\perp(i)}}
        \end{bmatrix} \in \mZ_M^{i \times n},
        \qquad
        \tilde{A} =
        \begin{bmatrix}
            A_0\\
            \mathrm{diag}(M_1, \dots, M_n)
        \end{bmatrix} \in \mZ_{M}^{(i+n) \times n}.
    \end{align}
    The appended block $\mathrm{diag}(M_1, \dots, M_n)$ enforces membership in $\mathrm{im}(\phi)$: the row for coordinate $l$ imposes $M_l y_l \equiv 0 \bmod M$, which is equivalent to $c_l \mid y_l$, i.e. $y_l \in \langle c_l \rangle$. Denote
    \begin{align}
        \tilde{K} = \{y \in \mZ_M^n : \tilde{A}y \equiv 0 \bmod M\} = \{y \in \mathrm{im}(\phi) : A_0 y \equiv 0 \bmod M\} = \phi(K).
    \end{align}
    We compute a generating set for $\tilde{K}$ in $O(n^3~\polylog M)$ classical binary operations by bringing $\tilde{A}$ into Howell normal form (\cref{theorem: howell algorithm}) and solving for its kernel (\cref{theorem:kernel-M}). Since $\tilde{K} \subseteq \mathrm{im}(\phi)$ and $\phi^{-1}$ is a homomorphism on $\mathrm{im}(\phi)$, applying $\phi^{-1}$ to these generators yields a generating set $S$ for $\phi^{-1}(\tilde{K}) = K$.
\end{proof}
 
Finally, given the generating set $S$, the subspace identifier $U_\perp$ is implemented as in \cref{theorem: general system checker}, with the only modification that each term $\lambda_l s_l^{(k)}$ in the equations is weighted by $c_l$ (following \cref{eq: system check mixed} and~\cref{eqn:m-adder-mixed}). This uses $O(n^2)$ quantum arithmetic operations and $O(n)$ $M$-level ancillae, and a classical computer can output a description of the circuit in $O(n^2~\polylog M)$ binary operations.

Similarly, membership testing and \textsc{Update} reduce to the uniform-modulus $\mZ_M^n$ case through the embedding $\phi$.
That is, using the encoding matrix of ~\cref{eqn:lift-A} we have $\mspan(A)=\phi(K^\perp)$, and we can lift any new element $s^\perp$ to $\phi(s^\perp)$ to perform a membership test or an update. 
This is because $\phi$ is bijective in the domain of interest (that is, the image of $\phi$), and the operations (addition and scalar multiplication) used in membership testing and \textsc{Update} preserve $\mathrm{im}(\phi)$.
 
\begin{table}[t]
    \centering
    \resizebox{\linewidth}{!}{%
    \begin{tabular}{|c|c|c|c|c|c|}
    \hline
                  & Arithmetic ops & Ancillae & Ancillae levels & Classical ops & Classical ops (Sparse) \\
                  \hline
        $\mZ_2^n$ & $O((n-i)i)$ & $O(n-i)$ & $2 $ & $O((n-i)n)$ & $O((n-i)i)$\\ \hline
        $\mZ_M^n$, $M$ prime & $O((n-i)i)$ & $O(n-i)$ & $M$ & $O((n-i)n~\polylog M)$ & $O((n-i)i~\polylog M)$\\ \hline
        $\mZ_M^n$, $M$ composite & $O(ni)$ & $O(n)$ & $M$ & $O((ni^2+n^2)~\polylog M)$ & $O(ni^2~\polylog M)$\\ \hline
        $\mZ_{M_1} \times \dots \times \mZ_{M_n}$
        & $O(n^2)$ & $O(n)$ & $M=\lcm(M_1, \dots, M_n)$ & $O(n^3~\polylog M)$ & $O(n^3~\polylog M)$ \\ \hline
    \end{tabular}
    }
    \caption{Cost of implementing $U_\perp$ given a matrix $A\in\mZ_M^{i\times n}$ in RREF or Howell normal form that encodes the generators $S^\perp$ of $K^\perp$.}
    \label{tab: subspace id costs}
\end{table}

\subsection{Overall costs}
\label{subsec: overall cost}
Combining all the ingredients, we can estimate the number of resources required to solve a \textsc{StateHSP} instance. 

\begin{theorem}[StateHSP with access to the state 
preparation unitaries]
\label{theorem: stateHSP unified}
    Consider a \textsc{StateHSP} instance as in \cref{def: StateHSP with circuit}, with $G = \mZ_{M_1} \times \dots \times \mZ_{M_n}$ and a hidden subgroup $H\le G$ of unknown size. Let $\delta\in(0,1]$. 
    Then, there exists an adaptive polynomial-time quantum algorithm that identifies $H$ with probability at least $1-\delta$, 
    using $O(\log|G/H| + \log \frac{1}{\delta})$
    circuits that each make $O(1/\sqrt{\epsilon})$
    queries to $U_\varphi^{\pm1}$, $U_R^{\pm1}$, and the group QFT. Therefore, the overall query complexity is
    \begin{equation}
        O\left(\frac{\log |G/H|  + \log \frac{1}{\delta}}{\sqrt{\epsilon}}\right). 
    \end{equation}
    
    Let $M=\lcm(M_1,\dots,M_n)$. Each circuit uses $O(n)$ $M$-level qudit ancillae and requires $O\left(n^2/\sqrt{\epsilon}\right)$ elementary quantum arithmetic operations, each realizable with $O(\polylog M)$ elementary one- and two-qubit
    gates. 
\end{theorem}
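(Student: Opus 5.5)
The proof assembles the pieces already established. We run the HSP-Sampling framework (Algorithm~\ref{alg: HSP-Sampling framework}) and instantiate $\textsc{HSP-Sampler}$ with the amplification-based sampler of Algorithm~\ref{alg: amplification-based-hsp-sampler}, using failure parameter $\tilde{\delta}=1/2$. We instantiate $\textsc{Update}$ and $\textsc{Dual-Solver}$ with the mixed-moduli routines: lift via $\phi$ to $\mZ_M^n$, maintain Howell normal form, and compute the kernel as in \cref{theorem: dual solver mixed}. The subspace identifier is the weighted system-of-equations circuit of \cref{theorem: general system checker}, with coefficients $c_l=M/M_l$ as in \eqref{eq: system check mixed}.

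Correctness then follows directly. \cref{theorem: amplification hsp-sampler} shows that this sampler satisfies \cref{def: HSP-sampler}. The key facts are the anticoncentration bound of \cref{lemma: anticoncentration}, which supplies $\gamma=\sqrt{\epsilon/2}$, and the exactness of the post-selected state in \cref{theorem: fixed point amp amp}, which guarantees that every accepted sample lies in $H^\perp\setminus K^\perp$. \cref{theorem: HSP-sampling framework} then gives success probability at least $1-\delta$, provided \textsc{Update} and \textsc{Dual-Solver} are exact. This holds because $\mspan(A)=\phi(K^\perp)$ and $\phi$ is an isomorphism onto its image, so $\textsc{Dual-Solver}(S^\perp)$ returns generators of $(H^\perp)^\perp=H$ once $\langle S^\perp\rangle=H^\perp$.

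For the counting, \cref{theorem: HSP-sampling framework} bounds the number of sampler calls, and hence the number of circuits, by $3(\floor{\log_2|G/H|}+\ceil{\log_2\tfrac1\delta})=O(\log|G/H|+\log\frac1\delta)$. Each circuit, by \cref{theorem: amplification hsp-sampler} with $\tilde{\delta}=1/2$, uses $O(1/\sqrt{\epsilon})$ calls to $U_0=U_F\otimes I_a$, $U_\perp$, and their inverses. Each call to $U_F^{\pm1}$ costs exactly one query to $U_\varphi^{\pm1}$, one to $U_R^{\pm1}$, and one group QFT$^{\pm1}$ (taking $U_G=\mathcal{F}_G^{-1}$). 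Multiplying the two counts gives the stated query complexity. Gate counts come from the mixed-moduli subspace identifier. $K$ is presented by $d\le n$ generators, each with at most $i+1\le n+1$ nonzero entries (\cref{theorem:kernel-M}), so $U_\perp$ costs $O(n^2)$ multiply-adders plus one $U_\NOR$ on $O(n)$ $M$-level ancillae. The fixed-point amplification adds only $O(n/\sqrt{\epsilon})$ reflection gates and two qubits. Altogether each circuit has $O(n^2/\sqrt{\epsilon})$ arithmetic operations, each realizable with $O(\polylog M)$ one- and two-qubit gates by the assumptions of \cref{sec: quantum arithmetic primitives}.

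The step needing the most care is the polynomial-time claim, which requires aggregating classical costs over the whole run. Per iteration, the subspace identifier and \textsc{Dual-Solver} cost $O(n^3\polylog M)$ (\cref{theorem: dual solver mixed}). At most $\floor{\log|G/H|}$ calls to \textsc{Update} each run a Howell normal form in $O(n^3\polylog M)$, since the number of rows stays at most $n+i=O(n)$. The QSVT phases for fixed $\gamma=\sqrt{\epsilon/2}$ and $\tilde\delta=1/2$ are computed once, in $O(\frac{1}{\sqrt\epsilon}\polylog\frac1{\sqrt\epsilon})$ operations (\cref{prop:explicit-phase}). These combine into a total polynomial in $n$, $\log M$, $1/\sqrt{\epsilon}$ and $\log\frac1\delta$. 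One subtlety to verify is that the subspace identifier of \cref{theorem: general system checker} meets \cref{def: subspace identifier} when applied to the entangled state $\ket{\varphi_F}$. This holds because $U_\perp$ acts as a controlled operation that is diagonal in $\lambda$, so the flag-$0$ branch has weight exactly $P(H^\perp\setminus K^\perp)$ and, by \cref{fact:supportP}, is supported only on $H^\perp\setminus K^\perp$.
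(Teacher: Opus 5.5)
Your proposal is correct and follows the paper's proof. Like the paper, you combine the HSP-Sampling framework with the amplification-based sampler at $\tilde{\delta}=1/2$, instantiate \textsc{Update} and \textsc{Dual-Solver} through the lifted Howell-form routines, and use the $c_l$-weighted equation-checking subspace identifier; your cost accounting simply spells out what the paper reads off its tables, and using the mixed-moduli routines even for prime $M$ is harmless. There are two small slips, neither affecting the stated bounds: with $U_G=\mathcal{F}_G^{-1}$, each call to $U_F$ uses two QFT applications rather than one, and the reflection about $\ket{\psi_0}$ costs gates linear in the total number of qudits, including those of $\ket{\varphi}$, not just in $n$.
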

\begin{proof}
    We combine the \textsc{HSP-Sampling Framework} (\cref{theorem: HSP-sampling framework}) with the
    amplification-based \textsc{HSP-Sampler} (\cref{theorem: amplification hsp-sampler}, with $\tilde{\delta}=1/2$ in $\textsc{HSP-Sampler}$). The
    \textsc{Update} and \textsc{Dual-Solver} are instantiated according to the moduli: the RREF
    routines of \cref{theorem: gauss-jordan} and \cref{theorem: kernel generation} when $M$ is prime,
    and the Howell-normal-form routines of \cref{theorem: howell algorithm} and \cref{theorem:kernel-M}
    otherwise (the mixed case being lifted to $\mZ_M^n$ via the reduction of the mixed-moduli
    \textsc{Dual-Solver}). The subspace identifier $U_\perp$ is that of
    \cref{theorem: general system checker}, with the $c_l$-weighting of \cref{eq: system check mixed}
    in the mixed case. The classical cost is read from~\cref{tab:classical}, which is dominated by maintaining normal form in \textsc{Update}. 
\end{proof}

The classical cost for computing each circuit and postprocessing its output depends on the cyclic decomposition of $G$, and it is: (1) $O(n^2)$ classical arithmetic operations for uniform prime $M$ moduli; (2) $O(n^2\max(\log|G/H|,n))$ classical arithmetic operations for uniform composite $M$ moduli; or (3) $O(n^3)$ classical arithmetic operations otherwise, with $M=\lcm(M_1, \dots, M_n)$. 
Computing the amplification phases also requires a one-time cost of $O(\frac{1}{\sqrt{\epsilon}}~\polylog \frac{1}{\sqrt{\epsilon}})$ classical binary operations.   

The $O(n^2)$ bound on the arithmetic cost of the subspace identifier can be refined for particular groups.
\Cref{tab: subspace id costs} gives the more precise bounds.

\begin{theorem}[StateHSP with copy access]
\label{theorem: stateHSP copies z2}
    Consider a \textsc{StateHSP} instance as in \cref{def: StateHSP copies}, with $G=\mZ_{M_1} \times \dots \times \mZ_{M_n}$ and a hidden subgroup $H\le G$. Let $\delta\in(0,1]$ of unknown size. Then, there exists an adaptive polynomial-time quantum algorithm that identifies $H$ with probability at least $1-\delta$ using 
    \begin{equation}
        O\left(\frac{\log |G/H| + \log \frac{1}{\delta}}{\epsilon}\right)
    \end{equation}
    circuits, each requiring one copy of the input state vector $\ket{\varphi}$, one
    query to $U_R$, and at most two group QFTs. Each circuit uses $n$ qudit ancillae for the group irrep register, each one with $M_1, \dots, M_n$ levels respectively. 
\end{theorem}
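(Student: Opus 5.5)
The plan is to instantiate the HSP-Sampling framework (\cref{theorem: HSP-sampling framework}) with the copy-based \textsc{HSP-Sampler} of \cref{theorem: copy hsp-sampler} at failure parameter $\tilde{\delta}=1/2$, and then count resources. By \cref{theorem: HSP-sampling framework}, Algorithm~\ref{alg: HSP-Sampling framework} outputs a generating set for $H$ with probability at least $1-\delta$. It makes at most $3(\floor{\log_2|G/H|}+\ceil{\log_2\tfrac1\delta})$ sampler calls, at most $\floor{\log_2|G/H|}$ calls to \textsc{Update}, and one call to \textsc{Dual-Solver}. This holds because the copy-based sampler satisfies \cref{def: HSP-sampler} exactly: if $K^\perp=H^\perp$ it returns $\mathsf{END}$ with certainty, since every Fourier sample lies in $H^\perp$ by \cref{fact:supportP}; otherwise it returns a span-increasing element with probability at least $1/2$, by \cref{lemma: anticoncentration}. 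The framework never needs $|H|$, so the guarantee holds for hidden subgroups of unknown size.

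Next I would count copies. Each sampler call with $\tilde{\delta}=1/2$ runs at most $m=\ceil{\tfrac{2}{\epsilon}\ln 2}=O(1/\epsilon)$ Fourier sampling circuits. Multiplying by the number of sampler calls gives
\begin{equation}
O\left(\frac{\log|G/H|+\log\frac1\delta}{\epsilon}\right)
\end{equation}
circuits in total. Each circuit is $\widetilde{U}_F$ from \cref{fig:fourier-sampling-circuit}. It consumes one copy of $\ket{\varphi}$, makes one call to $U_R$, and uses the QFT once to implement $U_G=\mathcal{F}_G^{-1}$ and once for the final $\mathcal{F}_G$, so at most two QFTs. The irrep register is $n$ qudits with $M_1,\dots,M_n$ levels, and no further quantum ancillae are needed, because the membership test is classical.

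The remaining step is to verify that all the classical routines run in polynomial time, which justifies calling the algorithm polynomial-time. Membership tests, \textsc{Update} and \textsc{Dual-Solver} are handled by the RREF or Howell machinery. For mixed moduli I would lift through the embedding $\phi$ and use \cref{theorem: dual solver mixed}, together with the lifted membership test in \cref{prop:membership-testing-HNF}. All of these cost $\mathrm{poly}(n,\log M)$ per call, with the costs listed in \cref{tab:classical}. There are polynomially many calls, since the number of circuits is polynomial in $\log|G|$, $1/\epsilon$ and $\log\frac1\delta$.

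I expect the only delicate point to be this: \cref{theorem: copy hsp-sampler} must remain correct when membership is tested against the canonical-form encoding rather than against $S^\perp$ literally. I would argue that this holds because \textsc{Update} preserves the row span, and because $\mspan$ of the lifted matrix equals $\phi(K^\perp)$. Everything else is direct bookkeeping on the earlier results.
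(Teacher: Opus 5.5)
Your proposal is correct and follows essentially the same route as the paper's proof. Both instantiate the HSP-Sampling framework with the copy-based \textsc{HSP-Sampler} at $\tilde{\delta}=1/2$, and both use RREF or Howell-normal-form routines for \textsc{Update}, \textsc{Dual-Solver} and classical membership testing, with the lift via $\phi$ for mixed moduli. Your extra bookkeeping (the two QFTs via $U_G=\mathcal{F}_G^{-1}$, the $n$-qudit irrep register, and testing membership against the span-preserving canonical-form encoding) spells out details the paper leaves implicit.
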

\begin{proof}
    We just need to combine the $\textsc{HSP-Sampling Framework}$ of \cref{theorem: HSP-sampling framework} with the copy-based $\textsc{HSP-Sampler}$ of \cref{theorem: copy hsp-sampler} (with $\tilde{\delta}=1/2$ in $\textsc{HSP-Sampler}$ and the weak Fourier sampling circuit from \cref{fig:fourier-sampling-circuit}), choosing adequate $\textsc{Update}$ and $\textsc{Dual-Solver}$ routines, and membership testing to determine whether $s^\perp\in\mspan(S^\perp)$. When $M$ is prime, we take the RREF routines of \cref{theorem: gauss-jordan} and \cref{theorem: kernel generation} and membership testing in \cref{prop:membership-testing-RREF}. When $M$ is composite, we take the Howell Normal Form routines of \cref{theorem: howell algorithm} and \cref{theorem:kernel-M} and membership testing in \cref{prop:membership-testing-HNF}. 
\end{proof}

For copy access, the classical postprocessing has the same arithmetic cost bounds: (1) $O(n^2)$ extra classical arithmetic operations for uniform prime $M$ moduli; (2) $O(n^2\max(\log|G/H|,n))$ extra classical arithmetic operations for uniform composite $M$ moduli; or (3) $O(n^3)$ extra classical arithmetic operations on $\mZ_M$ otherwise, with $M=\lcm(M_1, \dots, M_n)$.  

\section{Query lower bound}
\label{sec: query lower bound}
In this section, we prove that the $O\left(\frac{\log |G/H|}{\sqrt{\epsilon}}\right)$ query complexity of \cref{thm:stateHSP-u} is optimal. 
Specifically, any quantum algorithm that solves the abelian StateHSP with success probability at least $2/3$ requires $\Omega\left(\frac{\log|G/H|}{\sqrt{\epsilon}}\right)$ queries to $U_\varphi$ and $U_\varphi^{-1}$ in the worst case.
Our construction shows that this lower bound holds even with access to their complex conjugates, $U_\varphi^*$ and $U_\varphi^T$, and to the controlled version of all these oracles.
This result settles the query complexity of abelian StateHSP and contributes to the broader study of the power of controlled-unitaries and conjugate queries~\cite{araujo2014quantum, tang2025controlled, RandomPurificationTang}.

Our construction proceeds in three steps. 
We start from the $\Omega(\log|G|)$ query lower bound for Simon's problem, which gives the lower bound at $\epsilon=1$. 
We then slightly modify Simon's oracle to obtain a \emph{padded} version of the problem that captures the dependence on $|H|$, yielding the finer bound $\Omega(\log|G/H|)$.
Finally, we embed this padded Simon's problem into a family of StateHSP instances parametrized by $\epsilon$ and use the adversary method to track how the query complexity grows as $\epsilon\to 0$. 
The crux is that, in the relative-$\gamma_2$-norm formulation of \citet{belovs2015variations}, the fractional version of the instance rescales the relevant oracle differences by $\sqrt{\epsilon}$.
This increases the lower bound by a factor of $1/\sqrt{\epsilon}$.

\subsection{Query complexity and the adversary method}
In this subsection, we review the quantum adversary method, which ties bounded-error query complexity to the solution of an optimization problem, the \emph{relative $\gamma_2$ norm}. 
We use this connection to establish the $\epsilon$ dependency. 
We first define the query model and bounded-error query complexity, then present the adversary method and the relative $\gamma_2$ norm.

\begin{definition}[Query model (\citet{belovs2015variations})]
\label{def: query model}
    Let $\mathcal{H}_S$ be a finite-dimensional query register. An algorithm with oracle
    access to a unitary $O$ on $\mathcal{H}_S$ acts on
    $\mathcal{H} \defeq  \mathcal{S}\oplus(\mathcal{H}_S \otimes \mathcal{W})$, where
    $\mathcal{W}$ is a finite-dimensional workspace and $\mathcal{S}$ is a finite-dimensional space. A forward query is
    $\widetilde{O} \defeq I_{\mathcal{S}} \oplus (O \otimes I_{\mathcal{W}})$ and an
    inverse query is
    $\widetilde{O}^{-1} = I_{\mathcal{S}}\oplus ( O^{\dagger} \otimes I_{\mathcal{W}})$.
    A \emph{$T$-query algorithm} is a unitary of the form
    \begin{align}
        \mathcal{A}_O \defeq
        U_T\,\widetilde{O}^{\pm1}\,U_{T-1}\cdots U_1\,\widetilde{O}^{\pm1}\,U_0 ,
    \end{align}
    where $U_0, \dots, U_T$ are unitaries on $\mathcal{H}$ independent of $O$, and the
    signs are fixed in advance.
\end{definition}

The direct sum structure in $\mathcal{H}$ implicitly allows the implementation of controlled-$O$, which is a stronger promise~\cite{araujo2014quantum}. For example, implementing $I\oplus O$ in the space of $\mathcal{K}\oplus\mathcal{K}$ is equivalent to implementing controlled-$O$ on $\mathbb{C}^2\otimes \mathcal{K}$. 
We prove our query lower bound in this stronger model.

\begin{remark}[Adaptivity]
Although the inter-query unitaries $U_0,\dots,U_T$ are fixed in advance and independent of the oracle, this model captures adaptive quantum algorithms whose later operations and queries depend on information obtained earlier.
By the deferred-measurement principle, intermediate measurements can be replaced by coherently recording their outcomes in workspace registers within $\mathcal{W}$. 
Subsequent classically controlled operations can be replaced by unitaries controlled on these registers. 
\end{remark}

We can now define bounded-error query complexity.

\begin{definition}[Bounded-error query complexity]
\label{def: bounded error query complexity}
    Let $\mathcal{X}$ be a finite set of instances, $\{O_x\}_{x \in \mathcal{X}}$ a family of
    unitaries on $\mathcal{H}_S$, and $s\colon \mathcal{X} \to E$ an answer function taking values in a finite set $E$. 
    A $T$-query algorithm $\mathcal{A}$ \emph{evaluates $s$ with bounded error} if there
    is a fixed input-independent initial state $\ket{\psi_0} \in \mathcal{H}$ such that, for every
    $x \in \mathcal{X}$, measuring a designated output register of
    $\mathcal{A}_{O_x}\ket{\psi_0}$ in the computational basis returns $s(x)$ with
    probability at least $2/3$. The bounded-error query complexity
    $Q\big(s, \{O_x\}\big)$ is the smallest such $T$.
\end{definition}

In our application, $\mathcal{X}$ is a set of Simon's functions, the oracle is the state preparation unitary $O=U_\varphi$, and $\mathcal{H}_S$ is the register on which $U_\varphi^{\pm 1}$ act.
Therefore, $T$ counts the calls to $U_\varphi$, $U_\varphi^{-1}$, and their controlled version.
We address conjugate queries, $U^*_\varphi$ and $U^T_\varphi$, later using a separate observation.

We now introduce the adversary method. 
Introduced by \citet{ambainis2002quantum}, this method lower-bounds quantum query complexity by tracking how much a single oracle call can help distinguish inputs with different answers. 
We use the reformulation of \citet{belovs2015variations}, which differs from the original in two helpful ways: it expresses the bound as a \emph{relative $\gamma_2$ norm}, which generalizes both the $\gamma_2$ norm~\cite{ambainis2002quantum} and the filtered $\gamma_2$ norm~\cite{lee2011quantum}; and it applies not only to Boolean oracles, but to arbitrary unitary oracles queried in both directions, which matches our model. 

For a matrix $A$ and a family of matrices $\Delta = (\Delta_{xy})_{x,y \in \mathcal{X}}$,
let $\gamma_2(A \mid \Delta)$ denote the relative $\gamma_2$ norm of $A$ with respect to $\Delta$.
This is the value of a semidefinite program that finds a minimum-cost factorization of each entry $A_{x,y}$ as an inner product through the matrix $\Delta_{xy}$. 
We use this norm as a black box, relying only on the following two statements.

The first states that the adversary bound characterizes bounded-error query complexity up to constant factors.
We write $\bo_{s(x) \neq s(y)}$ for
the $\mathcal{X} \times \mathcal{X}$ matrix whose $(x, y)$ entry $1$ if $s(x) \neq s(y)$ and $0$
otherwise. 

\begin{theorem}[Adversary characterization~{\cite[Theorem~37]{belovs2015variations}}]
\label{thm: belovs 37}
    There are universal constants $c_2>c_1>0 $ such that, for every
    $s\colon \mathcal{X} \to E$ and every family of unitaries
    $\{O_x\}_{x \in \mathcal{X}}$,
    \begin{align}
        c_1\,\gamma_2\!\left(\bo_{s(x) \neq s(y)} \;\middle|\; (O_x - O_y)_{x,y}\right)
        \;\leq\; Q\big(s, \{O_x\}\big) 
        \;\leq\;
        c_2\,\gamma_2\!\left(\bo_{s(x) \neq s(y)} \;\middle|\; (O_x - O_y)_{x,y}\right).
    \end{align}
\end{theorem}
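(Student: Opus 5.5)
This statement is imported from \citet{belovs2015variations}, so the plan is to reconstruct its two directions separately rather than derive it from anything earlier in the paper. The lower bound $c_1\gamma_2 \le Q$ comes from a progress-measure (state-conversion) argument. The upper bound $Q \le c_2\gamma_2$ comes from turning an optimal factorization into an algorithm, in the style of span-program and reflection-based algorithms.

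\textbf{Lower bound.} Fix a $T$-query algorithm $\mathcal{A}$ as in \cref{def: query model} that evaluates $s$ with bounded error. Let $\ket{\psi_x^{(t)}}$ be its state after $t$ queries on oracle $O_x$, and form the Gram matrices $G^{(t)}_{xy}=\langle\psi_x^{(t)}|\psi_y^{(t)}\rangle$.
\begin{enumerate}
\item At $t=0$ we have $G^{(0)}=J$, the all-ones matrix, because the initial state is input-independent.
\item Bounded error forces $|G^{(T)}_{xy}|\le 2\sqrt{2}/3$ whenever $s(x)\ne s(y)$. Hence $J-G^{(T)}$ has entries bounded away from $0$ on the pattern of $\bo_{s(x)\neq s(y)}$.
\item One query changes the Gram matrix only through a term of the form $\langle\psi_x|\,(I\oplus (O_x^\dagger O_y-I)\otimes I)\,|\psi_y\rangle$. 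This term factorizes as an inner product through $O_x-O_y$, with vectors of norm at most $1$, because $O_x^\dagger O_y - I = O_x^\dagger(O_y-O_x)$. Inverse queries are handled the same way.
\item By the triangle inequality for $\gamma_2(\cdot\mid\Delta)$, summing over the $T$ steps gives $\gamma_2(J-G^{(T)}\mid \Delta)\le 2T$, where $\Delta=(O_x-O_y)_{x,y}$.
\end{enumerate}
One then needs a robustness step: $\gamma_2(\bo_{s(x)\neq s(y)}\mid\Delta)$ is at most a constant times $\gamma_2(J-G^{(T)}\mid\Delta)$. For this I would use the standard Hadamard-product trick. Entrywise multiplication by a matrix of bounded ordinary $\gamma_2$ norm increases the relative norm by at most that factor. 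The required multiplier can be built from the output-measurement structure, via a polynomial or Taylor-series correction that maps $J-G^{(T)}$ onto the indicator pattern.

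\textbf{Upper bound.} Take an optimal factorization witnessing $\gamma_2(\bo_{s(x)\neq s(y)}\mid\Delta)=W$. I would first reduce to deciding $s(x)=a$ versus $s(x)\ne a$, then recover the full output. From the factorization vectors I would build a state-conversion problem between input-dependent vectors. I would then design an algorithm alternating two reflections: one input-independent, and one implemented with a single query to $O_x^{\pm1}$. The analysis would go through the effective spectral gap lemma, so that phase estimation with precision $\Theta(1/W)$ distinguishes the two cases using $O(W)$ queries. Constant-factor losses arise from boosting the success probability and from the reduction of a general answer set $E$ to pairwise distinctions.

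\textbf{Main obstacle.} I expect the upper bound to be the hard step, specifically controlling the effective spectral gap when the oracle is an arbitrary unitary queried in both directions rather than a Boolean string oracle. The robustness step of the lower bound is the second delicate point. Since the paper uses the result as a black box, in context it suffices to cite \cite[Theorem~37]{belovs2015variations} after checking that our query model, with its direct-sum space $\mathcal{S}$ that allows controlled queries, matches theirs.
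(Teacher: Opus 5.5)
The paper gives no proof of this statement. It imports \cite[Theorem~37]{belovs2015variations} as a black box, so your closing fallback (cite it after checking that the direct-sum model of \cref{def: query model} is Belovs's) is exactly what the paper does. That check matters: without controlled access, $O_x$ and $e^{i\theta}O_x$ are indistinguishable although $O_x-e^{i\theta}O_x\neq 0$, so the upper-bound direction would fail.

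Your reconstruction, however, does not stand on its own, and the first gap is the robustness step of the lower bound. The entrywise bound $|G^{(T)}_{xy}|\le 2\sqrt{2}/3$ does not control the ordinary $\gamma_2$ norm of the multiplier you need. Small entries are compatible with $\gamma_2$ equal to $1$, e.g.\ a normalized Hadamard cross-Gram block. The generic bound $\gamma_2\big(F\circ(G^{(T)})^{\circ k}\big)\le\gamma_2(F)\le 2$, with $F=\bo_{s(x)\neq s(y)}$, does not sum, so your Taylor-series correction has no convergence guarantee.

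What you actually need is a norm-level contraction $\gamma_2(F\circ G^{(T)})\le c<1$, and it comes from the output structure. Write $\psi_x=\alpha_x+\beta_x$, with $\alpha_x$ the component on output $s(x)$ and $\|\beta_x\|^2\le\eta$.
\begin{itemize}
\item $\langle\alpha_x|\alpha_y\rangle=0$ whenever $s(x)\neq s(y)$.
\item The cross terms $\langle\alpha_x|\beta_y\rangle$ and $\langle\beta_x|\alpha_y\rangle$ vanish automatically when $s(x)=s(y)$.
\item Factoring these cross terms as $\tau\alpha_x$ against $\tau^{-1}\beta_y$ (and symmetrically) gives $2\sqrt{\eta(1-\eta)}$.
\item The remaining term $F\circ[\langle\beta_x|\beta_y\rangle]$ costs at most $2\eta$, and it is absent for Boolean outputs.
\end{itemize}
After reducing $\eta$ to a small constant you get $c<1$. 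Then write $F=F\circ(J-G^{(T)})+F\circ(F\circ G^{(T)})$ and use $\gamma_2(A\circ B\mid\Delta)\le\gamma_2(A\mid\Delta)\,\gamma_2(B)$ on both terms. This gives $\gamma_2(F\mid\Delta)\le\frac{2}{1-c}\,\gamma_2(J-G^{(T)}\mid\Delta)$.

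The second gap is in the upper bound. Reducing to decisions ``$s(x)=a$ versus $s(x)\neq a$'', or to individual output bits, and then reassembling $s(x)$ does not lose only a universal constant. Going bit by bit with the required error reduction gives only $O(\gamma_2\log|E|\log\log|E|)$, and testing candidates one at a time is worse. Uniformity in $E$ is essential here, because the paper applies the theorem in both directions to families with $|E|=2^m-1$. A $\log|E|=\Theta(m)$ loss in $Q\le c_2\gamma_2$ would only yield $\gamma_2\big(\bo_{s(f)\neq s(g)}\mid(O_f^1-O_g^1)_{f,g}\big)=\Omega(1)$. That in turn gives only an $\Omega(1/\sqrt{\epsilon})$ bound in \cref{thm: lower bound}, losing the $\log|G/H|$ factor.

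The remedy is to skip the decision reduction. Treat function evaluation directly as the state-conversion problem from $J$ to $\bo_{s(x)=s(y)}$ (target states $\ket{s(x)}$), in the framework underlying the filtered $\gamma_2$ norm \cite{lee2011quantum}. Since $J-\bo_{s(x)=s(y)}=\bo_{s(x)\neq s(y)}$, an optimal factorization yields the two-reflection algorithm with $O(\gamma_2)$ queries for arbitrary $E$.

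Also, for a general unitary oracle the input-dependent reflection is not a single query to $O_x^{\pm1}$. A phase-sensitive choice such as $\ketbra{1}{0}\otimes O_x+\ketbra{0}{1}\otimes O_x^\dagger$ costs one controlled forward and one controlled inverse query, which is exactly where the direct-sum model is used.
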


Intuitively, inputs with different answers ($\bo_{s(x)\neq s(y)}=1$) must be distinguished \emph{through} the differences $O_x-O_y$ that a single query can see.
When these differences are small, many queries are needed. 
For this theorem, the query complexity and the relative $\gamma_2$ norm are essentially the same quantity: this lets us transport Simon's bound across our family. 

The second important statement is an elementary property of the norm.

\begin{fact}[Rescaling the oracle, {\cite[Proposition~6(d)]{belovs2015variations}}]
\label{fact: gamma2 scaling}
    For every nonzero scalar $c$, $\;\gamma_2(A\mid c\,\Delta)=\tfrac{1}{\abs{c}}\,\gamma_2(A\mid\Delta)$.
\end{fact}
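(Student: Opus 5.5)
The plan is to unfold the definition of the relative $\gamma_2$ norm as the semidefinite program of \citet{belovs2015variations} and to show that rescaling the constraint matrices by $c$ maps feasible solutions to feasible solutions with objective multiplied by $1/\abs{c}$. Concretely, $\gamma_2(A\mid\Delta)$ is the minimum of $\max\bigl\{\max_x \norm{u_x}^2, \max_y \norm{v_y}^2\bigr\}$ over families of vectors $u_x, v_y \in \mathcal{H}_S\otimes\mathcal{W}$. The minimum is taken subject to the factorization constraint $A_{x,y} = \langle u_x, (\Delta_{xy}\otimes I_{\mathcal{W}}) v_y\rangle$ for all $x,y$. I will use the symmetric form of the cost; the one-sided form $\max_x \norm{u_x}\norm{v_x}$ works identically. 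I would first write this definition explicitly, so that the argument is a direct manipulation of it.

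First, fix a feasible factorization $(u_x, v_y)$ for the pair $(A, \Delta)$, and write $c = \abs{c}e^{i\theta}$. I define new vectors $u'_x = \abs{c}^{-1/2} u_x$ and $v'_y = \abs{c}^{-1/2} e^{-i\theta} v_y$. Then
\[
\langle u'_x, (c\,\Delta_{xy}\otimes I) v'_y\rangle = \abs{c}^{-1}\, c\, e^{-i\theta} \langle u_x, (\Delta_{xy}\otimes I)v_y\rangle = A_{x,y}.
\]
So $(u'_x, v'_y)$ is feasible for $(A, c\Delta)$. Its cost is the old cost multiplied by $\abs{c}^{-1}$, since each squared norm scales by $\abs{c}^{-1}$ and the phase does not change norms. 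This gives $\gamma_2(A\mid c\Delta) \le \abs{c}^{-1}\gamma_2(A\mid\Delta)$.

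Second, I apply the same inequality to the pair $(A, c\Delta)$ with scalar $1/c$. This yields $\gamma_2(A\mid\Delta) \le \abs{c}\,\gamma_2(A\mid c\Delta)$, which is the reverse inequality, and equality follows. If the infimum is not attained, or is infinite, the same scaling map is a cost-rescaling bijection between the two feasible sets. The infima therefore rescale identically, with $\infty$ mapped to $\infty$.

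The only point needing care is to match the exact normalization of the objective in Belovs' definition, that is, whether the cost is quadratic or linear in the vector norms. The cost must be homogeneous of degree $2$ in the pair $(u,v)$, with the constraint bilinear in that pair. If the definition instead used linear norms with a product constraint, I would choose the split of $\abs{c}^{-1}$ between $u$ and $v$ so that the cost still scales by exactly $\abs{c}^{-1}$. This is always possible because the constraint is bilinear, so the scalar can be distributed freely between the two factors.
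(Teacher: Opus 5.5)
Your rescaling argument is correct, and it is the standard proof behind Belovs' Proposition~6(d). The paper does not prove this fact; it imports it as a black box, so there is no separate argument to compare against.

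Belovs' objective is $\max\{\max_x\|u_x\|^2,\max_y\|v_y\|^2\}$, which is homogeneous of degree $2$, so your main argument applies verbatim and the hedge in your last paragraph is unnecessary. That hedge is also wrong as stated: if the cost were $\max\{\max_x\|u_x\|,\max_y\|v_y\|\}$, the optimum would scale by $|c|^{-1/2}$, not $|c|^{-1}$, however you split the scalar between $u$ and $v$.
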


Reducing the oracle differences by a factor $c$ makes the inputs proportionally harder to distinguish and scales the adversary bound by $1/\abs{c}$.
With $c=\sqrt{\epsilon}$, this is the tool that makes our bound $\epsilon$-dependent.

\subsection{Padding Simon's problem and the StateHSP reduction}
\label{sec: padding and embedding simons}

In this subsection, we introduce a padded version of Simon's problem and then reduce it to a hard family of abelian StateHSP instances. 
We call this family the \emph{Fractional Padded Simon's} StateHSP.

\subsubsection{Simon's problem}
We begin with Simon's problem. 
Our reduction will transfer its query lower bound to StateHSP.

Let $V=\mZ_2^m$, and consider a function $f\colon V\to V$ promised to hide a period $s=s(f)$, meaning that
\begin{align}
    f(x)=f(y)\iff x + y\in\{0,s\}.
\end{align}
The function is accessed through the standard Boolean oracle $O_f\ket{x}\ket{b}=\ket{x}\ket{b + f(x)}$, and the goal is to recover $s$. 
In the notation of the adversary method, $\mathcal{X}$ is the set of all valid Simon's functions $f$, $s(f)$ outputs the hidden period of $f$, and $E=V$ is the set of possible answers.

D. R. Simon introduced this problem and gave a time-efficient quantum algorithm that solves it using $O(m)$ queries to $O_f$~\cite{Simonproblem94}.
Koiran, Nesme, and Portier later proved that this query complexity is optimal~\cite{koiran2005quantum}.
Their proof establishes a lower bound for the decision version of the problem: determine whether $s(f)=0$.

\begin{proposition}[Simon's decision lower bound~{\cite[Theorem~1]{koiran2005quantum}}]
\label{prop: simon decision lb}
    Any quantum algorithm that uses queries to $O_f$ to decide whether $f\colon \mZ_2^m \to \mZ_2^m$ is a bijection or hides a nonzero period, with success probability at least $2/3$, requires $\Omega(m)$ queries.
\end{proposition}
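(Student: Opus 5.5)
The approach is the polynomial method combined with averaging over a random hidden subgroup, following Koiran, Nesme and Portier. The goal is to turn a $T$-query algorithm into a low-degree univariate polynomial in $u=2^d$, where $d$ is the dimension of the hidden subgroup. Such a polynomial must stay bounded at the geometric nodes $u=1,2,4,\dots,2^m$, yet jump by a constant between $u=1$ (bijection) and $u=2$ (nonzero period). A degree bound for polynomials on geometric grids then gives $T=\Omega(m)$.

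\textbf{Step 1 (multilinear representation).} Encode $f$ by the $0/1$ variables $z_{x,y}=\bo_{f(x)=y}$. By the standard argument of Beals et al., every amplitude after $T$ queries to $O_f$ is a polynomial of degree at most $T$ in the $z_{x,y}$. The acceptance probability $p(f)$ is therefore a combination of monomials $\prod_{j\le 2T} z_{x_j,y_j}$.

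\textbf{Step 2 (symmetrization).} For $0\le d\le m$, let $\mathcal{D}_d$ be the following distribution on functions. First choose a uniformly random $d$-dimensional subgroup $H\le\mZ_2^m$. Then let $f$ be constant on the cosets of $H$, with distinct values on distinct cosets, chosen uniformly at random. Set $P(d)=\mathbb{E}_{f\sim\mathcal{D}_d}\,p(f)$.

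Fix a monomial with query points $x_1,\dots,x_k$, $k\le 2T$, and labels $y_j$. Its expectation factors into two parts.
\begin{itemize}
\item[(i)] The probability that the points realize the coset pattern demanded by the labels. Equal labels force the corresponding differences into $H$; distinct labels force the differences out of $H$. For a fixed set of $r\le k$ linearly independent vectors, the probability that all of them lie in $H$ equals $\prod_{t<r}(2^d-2^t)/(2^m-2^t)$, which is a polynomial of degree $r$ in $u$. The "not in $H$" conditions are handled by inclusion–exclusion, which keeps the degree at most $k$.
\item[(ii)] The probability that the random injective labeling of cosets matches the prescribed distinct labels. This is $1/(N)_q$, a falling factorial with $N=2^{m-d}$, and it is not polynomial in $u$.
\end{itemize}

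\textbf{Main obstacle: factor (ii).} The first fix I would try is to enlarge the codomain to a set of size $L\gg 2^m$ with a fixed injective encoding. The label-matching probability then becomes $1/(L)_q$, a constant independent of $d$. Because $f$ is still accessed through an oracle of the same form, this change costs nothing in the lower bound, and the reduction back to $\mZ_2^m\to\mZ_2^m$ is a fixed relabeling simulable with one query. If this does not work cleanly, I would instead absorb the factorial into a common denominator and work with a rational function, as Koiran et al.\ do. Either way, the target is that $P$ agrees on the integers $d=0,\dots,m$ with a polynomial $Q(u)$ of degree at most $2T$.

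\textbf{Step 3 (approximation-theoretic bound).} Correctness gives $Q(2^d)\in[0,1]$ for every $d$, together with $Q(1)\ge 2/3$ and $Q(2)\le 1/3$. I would then prove that any polynomial bounded by $1$ on $\{2^0,\dots,2^m\}$ with $|Q(2)-Q(1)|\ge 1/3$ has degree $\Omega(m)$. To do this, rescale to $v=u/2^m\in\{2^{-m},\dots,1\}$. Write $Q$ in Lagrange form on the geometric nodes, using $q$-analogues of the interpolation coefficients. This shows that a degree-$D$ polynomial bounded at those nodes varies by at most $2^{-\Omega(m-D)}$ on the cluster of nodes near $v=0$. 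The jump of $1/3$ between the two smallest nodes therefore forces $D=\Omega(m)$, and hence $T\ge D/2=\Omega(m)$.
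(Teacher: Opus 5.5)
Your plan is essentially the Koiran--Nesme--Portier polynomial-method argument, which the paper does not reprove but imports as their Theorem~1. Steps 1, 2(i) and 3 are sound. In 2(i), every inclusion--exclusion term has the form $\prod_{t<r_S}(2^d-2^t)/(2^m-2^t)$ with $r_S\le k-1$. In Step 3, assume $D\le m-2$ (otherwise you are done) and interpolate on the $D+1$ largest nodes $v=2^{-j}$, $0\le j\le D$. On $[0,2^{-D-1}]$ one gets $\sum_j|L_j(\xi)|\le\sum_j|L_j(0)|=O(1)$ and $|Q'(\xi)|=O(2^{D})$. Hence $|Q(2^{1-m})-Q(2^{-m})|=O(2^{D-m})$, which forces $D\ge m-O(1)$.

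The one real error is your factor (ii), and with it the ``main obstacle''. The codomain is $\mZ_2^m$ for every $d$. Given $H$, the labels form a uniformly random injection of the $2^{m-d}$ cosets into $\mZ_2^m$. The probability that $q$ prescribed distinct cosets receive $q$ prescribed distinct labels is $(2^m-q)_{2^{m-d}-q}/(2^m)_{2^{m-d}}=1/(2^m)_q$, which does not depend on $d$. Moreover, $q$ is fixed by the monomial, since a consistent configuration meets exactly as many cosets as there are distinct labels. So the $\mathcal{D}_d$-expectation of each monomial is already a constant times a polynomial of degree at most $k-1$ in $u=2^d$, and there is nothing to work around.

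This correction is necessary, because neither of your workarounds is valid as stated. With a codomain of size $L$ you do get the constant $1/(L)_q$, but the transfer back to $\mZ_2^m$-valued oracles runs the wrong way. A fixed injection $\mZ_2^m\hookrightarrow[L]$ turns an algorithm for $[L]$-valued oracles into one for $\mZ_2^m$-valued oracles, not conversely. No fixed relabeling $[L]\to\mZ_2^m$ is injective on the images of all instances. The rational-function fallback also fails: a genuinely rational function of $u$ is not covered by your Step~3 lemma. With factor (ii) computed correctly, the proof goes through as planned, giving $T\ge\deg Q/2=\Omega(m)$.
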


We use this decision lower bound to derive a lower bound for a search problem. 
In this search version, $f$ is promised to hide a nonzero period $s(f)$, and the algorithm must
output it with probability at least $2/3$. 
The algorithm's behavior on inputs that violate this promise is undefined: on a bijective $f$, it may return any 
element of $\mZ_2^m$ or a symbol indicating that the promise fails.

\begin{proposition}[Simon's search lower bound]
\label{prop: simon lb}
    Let $f: \mZ_2^m \to \mZ_2^m$ hide a nonzero period $s(f)$. 
    Any quantum algorithm that queries $O_f$ to determine $s(f)$, with success probability at least $2/3$, requires $\Omega(m)$ queries.
\end{proposition}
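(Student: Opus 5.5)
We reduce the decision problem of \cref{prop: simon decision lb} to the search problem, turning any $T$-query search algorithm into an $O(T)$-query decider. Suppose $\mathcal{A}$ uses $T$ queries to $O_f$ and, whenever $f$ hides a nonzero period, outputs $s(f)$ with probability at least $2/3$. The decider runs $\mathcal{A}$ to obtain a candidate $\hat s\in\mZ_2^m$, or a failure symbol, which we treat as rejection. It then verifies the candidate classically with two further queries, evaluating $f(0)$ and $f(\hat s)$. It answers ``period'' if $\hat s\neq 0$ and $f(0)=f(\hat s)$, and ``bijection'' otherwise.

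Correctness splits into two cases.
\begin{enumerate}
\item If $f$ is a bijection, then $f(0)=f(\hat s)$ forces $\hat s=0$. The decider therefore always answers ``bijection'' correctly, no matter what $\mathcal{A}$ outputs on this promise-violating input.
\item If $f$ hides a nonzero period $s$, then with probability at least $2/3$ we get $\hat s=s$. In that event the check $f(0)=f(s)$ passes and the decider answers ``period''.
\end{enumerate}
The decider thus has one-sided error, with success probability at least $2/3$ on both kinds of input, and it uses $T+2$ queries. Evaluating $f$ at a classical point costs one query, using $O_f\ket{x}\ket{0}=\ket{x}\ket{f(x)}$ followed by a measurement. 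By deferred measurement, the whole procedure fits the query model.

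By \cref{prop: simon decision lb}, $T+2=\Omega(m)$, hence $T=\Omega(m)$. The only subtle point is that $\mathcal{A}$ is unconstrained on bijections. The verification step is exactly what makes its arbitrary behavior there harmless, so no step here should be a real obstacle.
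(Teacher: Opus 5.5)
Your proposal is correct and follows the same route as the paper: reduce the decision problem of \cref{prop: simon decision lb} to search by running the search algorithm, verifying a nonzero candidate $\hat s$ with two extra queries comparing $f(0)$ and $f(\hat s)$, and concluding $T+2=\Omega(m)$. The paper also handles the promise-violating bijective case exactly as you do, relying on the verification step to make the search algorithm's arbitrary output there harmless.
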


\begin{proof}
    We reduce the decision problem to the search problem.  
    Let $\mathcal{A}$ solve the \emph{search} problem using $q$ queries. 
    We construct a decision algorithm $\widetilde{\mathcal{A}}$ that uses at most $q+2$ queries:
    \begin{enumerate}
        \item Run $\mathcal{A}$ on $f$ and denote its output by $\hat{s}$;
        \item If $\hat{s} \notin \mZ_2^m \setminus \{0\}$ (it is $0$ or a special symbol), output \emph{bijective};
        \item Otherwise, query $f$ at $0$ and $\hat{s}$. 
        Output \emph{non-bijective} if $f(0) = f(\hat{s})$, and \emph{bijective} otherwise.
    \end{enumerate}
    If $f$ is bijective, then $f(0)\neq f(\hat{s})$ for every nonzero $\hat{s}$. 
    Thus, $\widetilde{\mathcal{A}}$ outputs \emph{bijective} with certainty, regardless of whether $\mathcal{A}$ detects the promise violation.
    Instead, if $f$ hides a nonzero period $s(f)$, then $\mathcal{A}$ returns $\hat{s}=s(f)$ with probability at least $2/3$.
    In this case, $f(0) = f(\hat{s})$, so $\widetilde{\mathcal{A}}$ outputs \emph{non-bijective}.
    Therefore, $\widetilde{\mathcal{A}}$ is correct with probability at least $2/3$ on every valid decision instance. 
    By \cref{prop: simon decision lb}, $q + 2 \in \Omega(m)$, and hence
    $q \in \Omega(m)$.
\end{proof}

These query lower bounds hold even with access to inverse and controlled oracles.
Indeed, $O_f$ is self-inverse, so $O_f=O_f^{-1}=O_f^\dagger$, and its controlled version can be implemented using a single query.
To see this, introduce an auxiliary register in the $\ket{+}^m$ state.
Conditioned on the control qubit being $\ket{0}$, swap this register with the oracle's answer register.
Then apply $O_f$ and repeat the conditional swap.
Since a XOR with any string leaves $\ket{+}^m$ unchanged, this procedure acts as the identity when the control is $\ket{0}$ and applies $O_f$ when the control is $\ket{1}$.

This search version of Simon's problem is an instance of HSP with group $G=\mZ_2^m$ and hidden subgroup $H=\{0,s\}$.
Since $s\neq 0$, every instances has $|H|=2$. 
Thus, $\log_2|G/H|=m-1$ and $\log_2|G|=m$ are both $\Theta(m)$.
Using these instances directly would give an $\Omega(m)=\Omega(\log|G|)$ lower bound, but would not capture the dependence on $|H|$.
To obtain the finer bound $\Omega(\log|G/H|)$ with varying subgroup sizes, we modify the hard instances further.

\subsubsection{Padded Simon's problem}
To capture the dependence on $|H|$, we introduce a family of \emph{Padded Simon's} problems by adding a fixed, known subgroup to Simon's problem.

\begin{definition}[Padded Simon's problems]
\label{def: padded simons problem}
    Let $K=\mZ_2^r$ and $V=\mZ_2^m$, where $r\ge0$ and $m\geq 1$ are known integers. 
    Set $G=K\oplus V \cong \mZ_2^{n}$, with $n=r+m$. Let $f:V \to V$ be a Simon's function with nonzero hidden period $s(f)$, and define $\widetilde{f}:G \to V$ by $\widetilde{f}(k,v)= f(v)$, for $k \in K$, $v \in V$.
    Let this function be accessible via a quantum oracle
    \begin{align}
        O_{\widetilde{f}} \ket{k}_r\ket{v}_m\ket{b}_m \to \ket{k}_r\ket{v}_m\ket{b+ f(v)}_m.
    \end{align}
    The task is to recover the hidden subgroup of $\widetilde{f}$ using queries to $O_{\widetilde{f}}$.
\end{definition}

Here, $\oplus$ denotes the direct sum of subgroups, as introduced in~\cref{def:direct-sum-decomposition}. Since $\widetilde{f}(k,v) = \widetilde{f}(k',v')$ if and only if $v + v' \in \{0, s(f)\}$, the hidden subgroup is $\widetilde{H} \defeq K \oplus \langle s(f) \rangle$.
Thus, $|G| = 2^{n}$, $|\widetilde{H}| = 2^{r+1}$, and $|G/\widetilde{H}| = 2^{m-1}$.
This family lets us vary the subgroup size and refine the lower bound.

\begin{lemma}[Padded Simon's lower bound]
\label{lemma: padded simon}
    Any quantum algorithm that recovers the hidden subgroup $\widetilde{H}$ in a Padded Simon's problem with success probability at least $2/3$ requires $\Omega(m)$ queries to $O_{\widetilde{f}}$ in the worst case, even when $|\widetilde{H}|$ is known. 
\end{lemma}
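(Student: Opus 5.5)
We reduce the search version of Simon's problem (\cref{prop: simon lb}) to the Padded Simon's problem. Suppose $\mathcal{A}$ recovers $\widetilde{H}$ with probability at least $2/3$ using $q$ queries to $O_{\widetilde{f}}$. Given an arbitrary Simon's function $f\colon\mZ_2^m\to\mZ_2^m$ with nonzero period $s(f)$, accessed through $O_f$, we build a search algorithm for $s(f)$ that uses exactly $q$ queries to $O_f$. The first step is to simulate each call to $O_{\widetilde{f}}$ with a single call to $O_f$. Since $\widetilde{f}(k,v)=f(v)$ does not depend on $k$, we have
\begin{equation}
    O_{\widetilde{f}} = I_r\otimes O_f ,
\end{equation}
where $I_r$ acts on the padding register $\ket{k}_r$ and $O_f$ acts on $\ket{v}_m\ket{b}_m$. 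The same identity holds for inverse queries, since $O_f$ is self-inverse, and for controlled queries, since controlling $I_r\otimes O_f$ amounts to controlling $O_f$.

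Next we run $\mathcal{A}$ with these simulated queries. It outputs a description of $\widetilde{H}=K\oplus\langle s(f)\rangle$, say a generating set or an equivalent data structure, with probability at least $2/3$. We then extract $s(f)$ classically and without further queries, by projecting the generators onto the $V$-coordinates. This projection is a homomorphism $G\to V$ whose image of $\widetilde{H}$ is $\{0,s(f)\}$. Since $s(f)\neq 0$, the single nonzero element of that image is exactly $s(f)$. Whenever $\mathcal{A}$ succeeds, so does the reduction, so we obtain a $q$-query algorithm for Simon's search problem. \Cref{prop: simon lb} then gives $q\in\Omega(m)$.

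Finally, we handle the qualifier ``even when $|\widetilde{H}|$ is known.'' Every instance in the family has $|\widetilde{H}|=2^{r+1}$, which is fixed by the known parameters $r$ and $m$. Revealing it therefore gives $\mathcal{A}$ no information about $f$, and the reduction above goes through unchanged. Since $\log_2|G/\widetilde{H}|=m-1$, the bound reads $\Omega(\log|G/\widetilde{H}|)$ for every padding size $r$. The argument has no real obstacle. The only point needing care is that the output format of $\mathcal{A}$ lets us recover $s(f)$ classically. If $\mathcal{A}$ returns an arbitrary generating set rather than a canonical one, we can still compute the projection onto $V$, or equivalently compute $\widetilde{H}$ and take its unique nonzero element with $k=0$, in polynomial classical time and with no queries.
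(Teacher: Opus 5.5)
Your proposal is correct and follows essentially the same route as the paper: you simulate $O_{\widetilde{f}}=I_K\otimes O_f$ with one query to $O_f$, recover $s(f)$ classically, invoke \cref{prop: simon lb}, and note that $|\widetilde{H}|=2^{r+1}$ is the same for every instance. The only cosmetic difference is that your main extraction step projects $\widetilde{H}$ onto $V$, while the paper takes the intersection $\widetilde{H}\cap V=\langle s(f)\rangle$; the two are equivalent here, and your closing remark ("unique nonzero element with $k=0$") is exactly the paper's version.
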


\begin{proof}
    We reduce Simon's search problem to Padded Simon's problem. 
    Let $\mathcal{B}$ solve Padded Simon's problem using $q$ queries to $O_{\widetilde{f}}$, with
    success probability at least $2/3$. 
    We use $\mathcal{B}$ to solve Simon's search problem on $\mZ_2^m$ using $q$ queries to $O_f$. 

    Under the decomposition $G = K \oplus V$, we have $O_{\widetilde{f}} = I_K \otimes O_f$.
    Thus, each query to $O_{\widetilde{f}}$ can be simulated by a single query to $O_f$, leaving the $\ket{k}_r$ register untouched. 
    Running $\mathcal{B}$ therefore recovers $\widetilde{H}$ with probability at least $2/3$ using $q$ queries to $O_f$. 
    Since $K$ and $V$ are known and $\widetilde{H} \cap V = \langle s(f) \rangle$, classical postprocessing recovers $s(f)$ as the unique nonzero element of $\widetilde{H} \cap V$.
    This solves the Simon's search problem with $q$ queries and success probability at least $2/3$. 
    By \cref{prop: simon lb}, 
    $q \in \Omega(m)$.
    
    Finally, $|\widetilde{H}| = 2^{r+1}$ for every instance of the family, so
    knowing $|H|$ reveals nothing about $f$.
\end{proof}

Any algorithm for the abelian HSP must solve Padded Simon's problem and therefore
requires $\Omega(m) = \Omega(\log |G/H|)$ queries in the worst case, even when $|H|$ is known.
Note that $\log|G| = n$, whereas $\log|G/H| = m-1$: it is the padding that separates the two quantities.
For simplicity, we use $f$ to denote both the original and padded Simon's functions in the remainder, with the intended meaning clear from context.

\subsubsection{StateHSP reduction}
We now reduce Padded Simon's problem to a StateHSP instance with access to a state-preparation unitary. Consider the state
\begin{align}
\label{eq: simons statahsp initial}
    \ket{\varphi_f}_{n+m}=\frac{1}{\sqrt{2^n}}\sum_{k\in\mZ_2^r}\sum_{v\in\mZ_2^m}\ket{k}_r\ket{v}_m\ket{f(v)}_m,
\end{align}
which can be prepared using a single query to either Simon's oracle $O_{f}$ or the padded oracle $O_{\widetilde{f}}$. 
Let $G=\mZ_2^n$ act on the group registers through the left-regular shift representation, extended by the identity on all other registers,
\begin{align}
\label{eq: simons statahsp repre}
    R(g)\ket{x}_n=\ket{x+g}_n.
\end{align}
Equivalently, $R(k_1,v_1)\ket{k}_r\ket{v}_m= \ket{k + k_1}_r\ket{v + v_1}_m$. 

The representation $R$ is fixed and independent of the oracle, so only the state-preparation unitary requires an oracle query.
Indeed, $U_R = \sum_{g \in \mZ_2^n} \ketbra{g}{g} \otimes R(g)$ simply performs an out-of-place addition on $\mZ_2^n$ and requires no oracle queries. 
Taking $H:=\widetilde{H}=K\oplus\langle s(f)\rangle$, a direct computation gives $R(h)\ket{\varphi_f}=\ket{\varphi_f}$ for every $h\in H$ and $\bra{\varphi_f}R(g)\ket{\varphi_f}=0$ for every $g \in G \setminus H$. 
Thus, this is a StateHSP instance with $\epsilon=1$, and recovering $H$ determines the hidden period $s(f)$. 

Embedding an ordinary Simon's instance in the same way would give an $\Omega(\log|G|)$ lower bound by \cref{prop: simon decision lb}.
This already improves on the $\Omega\left(\frac{\log|G|}{\log\log|G|}\right)$ bound shown by \citet{bouland2024state} using the work of \citet{jones2025testing}. 
In comparison, embedding Padded Simon's problem and using \cref{lemma: padded simon} refines the bound further, capturing the dependence on $|H|$: we obtain an $\Omega(m)=\Omega(\log|G/H|)$ bound, even when $|H|$ is known.
We next establish the dependence on $\epsilon$.

\subsection{The adversary argument}
The padded construction gives the desired lower bound at $\epsilon=1$.
For general $\epsilon\in(0,1]$, we modify the input state to obtain the \emph{Fractional Padded Simon's} StateHSP family, then use the adversary method to derive the additional factor $1/\sqrt{\epsilon}$.
First, we show that solving these instances is equivalent to recovering the hidden period using a fractional oracle $O_f^\epsilon$.
Next, the rescaling property of the adversary bound relates the query complexity for $O_f^\epsilon$ to that for $O_f^1$.
Finally, we show that $O_f^1$ and the padded oracle $O_f$ are equivalent at unit query cost.
Following these connections backwards transfers the lower bound for Padded Simon's problem to StateHSP and incorporates the dependency on $\epsilon$.

We begin by defining the \emph{Fractional Padded Simon's} StateHSP family. 
All instances have group $G=\mZ_2^n= \mZ_2^r \oplus \mZ_2^m$ and the same representation $R$.
The input state depends on a Simon's function $f$ on the subgroup $\mZ_2^m$ with nonzero hidden period $s(f)$. 
The corresponding StateHSP instance hides the subgroup $H=\mZ_2^r\oplus \langle s(f)\rangle$. 

For $\epsilon\in(0,1]$, define the input state
\begin{align}
    \ket{\varphi_f^\epsilon}=\sqrt{1-\epsilon}\,\ket{0}\ket{u_G}+\sqrt{\epsilon}\,\ket{1}\ket{\varphi_f},
\end{align}
where $\ket{u_G}=\frac{1}{\sqrt{2^n}}\sum_{k\in\mZ_2^r}\sum_{v\in\mZ_2^m}\ket{k}\ket{v}\ket{0}$ is the uniform superposition over the group with the answer register set to zero, and $\ket{\varphi_f}$ is defined in Eq.~(\ref{eq: simons statahsp initial}). 
The representation $R(g)$ acts on the group register as in Eq.~(\ref{eq: simons statahsp repre}) and as the identity on all other registers. 
Since $\ket{u_G}$ is shift-invariant and the two branches have orthogonal flags, we have
\begin{enumerate}
    \item $R(h)|\varphi_f^\epsilon\rangle =|\varphi_f^\epsilon\rangle$ for every $h\in H$;
    \item $\langle\varphi_f^\epsilon|R(g)|\varphi_f^\epsilon\rangle =(1-\epsilon)\,\bra{u_G}R(g)\ket{u_G}+\epsilon\,\bra{\varphi_f}R(g)\ket{\varphi_f}=1-\epsilon$ for all the other $g \in G\setminus H$.
\end{enumerate}
Thus, $\{|\varphi_f^\epsilon\rangle\}_f$ is a family of StateHSP instances with tunable parameter $\epsilon$. 

We next show that solving these instances is equivalent to recovering $s(f)$ using a fractional oracle $O_f^\epsilon$.
The input state is prepared as $|\varphi_f^\epsilon\rangle = U_\varphi\ket{0} =O_f^\epsilon\,W\ket{0}$, where $W$ applies Hadamard gates to the group register, producing $\ket{u_G}$ with the flag and answer registers set to zero.
The fractional oracle is
\begin{align}
    O_f^\epsilon=\begin{bmatrix}\sqrt{1-\epsilon}\,I & -\sqrt{\epsilon}\,O_f\\ \sqrt{\epsilon}\,O_f & \sqrt{1-\epsilon}\,I\end{bmatrix},
\end{align}
written in block form with respect to the flag qubit.
Here, $O_f$ denotes the Padded Simon's oracle from \cref{def: padded simons problem}.
To \emph{define} a valid oracle, it suffices to verify that $O_f^\epsilon$ is unitary, and no actual implementation is needed.
Its unitarity follows from the fact that $O_f$ is a self-inverse unitary.

Because $W$ is fixed and independent of $f$, each query to $U_\varphi=O_f^\epsilon W$ or its inverse can be simulated using one query to $O_f^\epsilon$ or its inverse.
Moreover, recovering $H=\mZ_2^r\oplus\langle s(f)\rangle$ is equivalent to recovering $s(f)$.
It therefore suffices to lower-bound the number of queries to $O_f^\epsilon$ and its inverse needed to recover $s(f)$.

To relate the fractional oracle to Padded Simon's problem, we compare $O_f^\epsilon$ with its value at $\epsilon=1$.
At $\epsilon=1$, the oracle reduces to
\begin{align}
    O_f^1=\begin{bmatrix}0 & -O_f\\ O_f & 0\end{bmatrix}= \begin{bmatrix}
        0 & -1\\
        1 & 0
    \end{bmatrix}\otimes O_f.
\end{align}
Thus, $O_f^1$ differs from the padded Simon's oracle $O_f$ only by a fixed one-qubit gate, so a query to either can be simulated using one query to the other.

For any two Simon functions $f,g$ the diagonal blocks cancel in the oracle difference, giving
\begin{align}
\label{eq: oracle difference}
    O_f^\epsilon-O_g^\epsilon=\sqrt{\epsilon}\begin{bmatrix}0 & -(O_f-O_g)\\ O_f-O_g & 0\end{bmatrix}=\sqrt{\epsilon}\,\big(O_f^1-O_g^1\big).
\end{align}
We can now combine these ingredients to prove the query lower bound for abelian StateHSP.

\begin{theorem}[Query lower bound]
\label{thm: lower bound}
    For every $\epsilon \in (0,1]$ and integers $n\geq m>1$, there is a family of abelian StateHSP instances with $G=\mZ_2^n$ and known $\log|G/H|=m-1$ such that any quantum algorithm identifying the hidden subgroup with probability at least $2/3$ requires $\Omega\!\left(\frac{\log|G/H|}{\sqrt{\epsilon}}\right)$ oracle queries in the worst case.
    This bound holds with access to the state-preparation unitary $U_\varphi$ and its inverse $U_\varphi^{-1}$, their complex conjugates $U_\varphi^*$ and $U_\varphi^T$, and controlled version of all these oracles.
\end{theorem}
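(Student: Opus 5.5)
We chain the reductions already set up in the previous subsection and transfer the Padded Simon lower bound to the fractional oracle through the adversary characterization. Take the Fractional Padded Simon's family with $r=n-m$, so $G=\mZ_2^n$ and $H=\mZ_2^r\oplus\langle s(f)\rangle$. Then $\log|G/H|=m-1$ for every instance, so $|H|$ is known. Let $\mathcal{X}$ be the set of Simon's functions on $\mZ_2^m$ with nonzero period, and let the answer function be $s(f)$.

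First we lower-bound $Q(s,\{O_f^1\})$. Since $O_f^1=\left(\begin{smallmatrix}0&-1\\1&0\end{smallmatrix}\right)\otimes O_f$ and the one-qubit factor is fixed and invertible, each query to $O_f^{1}$ or its inverse is simulated by one query to the padded oracle, and vice versa. By \cref{lemma: padded simon}, $Q(s,\{O_f^1\})=\Omega(m)$. This holds in the query model of \cref{def: query model}, which includes control through the direct-sum structure. The Simon bound survives control because controlled-$O_f$ costs one query, as argued after \cref{prop: simon lb}, and $O_f^{-1}=O_f$. Applying the lower half of \cref{thm: belovs 37} in reverse, i.e. its upper inequality, we obtain $\gamma_2(\bo_{s(f)\neq s(g)}\mid(O_f^1-O_g^1))\geq Q/c_2=\Omega(m)$.

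Next we use \eqref{eq: oracle difference}, $O_f^\epsilon-O_g^\epsilon=\sqrt{\epsilon}(O_f^1-O_g^1)$, together with \cref{fact: gamma2 scaling}. This gives $\gamma_2(\bo\mid(O_f^\epsilon-O_g^\epsilon))=\epsilon^{-1/2}\gamma_2(\bo\mid(O_f^1-O_g^1))=\Omega(m/\sqrt{\epsilon})$. The lower inequality of \cref{thm: belovs 37} then gives $Q(s,\{O_f^\epsilon\})\geq c_1\cdot\Omega(m/\sqrt\epsilon)$. Because $U_\varphi=O_f^\epsilon W$ with $W$ oracle-independent, any StateHSP algorithm making $T$ queries to $U_\varphi^{\pm1}$ (controlled or not) yields an algorithm making $T$ queries to $(O_f^\epsilon)^{\pm1}$. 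Recovering $H$ reveals $s(f)$ as the unique nonzero element of $H\cap V$. Hence $T=\Omega(m/\sqrt\epsilon)=\Omega(\log|G/H|/\sqrt\epsilon)$, using $m>1$ so that $m-1=\Theta(m)$.

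Finally, the conjugate queries come at no extra cost. $O_f$ is a real permutation matrix, $W$ is real (Hadamards), and $O_f^\epsilon$ has real entries. So $U_\varphi$ is real, which gives $U_\varphi^*=U_\varphi$ and $U_\varphi^T=U_\varphi^\dagger$, and the same holds for their controlled versions. Conjugate queries are therefore ordinary queries already counted.

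The main obstacle is checking that the adversary characterization applies uniformly. The oracles $O_f^\epsilon$ and $O_f^1$ must act on the same register $\mathcal{H}_S$ for all $f$. The Simon lower bound, originally stated for standard queries, must legitimately transfer to the $\gamma_2$ quantity in Belovs's model, which allows inverse and controlled queries; this is what the self-inverse and one-query control simulation provide. The remaining steps are direct substitutions.
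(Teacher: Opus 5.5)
Your proposal is correct and follows the paper's proof essentially step for step. It uses the same Fractional Padded Simon's family with $r=n-m$, the $\Omega(m)$ bound for $\{O_f^1\}$ via the padded Simon lemma together with the one-query simulation of inverse and controlled queries, the $\sqrt{\epsilon}$ rescaling of the relative $\gamma_2$ norm through the oracle-difference identity, the simulation of $U_\varphi = O_f^\epsilon W$, and the reality of $U_\varphi$ to handle conjugate queries; your version is slightly tidier in invoking only the needed halves of the adversary characterization ($\gamma_2 \ge Q/c_2$, then $Q \ge c_1\gamma_2$) rather than asserting $\Theta$ bounds throughout.
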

\begin{proof}
    We use the Fractional Padded Simon's family with $r= n-m$ and $G=\mZ_2^r \oplus \mZ_2^{m}$.
    Let $\mathcal{X}$ be the set of Simon's functions on $\mZ_2^{m}$ with nonzero hidden period, and let $s\colon\mathcal{X}\to\mZ_2^m\setminus{0}$ return that period.
    Every instance has hidden subgroup $H=\mZ_2^r\oplus\langle s(f)\rangle$, so $|H|=2^{r+1}$ and $\log|G/H|=m-1$. 

    We first establish the bounded-error query complexity at $\epsilon=1$ in the model of \cref{def: query model}.
    Both padded Simon's oracle $O_f$ and $O_f^1$ have inverses and controlled versions that can be simulated with a single forward query. 
    This allows us to determine their bounded-error query complexity in the right model.
    By \cref{lemma: padded simon} and the single-query equivalence of $O_f^1$ and $O_f$, identifying $s(f)$ from the oracle family $\{O_f^1\}$ has bounded-error query complexity $\Theta(m) = \Theta(\log|G/H|)$.
    
    By the characterization of \cref{thm: belovs 37}, the corresponding relative $\gamma_2$ norm satisfies
    \begin{align}
        \gamma_2\!\left(\bo_{s(f)\neq s(g)}\;\middle|\; (O_f^1-O_g^1)_{f, g}\right) \in \Theta(\log|G/H|).
    \end{align}
    Applying \cref{fact: gamma2 scaling} with $c=\sqrt{\epsilon}$ to the identity~\eqref{eq: oracle difference} gives
    \begin{align}
        \gamma_2\!\left(\bo_{s(f)\neq s(g)}\;\middle|\; (O_f^\epsilon-O_g^\epsilon)_{f, g}\right)
        =\frac{1}{\sqrt{\epsilon}}\,\gamma_2\!\left(\bo_{s(f)\neq s(g)}\;\middle|\;(O_f^1-O_g^1)_{f, g}\right)
        \in \Theta\!\left(\frac{\log|G/H|}{\sqrt{\epsilon}}\right).
    \end{align}
    Using \cref{thm: belovs 37} once more, we show that recovering $s(f)$ from $O_f^\epsilon$ has bounded-error query complexity $\Theta(\log(|G/H|)/\sqrt{\epsilon})$.
    
    Each query to $U_\varphi=O_f^\epsilon W$ or its inverse can be simulated using one query to $O_f^\epsilon$ or its inverse and fixed gates.
    The same simulation works for controlled queries by using controlled versions of the fixed gates $W$ and $W^\dagger$, together with one controlled query to $O_f^\epsilon$ or its inverse.
    Since solving the corresponding StateHSP instance allows us to recover $s(f)$ from $O_f^\epsilon$ at no additional query costs, the $\Theta$ bound implies an analogous $\Omega$ bound for the StateHSP problem. 
    This bound already allows controlled queries, as these are included in the model of \cref{def: query model}.

    Finally, $U_\varphi$ is real for every instance in this family, so $U_\varphi^*=U_\varphi$ and $U_\varphi^T=U_\varphi^\dagger$. 
    Conjugate queries, including their controlled versions, therefore provide no additional power on this worst case family.
\end{proof}

This lower bound matches the upper bound in \cref{theorem: stateHSP unified} up to constant factors, proving that our algorithm is query-optimal for any fixed success probability above $2/3$.
Remarkably, our algorithm requires neither conjugate nor controlled queries, while the lower bound holds even with these additional forms of access.

\section{Sample lower bound}
\label{sec: sample lower bound}
We conclude the bounds by showing that the $\sqrt{\epsilon}$ advantage genuinely requires access to the preparation unitary. 
In the \emph{sample} (or copies) model, the algorithm is given $t$ copies of $\ket{\varphi}$ (\cref{def: StateHSP copies}), rather than oracle access to a unitary preparing it (\cref{def: StateHSP with circuit}).  
In this model, the dependence on the gap parameter $\epsilon$ cannot be improved to $\sqrt{\epsilon}$: one must pay a full $1/\epsilon$ factor.

\begin{theorem}[Sample lower bound]
\label{theorem: copy lower bound}
    For any $\epsilon \in (0,1]$ and $r,m \in \N$, with $r \geq 0$ and $m \geq 1$, there is a family of abelian \textsc{StateHSP} instances on $G=\mZ_2^{r+2m}$, with known $\log|G/H|=m$, such that any algorithm that identifies the hidden subgroup with probability at least $2/3$ from copies of the input state vector $\ket{\varphi}$, even with collective measurements, requires $\Omega(\log(|G/H|)/\epsilon)$ copies.
\end{theorem}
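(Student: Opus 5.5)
We use the hard family sketched in the introduction, \eqref{eq: hard sample state}, and argue information-theoretically. Write $G=\mZ_2^r\oplus\mZ_2^{2m}$. For each $m$-dimensional subspace $\widetilde H\le\mZ_2^{2m}$ and each phase $\alpha\in[0,2\pi)$ set
\begin{align}
\ket{\varphi_{\widetilde H,\alpha}}=\sqrt{1-\epsilon}\ket{0}\ket{u_G}+e^{i\alpha}\sqrt{\epsilon}\ket{1}\ket{u_{K}}\ket{u_{\widetilde H}},
\end{align}
where $\ket{u_X}$ is the uniform superposition over $X$ and $K=\mZ_2^r$. As in the query section, the first branch is shift invariant and the two branches have orthogonal flags, so $\bra{\varphi}R(g)\ket{\varphi}=(1-\epsilon)+\epsilon\,\bo_{g\in H}$ with $H=K\oplus\widetilde H$. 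This is therefore a valid StateHSP instance with gap exactly $\epsilon$ and $\log|G/H|=m$ for every member of the family.

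Put $\widetilde H$ uniform over the $N$ subspaces of dimension $m$ and $\alpha$ uniform and independent. Let the algorithm output $Z$ from a POVM on $\ket{\varphi}^{\otimes t}$.

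For the lower bound on information, the Gaussian binomial gives $N\ge 2^{m^2}$. Fano's inequality with success probability $\ge 2/3$ then gives
\begin{align}
I(H;Z)\ge \tfrac23\log N-1\ge m^2/6
\end{align}
for $m$ beyond a constant. Small $m$ is handled separately: there the claim reduces to an $\Omega(1/\epsilon)$ bound from distinguishing two members.

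For the upper bound, the Holevo bound gives $I(H;Z)\le \chi=S(\bar\rho)-\mathbb E_H S(\rho_H)$, where $\rho_H=\mathbb E_\alpha[(\ketbra{\varphi_{H,\alpha}}{\varphi_{H,\alpha}})^{\otimes t}]$. Averaging over $\alpha$ is what kills coherences between sectors with different numbers $j$ of ``1''-flag copies. It makes $\rho_H$ block diagonal, $\rho_H=\bigoplus_{j}\binom tj(1-\epsilon)^{t-j}\epsilon^{j}\,\Pi_j\bigl(\sigma_0^{\otimes t-j}\otimes\sigma_H^{\otimes j}\bigr)\Pi_j$, up to symmetrization. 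In each $j$-block the state is pure, since it is the symmetric combination of a fixed vector with $j$ copies of $\ket{u_K}\ket{u_{\widetilde H}}$. Hence $\rho_H$ is a mixture over $j\sim\mathrm{Bin}(t,\epsilon)$ of pure states supported on mutually orthogonal, $H$-independent subspaces. The key structural fact is that $j$ is then a classical register independent of $H$.

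Consequently $\chi\le H(J)+\sum_j\Pr[J=j]\,\chi_j$, or more cleanly $\chi=\sum_j \Pr[J=j]\,\chi_j$ because the block labels are $H$-independent. Here $\chi_j$ is the Holevo information of the ensemble $\{\ket{\psi_{H,j}}\}$, which is at most the log of the dimension of the span of these states. That span lies inside the symmetric subspace generated by $\ket{u_K}\ket{u_{\widetilde H}}^{\otimes j}$, equivalently inside what $j$ copies of the coset state can carry. The support of each coset-state copy in the Fourier basis is $\widetilde H^\perp$, so $j$ copies reveal at most about $j\cdot 2m$ bits. A cleaner route is the bound $\chi_j\le S(\mathbb E_H\psi_{H,j})\le \log\dim\mathrm{Sym}^j$ restricted to the $2^{2m}$-dimensional group register, which gives $\chi_j\le 2mj$. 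Averaging yields $I(H;Z)\le 2m\,\mathbb E[J]=2t\epsilon m$.

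Combining $m^2/6\le 2t\epsilon m$ gives $t\ge m/(12\epsilon)=\Omega(\log|G/H|/\epsilon)$. The main obstacle is the third step: the decomposition must make the $\alpha$-twirl genuinely remove cross terms between different-$j$ sectors. This requires symmetrizing over copy positions carefully, since the $j$ branch-1 copies can sit in any positions. One must then check that the Holevo information conditioned on $j$ is at most linear in $j$. I would first try the crude dimension count $\log\binom{2^{2m}+j-1}{j}\le 2mj$. If cross-position coherences complicate the picture, I would instead note that the states are obtained from $(\ket{0}\ket{u_G})$ copies and coset-state copies by an $H$-independent isometry, and apply data processing.
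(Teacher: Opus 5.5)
Your proposal is correct and follows essentially the same route as the paper. It uses the same padded, phase-randomized family and obtains $I(H;Z)\ge m^2/6$ from Fano. The $\alpha$-twirl then splits the averaged $t$-copy state into orthogonal, $H$-independent $\mathrm{Binomial}(t,\epsilon)$ sectors, so that $\chi=\sum_k p_k^\epsilon S(\bar\rho_k)$. Finally, an $H$-independent isometry acting on $k$ copies of $|\widetilde H\rangle$ bounds each sector entropy by $2km$, which gives $\chi\le 2t\epsilon m$.

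The differences are cosmetic. The paper handles $m=1$ directly by Fano with $N=3$; your two-point argument also works there, since distinct candidates at $m=1$ have $|\langle\varphi_{H,\alpha}|\varphi_{H',\alpha}\rangle|=1-\epsilon/2$. The paper also bounds the sector rank by $2^{2mk}$ rather than by the symmetric-subspace dimension you use, which is tighter but yields the same $2km$.
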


The proof is an information-theoretic lower bound. 
Since any algorithm that succeeds with probability at least $2/3$ on every instance also succeeds with probability at least $2/3$ on average over any distribution on instances, it suffices to construct one hard distribution and prove an average-case lower bound. 
We build our hard distribution as follows.
Let $G=\mZ_2^{n}$, with $n=r+2m$, and consider the set of all subgroups $H\le G$ of size $|H|=2^{r+m}$, built from the union of the padding group $\mZ_2^r$ and a hidden $m$-dimensional subspace of $\mZ_2^{2m}$. 
The number of such subgroups is $N\geq 2^{m^2}$. 
Imagine an adversary picks $H$ uniformly at random from this set and, independently, a phase $\alpha$ uniformly from $[0,2\pi)$, and gives us $t$ copies of a StateHSP input state vector $\ket{\varphi^\epsilon_{H,\alpha}}$. 

Now, let $Z$ denote the classical output of the POVM that the entire algorithm implements on the $t$ copies. 
Under this setup both $H$ and $Z$ are random variables, and we analyze their joint distribution. 
Our task is to infer an instance of $H$ from an instance of $Z$. 
To bound the bounded-error sample complexity, we sandwich the mutual information $I(H;Z)$. 

First, Fano's inequality lower-bounds the information any successful algorithm must extract: if the algorithm identifies $H$ among $N$ equally likely candidates with constant success probability, then $I(H;Z)\in\Omega(\log N)=\Omega(m^2)$. Second, the Holevo bound upper-bounds the information contained in $t$ copies of the state; this is the main technical step, where we show $I(H;Z) \in O( t\epsilon m)$. 
The idea is as follows. 
The input states are constructed so that each copy $\ket{\varphi^\epsilon_{H,\alpha}}$ divides the Hilbert space into one $H$-informative branch, which can be accessed only with probability $\epsilon$, and an uninformative branch that carries most of the probability. 
After averaging over the hidden phase $\alpha$, the $t$-copy state vector $\ket{\varphi^\epsilon_{H,\alpha}}^{\otimes t}$ decomposes into orthogonal sectors indexed by the number $k$ of informative branches. The $k$-th sector occurs with binomial weight $p_k^\epsilon=\binom{t}{k}(1-\epsilon)^{t-k}\epsilon^k$ and carries at most $2km$ bits of information about $H$. 
Averaging over $k$ yields $I(H;Z)\leq \sum_{k=0}^t p_k^\epsilon\, 2 k m\leq 2t\epsilon m$, giving the desired upper bound. 
Finally, combining the lower and upper bounds forces $t=\Omega\!\left(\frac{m}{\epsilon}\right)$ for the algorithm to succeed with constant probability. 
Since $m=\log|G/H|$, any algorithm solving StateHSP requires $\Omega\!\left(\frac{\log|G/H|}{\epsilon}\right)$ copies of the input state.

The rest of the section makes this argument formal.
\Cref{sec: sample complexity and infotheory} introduces the copy-access model and recalls the information-theoretic tools. 
\Cref{sec: hard family of instances} defines the hard family of StateHSP instances. 
Finally, \cref{sec: infotheory argument} combines the Fano and Holevo estimates to prove \cref{theorem: copy lower bound}.

\subsection{Sample complexity and information-theoretic tools}
\label{sec: sample complexity and infotheory}
We operate in the following model. 

\begin{definition}[Bounded-error sample complexity]
\label{def: sample complexity}
    Let $\mathcal{X}$ be a finite set of instances, $\{|\varphi_x\rangle\}_{x\in\mathcal{X}}$ a family of normalized states in $\mathcal{H}_S$, and $s:\mathcal{X}\ra E$ an answer function with $E$ a finite set. Consider the full Hilbert space $\mathcal{H}:=\mathcal{H}_S^{\otimes t}\otimes \mathcal{W}$ where the qubits in workspace $\mathcal{W}$ are initialized to $|0\rangle$. 
    A $t$-sample algorithm \emph{evaluates $s$ with bounded error} if there is a fixed input-independent POVM on $\mathcal{H}$ that may act collectively on all $t$ copies at once ($|\varphi_x\rangle^{\otimes t}\otimes |0\rangle_{\mathcal{W}}$), such that the measurement of POVM returns $s(x)$ with probability at least 2/3 for any $x\in\mathcal{X}$. The \emph{bounded-error sample complexity} $\mathrm{S}(s)$ is the least $t$ for which such an
    algorithm exists.
\end{definition}

In the above definition, POVM can absorb arbitrary input-independent quantum operations into the final measurement. As in the query model of the previous section, this sample model already captures every \emph{adaptive} algorithm, one that measures part of its state, classically inspects the outcome, and proceeds accordingly. 
This follows from the deferred-measurement principle~\cite[Section 4.4]{NielsenChuang}, together with the fact that any classical control and input-independent quantum processing can be incorporated into the final POVM.
A lower bound proved against arbitrary collective measurements therefore binds every algorithm in this model.

We instantiate the model for the StateHSP of \cref{def: StateHSP copies}: the input state is $\ket{\varphi}$, the answer function returns the hidden subgroup $H$, and the algorithm additionally holds the representation $R$.
Because $R$ is fixed and independent of $H$, any use of it costs no copies and can be absorbed into the allowed channel and final POVM.

In the later section, our lower bound will be proved for a random instance drawn from a carefully chosen distribution, whereas the sample complexity concerns the algorithm's performance on the worst instance. 
The two are linked by the following elementary principle: if an algorithm does well on \emph{every} instance, it does well on a random one. Consequently, showing that some distribution defeats every efficient algorithm shows that some \emph{individual} instance does too.

\begin{fact}[Averaging principle]
\label{fact: averaging principle}
    Let $\mathcal{X}$ be a family of problem instances, and let $\mu$ be any probability distribution over $\mathcal{X}$. 
    If an algorithm fails with probability at most $\delta$ on every input $x\in\mathcal X$, then it fails with probability at most $\delta$ on an input drawn from $\mu$.
\end{fact}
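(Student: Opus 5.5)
This is an averaging argument: bound the mean failure probability by the worst-case failure probability. Write the algorithm's failure probability on instance $x$ as $F(x)\in[0,1]$. Since the algorithm is a fixed, input-independent procedure (a fixed POVM on $|\varphi_x\rangle^{\otimes t}\otimes|0\rangle_{\mathcal{W}}$, as in \cref{def: sample complexity}), $F(x)$ is determined by $x$ alone.

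Consider an instance $x$ drawn from $\mu$. If the algorithm's internal randomness is independent of the draw, the law of total probability gives the failure probability on a random input as
\begin{align}
    \Pr[\text{fail}] = \sum_{x\in\mathcal{X}} \mu(x)\, F(x).
\end{align}
The algorithm's state and measurement do not depend on $\mu$, so conditioning on the drawn instance just returns $F(x)$. The hypothesis gives $F(x)\le\delta$ for every $x\in\mathcal{X}$. Because $\mu$ is a probability distribution ($\mu(x)\ge 0$ and $\sum_x\mu(x)=1$), we get $\sum_x\mu(x)F(x)\le\delta\sum_x\mu(x)=\delta$.

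The same argument works when the instance also carries auxiliary randomness, such as the phase $\alpha$ used later. In that case $\mathcal{X}$ is infinite, or the distribution is continuous in $\alpha$. The sum then becomes an integral against $\mu$, and pointwise domination $F\le\delta$ together with $\mu$ having total mass one gives the same conclusion. The only thing to check is that $F$ is measurable. This holds because $F(x)$ is a continuous function of the state $|\varphi_x\rangle$, which in turn is continuous in $\alpha$.

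No step here is a real obstacle. The one point to state explicitly is that the algorithm cannot adapt to $\mu$ beyond what it learns from the copies it receives. This is guaranteed by the model: the POVM is fixed in advance. Taking the contrapositive, if some $\mu$ forces average failure probability above $1/3$, then some single instance in the support of $\mu$ forces failure probability above $1/3$.
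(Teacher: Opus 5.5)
Your proof is correct and is essentially the paper's argument: condition on the drawn instance via the law of total probability, apply the pointwise bound $\Pr[\text{fail}\mid X=x]\le\delta$, and integrate against $\mu$, which has total mass one. Your remarks on measurability in the continuous-phase case and on the contrapositive are sound additions that the paper leaves implicit.
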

\begin{proof}
    Let $X \sim \mu$ and let $F$ be the event that the algorithm fails on input $X$, where the probability is over both the draw of $X$ and the algorithm's internal randomness and measurements.
    By the law of total probability,
    \begin{align}
        \Pr[F]
        = \int_{x\in\mathcal{X}}
        \Pr[F\mid X=x]\; d\mu(x).
    \end{align}
    Since, by assumption, $\Pr[F\mid X=x]\le \delta$ for every $x\in\mathcal{X}$,
    it follows that $\Pr[F]
    \le
    \int_{x\in \mathcal X}\delta\; d\mu(x)
    =
    \delta.$
\end{proof}

To prove the bounds for sample complexity, we now introduce some information measures and corresponding inequalities. 
We adopt the convention that all logarithms are base $2$ and all entropies are in bits. 

For a probability distribution $p$ on a finite set, the \emph{Shannon entropy} is
\begin{align}
    \mathrm{H}(p)\defeq-\sum_{x}p(x)\log p(x).
\end{align}
For a random variable $X$ on a finite set with distribution $p_X$ we write
$\mathrm{H}(X)\defeq\mathrm{H}(p_X)$. 
For jointly distributed random variables $X,Y$ on finite sets, with joint distribution
$p_{XY}$ and conditionals $p_{X\mid Y}$, the \emph{conditional entropy} and the
\emph{mutual information} are
\begin{align}
\label{eq: mutual information}
    \mathrm{H}(X\mid Y)\defeq-\sum_{x,y}p_{XY}(x,y)\log (p_{X\mid Y}(x\mid y)),
    \qquad
    I(X;Y)\defeq\mathrm{H}(X)-\mathrm{H}(X\mid Y).
\end{align}
For a density operator $\rho$ on a finite-dimensional Hilbert space, with eigenvalues
$\{\lambda_i\}_{i=1}^d$ counted with multiplicity, the \emph{von Neumann entropy} is
the Shannon entropy of the spectrum,
\begin{align}
\label{eq:vN}
    S(\rho)\defeq-\Tr\!\big(\rho\log\rho\big)=-\sum_{i=1}^{d}\lambda_i\log\lambda_i.
\end{align}

Since the eigenvalues of $\rho$ form a probability distribution supported on $\operatorname{rank}(\rho)$ nonzero entries, their entropy is at most the entropy of the uniform distribution on that support.
Therefore, 
\begin{align}
\label{eq:vN-rank}
    S(\rho)\le \log \operatorname{rank}(\rho).
\end{align}
We also use the following standard bound on the entropy of a mixture.
\begin{theorem}[Entropy of a mixture {\cite[Theorem 11.10]{NielsenChuang}}]
\label{theorem: entropy of a mixture}
    Suppose $\rho =\sum_{x=1}^n p_x \rho_x$, where $\{p_x\}_{x=1}^n$ is a probability distribution, and the $\rho_x$ are density operators.
    Then,
    \begin{align}
        S(\rho) \leq \sum_{x=1}^n  p_x S(\rho_x) +\mathrm{H}(p)
    \end{align}
    with equality if and only if the states $\rho_x$ have pairwise orthogonal supports.
\end{theorem}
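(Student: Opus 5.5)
The plan is a purification argument that turns both sides into entropies of a single bipartite pure state. Then the inequality follows from the fact that dephasing in a fixed basis never decreases entropy.

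\emph{Step 1: purification.} Spectrally decompose each component as $\rho_x=\sum_j \lambda_x^j \ketbra{e_x^j}{e_x^j}$, with $\{\ket{e_x^j}\}_j$ orthonormal for each fixed $x$. Introduce an auxiliary system $B$ with orthonormal basis $\{\ket{x,j}\}$ and define the pure state
\begin{align}
    \ket{\Psi}_{AB}=\sum_{x,j}\sqrt{p_x\lambda_x^j}\,\ket{e_x^j}_A\ket{x,j}_B .
\end{align}
Tracing out $B$ gives $\Tr_B\ketbra{\Psi}{\Psi}=\sum_x p_x\rho_x=\rho$. Since $\ket{\Psi}$ is pure, the two marginals have the same nonzero spectrum (Schmidt decomposition). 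Hence $S(\rho)=S(\sigma_B)$, where
\begin{align}
    \sigma_B=\sum_{x,j,y,k}\sqrt{p_x\lambda_x^j p_y\lambda_y^k}\,\braket{e_y^k}{e_x^j}\,\ketbra{x,j}{y,k}.
\end{align}

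\emph{Step 2: identify the diagonal.} The diagonal of $\sigma_B$ in the basis $\{\ket{x,j}\}$ is the distribution $q(x,j)=p_x\lambda_x^j$. By the chain rule for Shannon entropy,
\begin{align}
    \mathrm{H}(q)=\mathrm{H}(p)+\sum_x p_x \mathrm{H}(\lambda_x)=\mathrm{H}(p)+\sum_x p_x S(\rho_x).
\end{align}
This is the right-hand side of the claimed inequality, so it remains to show $S(\sigma_B)\le \mathrm{H}(q)$.

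\emph{Step 3: dephasing does not decrease entropy.} Let $D$ denote the map that keeps only the diagonal of a matrix in the basis $\{\ket{x,j}\}$. I would invoke Klein's inequality, $S(\sigma\|\tau)\ge 0$, with equality iff $\sigma=\tau$, taking $\tau=D(\sigma_B)$. Because $\log D(\sigma_B)$ is diagonal, $\Tr\big(\sigma_B\log D(\sigma_B)\big)=\Tr\big(D(\sigma_B)\log D(\sigma_B)\big)$. Therefore
\begin{align}
    S\big(\sigma_B\,\|\,D(\sigma_B)\big)=\mathrm{H}(q)-S(\sigma_B)\ge 0 ,
\end{align}
which proves the inequality. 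One technicality needs care: the support of $D(\sigma_B)$ contains that of $\sigma_B$, since a zero diagonal entry of a PSD matrix forces its whole row and column to vanish. So the relative entropy is finite.

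\emph{Step 4: equality case.} This is the only mildly delicate step. By Klein's equality condition, equality holds iff $\sigma_B$ is diagonal. That means $\braket{e_y^k}{e_x^j}=0$ for all pairs $(x,j)\neq(y,k)$ with positive weights. For $x=y$ this holds automatically, by orthonormality of the eigenbasis of $\rho_x$. So the condition reduces to the eigenvectors of distinct $\rho_x$ being mutually orthogonal, i.e., the $\rho_x$ having pairwise orthogonal supports. Conversely, orthogonal supports make $\sigma_B$ diagonal.
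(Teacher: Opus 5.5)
Your proof is correct and is essentially the standard argument behind the result the paper cites without proof (Nielsen--Chuang, Thm.~11.10). That argument purifies the refined ensemble $\{p_x\lambda_x^j,\ket{e_x^j}\}$, moves $S(\rho)$ to the auxiliary system, and uses the fact that dephasing in the $\ket{x,j}$ basis cannot decrease entropy (their Thm.~11.9, proved via Klein's inequality exactly as you do). The one caveat is inherited from the statement: the ``only if'' direction of the equality condition tacitly assumes $p_x>0$ for every $x$, which your phrase ``with positive weights'' respects but which your final translation into ``pairwise orthogonal supports'' silently needs.
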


After defining these measures, we can state our main tools for the mutual information's lower and upper bounds: Fano's inequality and Holevo's bound.

\begin{fact}[Fano's inequality~{\cite[Theorem~2.10.1]{cover2006elements}}]
\label{fact:fano}
    Let $X$ be a random variable on a finite set $\mathcal{X}$, let $Y$ be any jointly distributed random variable, and let $\hat X=\hat{X}(Y)$ be any estimator of $X$ taking values in $\mathcal{X}$, with error probability $P_e\defeq\Pr [\hat{X}\neq X]$. Then
    \begin{align}
        \mathrm{H}(X\mid Y)\ \le\ \mathrm{H}(P_e)+P_e\log\big(|\mathcal X|-1\big)\ \le\ 1+P_e\log|\mathcal{X}| ,
    \end{align}
    where $\mathrm{H}(P_e)\defeq-P_e\log P_e-(1-P_e)\log(1-P_e)$ is the binary entropy, which is at most $1$.
\end{fact}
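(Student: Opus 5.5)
We follow the standard chain-rule proof. Introduce the error indicator $E \defeq \bo_{\hat X \neq X}$, a binary random variable with $\Pr[E=1]=P_e$. Since $\hat X$ is a function of $Y$, $E$ is a deterministic function of the pair $(X,Y)$. The plan is to expand the joint conditional entropy $\mathrm{H}(E,X\mid Y)$ in two ways using the chain rule for conditional Shannon entropy, which follows directly from the definitions in \eqref{eq: mutual information}:
\begin{align}
    \mathrm{H}(E,X\mid Y) = \mathrm{H}(X\mid Y) + \mathrm{H}(E\mid X,Y) = \mathrm{H}(E\mid Y) + \mathrm{H}(X\mid E,Y).
\end{align}
Throughout we treat $Y$ as taking values in a finite (or countable) set, consistent with the definitions adopted in the paper. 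In our application $Y=Z$ is the finite POVM outcome, so this is no restriction.

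We then bound each term in turn.
\begin{enumerate}
    \item Since $E$ is determined by $(X,Y)$, we have $\mathrm{H}(E\mid X,Y)=0$.
    \item Conditioning does not increase entropy, so $\mathrm{H}(E\mid Y)\le \mathrm{H}(E)=\mathrm{H}(P_e)$, the binary entropy.
    \item For the last term we split on the value of $E$:
    \begin{align}
        \mathrm{H}(X\mid E,Y)=(1-P_e)\,\mathrm{H}(X\mid E=0,Y)+P_e\,\mathrm{H}(X\mid E=1,Y).
    \end{align}
    On $\{E=0\}$ we have $X=\hat X(Y)$, so the first conditional entropy vanishes. On $\{E=1\}$, for each fixed $y$ the variable $X$ is supported on $\mathcal X\setminus\{\hat X(y)\}$, a set of size $|\mathcal X|-1$. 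The uniform-distribution bound (the classical analogue of \eqref{eq:vN-rank}) then gives $\mathrm{H}(X\mid E=1,Y=y)\le\log(|\mathcal X|-1)$, and averaging over $y$ preserves this bound.
\end{enumerate}
Combining these three bounds with the chain-rule identity yields $\mathrm{H}(X\mid Y)\le \mathrm{H}(P_e)+P_e\log(|\mathcal X|-1)$.

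The second inequality in the statement is elementary. The binary entropy is maximized at $P_e=1/2$, where it equals $1$, and $\log(|\mathcal X|-1)\le\log|\mathcal X|$. In the degenerate case $|\mathcal X|=1$ we have $P_e=0$, and the term $P_e\log(|\mathcal X|-1)$ is read as $0$.

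The only step needing care is the third item. There one must make precise that conditioning on $E=1$ and $Y=y$ restricts the support of $X$, and handle the averaging when some conditioning events have probability zero; such terms simply drop out. The remaining steps are routine applications of the chain rule and of monotonicity of entropy under conditioning.
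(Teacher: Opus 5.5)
Your proof is correct and is the standard chain-rule argument from the cited reference (Cover and Thomas, Theorem 2.10.1); the paper gives no proof of its own and simply invokes that result. The only difference is cosmetic: you condition on $Y$ directly, whereas the textbook bounds $\mathrm{H}(X\mid \hat X)$ and then applies data processing. Your shortcut is legitimate here because the statement takes $\hat X$ to be a deterministic function of $Y$; a randomized estimator would require the textbook route.
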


\begin{fact}[Holevo bound~{\cite{Holevo}; see also \cite[Theorem~12.1]{NielsenChuang}}]
\label{fact:holevo}
    Let $\mathcal{X}$ be a finite set, let $\{(p_x,\rho_x)\}_{x\in\mathcal X}$ be an ensemble of density operators and let $X$ be a random variable on $\mathcal{X}$ with $\Pr[X=x]=p_x$. Let $\{M_z\}_{z\in\mathcal{Z}}$ be any POVM, and let $Z$ be the outcome of
    applying it to $\rho_X$, i.e.\ $\Pr[X=x,\,Z=z]=p_x\Tr(M_z\rho_x)$ for all $x$. Writing $\bar\rho\defeq\sum_x p_x\rho_x$ for the average state, the mutual information between $X$ and outcome $Z$ obeys
    \begin{align}
        I(X;Z)\ \le\ \chi\defeq S(\bar\rho)-\sum_{x}p_x\,S(\rho_x).
    \end{align}
\end{fact}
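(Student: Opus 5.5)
We will follow the standard route through quantum mutual information, so the statement reduces to a data-processing inequality. First, we encode the ensemble in a classical-quantum state on a classical register $X$ and the system $Q$:
\begin{align}
    \rho_{XQ}\defeq\sum_{x\in\mathcal X}p_x\,\ketbra{x}{x}_X\otimes\rho_x .
\end{align}
We define the quantum mutual information $I(X{:}Q)_\rho\defeq S(\rho_X)+S(\rho_Q)-S(\rho_{XQ})$. Here $\rho_X$ is diagonal with spectrum $p$, so $S(\rho_X)=\mathrm{H}(p)$, and $\rho_Q=\bar\rho$. The blocks $p_x\ketbra{x}{x}\otimes\rho_x$ have pairwise orthogonal supports, so the equality case of \cref{theorem: entropy of a mixture} gives $S(\rho_{XQ})=\mathrm{H}(p)+\sum_x p_x S(\rho_x)$. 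Substituting, we get $I(X{:}Q)_\rho=\chi$.

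Second, we model the POVM $\{M_z\}$ as a quantum channel $\mathcal M$ on $Q$ with classical output: $\mathcal M(\sigma)=\sum_z\Tr(M_z\sigma)\ketbra{z}{z}_Z$. Applying $\mathrm{id}_X\otimes\mathcal M$ to $\rho_{XQ}$ yields the diagonal state
\begin{align}
    \sum_{x,z}p_x\Tr(M_z\rho_x)\ketbra{x}{x}\otimes\ketbra{z}{z},
\end{align}
which is exactly the joint distribution of $(X,Z)$. For diagonal states the von Neumann quantities reduce to Shannon ones, so its quantum mutual information equals the classical $I(X;Z)$ of \eqref{eq: mutual information}. The claim therefore follows once we know that local channels cannot increase quantum mutual information: $I(X{:}Z)\le I(X{:}Q)$.

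Third, we prove this monotonicity, which we expect to be the only nontrivial step. We write $\mathcal M$ through a Stinespring dilation: append an environment $E$ in $\ket{0}$, apply an isometry $V\colon Q\to ZE$, and trace out $E$. Appending a pure ancilla and applying an isometry on $Q$ leave all entropies in the definition unchanged, so $I(X{:}Q)=I(X{:}ZE)$. Discarding $E$ can only decrease mutual information, $I(X{:}ZE)\ge I(X{:}Z)$. Expanding both sides, this last inequality is exactly strong subadditivity,
\begin{align}
    S(XZE)+S(Z)\le S(XZ)+S(ZE).
\end{align}
We will cite strong subadditivity (Lieb--Ruskai) as a black box; this is where the real difficulty sits.

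An alternative that avoids the Stinespring step is to write $I(X{:}Q)=D(\rho_{XQ}\,\|\,\rho_X\otimes\rho_Q)$ and invoke monotonicity of relative entropy under $\mathrm{id}\otimes\mathcal M$. This uses that the channel maps $\rho_X\otimes\rho_Q$ to $\rho_X\otimes\rho_Z$. Chaining either version gives $I(X;Z)\le\chi$.
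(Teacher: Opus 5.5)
Your proposal is correct and takes essentially the same route as the standard proof the paper relies on: the paper states this as a Fact and only cites \cite{Holevo} and \cite[Theorem~12.1]{NielsenChuang}. Like Nielsen--Chuang's preparation/system/apparatus argument, you identify $\chi$ with the quantum mutual information of the classical--quantum state $\sum_x p_x\ketbra{x}{x}\otimes\rho_x$ and conclude by monotonicity of mutual information under the local measurement channel, which reduces to strong subadditivity. One cosmetic point: ``append $E$ in $\ket{0}$ and apply an isometry $V\colon Q\to ZE$'' double-counts the environment, since either a unitary on $QE$ or an isometry $Q\to ZE$ suffices, but this does not affect the argument.
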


The symbol $\chi$ is called the Holevo quantity. 

\subsection{A hard family of instances}
\label{sec: hard family of instances}

In this section, 
we introduce a hard family of instances. 
We separate the construction into two independent choices: first, the StateHSP instances themselves, and second, a distribution of instances that we tune for the proof.

\paragraph{The StateHSP instances.}
Let $G$ be a finite abelian group and $H\le G$ be a subgroup.
Work on $\bC^{|G|}\otimes\bC^2$, the tensor product of a \emph{group register} with orthonormal basis $\{\ket{x}:x\in G\}$ and a \emph{flag} qubit, and let $R$ be the left-regular representation, acting as a shift on the group register and trivially everywhere else,
\begin{align}
\label{eq: copy repr}
    R(g)\ket{x}=\ket{x+g}\qquad \forall x,g\in G.
\end{align}
For a phase $\alpha\in[0,2\pi)$ define the input state vector
\begin{align}
\label{eq: copy hard state}
    \ket{\varphi_{H,\alpha}^\epsilon}=\sqrt{1-\epsilon}\,\ket{u_G}\ket{0}+e^{i\alpha}\sqrt{\epsilon}\,\ket{H}\ket{1},
\end{align}
where $\ket{H}=\frac{1}{\sqrt{|H|}}\sum_{h\in H}\ket{h}$ is a uniform superposition over the subgroup and $\ket{u_G}=\frac{1}{\sqrt{|G|}}\sum_{g\in G}\ket{g}$ over the group.
We call $\ket{H}\ket{1}$ the \emph{informative} branch, the only part that depends on $H$, and $\ket{u_G}\ket{0}$ the \emph{uninformative} branch.
Following a similar argument to the one shown in \cref{sec: padding and embedding simons}, one can easily check that this family corresponds to valid StateHSP instances with hidden subgroup $H$ and gap parameter $\epsilon$.

\paragraph{The group and the distribution.}
We now create a padded version of the problem above and choose a distribution of instances.
Let $r, m \in \N$ with $r \geq 0$ and $m\geq 1$.
We take $G=\mZ_2^{n}=\mZ_2^r \oplus \mZ_2^{2m}$, so that $\log|G|=n = r+2m$.
Here, $\mZ_2^r$ will play the padding role.
Let $\mathcal{G}_m$ be the set of all $m$-dimensional subspaces of $\mZ_2^{2m}$, each of size $2^m$. 
We draw an instance from the distribution
\begin{align}
    \widetilde{H}\sim\mathrm{Unif}(\mathcal{G}_m),\qquad \alpha\sim\mathrm{Unif}[0,2\pi),
\end{align}
independently.
We then take the hidden subgroup $H \leq G$ to be of the form
\begin{align}
    H = \{(h_1, h_2): h_1 \in \mZ_2^r, h_2 \in \widetilde{H}\}.
\end{align}
Through the direct sum decomposition, we can write $\ket{u_G}=\ket{u_K}\ket{u_V}$ and $\ket{H}=\ket{u_K}|\widetilde{H}\rangle$, where $\ket{u_K}=\frac{1}{\sqrt{|\mZ_2^r|}}\sum_{k\in \mZ_2^r}\ket{k}$, $\ket{u_V}=\frac{1}{\sqrt{|\mZ_2^{2m}|}}\sum_{v\in \mZ_2^{2m}}\ket{v}$, and $|\widetilde{H}\rangle = \frac{1}{\sqrt{|\widetilde{H}|}}\sum_{h\in \widetilde{H}}\ket{h}$. 
The representation acts on the joint first two registers $\ket{g} = \ket{k}\ket{v}$, exactly as in the Padded Simon's case and as in Eq.~(\ref{eq: copy repr}).
For ease of notation, we equivalently say that $\widetilde{H}$ or $H$ are drawn from $\mathcal{G}_m$, where since the padding is independent from the subgroup choice and its presence can be deduced by the context.

The purpose of the random subgroup $\widetilde{H}$ is to make the candidate subgroup set large, while the random phase $\alpha$ is a technical device that will remove coherences between different copy-number sectors in the Holevo upper bound.

\begin{proposition}[Many candidate subgroups]
\label{prop: subspace count}
    There are $N \geq 2^{m^2}$ distinct $m$-dimensional subspaces of $\mZ_2^{2m}$.
\end{proposition}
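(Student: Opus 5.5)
We count the $m$-dimensional subspaces of $\mZ_2^{2m}$ directly, or more economically exhibit an explicit injective family of size $2^{m^2}$. The cleanest route is the graph construction. Write $\mZ_2^{2m}=\mZ_2^m\oplus\mZ_2^m$. To every matrix $A\in\mZ_2^{m\times m}$ associate the subspace
\begin{align}
    \widetilde{H}_A \defeq \{(x, Ax) : x\in \mZ_2^m\}\leq \mZ_2^{2m}.
\end{align}

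Next we check that each $\widetilde{H}_A$ lies in $\mathcal{G}_m$. The map $x\mapsto (x,Ax)$ is $\mZ_2$-linear, so $\widetilde{H}_A$ is a subspace, i.e.\ a subgroup. The map is also injective, since its first coordinate is $x$. Hence $|\widetilde{H}_A|=2^m$, and $\widetilde{H}_A$ is $m$-dimensional.

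We then show that $A\mapsto \widetilde{H}_A$ is injective. Suppose $\widetilde{H}_A=\widetilde{H}_B$. For each standard basis vector $e_j$, the element $(e_j,Ae_j)\in\widetilde{H}_B$ must equal $(y,By)$ for some $y$. Matching first coordinates forces $y=e_j$, so $Ae_j=Be_j$. Since this holds for every column, $A=B$. There are $2^{m^2}$ matrices $A$, so $N\geq 2^{m^2}$.

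No step here is a real obstacle; the only point needing care is the injectivity argument, which rests on the first coordinate determining the preimage.

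As a sanity check, one can also recall the exact Gaussian binomial count
\begin{align}
    N=\prod_{j=0}^{m-1}\frac{2^{2m}-2^j}{2^m-2^j}.
\end{align}
Each factor is at least $2^m$, because $2^{2m}-2^j\geq 2^m(2^m-2^j)$. This gives the same bound, but the graph construction above avoids any computation.
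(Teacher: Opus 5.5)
Your proof is correct and is essentially the paper's argument: your graph $\widetilde{H}_A=\{(x,Ax): x\in\mZ_2^m\}$ is exactly the row span of the RREF matrix $(I_m~A^{T})$, so you and the paper use the same family of $2^{m^2}$ subspaces. The only difference is that you prove distinctness directly by matching first coordinates, while the paper appeals to uniqueness of the RREF; this makes your version slightly more self-contained.
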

\begin{proof}
    Every $m$-dimensional subspace is the row span of a unique $m \times 2m$ matrix in RREF, with $m$ pivot columns.
    Now consider only those RREF matrices of the form $(I_m~A)$, with $A \in \mZ_2^{m \times m}$. 
    Different choices of $A$ give different row spans, since the RREF matrix representing a subspace is unique.
    There are $2^{m^2}$ choices of $A$, and hence at least $2^{m^2}$ distinct $m$-dimensional subspaces.
\end{proof}
The exact count is $\Theta(2^{m^2})$, but the lower bound above is all we need.
Note that, for any candidate subgroup, $\log|G| = r+2m$, $\log|H|=r+m$, and $\log|G/H|= m$.

\subsection{The information-theoretic argument}
\label{sec: infotheory argument}

Throughout this subsection, $H$ and $\alpha$ are independent and drawn from the distribution fixed above.
For a fixed $t$-copy algorithm, let $Z$ denote its classical output under this random choice of the input. 
The algorithm is asked to identify $H$; the phase $\alpha$ is a nuisance parameter and is not reported.
We bound the same mutual information $I(H;Z)$ from below and from above using Fano's inequality and Holevo's bound, respectively.

\subsubsection{Lower bound}
\begin{proposition}[Information lower bound]
\label{prop:fano-lb}
    For every $m \in \N$ and $m \geq 1$, any $t$-copy algorithm that identifies the hidden subgroup with probability at least $2/3$ on every input state of the family satisfies
    \begin{align}
        I(H;Z)\ \ge\ \frac{m^2}{6}.
    \end{align}
\end{proposition}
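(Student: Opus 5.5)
We combine the averaging principle with Fano's inequality: the prior on $H$ is uniform over a large set, so a successful estimator must extract roughly $\log N$ bits. First, by \cref{fact: averaging principle}, an algorithm that succeeds with probability at least $2/3$ on every instance $\ket{\varphi^\epsilon_{H,\alpha}}$ also succeeds with probability at least $2/3$ when $(H,\alpha)$ is drawn from the product distribution of \cref{sec: hard family of instances}. The failure probability is taken jointly over the draw and the measurement. Hence the estimator $\hat H = Z$ satisfies $P_e = \Pr[Z\neq H]\le 1/3$. If the POVM can output labels outside $\mathcal{G}_m$, we first map every such label to a fixed element of $\mathcal{G}_m$. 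This post-processing cannot increase the error, and by data processing it cannot increase $I(H;Z)$. So we may assume $Z$ takes values in $\mathcal{G}_m$, as \cref{fact:fano} requires.

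Second, $H$ is uniform on $\mathcal{G}_m$, so $\mathrm{H}(H)=\log N \ge m^2$ by \cref{prop: subspace count}. Since $\alpha$ is independent of $H$ and $Z$ depends on $\alpha$ only through the state, the pair $(H,Z)$ is a well-defined joint distribution with this marginal on $H$. \cref{fact:fano} then gives
\begin{align}
    \mathrm{H}(H\mid Z)\le 1 + P_e\log N \le 1+\tfrac{1}{3}\log N .
\end{align}
Therefore
\begin{align}
    I(H;Z)=\mathrm{H}(H)-\mathrm{H}(H\mid Z)\ge \tfrac{2}{3}\log N - 1\ge \tfrac{2}{3}m^2-1 .
\end{align}

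Third, we show $\tfrac{2}{3}m^2-1\ge m^2/6$. This is equivalent to $\tfrac12 m^2\ge 1$, which holds for all $m\ge 2$. The only real obstacle is the boundary case $m=1$, where this crude bound gives only $-1/3$. For $m=1$ we use the sharper form of Fano, $\mathrm{H}(H\mid Z)\le \mathrm{H}(P_e)+P_e\log(N-1)$, with the exact count of $1$-dimensional subspaces of $\mZ_2^2$, namely $N=3$. Since $\mathrm{H}(P_e)+P_e\log 2$ is increasing on $[0,1/3]$, it is at most $\mathrm{H}(1/3)+1/3\approx 1.25$. This gives $I(H;Z)\ge \log 3 - 1.25\approx 0.33\ge 1/6$. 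The claim $I(H;Z)\ge m^2/6$ therefore holds for every $m\ge1$.
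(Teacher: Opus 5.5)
Your proposal is correct and follows the paper's proof essentially step by step: the averaging principle gives $P_e\le 1/3$, the crude form of Fano with $\log N\ge m^2$ handles $m\ge 2$, and the sharp form with $N=3$ yields exactly $1/3$ at $m=1$. Your extra step of remapping outputs outside $\mathcal{G}_m$ (justified by data processing and non-increasing error) is a harmless refinement that the paper leaves implicit.
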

\begin{proof}
    For the averaging principle of \cref{fact: averaging principle}, any algorithm succeeding with probability at least $2/3$ on every instance also succeeds with probability at least $2/3$ under the uniform distribution fixed above.
    Hence we can draw a hidden subgroup $H$ from the distribution and require the algorithm to succeed.
    
    We apply Fano's inequality (\cref{fact:fano}) with $X=H$, $Y=Z$, and $\hat{X}(Z) = Z$.
    Here, $H$ is uniform on the $N=|\mathcal{G}_m|$ candidate subgroups, $Z$ is the subgroup output by the algorithm, and the error event is therefore $Z \neq H$, whose probability is $P_e \leq 1/3$.
    Fano's inequality gives $\mathrm{H}(H\mid Z)
    \leq
    \mathrm H(P_e)+P_e\log(N-1)$.
    Since $H$ is uniform, $\mathrm H(H)=\log N$.
    Thus, by \eqref{eq: mutual information},
    \begin{align}
    \label{eq:fano-special}
        I(H;Z)
        \geq
        \log N-\mathrm H(P_e)-P_e\log(N-1).
    \end{align}
In order to proceed, 
    we first consider $m=1$. In this case, $\mathcal G_1$ consists of the three one-dimensional subspaces of $\mZ_2^2$, and hence $N=3$.
    Because $P_e\leq 1/3$, and both $\mathrm H(p)$ and $p\log(N-1)$ are increasing for $p\in[0,1/3]$, \eqref{eq:fano-special} implies
    \begin{align}
        I(H;Z)
        &\geq
        \log 3-\mathrm H(1/3)-\frac13\log 2
        =\frac13
        \geq\frac{m^2}{6}.
    \end{align}
    Suppose now that $m\geq 2$.
    Using $\mathrm H(P_e)\leq 1$ and $\log(N-1)\leq\log N$ in \eqref{eq:fano-special}, we obtain $I(H;Z)
    \geq
    (1-P_e)\log N-1$.
    By \cref{prop: subspace count}, $\log N\geq m^2$, while $1-P_e\geq 2/3$. Consequently,
    $I(H;Z)
        \geq
        \frac23m^2-1
        \geq
        \frac{m^2}{6},
    $ where the last inequality holds for every $m\geq2$.
    \end{proof}

Intuitively, the proposition says that identifying the hidden subgroup is information-theoretically expensive: since there are exponentially many candidate subgroups, the classical output of any successful bounded-error algorithm must share $\Omega(m^2)$ bits of information with the true subgroup label.

\subsubsection{Upper bound}
Next, we use the Holevo upper bound to show that $t$ copies of the state can provide at most $2t\epsilon m$ such bits, which is the heart of the argument.
We first decompose the $t$-copy state $(\ketbra{\varphi_{H,\alpha}^\epsilon}{\varphi_{H,\alpha}^\epsilon})^{\otimes t}$ along the number of informative branches it contains, then average over the phase, and finally apply the Holevo bound.

For $k\in\{0,1,\dots,t\}$ define the unit vector
\begin{align}
\label{eq: sector state}
    \ket{\phi_{H,k}^t}= \frac{1}{\sqrt{\binom{t}{k}}}\sum_{\substack{S\subseteq[t]\\|S|=k}}\ \bigotimes_{i\in[t]}
    \begin{cases}
        \ket{H}\ket{1}, & i\in S,\\
        \ket{u_G}\ket{0}, & i\notin S.
    \end{cases}
\end{align}
This state encodes the equal superposition over the $\binom{t}{k}$ ways to place $k$ informative branches among the $t$ copies; for instance $\ket{\phi_{H,0}^t}=(\ket{u_G}\ket{0})^{\otimes t}$ and $\ket{\phi_{H,t}^t}=(\ket{H}\ket{1})^{\otimes t}$.
For a fixed $H$, these vectors are orthonormal since distinct $k$ produce distinct flag Hamming weights. They decompose the $t$-copy state as
\begin{align}
\label{eq: copy expansion}
    \ket{\varphi_{H,\alpha}^\epsilon}^{\otimes t}=\sum_{k=0}^{t}\sqrt{p_k^\epsilon}\,e^{ik\alpha}\,\ket{\phi_{H,k}^t},
\end{align}
where $p_k^\epsilon=\binom{t}{k}(1-\epsilon)^{t-k}\epsilon^{k}$ is the $\mathrm{Binomial}(t,\epsilon)$ probability of $k$ informative branches.

Because $\alpha$ is drawn uniformly and is never reported, the state relevant for guessing $H$ is the phase average.
To see this more formally, let the algorithm induce the POVM $\{M_z\}$ on the $t$-copy input state. 
For fixed $H$ and $\alpha$, the probability that the algorithm outputs $z$ is $\Pr[Z=z\mid H,\alpha] = \Tr[M_z(\ketbra{\varphi_{H,\alpha}^\epsilon}{\varphi_{H,\alpha}^\epsilon})^{\otimes t}]$.
With $\alpha$ uniform on $[0,2\pi)$, the law of total probability and linearity of the trace give 
\begin{align}
\label{eq: prob int}
    \Pr[Z=z\mid H]
    =\frac{1}{2\pi}\int_0^{2\pi}\Pr[Z=z\mid H,\alpha]\,d\alpha
    =\Tr\!\big(M_z\,\sigma_H\big),    
\end{align}
where we have defined $\sigma_H\defeq\frac{1}{2\pi}\int_0^{2\pi}\big(\ketbra{\varphi_{H,\alpha}^\epsilon}{\varphi_{H,\alpha}^\epsilon}\big)^{\otimes t}\,d\alpha$.
Averaging the phase annihilates the cross terms of \eqref{eq: copy expansion}
and leaves the sector mixture
\begin{align}
\label{eq: sigma sectors}
    \sigma_H=\sum_{k=0}^{t}p_k^\epsilon\,\ketbra{\phi_{H,k}^t}{\phi_{H,k}^t} .
\end{align}
Thus $Z$ is the outcome of a fixed measurement of the ensemble $\{(1/N,\sigma_H)\}_{H\in\mathcal{G}_m}$, and $I(H;Z)$ is exactly the quantity bounded in \cref{prop:fano-lb}.

The dependence on $H$ inside sector $k$ is carried by the $k$ copies of $\ket{H}$ alone, which confines the sector average to a low-dimensional space.
The dimension of the space provides a bound for the rank, which in turn bounds the entropy of the sector average. 

\begin{lemma}[Sector entropy]
\label{lemma: sector entropy}
    Let $\bar\rho_k\defeq\frac1N\sum_{H\in\mathcal{G}_m}|\phi_{H,k}^t\rangle\langle\phi_{H,k}^t|$. Then $S(\bar\rho_k)\le 2km$.
\end{lemma}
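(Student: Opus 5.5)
We bound the rank of $\bar\rho_k$ by the dimension of a subspace that contains every $\ket{\phi_{H,k}^t}$, and then apply \eqref{eq:vN-rank}. Since $\bar\rho_k$ is a mixture of the projectors $\ketbra{\phi_{H,k}^t}{\phi_{H,k}^t}$, its support lies in $\mathrm{span}\{\ket{\phi_{H,k}^t}: H\in\mathcal{G}_m\}$. It therefore suffices to exhibit a fixed subspace $\mathcal{V}_k$, independent of $H$, with $\dim\mathcal{V}_k\le 2^{2km}$ and $\ket{\phi_{H,k}^t}\in\mathcal{V}_k$ for every $H$. This gives $S(\bar\rho_k)\le\log\operatorname{rank}(\bar\rho_k)\le 2km$.

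To build $\mathcal{V}_k$, fix a subset $S\subseteq[t]$ with $|S|=k$. Use the decomposition $\ket{H}=\ket{u_K}|\widetilde H\rangle$ from the construction of the hard family. The product vector indexed by $S$ in \eqref{eq: sector state} is then
\begin{align}
\bigotimes_{i\notin S}\ket{u_G}\ket{0}\ \otimes\ \bigotimes_{i\in S}\ket{u_K}\ket{1}\otimes|\widetilde H\rangle .
\end{align}
Only the $k$ factors $|\widetilde H\rangle\in\bC^{2^{2m}}$ depend on $H$. Every other factor is fixed. Hence this vector lies in the subspace
\begin{align}
\mathcal{V}^{(S)}=\bigotimes_{i\notin S}\mathrm{span}\{\ket{u_G}\ket{0}\}\otimes\bigotimes_{i\in S}\big(\ket{u_K}\ket{1}\otimes\bC^{2^{2m}}\big),
\end{align}
which has dimension $2^{2mk}$.

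The weakness of this construction is that $\ket{\phi_{H,k}^t}$ is an equal superposition over all $\binom{t}{k}$ subsets. If we took $\mathcal{V}_k=\sum_S\mathcal{V}^{(S)}$, its dimension would be of order $\binom{t}{k}2^{2mk}$, which spoils the bound by a term of order $\log\binom{t}{k}$. This is the main obstacle. We remove it with a symmetry argument: the informative slots in $\ket{\phi_{H,k}^t}$ are symmetrized, so the state should factor through the symmetric power of a single informative tensor. Concretely, define the linear map
\begin{align}
L:(\bC^{2^{2m}})^{\otimes k}\to\mathcal{H}^{\otimes t},\qquad \ket{w_1}\cdots\ket{w_k}\mapsto\frac{1}{\sqrt{\binom tk}}\sum_{|S|=k}\Big(\text{place }\ket{u_K}\ket{1}\ket{w_j}\text{ in the }j\text{-th element of }S,\ \ket{u_G}\ket{0}\text{ elsewhere}\Big).
\end{align}
Slots of $S$ are filled in increasing order. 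The map $L$ is independent of $H$ and satisfies $\ket{\phi_{H,k}^t}=L\big(|\widetilde H\rangle^{\otimes k}\big)$. We then take $\mathcal{V}_k=\operatorname{im}L$, whose dimension is at most $(2^{2m})^k=2^{2km}$.

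The remaining step is to confirm that $L$ is well defined and linear. Both hold because $L$ is defined on a product basis and extended linearly, and the ordering convention fixes where each $\ket{w_j}$ goes. The rank bound then follows, and \eqref{eq:vN-rank} yields $S(\bar\rho_k)\le 2km$. For $k=0$ the bound reads $S(\bar\rho_0)=0$, which is consistent because $\ket{\phi_{H,0}^t}$ does not depend on $H$.
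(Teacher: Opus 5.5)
Your proof is correct and is essentially the paper's argument: your map $L$ is the paper's $W_k$, and you conclude the same way, from $\ket{\phi_{H,k}^t}=L\big(|\widetilde H\rangle^{\otimes k}\big)$, the $H$-independent bound $\dim\operatorname{im}L\le 2^{2km}$, and the rank bound \eqref{eq:vN-rank}. The paper also observes that $W_k$ is an isometry, but as your version shows, the rank argument does not need this.
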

\begin{proof}
    Let $W_k\colon(\bC^{2^{2m}})^{\otimes k}\to\Hh^{\otimes t}$ be the linear map that places its $k$ input registers, in order, into the informative slots, attaches the flag $\ket{1}$, fills the remaining slots with $\ket{u_G}\ket{0}$, and averages over the placements:
    \begin{align}
        W_k\big(|v^{(1)}\rangle\otimes\cdots\otimes|v^{(k)}\rangle\big)
        =\frac{1}{\sqrt{\binom{t}{k}}}\sum_{\substack{S\subseteq[t]\\|S|=k}}\ \bigotimes_{i\in[t]}
        \begin{cases}
            |u_K\rangle|v^{(j)}\rangle\ket{1}, & i\text{ is the }j\text{-th smallest element of }S,\\
            \ket{u_G}\ket{0}, & i\notin S.
        \end{cases}
    \end{align}
    The map is defined for any $k$ states $\ket{v} \in \bC^{2^{2m}}$ and is independent of $H$.
    Moreover, the different placements have orthogonal flag patterns, so $W_k$ is an isometry.
    By construction, $W_k|\widetilde{H}\rangle^{\otimes k}=|\phi_{H,k}^t\rangle$. 
    Therefore every vector $|\phi_{H,k}^t\rangle$ lies in the image of $W_k$, a subspace of $\Hh^{\otimes t}$ of dimension at most $\dim(\bC^{2^{2m}})^{\otimes k}=2^{2mk}$. The whole averaged operator $\bar{\rho}_k$ is also supported on that subspace.
    Hence $\operatorname{rank}\bar\rho_k\leq 2^{2mk}$, and the rank bound \eqref{eq:vN-rank} gives $S(\bar\rho_k)\le\log(2^{2mk})=2km$.
\end{proof}

We are now ready to use Holevo's bound to upper bound the mutual information.
\begin{proposition}[Information upper bound]
\label{prop:holevo-ub}
    Any algorithm using $t$ copies of the input state, under the distribution fixed above and with POVM $\{M_z\}$, satisfies $I(H;Z)\le 2t\epsilon m$.
\end{proposition}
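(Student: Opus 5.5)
Our plan is to apply the Holevo bound (\cref{fact:holevo}) to the ensemble $\{(1/N,\sigma_H)\}_{H\in\mathcal{G}_m}$. By \eqref{eq: prob int}, $Z$ is exactly the outcome of the fixed POVM $\{M_z\}$ measured on $\sigma_H$. This gives
\begin{align}
    I(H;Z)\le \chi = S(\bar\sigma)-\frac1N\sum_{H}S(\sigma_H),
    \qquad \bar\sigma\defeq\frac1N\sum_H\sigma_H=\sum_{k=0}^t p_k^\epsilon\,\bar\rho_k,
\end{align}
where the last equality uses the sector decomposition \eqref{eq: sigma sectors} and the definition of $\bar\rho_k$ in \cref{lemma: sector entropy}.

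The key observation is that the sectors are supported on mutually orthogonal subspaces. The flag registers of $\ket{\phi^t_{H,k}}$ all have Hamming weight $k$, and this holds uniformly in $H$. Hence $\bar\rho_k$ and $\bar\rho_{k'}$ have orthogonal supports for $k\neq k'$. Likewise, for a fixed $H$, each $\sigma_H$ is a mixture of the orthogonal pure states $\ket{\phi^t_{H,k}}$.

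By the equality case of \cref{theorem: entropy of a mixture}, we obtain
\begin{align}
    S(\bar\sigma)=\sum_k p_k^\epsilon S(\bar\rho_k)+\mathrm H(p^\epsilon)
    \qquad\text{and}\qquad
    S(\sigma_H)=\mathrm H(p^\epsilon)
\end{align}
for every $H$, since pure states have zero entropy. The binomial entropy terms therefore cancel exactly in $\chi$, leaving
\begin{align}
    I(H;Z)\le \chi=\sum_{k=0}^t p_k^\epsilon\, S(\bar\rho_k).
\end{align}
This cancellation is the step where the random phase $\alpha$ matters. Without the phase average, the cross terms between sectors would survive and $\sigma_H$ would not be block diagonal.

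Next, we apply \cref{lemma: sector entropy}, which gives $S(\bar\rho_k)\le 2km$. Therefore
\begin{align}
    I(H;Z)\le 2m\sum_k k\,p_k^\epsilon = 2m\cdot t\epsilon,
\end{align}
using that the mean of $\mathrm{Binomial}(t,\epsilon)$ is $t\epsilon$.

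The main point requiring care is the orthogonality claim across different $H$ within the same $k$: we need that $\bar\rho_k\perp\bar\rho_{k'}$ as averaged operators. This follows because every $\ket{\phi^t_{H,k}}$ lies in the flag-weight-$k$ subspace, which does not depend on $H$. We should also double-check that the entropy-of-mixture equality is applied with pairwise orthogonal supports in both uses. The remaining steps are routine.
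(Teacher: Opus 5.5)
Your proposal is correct and matches the paper's proof step for step: Holevo on the phase-averaged ensemble $\{(1/N,\sigma_H)\}$, $S(\sigma_H)=\mathrm{H}(p^\epsilon)$ from the orthonormal sectors, the equality case of \cref{theorem: entropy of a mixture} for $\frac1N\sum_H\sigma_H=\sum_k p_k^\epsilon\bar\rho_k$ using the $H$-independent flag-weight-$k$ subspaces, cancellation of $\mathrm{H}(p^\epsilon)$, and finally \cref{lemma: sector entropy} together with the binomial mean $t\epsilon$. One wording nit: the orthogonality you actually need, and correctly justify, is between sectors $k\neq k'$, not between different $H$ within the same $k$.
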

\begin{proof}
    As shown above, after averaging over the hidden phase, the conditional law of the algorithm's output given $H$ is the same as the law obtained by measuring the state $\sigma_H$.
    Thus the random variables $H$ and $Z$ arise from the ensemble $\{(1/N,\sigma_H)\}_{H\in\mathcal{G}_m}$ and the POVM $\{M_z\}$. 
    By the Holevo bound, (\cref{fact:holevo}),
    \begin{align}
    \label{eq: chi decomposition}
        I(H;Z)\ \le\ \chi=S\!\Big(\tfrac1N\sum_{H\in\mathcal{G}_m}\sigma_H\Big)-\tfrac1N\sum_{H\in\mathcal{G}_m}S(\sigma_H).
    \end{align}
    For each $H$ the sectors are orthonormal, so $\sigma_H$ in \eqref{eq: sigma sectors} has eigenvalues $\{p_k^\epsilon\}_k$ and $S(\sigma_H)=\mathrm{H}(p^\epsilon)$, independent of $H$; hence the second term of \eqref{eq: chi decomposition} equals $\mathrm{H}(p^\epsilon)$. The average state is $\tfrac1N\sum_H\sigma_H=\sum_{k=0}^{t}p_k^\epsilon\,\bar\rho_k$, a mixture of the $\bar\rho_k$, which have pairwise orthogonal supports because the sectors are orthogonal. \Cref{theorem: entropy of a mixture}, with equality, then gives
    $S\!\Big(\tfrac1N\sum_{H\in\mathcal{G}_m}\sigma_H\Big)=\mathrm{H}(p^\epsilon)+\sum_{k=0}^{t}p_k^\epsilon\,S(\bar\rho_k)$.
    
    The term $\mathrm{H}(p^\epsilon)$ cancels in \eqref{eq: chi decomposition}. Using the sector entropy \cref{lemma: sector entropy} and the mean $\sum_{k}k\,p_k^\epsilon=t\epsilon$ of the $\mathrm{Binomial}(t,\epsilon)$ law, we conclude the bound $        \chi=\sum_{k=0}^{t}p_k^\epsilon\,S(\bar\rho_k)\ \le\ 2m\sum_{k=0}^{t}k\,p_k^\epsilon\ =\ 2t\epsilon m$.
\end{proof}

\subsubsection{Proof of the main theorem} 
Finally, we are ready to combine everything and conclude the proof of \cref{theorem: copy lower bound}.

\begin{proof}[Proof of \cref{theorem: copy lower bound}]
    Suppose that a $t$-copy algorithm identifies the hidden subgroup with probability at least $2/3$ on every input state in the hard family. 
    Draw $(H,\alpha)$ from the distribution fixed above, and let $Z$ be the algorithm's output.
    
    By \cref{prop:fano-lb}, the worst-case success guarantee implies $I(H;Z)\ge \frac{m^2}{6}$.
    By \cref{prop:holevo-ub}, the same mutual information satisfies   $I(H;Z)\le 2t\epsilon m.$
    Therefore,
    \begin{align}
        \frac{m^2}{6}\ \le\ I(H;Z)\ \le\ 2t\epsilon m.
    \end{align}
    Solving for $t$ gives $t \in \Omega(\frac{m}{\epsilon})$ as claimed, for any $r \geq 0$ and $m\geq 1$. 
    Since for our hard family we have $m = \log|G/H|$, and any algorithm solving StateHSP solves this problem too, we conclude our StateHSP copy bound $\Omega({\log(|G/H|)}/{\epsilon})$. 
\end{proof}

\section{Applications}
\label{sec: applications}
Our algorithmic results immediately improve several symmetry-learning problems previously formulated as instances of the abelian StateHSP~\cite{hinsche2025povm}. The improvements are conceptually simple. 
First, even with access only to copies, the dependence on $\log |G|$ is replaced by one on $\log |G/H|$, so larger hidden subgroups can be learned more efficiently.
Then, whenever the corresponding state-preparation circuit and its inverse are available, the $1/\epsilon$ dependence of Fourier sampling is replaced by $1/\sqrt{\epsilon}$. 
The examples below illustrate three different forms of hidden structure captured by the StateHSP framework: tensor-product structure, stabilizer symmetries, and spatial translation symmetries.

Fourier sampling can be implemented either using the controlled representation
\begin{align}
    U_R=\sum_{g\in G}|g\rangle\!\langle g|\otimes R(g)
\end{align}
together with a quantum Fourier transform, or by applying a unitary that
jointly diagonalizes the representation and extracting the corresponding
character label~\cite{hinsche2025povm}.
Both implementations produce the samples required by our algorithms.
When a state-preparation unitary and its inverse are available, a coherent implementation of either procedure, retaining auxiliary registers and extracting character labels reversibly, also supports our coherent-access algorithms. 
We discuss the implementation costs of $U_R^{\pm1}$ for each application below; these are additional to the stated resource counts.

\subsection{Locating unentanglement}
Understanding the entanglement structure of multipartite quantum states is a central problem in quantum information. 
While tomography of large systems is generally computationally too demanding, many physically relevant states possess some hidden tensor-product structure that makes them easier to study. 
Given a quantum state, or access to its state-preparation unitary, the task is to identify the partition across which the state factorizes, without prior knowledge of the decomposition.

More formally, consider an $n$-qudit state vector of the form
\begin{align}
|\phi\rangle = |\phi^{(1)}\rangle_{C_1}\otimes\cdots\otimes|\phi^{(m)}\rangle_{C_m}
\end{align}
where the partition $C_1\sqcup\cdots\sqcup C_m=[n]$ is unknown, and one would like to recover the factors without reconstructing the full state.
We assume that each factor is separated from states admitting a further tensor-product decomposition. 
Specifically, let $\mathcal P(C_k)$ denote the set of pure-state density operators that factorize across at least one nontrivial bipartition of the $C_k$ factor. 
For every $k$ with $|C_k|\geq 2$, we assume
\begin{align}
\min_{\rho\in\mathcal P(C_k)}
\frac{1}{2}
\norm{|\phi^{(k)}\rangle\langle\phi^{(k)}|-\rho}_1
\geq \widetilde{\varepsilon}>0.
\end{align}
In particular, this ensures that $C_1,\ldots,C_m$ form the finest partition across which the state factorizes.

As observed in Refs.~\cite{bouland2024state,hinsche2025povm}, this problem can be formulated as an abelian StateHSP on $\mZ_2^n$ (regardless of the qudit dimensions) by considering the action of local SWAP operators on two copies of the state. 
The input state is $\ket{\varphi} = \ket{\phi}\ket{\phi}$, and a bitstring $g \in \mZ_2^n$ acts through $R(g)= \prod_{i\in [n]}\operatorname{SWAP}_{i}^{g_i}$, where $\operatorname{SWAP}_{i}$ exchanges the $i$th qudit between two copies.
The controlled representation
\begin{align}
    U_R=\sum_{g\in\mathbb Z_2^n}|g\rangle\!\langle g|\otimes R(g)
\end{align}
uses $n$ controlled-SWAP gates acting in parallel on disjoint triples.
For fixed local dimension and suitable connectivity, this gives a
constant-depth implementation.

The hidden subgroup consists precisely of the bitstrings whose $1$s select a union of factors of the hidden partition.
Indeed, swapping all qudits in a factor $C_k$ exchanges two identical copies of $|\phi^{(k)}\rangle$, leaving $\ket{\phi}^{\otimes 2}$ unchanged.
Swapping any collection of whole factors therefore also preserves the state.
However, swapping only a subset of qudits within a factor induces a fidelity gap $\epsilon\in \Omega(\tilde\epsilon\,^2)$.
Thus, writing $\mathbbm{1}_{C_k}$ for the bitstring with $1$s exactly at the indices in $C_k$, we have $H=\operatorname{span}(\mathbbm{1}_{C_1},\ldots,\mathbbm{1}_{C_m})$ and a promise gap $\Omega(\tilde \epsilon^2)$.

Consequently, if a preparation unitary $U_\phi$ and its inverse are available, \cref{theorem: stateHSP unified} gives
\begin{align}
O\!\left(
\frac{(n-m)+\log\frac{1}{\delta}}
{\sqrt\epsilon}
\right)=O\!\left(
\frac{(n-m)+\log\frac{1}{\delta}}
{\tilde\epsilon}
\right)
\end{align}
queries to $U_\phi$ and $U_\phi^{-1}$, while \cref{theorem: stateHSP copies z2} gives $O\!\left(
\frac{(n-m)+\log \frac{1}{\delta}}
{\tilde\epsilon^2}
\right)$ copies of the input state $\ket{\phi}$.
The dependence on $n-m$, rather than on the total number of qubits, follows since the hidden subgroup has size $2^m$, so $\log |G/H|=n-m$. 
Thus query access improves the scaling with the promise gap, and, regardless of the access model, states with many product factors are easier to identify: the algorithm only has to learn the nontrivial small quotient structure that distinguishes the possible cuts.

\subsection{Learning stabilizer groups}
Another important symmetry-learning problem with applications in quantum error correction, state certification, and quantum verification is to recover the stabilizer group of an unknown quantum state. 
Given an $n$-qu$d$it state, one seeks to identify the subgroup of Pauli operators that leave the state invariant, without reconstructing the state itself. This problem has been shown to admit a reduction to the abelian StateHSP over the Pauli group~\cite{hinsche2025povm}.

For prime local dimension $d$, let $W_x$ denote the standard Weyl operators,
indexed by $x\in\mathbb Z_d^{2n}$, and define the phaseless stabilizer subgroup
\begin{align}
H=\{x\in\mathbb Z_d^{2n}:|\langle\phi|W_x|\phi\rangle|=1\}.
\end{align}
We assume that
\begin{align}
|\langle\phi|W_x|\phi\rangle|\leq 1-\epsilon
\qquad\forall x\notin H.
\end{align}
Following Ref.~\cite{hinsche2025povm}, the corresponding StateHSP instance is
\begin{align}
G=\mathbb Z_d^{2n},\qquad
|\varphi\rangle=|\phi\rangle^{\otimes D},\qquad
R(x)=W_x^{\otimes D},
\end{align}
where $D=d$ for odd $d$ and $D=4$ for $d=2$.
This tensor power removes the projective phases, making $R$ a representation
and ensuring that $R(h)|\varphi\rangle=|\varphi\rangle$ for every $h\in H$.
For $x\notin H$,
\begin{align}
|\langle\varphi|R(x)|\varphi\rangle|    
\leq(1-\epsilon)^D,
\end{align}
so a valid StateHSP promise gap is
\begin{align}
\eta_D=1-(1-\epsilon)^D =\Theta\!\left(\min\{1,D\epsilon\}\right).
\end{align}

The controlled representation $U_R$ can be implemented using $O(nD)$
controlled qudit shift and phase gates. Assuming all-to-all connectivity,
operations on different sites can be performed in parallel, giving
depth $O(D)$ in the standard qudit gate model.

Each preparation of $|\varphi\rangle$ requires $D$ calls to $U_\phi$.
Accounting for both this cost and the amplified gap,
Theorem~\ref{theorem: stateHSP unified} identifies $H$, with failure probability
at most $\delta$, using
\begin{align}
O\!\left(
\left(n\log d + \log\frac{1}{\delta}\right)\,
\max\!\left\{d,\sqrt{\frac{d}{\epsilon}}\right\}
\right)    
\end{align}
queries to $U_\phi$ and $U_\phi^{-1}$.
Here the factor $\max\{d,\sqrt{d/\epsilon}\}$ comes from
$D/\sqrt{\eta_D}$, the number of queries per StateHSP input over the square root of the promise gap. 
The corresponding copy-access factor is
$D/\eta_D=\Theta(\max\{d,1/\epsilon\})$.
The stabilizer phases are recovered by measuring the Weyl operators
corresponding to generators of $H$ on additional copies of $|\phi\rangle$.

Finally, writing $|H|=d^r$, commutativity of the stabilizers implies
$r\leq n$, and hence
\begin{align}
    \log|G/H|=(2n-r)\log d=\Theta(n\log d).
\end{align}
Thus, for this application the quotient-size refinement gives at most a constant-factor improvement in the group-size term. Finally, we note that Allcock et al.~\cite{allcock2025reconqueringbellsamplingqudits} solve this problem using $O(n/\epsilon)$ copies for prime $d$. Our algorithm uses coherent access to $U_\phi$ and $U_\phi^{-1}$, improving the dependence on $\epsilon$ for fixed $d$, but with worse dependence on $d$. These sample and query bounds do not directly compare total gate costs, which also depend on state preparation and how qudit operations are implemented. Our improvement applies to the StateHSP reduction used here, but more efficient reductions may be possible.

\subsection{Identifying translation symmetries}
A third example is the identification of hidden translational symmetries in quantum many-body states. Let $T$ denote the cyclic translation operator acting on an $n$-site ring, and suppose that the state satisfies
\begin{align}
    T^k|\psi\rangle=|\psi\rangle
\qquad\forall k\in H,    
\end{align}
for some unknown subgroup $H\leq\mathbb Z_n$, while translations outside $H$ have overlap at most $1-\epsilon$ with the state.

This is directly an abelian StateHSP with $G=\mathbb Z_n$ and representation $R(k)=T^k$. The controlled representation
\begin{align}
    U_R=\sum_{k=0}^{n-1}|k\rangle\!\langle k|\otimes T^k
\end{align}
can be implemented by writing $k$ in its binary representation and applying the corresponding controlled powers $T^{2^j}$. 
Each power is a permutation of the site registers requiring $O(n)$ SWAPs. 
Thus, assuming all-to-all connectivity, a straightforward implementation without additional auxiliary qubits uses $O(n\log n)$ controlled-SWAP gates and depth $O(n\log n)$~\cite{hinsche2025povm}.

Hence, given a preparation unitary and its inverse, our results identify the full subgroup of translation symmetries using
\begin{align}
    O\!\left(
    \frac{\log (n/p)+\log \frac{1}{\delta}}
    {\sqrt{\varepsilon}}
    \right)
\end{align}
queries to the state-preparation unitary and its inverse, where $p=|H|$ is the number of translations leaving the state invariant. 
Thus, once again, query access improves the dependency on $\epsilon$ quadratically, and, regardless of the access model, the complexity depends on the number of translations that leave the quantum state invariant, rather than simply on the system size.

These examples show that the access-model improvement established for StateHSP is not restricted to an abstract subgroup-identification problem. 
It directly transfers to learning problems in many-body physics, finding applications in tasks involving hidden entanglement structure, stabilizer symmetries, and spatial symmetries.

\section{Discussion and outlook}
\label{sec: outlook}

In this paper, we determined the optimal sample and query complexity of the abelian StateHSP.
For constant success probability, the worst-case bounds are $\Theta(\log(|G/H|) /\epsilon)$ copies and $\Theta(\log (|G/H|)/\sqrt{\epsilon})$ queries to the state-preparation unitary and its inverse.
The dependence on the quotient size is achieved without prior knowledge of $|H|$, while the quadratic improvement in $\epsilon$ comes from coherently amplifying Fourier samples that enlarge the subgroup generated so far.
The matching lower bounds hold even with collective measurements in the copy model and with conjugate and controlled preparation queries in the query model. 
Together, these results identify both the advantage of coherent access and its limits.

The applications expose an important distinction between solving StateHSP optimally and obtaining optimal algorithms for the learning problems it captures.
While the improvements to the upper bounds propagate through the reductions in \cref{sec: applications}, our hard instances do not arise from hidden cuts, stabilizer groups, or translation symmetries. 
Establishing matching lower bounds for these problems in each access model, or exploiting their additional structure to improve on the StateHSP reductions, remains a natural next step.

One connection of broader interest is to quantum pseudorandomness and pseudoentanglement.
Bouland, Giurgi\c{c}\u{a}-Tiron, and Wright observed that locating unentanglement rules out a simple recursive construction: placing smaller 
pseudorandom states on the two sides of a randomly hidden product cut~\cite{bouland2024state}.
The resulting tensor-product structure can be recovered efficiently from copies. 
When the preparation unitary and its inverse are accessible, our result gives a sharper quantitative bound: under the hidden-cut promise, the partition can be recovered with $O(\log (|G/H|)/\sqrt{\epsilon})$ coherent queries, where $\epsilon$ is the induced StateHSP gap.
This improves the gap dependence of the detection algorithm under stronger access.
Such tests constrain candidate pseudorandom constructions by detecting structure that does not exist in Haar-random states.

The access model itself leaves further questions.
If only forward queries to $U_\varphi$ are allowed, does the worst-case complexity return to $\Theta(\log(|G/H)/\epsilon)$? 
The necessity of inverses for general amplitude amplification~\cite{TangWriteInputModels} suggests this possibility, but proving it for StateHSP requires a separate lower bound.
On the other hand, conjugate queries to $U^*_\varphi$ can help in other oracle problems~\cite{RandomPurificationTang}; however, the preparation unitaries in our query lower bound construction are real, so conjugate access cannot improve the worst-case scaling established here. 
Such an access may still help on some structured  families of instances.
Finally, a classical description of the state-preparation circuit exposes information beyond black-box queries. 
Understanding when that description permits more efficient (quantum or classical) symmetry learning would clarify end-to-end advantages for practical applications.

Overall, the mechanism behind the speedup also suggests a direction beyond the applications considered here. 
The subspace identifier tests whether a prospective sample makes progress over the information already collected, while fixed-point amplitude amplification increases the probability of that progress. 
The adaptive stopping rule then removes the need to know the size of the hidden subgroup in advance.
More broadly, this combination may be useful in other learning strategies that accumulate independent constraints or generators, provided that progress can be recognized efficiently and coherently.

\section*{Acknowledgements}
A.B. would like to thank Alessandro Barenghi and Gerardo Pelosi for useful discussions on lattice membership 
algorithms, Arjan Cornelissen for discussions on the adversary bound, and Ignacio J. Cirac for useful feedback throughout the project. 
Y. L would like to thank Xin Wang for raising the question of controlled-unitary access. 
The authors are also grateful to Marcel Hinsche for initial discussions. 
The work at MPQ is supported by the German Federal Ministry of Education, Research and Space (BMFTR) through the funded project ALMANAQC, grant number 13N17236 within the research program “Quantum Systems”, by THEQUCO as part of the Munich Quantum Valley, which is supported by the Bavarian state government with funds from the Hightech Agenda Bayern Plus, and by the Alexander von Humboldt Foundation. The Berlin team is supported by the BMFTR (Hybrid++, PasQuops, QSolid, MuniQC-Atoms), the Munich Quantum Valley, Berlin Quantum, the Quantum Flagship (Millenion, PasQuans2), the QuantERA, the European Research Council (DebuQC), the Clusters of Excellence (MATH+, ML4Q), and the DFG (CRC 183, SPP 2514, and BoLaCo). For the Munich Quantum Valley, this constitutes the result of 
fruitful joint work on quantum algorithms involving both Berlin and Munich.

\section*{AI Disclosure}
A.B. conceived this project during Adam Bouland’s short plenary talk on StateHSP at QIP 2025. The authors then joined forces to work on the problem, study it, and prove the bounds together. 
The main conceptual algorithmic ideas and worst-case families have been conceived by the authors and are the result of a long study. 
The LLMs of OpenAI’s ChatGPT 5.5 and Anthropic’s Fable assisted the authors in proving the lower bounds and refining some technical details. 
No Pro accounts were used, and no proof has been entirely generated by AI without supervision, careful revisions, and modifications. ChatGPT 5.6 Sol, 6 Astra, and Opus 5 assisted the authors in revising the document and provided feedback. 
The authors take full responsibility for the presentation of the results and their correctness.

\appendix

\section{Group-theoretical facts}
\label{app:group}
\begin{proposition}
\label{proposition: cardinality of perp}
Let $G$ be a finite abelian group and $H\leq G$ a subgroup. The cardinality of the dual subgroup $H^\perp$ equals $|G|/|H|$. 
\end{proposition}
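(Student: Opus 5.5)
The plan is to realize $H^\perp$ as the kernel of a surjective restriction map and then count. Consider the restriction homomorphism $\mathrm{res}\colon \hat{G}\to\hat{H}$, $\chi_\lambda\mapsto \chi_\lambda|_H$. It is a group homomorphism because characters multiply pointwise. By definition its kernel is exactly $H^\perp$. By the first isomorphism theorem, $|H^\perp| = |\hat{G}|/|\mathrm{im}(\mathrm{res})|$. Since $|\hat{G}|=|G|$, it suffices to show that $\mathrm{res}$ is surjective, i.e.\ that every character of $H$ extends to a character of $G$, because $|\hat{H}|=|H|$.

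To avoid proving the extension property directly, I would instead bound $|H^\perp|$ from both sides using \cref{lemma: character orthogonality}. First I would sum its first identity over $\lambda\in\hat{G}$, which gives
\[
\sum_{\lambda\in\hat G}\sum_{h\in H}\chi_\lambda(h)=|H|\,|H^\perp|.
\]
Exchanging the order of summation, the inner sum $\sum_{\lambda\in\hat G}\chi_\lambda(h)$ is the second identity of \cref{lemma: character orthogonality} applied with the trivial subgroup $\{0\}$, whose dual is all of $\hat G$. That sum equals $|\hat G|=|G|$ if $h=0$ and $0$ otherwise. The double sum therefore equals $|G|$, and so $|H|\,|H^\perp|=|G|$.

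One subtlety needs checking. The second identity of \cref{lemma: character orthogonality}, as proven, uses $(H^\perp)^\perp=H$, which for the trivial subgroup means $(\hat G)^\perp=\{0\}$. This says the characters of $G$ separate points. I would verify it directly from the explicit formula $\chi_\lambda(g)=\prod_l e^{2\pi i g_l\lambda_l/M_l}$: for $g\neq 0$, some coordinate $g_l\not\equiv 0 \pmod{M_l}$, and choosing $\lambda=e_l$ gives $\chi_\lambda(g)\neq 1$. This keeps the argument non-circular, since the general $(H^\perp)^\perp=H$ may itself be proven in the appendix using the cardinality statement.

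The main obstacle is precisely this circularity risk. Using the direct coordinate computation for the trivial-subgroup case removes it, and the rest is a routine exchange of finite sums.
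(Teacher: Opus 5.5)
Your proof is correct, but it takes a different route from the paper. The paper dualizes the exact sequence $0\to H\to G\to G/H\to 0$ and invokes Pontryagin duality to get exactness of $0\to\widehat{G/H}\to\hat G\to\hat H\to 0$. It then identifies $\ker(\hat\iota)$ with $H^\perp$, which gives an explicit isomorphism $H^\perp\cong\widehat{G/H}$ and hence the count. The nontrivial input imported there is precisely the surjectivity of restriction $\hat G\to\hat H$ (the extension property) that you chose to sidestep.

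Your double count needs only two things. The first is the first orthogonality identity, which follows from the definition of $H^\perp$ alone. The second is that characters of $G$ separate points, which you check directly from the coordinate formula. This keeps the argument elementary and avoids the dependence on $(H^\perp)^\perp=H$ that you rightly flagged; the paper's proofs of both the second orthogonality identity and the duality-inequality proposition lean on that double-dual fact.

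The paper's route gives more structure (an isomorphism, not just equal orders) at the price of an imported theorem. Yours is self-contained and also recovers the other appendix facts as corollaries. From $|H|\,|H^\perp|=|G|$ and your first-paragraph setup (with $|\hat H|=|H|$) you get surjectivity of restriction. Running the same double count with the roles of $G$ and $\hat G$ exchanged works because the pairing $\chi_\lambda(g)$ is symmetric in the coordinates. That gives $|(H^\perp)^\perp|=|H|$, hence $(H^\perp)^\perp=H$.

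One cosmetic point: you say you will bound $|H^\perp|$ from both sides, but your computation is an exact identity. The restriction-map framing in your first paragraph is therefore not actually used.
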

\begin{proof}
    Since $G$ is an abelian group, $H$ as a subgroup of $G$ is a normal subgroup, and the quotient space $G/H$ is also an abelian group. The groups $G,H,G/H$ form an exact sequence, 
    \begin{equation}
    0\ra H \xrightarrow{\iota} G \xrightarrow{\pi} G/H\ra 0.
    \end{equation}
    By Pontryagin duality, the sequence
    \begin{equation}
    0\ra \widehat{G/H} \xrightarrow{\hat{\pi}} \hat{G} \xrightarrow{\hat{\iota}} \hat{H}\ra 0
    \end{equation}
    is also exact. 
    In particular, $\widehat{G/H}\cong \mathrm{im}(\hat{\pi})=\mathrm{ker}(\hat{\iota})$. 
    By definition, given $\lambda\in\hat{G}$ and $h\in H$, $\rho_{\hat{\iota}(\lambda)}(h)=\rho_\lambda(\iota(h))={\color{newtext}\chi_\lambda(h)}$, therefore $\mathrm{ker}(\hat{\iota})=H^\perp$, leading to $H^\perp\cong \widehat{G/H}$ and $|H^\perp|=|\widehat{G/H}|=|G/H|=[G:H]=|G|/|H|$. 
\end{proof}

\begin{proposition}
\label{proposition: duality inequalities}
    Let $G$ be a finite abelian group and $H\leq G, K\leq G$. If $K^\perp<H^\perp$, then $H<K$. 
\end{proposition}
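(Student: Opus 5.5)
We want: if $K^\perp<H^\perp$ then $H<K$. The plan is to first show the non-strict inclusion $H\le K$ by proving that the annihilator map reverses inclusions and squares to the identity. Then we upgrade it to a strict inclusion using the cardinality formula of \cref{proposition: cardinality of perp}.

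\textbf{Inclusion reversal.} Directly from the definition, if $A\le B$ are subgroups of $\hat G$, then any $g$ with $\chi_\lambda(g)=1$ for all $\lambda\in B$ also satisfies this for all $\lambda\in A$. Hence $B^\perp\le A^\perp$, where for a subgroup $A\le\hat G$ we write $A^\perp:=\{g\in G:\chi_\lambda(g)=1\ \forall\lambda\in A\}$.

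\textbf{Double dual.} We show $(H^\perp)^\perp=H$. The inclusion $H\subseteq (H^\perp)^\perp$ is immediate from the definitions. For the reverse inclusion, we compare cardinalities. Applying \cref{proposition: cardinality of perp} to $H\le G$ gives $|H^\perp|=|G|/|H|$. We then apply the same proposition to the subgroup $H^\perp$ of $\hat G$, using $\hat G\cong G$ and identifying $\hat{\hat G}$ with $G$ through the symmetric pairing $(g,\lambda)\mapsto\chi_\lambda(g)$. This yields $|(H^\perp)^\perp|=|G|/|H^\perp|=|H|$, so the two sets coincide. Alternatively, the second character orthogonality relation in \cref{lemma: character orthogonality} already shows that $g\notin H$ implies $\chi_\lambda(g)\neq1$ for some $\lambda\in H^\perp$. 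However, that argument itself invoked $(H^\perp)^\perp=H$, so we use the counting route to avoid circularity.

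\textbf{Conclusion.} From $K^\perp\le H^\perp$, reversal gives $(H^\perp)^\perp\le (K^\perp)^\perp$, and by the double-dual step this is $H\le K$. For strictness, suppose $H=K$. Then $H^\perp=K^\perp$, contradicting $K^\perp<H^\perp$. Equivalently, $|K|=|G|/|K^\perp|>|G|/|H^\perp|=|H|$.

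The main obstacle is the double-dual step. We must justify applying \cref{proposition: cardinality of perp} on the dual side, and this requires the identification $\hat{\hat G}\cong G$ via the canonical pairing. In the explicit coordinates $\chi_\lambda(g)=\prod_l e^{2\pi i g_l\lambda_l/M_l}$ this pairing is symmetric in $g$ and $\lambda$, which makes the identification concrete.
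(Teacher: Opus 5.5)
Your proof is correct and follows the paper's route: reverse inclusions under $\perp$, apply this on the dual side, and then use $(H^\perp)^\perp=H$. The paper simply assumes the double-dual identity and glosses over strictness, whereas you justify both. Your counting argument for the double dual also rightly avoids the circularity hidden in the paper's proof of the second orthogonality relation.
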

\begin{proof}
    To prove this, we first prove that: if $H<K$, then $K^\perp<H^\perp$. This is straightforward by noting that if $\lambda\in\hat{G}$ satisfies $\chi_\lambda(k)=1$ for any $k\in K$, then by $H<K$ it must satisfies $\chi_\lambda(h)=1$ for any $h\in H$. Therefore, $K^\perp<H^\perp$ will lead to $(H^\perp)^\perp<(K^\perp)^\perp$; and by using $(H^\perp)^\perp=H, (K^\perp)^\perp=K$, leading to $H<K$. 
\end{proof}

\section{Linear algebra and integer matrix forms}

Throughout this appendix we fix $n\in\N$ and moduli $M_1,\dots,M_n\in\N$, and write
$G=\mZ_{M_1}\times\cdots\times\mZ_{M_n}$ additively. A group element is a vector
$a=(a_1,\dots,a_n)$ with $a_i\in\mZ_{M_i}$; addition is componentwise,
$(a+b)_i\equiv a_i+b_i\pmod{M_i}$, and the identity is the zero vector. 

\begin{definition}[Span and generating set]
\label{def: span}
    The \emph{span} of $a^{(1)},\dots,a^{(d)}\in G$ is
    \begin{align}
        \langle a^{(1)},\dots,a^{(d)}\rangle
        \defeq \Big\{\, \textstyle\sum_{j=1}^{d} x_j\,a^{(j)} \ \Big|\ x_1,\dots,x_d\in\mZ \,\Big\},
    \end{align}
    with addition componentwise modulo $M_1,\dots,M_n$. It is a subgroup of $G$. Given a subgroup $K\le G$, we call
    $a^{(1)},\dots,a^{(d)}$ a \emph{generating set of $K$} if
    $\langle a^{(1)},\dots,a^{(d)}\rangle = K$.
\end{definition}

\begin{definition}[Redundant and span-increasing element]
\label{def: redundant}
    An element $b\in G$ is \emph{redundant} with respect to $a^{(1)},\dots,a^{(d)}$ if
    $b\in\langle a^{(1)},\dots,a^{(d)}\rangle$; otherwise $b$ is \emph{span-increasing},
    i.e. $\langle a^{(1)},\dots,a^{(d)},b\rangle \supsetneq \langle a^{(1)},\dots,a^{(d)}\rangle$.
\end{definition}

\begin{remark}
    A non-trivial finite abelian group $G$ as a module over $\mathbb{Z}$ does not have a basis. Among rings of the form
    $\mZ_M$, the group $G$ is a $\mZ_M$-module admitting a basis only when all moduli
    coincide, $G=\mZ_M^{\,n}$, and its subgroup $H\leq G$ is guaranteed to admit a basis only when $M$ is prime. We therefore work
    with generating sets and the span-increasing notion of \cref{def: redundant}; the
    latter is exactly what drives the subgroup-chain bound of \cref{fact: log-generators}.
    Genuine linear independence is used only in \cref{apx: RREF}, where $G=\mZ_M^n$ with $M$
    prime is a vector space over the field $\mZ_M$.
\end{remark}

Given $a^{(1)},\dots,a^{(d)}\in G$, we stack these $d$ elements as the rows of a matrix
\begin{align}
    A=\row\!\big(a^{(1)},\dots,a^{(d)}\big)
    =\begin{bmatrix} a^{(1)}\\ \vdots\\ a^{(d)}\end{bmatrix}
    \in \big(\mZ_{M_1}\times\cdots\times\mZ_{M_n}\big)^{d},
\end{align}
and write $\mspan(A):=\langle a^{(1)},\dots,a^{(d)}\rangle $ for the subgroup generated by its rows. Since we also denote the $k$-th row of $A$ as $A_k$, then $A_k$ is simply $a^{(k)}$. 

\subsection{Reduced row echelon form}
\label{apx: RREF}
In this subsection we specialize to the uniform prime case $G=\mZ_M^n$ with $M$ prime, so that $\mZ_M$ is a field and $A\in\mZ_M^{i\times n}$ is an ordinary matrix over a field. (The mixed- and composite-modulus cases do not admit a field RREF and are handled by the Howell normal form of \cref{apx: howell normal form}, to which the mixed-moduli problem is reduced by lifting to $\mZ_{\lcm}^n$). The two elementary row operations
\begin{enumerate}
    \item permuting rows, and
    \item replacing a row $r$ by $r+x\,r'$ for an integer $x$ and a distinct row $r'$
    (componentwise modulo $M$),
\end{enumerate}
preserve the row span $\mspan(A)$, and Gauss--Jordan elimination uses them to bring $A$ into the RREF. 

\begin{proposition}[Membership testing]
\label{prop:membership-testing-RREF-app}
Let $M$ be prime, $1\leq i<n$, and $A\in\mZ_M^{i\times n}$ in RREF with no all-zero rows and pivot columns $j_1<\dots<j_i$. 
Then every $g\in \mZ_M^n$ decomposes uniquely as \begin{align}
\label{eq: group elem decomposition primes}
    g=\sum_{k=1}^{i} c_k\ A_k + q,\quad \mathrm{with}\quad q_{j_k}=0,
\end{align}
where $A_k$ is the $k$-th row of $A$ and $c_k\in[M]$.
Moreover, $g\in \mspan(A)$ iff $q=0$. Such a decomposition costs $O((n-i)i~\polylog M)$ classical binary operations. 
\end{proposition}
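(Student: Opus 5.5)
The plan is to read off the coefficients directly from the pivot columns. Set $c_k \defeq g_{j_k}$ for $k=1,\dots,i$ and $q \defeq g-\sum_{k=1}^i c_k A_k$. The RREF properties of \cref{def: RREF} say that column $j_k$ of $A$ has a $1$ in row $k$ and zeros in every other row. Hence $\big(\sum_{k'} c_{k'}A_{k'}\big)_{j_k}=c_k=g_{j_k}$, which gives $q_{j_k}=0$ for every $k$, so the decomposition exists.

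For uniqueness, suppose $g=\sum_k c_kA_k+q=\sum_k c'_kA_k+q'$ with $q$ and $q'$ both vanishing on the pivot columns. Evaluating both sides at coordinate $j_k$ and using the same column structure gives $c_k=g_{j_k}=c'_k$. It follows that $q=q'$.

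For the membership criterion, the direction $q=0\Rightarrow g\in\mspan(A)$ is immediate. Conversely, if $g\in\mspan(A)$ then $g=\sum_k d_kA_k$ with $d_k\in\mZ_M$, since $M$ is prime and the integer combinations of \cref{def: span} reduce to coefficients in $\mZ_M$. This expression is itself a decomposition of the required form with remainder $0$, so by uniqueness $q=0$.

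The cost accounting is the only step needing care. Reading the $c_k$ is free. Computing $q$ is nontrivial only on the $n-i$ non-pivot coordinates, because the pivot coordinates of $q$ are already known to be zero. Each non-pivot coordinate $l$ requires $q_l=g_l-\sum_{k=1}^i c_kA_{k,l}$, which is $i$ multiply-adds in $\mZ_M$, each costing $O(\polylog M)$. This gives $O((n-i)i~\polylog M)$ in total, and the final check $q=0$ costs $O(n-i)$ further comparisons.

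I expect the main obstacle to be bookkeeping rather than mathematics. One must argue explicitly that the pivot coordinates need no computation. Otherwise a naive bound of $O(ni)$ appears, which is weaker than the claimed $O((n-i)i)$ when $i$ is close to $n$.
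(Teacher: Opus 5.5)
Your proof is correct and follows the paper's argument. You read the coefficients off the pivot coordinates (the paper writes $c_k=g_{j_k}(A_{k,j_k})^{-1}$, which equals $g_{j_k}$ since RREF pivots are $1$) and prove uniqueness by comparing pivot coordinates; the only differences are that you derive the converse of the membership criterion from uniqueness instead of the paper's contradiction via a nonzero coefficient showing up in a pivot column, and that your cost count over the $n-i$ non-pivot coordinates is more explicit than the paper's, with any leftover $O(n)$ bookkeeping absorbed because $1\le i<n$ gives $n\le 2(n-i)i$.
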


\begin{proof}
    The coefficients can be obtained by $c_k= g_{j_k}\,(A_{k,j_k})^{-1}\!\!\pmod{M}$, and $((A_{k,j_k})^{-1}$ exists because $M$ is prime. Then $q=g-\sum_{k=1}^{i} c_k\,A_k$. 

    We next prove $g\in \mspan(A)\iff q=0$. $q=0$ leads to $g\in\mathrm{span}(A)$ is by definition. 
    Showing that $q\neq 0$ leads to $g\notin \mathrm{span}(A)$ is equivalent to showing $q\neq 0$ leads to $q\notin \mathrm{span}(A)$. Suppose there exists $q\neq 0$ that $q_{j_k}=0$, and $q\in\mathrm{span}(A)$. Then one can expand $q=\sum_k c'_k A_k$ and at least one $c_k'$ is nonzero. Take a nonzero $c_k'$, then $q_{j_k}\neq 0$, contradicts with $q_{j_k}=0$.  
    
     Finally, to prove the decomposition is unique, suppose there exist two distinct decompositions $g=\sum_k c_k A_k+q$ and $g=\sum_k c'_k A_k+q'$. That is, $\sum_{k=1}^i (c_k-c_k') A_k+(q-q')=0$. Consider the $j_k$-th component, $c_k A_{k,j_k}=c_k'A_{k,j_k}\pmod{M}$. Applying the inverse $A_{k,j_k}^{-1}$ on both sides lead to $c_k=c_k'$. Since it holds for all $k$, then $q=q'$. 
\end{proof}

We next show how to update the RREF incrementally when a new element is adjoined.

\begin{theorem}[Incremental Gauss--Jordan elimination]
\label{theorem: apx gauss-jordan}
    Let $M$ be prime, $1\leq i<n$, and $A\in\mZ_M^{i\times n}$ in RREF with no all-zero rows. For a
    new vector $b\in\mZ_M^n$, the matrix obtained by adjoining $b$ can be returned in RREF
    using $O((n-i)i~\polylog M)$ binary operations.
\end{theorem}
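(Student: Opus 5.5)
The plan is to reduce the new vector $b$ against the existing RREF rows, using the decomposition of \cref{prop:membership-testing-RREF-app}. The steps are as follows.

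First, compute the coefficients $c_k = b_{j_k}$ for $k=1,\dots,i$. Since $A$ is in RREF, every pivot equals $1$, so no inversion is needed here. Then form the residual
\begin{align}
q = b - \sum_{k=1}^{i} c_k A_k .
\end{align}
By construction, $q_{j_k}=0$ for every pivot column. If $q=0$, then $b\in\mspan(A)$, and we return $A$ unchanged, because appending $b$ and eliminating it would just produce a zero row, which we drop.

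Cost of this step. Entries of $q$ only need to be computed in the $n-i$ non-pivot columns, since the pivot columns are already known to vanish. Each such entry is a sum of $i$ products, so computing $q$ takes $O((n-i)i)$ arithmetic operations in $\mZ_M$. Each operation costs $O(\polylog M)$ binary operations. If we represent rows by their non-pivot parts, this fits the claimed bound.

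Next, suppose $q\neq 0$. Let $j^\ast$ be the first nonzero coordinate of $q$; it is necessarily a non-pivot column. Since $M$ is prime, we can compute $q_{j^\ast}^{-1}$ by the extended Euclidean algorithm in $\polylog M$ time. Set $r = q_{j^\ast}^{-1} q$, which has pivot $1$ at column $j^\ast$ and zeros at all old pivot columns.

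We then clear column $j^\ast$ in the old rows: replace $A_k$ by $A_k - A_{k,j^\ast}\, r$ for each $k$. These are elementary row operations, so the row span is preserved, and the result equals $\mspan(A)\cup\{b\}$ spanned. Each update touches only non-pivot columns of $r$, so it costs $O(n-i)$ operations per row, giving $O((n-i)i)$ overall. Normalising $r$ costs $O(n-i)$.

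Finally, insert $r$ among the rows in the position determined by $j^\ast$, which keeps the pivot columns increasing. This is an $O(i)$ reindexing and costs no arithmetic.

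It remains to check the RREF conditions of \cref{def: RREF}.

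The old pivots stay equal to $1$ and remain the unique nonzero entries of their columns. Indeed, $r_{j_k}=0$, so the updates do not change column $j_k$, and $r$ itself is zero there.

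The new pivot column $j^\ast$ has a $1$ in $r$ and zeros elsewhere after the clearing step.

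The leading entry of each updated old row is still at $j_k$. The only concern is a row with $k$ such that $j^\ast<j_k$, where the update could in principle create a nonzero entry before $j_k$. But $r$ is zero before $j^\ast$, and $A_k$ is zero before $j_k$. Since $A_{k,j^\ast}=0$ in that case (entries before a row's pivot vanish), the update does nothing to that row.

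The main obstacle I expect is the accounting that gives $O((n-i)i)$ rather than $O(ni)$. This requires arguing that pivot-column entries never need to be touched or stored explicitly. A sparse or non-pivot-column representation of the rows handles this; otherwise one states the bound under that representation, consistent with \cref{tab:classical}.
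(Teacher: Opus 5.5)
Your proof is correct and follows the paper's argument: you reduce $b$ against the existing rows to the residual $q$, then normalise $q$'s leading entry and clear that column in the old rows, with the same $O((n-i)i)$ operation count. Your direct choice $c_k=b_{j_k}$ agrees with the paper's sequential elimination, because RREF pivots equal $1$ and are the only nonzero entries in their columns; your explicit check of the RREF conditions is a welcome addition. The representation caveat is unnecessary: old pivot-column entries are never modified, and the $O(n)$ cost of reading $b$ and building $r$ is absorbed because $i(n-i)\ge n/2$ for $1\le i<n$.
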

\begin{proof}
    \emph{Procedure.} Denote the $k$-th row of $A$ as $A_k$ with pivot columns
    $j_1<\dots<j_i$. For each $k=1,\dots,i$ in increasing order, if $b_{j_k}\neq 0$ set
    $c_k\equiv -\,b_{j_k}\big(A_{k,j_k}\big)^{-1}\pmod{M}$ and replace $b$ by
    $b+c_k A_k$. This zeroes the entries of $b$ in the pivot columns of $A$ and turns
    $b$ into the residual $q$ of \eqref{eq: group elem decomposition primes}. If $q=0$ then
    $b\in \mspan(A)$ and the algorithm returns $A$; otherwise $q$ is span-increasing, and we
    append $q$, normalize its leading entry to a pivot, and clear that column in the other
    rows to restore RREF.

    \emph{Cost.} Each scalar operation in $\mZ_M$ costs $O(\polylog M)$. The first loop is
    $O(i)$ operations on length-$(n-i+1)$ vectors, i.e. $O((n-i) i~\polylog M)$; clearing the new pivot column costs $O((n-i)i~\polylog M)$; locating pivots and inserting $q$ in row
    order is of order $O(n ~\polylog M)$. The total is $O((n-i)i~\polylog M)$.
\end{proof}

\begin{remark}
    For $G=\mZ_2^n$ this simplifies: $A_{k,j_k}=1$, so $\big(A_{k,j_k}\big)^{-1}=1$
    and $-x\equiv x\pmod 2$, whence $c_k=b_{j_k}$.
\end{remark}

Adjoining one row at a time, a full Gauss--Jordan elimination of an $i\times n$ matrix
costs $O(n i^2~\polylog M)$ operations. From the RREF we can read off a generating set for the
kernel.

\begin{theorem}[Generating the kernel, $\mZ_M^n$ prime]
\label{theorem: apx kernel generation}
    Let $M$ be prime and $A\in\mZ_M^{i\times n}$, $i\le n$, in RREF with no all-zero rows.
    Its kernel $\ker(A)$ is generated by $n-i$ vectors with at most $i+1$ non-zero entries each,
    output in $O((n-i)n~\polylog M)$ binary operations, or in $O((n-i)i~\polylog M)$ binary operations if using sparse representation of the vectors.
\end{theorem}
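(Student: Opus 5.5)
The plan is the standard free-variable parametrization of the solution space of a linear system over the field $\mZ_M$. Let $A$ be in RREF with pivot columns $j_1<\dots<j_i$, and let $F=[n]\setminus\{j_1,\dots,j_i\}$ be the set of $n-i$ free columns. Since $A$ is in RREF, every pivot equals $1$ and is the unique nonzero entry of its column. Hence the equation $Ax=0$ reads, row by row,
\begin{align}
    x_{j_k} = -\sum_{f\in F} A_{k,f}\,x_f \pmod M, \qquad k=1,\dots,i .
\end{align}
For each free column $f\in F$, define $v^{(f)}\in\mZ_M^n$ by $v^{(f)}_f=1$, $v^{(f)}_{f'}=0$ for the other free columns $f'\neq f$, and $v^{(f)}_{j_k}=-A_{k,f}$ for each pivot column. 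By construction each $v^{(f)}$ satisfies the displayed equations, so it lies in $\ker(A)$. Its nonzero entries are confined to position $f$ and the $i$ pivot positions, so it has at most $i+1$ nonzero entries.

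To show these $n-i$ vectors generate $\ker(A)$, take any $x\in\ker(A)$ and set $y=x-\sum_{f\in F}x_f v^{(f)}$. Then $y\in\ker(A)$ and $y_f=0$ for every free column. Applying the displayed equations to $y$ gives $y_{j_k}=0$ for every $k$, so $y=0$. Thus $x$ is a $\mZ_M$-combination of the $v^{(f)}$, which is also an integer combination, so this is a generating set in the sense of Definition~\ref{def: span}. Linear independence also follows from the identity block on $F$, although it is not required.

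For the cost, the pivot columns and $F$ can be read off in one pass over the rows, with no more overhead than the output itself. In dense representation, writing each of the $n-i$ vectors of length $n$ costs $O((n-i)n)$ entry writes, and each negation in $\mZ_M$ costs $O(\polylog M)$. In sparse representation, only the $i+1$ potentially nonzero entries per vector are emitted, which gives $O((n-i)i~\polylog M)$ in total.

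The only delicate step is this bookkeeping. One must ensure that locating the pivots does not exceed the sparse budget. This holds if the RREF data structure maintained by \textsc{Update} stores the pivot indices; otherwise a scan costs $O(n)$ per row, which is absorbed whenever $i\ge 1$ and $n-i\ge 1$.
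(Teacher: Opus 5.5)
Your proof is correct and is essentially the paper's argument: one kernel vector per free column, with the pivot coordinates solved from the RREF rows, followed by the same ``determined by the free coordinates'' generation argument. (The paper writes the pivot entries as $-(A_{k,j_k})^{-1}A_{k,f_l}$, which equals your $-A_{k,f}$ because RREF pivots are $1$.)

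One small fix is needed in your cost remark. Scanning every row in full for its pivot costs $O(ni)$, and this is \emph{not} absorbed by $O((n-i)i)$ when $n-i$ is small. Instead, scan row $k$ only from column $j_{k-1}+1$. This finds all pivots in $O(n+i)$ reads in total, which is absorbed because $(n-i)i\ge n-1$ for $1\le i\le n-1$.
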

\begin{proof}
    Let $j_1,\dots,j_i$ be the pivot columns and $f_1,\dots,f_{n-i}$ the free columns, and
    denote the $k$-th row of $A$ as $A_k$. For each $l\in\{1,\dots,n-i\}$ define
    $v^{(l)}\in\mZ_M^n$ whose components are
    \begin{align}    
        v^{(l)}_j \defeq \begin{cases}
            1, & j=f_l\\
         -(A_{k,j_k})^{-1} A_{k,f_l} \pmod M & j=j_k, k \in [i],\\
            0 & \text{otherwise}.
        \end{cases}
    \end{align}
    Then $A v^{(l)}$ has component $(Av^{(l)})_k=A_{k,f_l}-\sum_{k'=1}^i A_{k,j_k'} (A_{k',j_k'})^{-1} A_{k',f_l}=0$, 
    so $v^{(l)}\in\ker(A)$. Any $x\in\ker(A)$ is determined by its free coordinates: fixing
    them, the equations $Ax\equiv0$ fix the pivot coordinates; and the $v^{(l)}$ realize the
    free unit vectors, so they generate $\ker(A)$ and are independent. Under the above construction, each $v^{(l)}$ has at most $i+1$ non-zero entries. Constructing them
    takes $O((n-i)i)$ arithmetic operations, i.e. $O((n-i)i~\polylog M)$ binary operations.
\end{proof}

\subsection{Howell normal form}
\label{apx: howell normal form}

In this section, we show the details and proofs for the Howell normal form algorithm~\cite{storjohann1998fast} and the kernel algorithm. 
We first introduce the basic operations of $\mathbb{Z}_M$~\cite{storjohann1998fast} with $M\in\mathbb{Z}_+$ that will be used in the algorithms. Denote $S=\{0,1,\cdots,M-1\}$. Given $a,b\in S$, the following basic operations can be achieved with binary operations $O(\mathrm{polylog}(M))$,
\begin{itemize}
    \item $\mathrm{Gcdex}(a,b)$: returns $g,s,t,u,v\in S$ such that $g=\gcd(a,b)$, $g=sa+tb,0=ua+vb$, and $sv-tu=1$ in $\mathbb{Z}_M$. ($u=-b/g$ and $v=a/g$)
    \item $\mathrm{Quo}(a,b)$: when $b\neq 0$, returns $q\in S$ such that $a-qb=r$ with $0\leq r <b$. 
    \item $\mathrm{Ann}(a)$: returns $c\in S$ such that $c=M/\gcd(a,M)$ in $\mathbb{Z}_M$. 
    \item $\mathrm{Unit}(a)$: when $a\neq 0$, returns $c\in S$ such that $\gcd(c,M)=1$ and $ca=\gcd(a,M)$ in $\mathbb{Z}_M$.
\end{itemize}

We comment that for $c=\mathrm{Unit}(a)$, $\langle ca\rangle=\langle a\rangle$ since $c$ is invertible in $\mathbb{Z}_M$ due to $\gcd(c,M)=1$. Below, we restate~\cref{theorem: howell algorithm} and show an explicit algorithm.

\begin{theorem}[Howell normal form algorithm~{\cite{storjohann1998fast}}, restating~\cref{theorem: howell algorithm}]
    Let $A\in\mathbb{Z}_M^{i\times n}$ be a matrix over $\mathbb{Z}_M$.
    Then, Algorithm~\ref{alg: howell algorithm} brings $A$ into Howell normal form using $O(n^2\max(n,i) )$ arithmetic operations.
\end{theorem}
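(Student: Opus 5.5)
We will follow Storjohann's construction and proceed column by column, maintaining an invariant that lets us both prove correctness (all five Howell conditions of \cref{def: howell normal form}) and count operations. First pad $A$ with zero rows so that it has at least $n$ rows, which is harmless because zero rows do not change $\mspan(A)$ and are discarded at the end. The algorithm has two phases: an echelon phase producing conditions 1--4, and a saturation phase enforcing the extended rows property 5. Throughout, we use only unimodular row operations: $2\times 2$ transformations $\begin{bmatrix} s & t\\ u & v\end{bmatrix}$ from $\mathrm{Gcdex}$ with $sv-tu=1$, scaling by $\mathrm{Unit}$, and adding multiples of one row to another. All of them preserve the row span, plus the creation of new rows that already lie in the span. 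Each such primitive costs $O(\polylog M)$ by the listed basic operations.

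\textbf{Echelon phase.} For the current column $j$ and current pivot row $k$, sweep over the rows $k'>k$. Apply the $\mathrm{Gcdex}$ transformation to the pair $(A_k,A_{k'})$ so that $A_{k,j}$ becomes $\gcd(A_{k,j},A_{k',j})$ and $A_{k',j}$ becomes $0$. Because the matrix has determinant $1$, the span is preserved. After the sweep, scale $A_k$ by $\mathrm{Unit}(A_{k,j})$, so that the pivot becomes $\gcd(A_{k,j},M)$, which divides $M$ (condition 3). Then reduce the entries above the pivot into $[0,A_{k,j})$ using $\mathrm{Quo}$, subtracting multiples of $A_k$ (condition 4). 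Each column costs $O(\max(n,i))$ row operations on length-$n$ vectors, so over $n$ columns this phase costs $O(n^2\max(n,i))$.

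\textbf{Saturation phase.} This is the step we expect to be the main obstacle. For each pivot row $k$, form the new row $\mathrm{Ann}(A_{k,j_k})\cdot A_k$. Its pivot entry vanishes modulo $M$, so it lies in the span and has zeros in the first $j_k$ columns. Append it and re-eliminate it against the rows below, recursively, using the same gcd sweeps restricted to the columns after $j_k$.

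We would prove property 5 by induction on the column index, from right to left. Take any $v\in\mspan(A)$ whose first $j_k$ coordinates are zero, and write $v=\sum_l x_l A_l$. By echelon structure, the coefficient $x_k$ must satisfy $x_k A_{k,j_k}\equiv 0 \pmod M$. Hence $x_k$ is a multiple of $\mathrm{Ann}(A_{k,j_k})$, so $x_kA_k$ lies in the span of the saturated row. By construction, that saturated row lies in $\langle A_{k+1},\dots\rangle$, and the same argument applies to rows $l<k$ by the induction. The delicate bookkeeping is to show that re-elimination never destroys earlier-established properties. We would handle this by performing saturation immediately after each column's echelon step, before moving right, so that later operations touch only rows below and columns to the right. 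Each saturated row costs $O(n)$ further operations per column, keeping the total within $O(n^2\max(n,i))$.

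Finally, drop zero rows (condition 1) and note that the pivots strictly increase (condition 2). Uniqueness and the cardinality formula $\prod_k M/d_k$ then follow from the remark after \cref{def: howell normal form}.
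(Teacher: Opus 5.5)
Your proposal is essentially sound, but it proves more than the paper's proof does, and for a differently organized algorithm.

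\textbf{What the paper proves.} The paper's proof of this statement is only the operation count for Algorithm~\ref{alg: howell algorithm}:
\begin{itemize}
\item Part A makes at most $n\cdot i$ calls to $\mathrm{Gcdex}$, each updating two length-$n$ rows. Since $i\ge n$ after padding, this costs $O(n^2\max(n,i))$.
\item Part B, for each $k$, makes $O(n)$ $\mathrm{Quo}$ reductions and $O(n)$ $\mathrm{Gcdex}$ updates against the spare row $A_{n+1}$, so it costs $O(n^3)$ in total.
\end{itemize}
Correctness is simply inherited from Storjohann.

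\textbf{What you prove instead.} You also argue correctness of conditions 1--5. You do so for a compact echelon form with a moving pivot row, with saturation interleaved column by column. Interleaving effectively means appending $\mathrm{Ann}(d_k)A_k$ to the pool of rows still to be processed. The pool then has at most $i+n$ rows, so your count still gives $O(n^2\max(n,i))$.

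\textbf{How Algorithm~\ref{alg: howell algorithm} differs.} It keeps a padded square matrix with one diagonal slot per column and triangularizes completely in Part A. Only then, row by row, does it normalize with $\mathrm{Unit}$ and reduce upward with $\mathrm{Quo}$. It then pushes into $A_{n+1}$ either $\mathrm{Ann}(A_{k,k})A_k$ or, when $A_{k,k}=0$, a copy of $A_k$, and eliminates that row by $\mathrm{Gcdex}$ against the diagonal entries $A_{k',k'}$. The diagonal slots let new pivots appear without inserting rows, which is the point you leave to ``recursive'' re-elimination, and they keep the count trivial. The price is that after Part A a row with $A_{k,k}=0$ may still be nonzero further right. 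For your correctness argument to cover the algorithm as stated, you must add that such a row, once its copy has been absorbed into the rows below, is reduced to zero by the later $\mathrm{Quo}$ steps. This is what makes the final ``move nonzero rows to the top'' produce strictly increasing pivots.

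\textbf{Two small repairs to your own argument.}
\begin{itemize}
\item The induction for property 5 runs left to right, not right to left as you wrote. The property for $k-1$ gives $v\in\langle A_k,A_{k+1},\dots\rangle$, and then column $j_k$ forces $x_kd_k\equiv 0 \pmod M$.
\item It is not literally true that later operations touch only rows below, since later $\mathrm{Quo}$ reductions modify $A_k$ in columns to the right. The saturation invariant survives because they change $A_k$ only by multiples of rows below $k$, so $\mathrm{Ann}(d_k)A_k$ remains in $\langle A_{k+1},\dots\rangle$.
\end{itemize}
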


\begin{algorithm}[t!]
  \caption{Howell normal form}
  \label{alg: howell algorithm}
  \DontPrintSemicolon          
  \SetAlgoLined                
  \SetKwInOut{Input}{Input}
  \SetKwInOut{Output}{Output}
  \BlankLine          

  \Input{A matrix $A\in\mathbb{Z}_M^{i\times n}$}
  \Output{The Howell normal form of $A$}

  \If{$i<n$}{augment $A$ with zero rows to make it square, and set $i$ to $n$. }

  \tcp{Part A: Put $A$ in upper triangular form}
  \For{$k=1$ \KwTo $n$}{
    \For{$k'=k+1$ \KwTo $i$}{
        $(g,s,t,u,v):=\mathrm{Gcdex}(A_{k,k},A_{k',k})$

        $
        \begin{pmatrix}
            A_k\\
            A_{k'}
        \end{pmatrix}:=
        \begin{pmatrix}
            s & t\\
            u & v
        \end{pmatrix}\begin{pmatrix}
            A_k\\
            A_{k'}
        \end{pmatrix}
        $
    }  
  }  
  \tcp{$A$ is now a square matrix with dimension $n$}
  \tcp{Part B: Put $A$ in Howell form. }
  Truncate rows $n+1$ to $i$ (which are zeros). Then augment $A$ with one zero row. \; 
  \For{$k=1$ \KwTo $n$}{
    \tcp{Part B.1}
    \If{$A_{k,k}\neq 0$}{
        \tcp{Make $A_{k,k}$ divides $M$}
        $A_k:=\mathrm{Unit}(A_{k,k}) A_k$\; 
        \For{$k'=1$ \KwTo $k-1$}{
            $A_{k'}:=A_{k'}-\mathrm{Quo}(A_{k',k},A_{k,k})A_{k}$
        }
        $A_{n+1}:=\mathrm{Ann}(A_{k,k}) A_k$
    }\Else{
    $A_{n+1}:=A_k$
    }
    \tcp{Part B.2: Enforce the extended row in $\langle A_{k+1},\cdots,A_n\rangle$}
    \For{$k'=k+1$ \KwTo $n$}{
        $(g,s,t,u,v):=\mathrm{Gcdex}(A_{k',k'},A_{n+1,k'})$

        $
        \begin{pmatrix}
            A_{k'}\\
            A_{n+1}
        \end{pmatrix}:=
        \begin{pmatrix}
            s & t\\
            u & v
        \end{pmatrix}\begin{pmatrix}
            A_{k'}\\
            A_{n+1}
        \end{pmatrix}
        $
    }
  }
  Move all nonzero rows to the top of $A$, and truncate the 
  zero rows.\;
  \Return{$A$.}
\end{algorithm}

The cost can be obtained by counting the number of operations in the algorithm. With the Howell normal form, we now state the procedure for membership testing and generating the kernel. 

\begin{proposition}[Membership testing]
\label{prop:membership-testing-HNF-app}
    Let $A \in \mathbb{Z}_M^{i \times n}$ in Howell normal form with pivot columns $j_1 < \cdots < j_{i}$. Then any $g \in \mathbb{Z}_M^n$ decomposes 
    uniquely as
    \begin{equation}
        g = \sum_{k=1}^{i} c_k A_k + q,
    \end{equation}
    where $c_k \in [M/A_{k,j_k}]$ for $k = 1, \ldots, i$, and $q_{j_k}\in [A_{k,j_k}]$ for $k = 1, \ldots, i$. 
    Moreover, $g\in \mspan(A)\iff q=0$. Such a decomposition costs $O(ni~\polylog M)$ binary operations. 
\end{proposition}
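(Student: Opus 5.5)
We prove existence, uniqueness, and the membership criterion using the Howell axioms (\cref{def: howell normal form}), with a back-substitution procedure defining the decomposition. Write $d_k\defeq A_{k,j_k}$, which divides $M$ by property~3.

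\emph{Existence.} We process the pivots in increasing order, with a running residual $r$ initialized to $g$. At step $k$, write $r_{j_k}=c_k d_k+\rho_k$ with $0\le\rho_k<d_k$ (a call to $\mathrm{Quo}$), and replace $r$ by $r-c_kA_k$. Property~4 and the echelon structure (property~2) show that later rows $A_{k'}$, $k'>k$, vanish in column $j_k$. Hence later subtractions never change coordinate $j_k$, so the final residual $q$ has $q_{j_k}=\rho_k\in[d_k]$. Since $c_k=\lfloor r_{j_k}/d_k\rfloor$ with $r_{j_k}<M$, we get $c_k<M/d_k$.

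The cost is $i$ row operations, each on length-$n$ vectors with $O(\polylog M)$ per scalar operation. This gives $O(ni~\polylog M)$.

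\emph{Uniqueness.} Suppose $\sum_k c_kA_k+q=\sum_k c'_kA_k+q'$ with both pairs satisfying the constraints. Let $k$ be the smallest index with $c_k\neq c'_k$. Look at column $j_k$. Rows $A_{k'}$ with $k'<k$ cancel, because their coefficients agree. Rows with $k'>k$ vanish in column $j_k$. We are left with $(c_k-c'_k)d_k\equiv q'_{j_k}-q_{j_k}\pmod M$.

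Here $|q'_{j_k}-q_{j_k}|<d_k$, and $(c_k-c'_k)d_k$ is a multiple of $d_k$ in $(-M,M)$. So both sides must be $0$ modulo $M$. This forces $c_k\equiv c'_k \pmod{M/d_k}$, and since both lie in $[M/d_k]$, we get $c_k=c'_k$, a contradiction. Hence all $c_k$ agree, and then $q=q'$.

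\emph{Membership criterion.} If $q=0$ then clearly $g\in\mspan(A)$. This is the main obstacle: the converse, and the one place where the extended rows property~5 is genuinely needed. We show that any $v\in\mspan(A)$ has a decomposition with residual $0$ and coefficients in the prescribed ranges. Uniqueness then forces $q=0$.

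We argue by induction on the first nonzero column of $v$. Let $k$ be the first pivot with $j_k\ge$ that column. Columns before $j_{k}$ that are not pivots must be zero for elements of $\mspan(A)$ beyond the first $k-1$ rows; to see this, apply property~5 to $v$ after it has been reduced by earlier rows. Then $v_{j_k}$ must be a multiple of $d_k$. If it were not, write $v=\sum_l a_lA_l$. The entries of $v$ before column $j_k$ are zero, so property~5 with index $k-1$ gives $v\in\langle A_k,\dots,A_i\rangle$. In column $j_k$ only $A_k$ contributes, so $v_{j_k}\in\langle d_k\rangle$.

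So set $c_k=v_{j_k}/d_k$, reduced modulo $M/d_k$. The reduction is legitimate because $(M/d_k)A_k\in\langle A_{k+1},\dots,A_i\rangle$ by property~5. Then $v-c_kA_k$ is zero through column $j_k$, and property~5 puts it in $\langle A_{k+1},\dots,A_i\rangle$, so we recurse. This produces a decomposition with $q=0$, completing the proof.
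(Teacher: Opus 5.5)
Your proof is correct and follows essentially the paper's route. Existence and cost come from the same greedy back-substitution, uniqueness from the same pivot-by-pivot comparison, and the converse of the membership criterion from the extended rows property applied one pivot at a time. The only difference is in how the converse is packaged. The paper shows directly that a residual with $q_{j_k}\in[A_{k,j_k}]$ that lies in $\mspan(A)$ must be $0$, whereas you build a zero-residual, in-range representation of every span element and then invoke uniqueness. Your sentence about non-pivot columns before $j_k$ is garbled, but it is not needed: applying property~5 with index $k-1$ only requires the first $j_{k-1}$ entries of $v$ to vanish, and that already holds.
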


\begin{proof}
    The coefficients $c_k$ can be obtained by the following procedure: set $q^{(0)}=g$; for $k=1$ to $i$, compute $c_k:=\lfloor q^{(k-1)}_{j_k}/A_{k,j_k}\rfloor$ and set $q^{(k)}= q^{(k-1)}-c_k A_k(\mathrm{mod} M)$. Finally, set $q=q^{(i)}$. This guarantees that $c_k \in [M/A_{k,j_k}]$ for $k = 1, \ldots, i$, and $q_{j_k}\in [A_{k,j_k}]$ for $k = 1, \ldots, i$, and also justifies the computational cost.

    We next prove $g\in \mspan(A)\iff q=0$. $q=0$ leads to $g\in\mathrm{span}(A)$ is by definition. Showing that $q\neq 0$ leads to $g\notin \mathrm{span}(A)$ is equivalent to showing $q\neq 0$ leads to $q\notin \mathrm{span}(A)$. Suppose there exists $q\neq 0$ that $q_{j_k}\in [A_{k,j_k}]$, and $q\in\mathrm{span}(A)$. Consider $q_{j_1}$, and because $j_1$ is the left-most pivot position, $q\in\mspan(A)$ leads to that $q_{j_1}$ must be a multiple of $A_{1,j_1}$, which together with $q_{j_1}\in [A_{1,j_1}]$, leads to $q_{j_1}=0$ and the first $j_1$ components of $q$ must also vanish. Therefore, $q\in\mspan(A_2,\cdots,A_i)$ by the extended-row property. Repeat the above reasoning on $j_2$ until $j_i$ leads to the conclusion that $q=0$. 

    Finally, to prove the decomposition is unique, suppose there exist two distinct decompositions $g=\sum_k c_k A_k+q$ and $g=\sum_k c'_k A_k+q'$. That is, $\sum_{k=1}^i (c_k-c_k') A_k+(q-q')=0$. Consider the $j_1$-th component, $c_1 A_{1,j_1}+q_{j_1}=c_1'A_{1,j_1}+q'_{j_1}\pmod{M}$. Since $c_1,c_1'\in[M/A_{1,j_1}]$ and $q_{j_1},q_{j_1}'\in[A_{1,j_1}]$, LHS and RHS are equal as integers (one can remove modulo $M$). Their Euclidean quotient and remainder upon division by $A_{1,j_1}$ are unique, so $c_1=c_1'$ and $q_{j_1}=q'_{j_1}$. One can next consider $\sum_{k=2}^i (c_k-c_k') A_k+(q-q')=0$ and take the $j_2$-th component. Repeating this procedure leads to $c_k=c_k'$ for all $k$, and therefore, $q=q'$.  
\end{proof}

The Howell normal form is also crucial for computing the kernel of the matrix $A$.

\begin{theorem}[Generating the kernel, $\mZ_M^n$]
\label{theorem: apx kernel-M}
    Let $M\in\mathbb{Z}_+$ and $A \in \mZ_{M}^{i\times n}$ with $i \leq n$, in Howell normal form with no all-zero rows.
    Its kernel $\mathrm{ker}(A)$ can be generated by $n$ vectors with $O(i)$ non-zero entries each, and there exists an algorithm that outputs these vectors in $O((ni^2+n^2)~\polylog M)$ classical binary operations, or in $O(n i^2~\polylog M)$ classical binary operations if using sparse representation of the vectors.
\end{theorem}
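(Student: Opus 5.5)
The plan is to compute $\ker(A)$ as the kernel of the augmented homogeneous system $\tilde A = \begin{bmatrix} A \\ \mathbf{0}\end{bmatrix}$, padded with zero rows to the $n\times n$ square matrix used by Algorithm~\ref{alg: howell algorithm}. In that square upper triangular Howell form, row $k$ is either a genuine row of $A$ with pivot at column $k$, or a zero row. In the latter case, column $k$ is a \emph{free} column. For every column $k\in\{1,\dots,n\}$ I will construct one kernel vector $v^{(k)}$, which gives $n$ vectors in total. These vectors are built by back-substitution, in the same spirit as the prime case of \cref{theorem: apx kernel generation}, with two changes: pivots need not be units, and non-pivot columns can carry torsion.

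The generator for column $k$ is defined as follows. If column $k$ carries no pivot, set $v^{(k)}_k=1$. If row $r$ has its pivot $d_r=A_{r,j_r}$ at column $k=j_r$, set $v^{(k)}_k=M/d_r$; this is the annihilator $\mathrm{Ann}(d_r)$. In both cases set all other free coordinates to $0$. Then solve for the remaining pivot coordinates $v^{(k)}_{j_{r'}}$, for pivot rows $r'$ above column $k$, from the bottom up. Each equation has the form $d_{r'}x_{j_{r'}} \equiv -\sum_{l>j_{r'}} A_{r',l}x_l \pmod M$.

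The key step, and the main obstacle, is showing that this congruence is always solvable, i.e.\ that its right-hand side is divisible by $d_{r'}$. I intend to get this from the extended-rows property of \cref{def: howell normal form}. Consider the partial vector $w$ obtained by restricting attention to rows below $r'$ and coordinates to the right of $j_{r'}$. The quantity $(M/d_{r'})A_{r'}$ lies in $\langle A_{r'+1},\dots\rangle$, which gives the required compatibility. Equivalently, the Howell form guarantees that the system is ``consistent column by column''. An alternative route is to pass to the dual statement: $x\in\ker A$ iff $\sum_l A_{r,l}x_l\equiv 0$ for every row. In that formulation, solvability of the triangular system at row $r'$ reduces to the fact that $d_{r'}\mid M$, together with the requirement that the residual entries lie in the ideal $(d_{r'})$. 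That requirement holds because Howell form makes every row combination vanishing on the first $j_{r'}-1$ columns have its $j_{r'}$ entry in $(d_{r'})$.

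Generation is then proved by a peeling argument, as in the uniqueness part of \cref{prop:membership-testing-HNF-app}. Take any $x\in\ker A$ and process its coordinates from the last to the first. At a free column $k$, subtract $x_k v^{(k)}$ to zero that coordinate. At a pivot column $j_r$, the row-$r$ equation together with the vanishing of the later coordinates forces $d_r x_{j_r}\equiv 0$, so $x_{j_r}$ is a multiple of $M/d_r$, and we subtract the corresponding multiple of $v^{(j_r)}$. Induction then shows that $x$ lies in the span of the $v^{(k)}$. Since each $v^{(k)}$ is supported on column $k$ and the at most $i$ pivot columns, it has at most $i+1$ nonzero entries. The total cost is $n$ back-substitutions, each using $O(i)$ equations with $O(i)$ sparse terms, i.e.\ $O(ni^2)$ ring operations, each costing $\polylog M$. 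Writing the vectors densely adds $O(n^2)$ entries, which gives the stated bounds.
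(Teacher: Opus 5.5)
Your proposal is correct and follows essentially the same route as the paper. It uses the same $n$ generators (a $1$ at each free column, $M/d_r$ at each pivot column, zeros at all other free coordinates and at later pivots, and back-substitution for the earlier pivots), the same solvability argument via the extended-rows property $\tfrac{M}{d_{r'}}A_{r'}\in\langle A_{r'+1},\dots,A_i\rangle$, and the same bottom-up peeling argument, sparsity bound and cost count. The one step to state precisely, as the paper does, is solvability: the partially built vector already satisfies every row below $r'$, so $\tfrac{M}{d_{r'}}\sum_{l>j_{r'}}A_{r',l}x_l\equiv 0 \pmod M$, i.e.\ $d_{r'}$ divides the right-hand side; your vaguer ``partial vector $w$'' phrasing and the ``alternative route'' via row combinations are not needed.
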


\begin{proof}
    For simplicity of notation, denote the pivot of row $k$ as $d_k$, and denote $\tilde{c}_k=M/d_k$. Denote the set of pivotal columns $P=\{j_1,\cdots,j_i\}$. Columns that are not pivot columns correspond to free variables, i.e., the set of free variables is $F=\{f_1,\cdots,f_{n-i}\}:=\{1,2,\cdots,n\}\setminus P$. Consider the following algorithm:

\begin{algorithm}[H]
  \caption{Generating set of $\ker(A)$}
  \label{alg:howell-kernel}
  \DontPrintSemicolon          
  \SetAlgoLined                
  \SetKwInOut{Input}{Input}
  \SetKwInOut{Output}{Output}
  \BlankLine          

  \Input{A matrix $A\in\mathbb{Z}_M^{i\times n}$ with $i\leq n$ in the Howell normal form with no zero rows, with pivotal set $P$ and free variable set $F$.}
  \Output{The generating set $S$ of $\ker(A)$. }
  $S=\{\}$. \;
 
  \If{$F$ is non-empty 
  }{ 
  \tcp{Part A: Obtain $n-i$ independent generators}
  \For{$l=1$ \KwTo $n-i$}{
    Set $x_{f_l}=1$ and $x_{f_{l'}}=0$ for all $l'\neq l$.\;
  \tcp{With chosen free variables $x_f,f\in F$, solve for $x_j$ with $j\in P$ from bottom to top}
  \For{$k=i$ \KwTo $1$}{
  \tcp{Solve the kernel condition for row $k$}
    Compute $x_{j_k}=-\frac{1}{d_k}(\sum_{j>j_k}A_{kj}x_j)$. 
  }
  Append $x$ to $S$. 
  }
 }
 
 \If{$P$ is non-empty}{ 
 \tcp{Part B: Obtain $i$ generators}
  \For{$l=i$ \KwTo $1$}{
    Set $x_f=0$ for all $f\in F$. Set $x_{j_{l'}}=0$ for all $l'>l$, and set $x_{j_l}=\tilde{c}_l$. \;
    \For{$k=l-1$ \KwTo $1$}{
        Compute $x_{j_k}=-\frac{1}{d_k}(\sum_{j>j_k}A_{kj}x_j)$. 
    }
  Append $x$ to $S$. 
  }}
  
  \Return{$S$.} 
\end{algorithm} 

To see the validity of the algorithm, note that at iteration $k$, $x$ has satisfied the kernel conditions for rows below $k$, i.e., $A_{k'}\cdot x=0\pmod M$ for all $k'>k$. By extended row property, $\tilde{c}_k A_k\in \langle A_{k+1}\cdots, A_i\rangle$, leading to $\tilde{c}_k A_k\cdot x=0\pmod M$, which is
    $
    \tilde{c}_k \left(d_k x_{j_k}+\sum_{j>j_k}A_{kj} x_j \right)=0\pmod M 
    $. In particular, it leads to $\sum_{j>j_k}A_{kj} x_j=0\pmod d_k$, and therefore $\frac{1}{d_k} \sum_{j>j_k}A_{kj} x_j$ must be an integer. 
Now solve the kernel condition for row $k$, $d_k x_{j_k}+\sum_{j>j_k} A_{kj}x_j=0\pmod M$ leads to the solution for $x_{j_k}$ as $x_{j_k}=-\frac{1}{d_k}(\sum_{j>j_k}A_{kj}x_j)+q \tilde{c}_k$, with $q=0,1,\cdots,d_k-1$. We choose $q=0$ for simplicity.

To see that the output generates the whole of $\ker(A)$, take any $v\in \ker(A)$. Subtracting $\sum_l v_{f_l} x^{(l)}$ where $x^{(l)}$ denotes the Part-A generator with unit coordinate at $f_l$, yields a kernel element $v$ supported on the pivot coordinates. We reduce $v$ to $\bzr$ from the bottom pivot upward. At stage $l=i,\cdots,1$, suppose $v\in \ker(A)$ vanishes on the free coordinates and on the pivots $j_{l'}$ with $l'>l$. The kernel condition of row $l$ reads $d_l v_{j_l}+\sum_{j>j_l} A_{lj} v_j=0$, leading to $d_l v_{j_l}=0$, i.e., $v_{j_l}$ is a multiple of $\tilde{c}_l$. Subtracting the corresponding multiple of Part-B generator $x^{(l)}$ (generated at iteration $l$ of the outer loop) makes $v_{j_l}=0$. After stage $l=1$ we are left with $v=\bzr$, so the $n$ output vectors generate $\ker(A)$.

From the counting in the algorithm, a direct result is: $\mathrm{ker}(A)$ is a subgroup of $\mathbb{Z}_M^n$ with cardinality $|\mathrm{ker}(A)|=M^{|F|}\prod_{k=1}^i d_k$. 

\textit{Cost.} Each vector $x$ produced above is of sparsity $O(i)$, with at most $i+1$ non-zero elements. Part A costs $O((n-i)i^2)$ arithmetic operations and part B costs $O(i^3)$. The total cost is then $O(n i^2)$ arithmetic operations or $O(n i^2~\polylog M)$ binary operations. Since there are $n$ vectors $x$, the cost of vector initialization in sparse representation is $O(ni)$ or $O(n^2)$ in dense representation. 
\end{proof}

The $n-i$ generators from part A and the $i$ generators from part B provide us a generating set with size $n$. We note that the $i$ generators from part B may not be independent (the generating set may be redundant), nevertheless, they are able to generate $\prod_{k=1}^i d_k$ different elements. For example, consider $M=4$, and when
\begin{equation}
A=\begin{pmatrix}
    2 & 1\\
    0 & 2
\end{pmatrix}
\end{equation}
the kernel generators from part B are $x=(2,0)$ and $x'=(1,2)$. They are not independent, since $x=2\cdot x'$. Yet together, they generate all 4 elements of the kernel, as the counting predicts.

\section{Fixed point amplitude amplification}
\label{apx: fixed point amp amp}
In this section, we connect the fixed point amplitude amplification to singular value decomposition, following the treatments of Refs.~\cite{yoder2014fixed,gilyen2019quantum}. We will first summarize the useful theorems, then present and prove the main result~\cref{prop:amp-psi0}. 

Given a polynomial $P\in\mathbb{C}[x]$ and an operator $A$, if $P$ is an odd polynomial and $A=W\Sigma V^\dagger$ is a singular value decomposition (SVD), then $P^{(SV)}(A):=W P(\Sigma) V^\dagger$. In Ref.~\cite{gilyen2019quantum}, it has been shown that: if $A$ allows a projected unitary encoding $A:=\tilde{\Pi} U \Pi$ where $U$ is a unitary and $\tilde{\Pi},\Pi$ are orthogonal projectors, and if $P\in\mathbb{R}[x]$ is a degree-$m$ odd polynomial that is bounded by 1 in absolute value on $[-1,1]$, then one can obtain the corresponding phases $\Phi\in\mathbb{R}^m$ and implement a unitary $U_\Phi$ such that $P^{(SV)}(A)=\tilde{\Pi}U_\Phi \Pi$. For our purpose, we will focus on the case where $P$ is a polynomial approximation of the sign function up to a phase, which we call the amplification function.

\begin{proposition}[Explicit phases for amplification function]
\label{prop:explicit-phase}
    Let $0<\gamma<1$ and $0<\delta<1$. Define a parametrized single-qubit reflection operator for all $x\in[-1,1]$ 
    \begin{equation}
        R(x):=\begin{pmatrix}
            x & \sqrt{1-x^2}\\
            \sqrt{1-x^2} & -x
        \end{pmatrix}.
    \end{equation}
    Then there exists an odd integer $m=O(\frac{\log(1/\delta)}{\gamma})$ and $\Phi=(\phi_1,\phi_2,\cdots,\phi_m)\in\mathbb{R}^m$ such that the polynomial defined by $P_\Phi(x):=\langle 0|M_\Phi(x)|0\rangle$ with 
    \begin{equation}
    \label{eqn:define-MPhi}
        M_\Phi(x):=\prod_{j=1}^m\left(e^{i\phi_j \sigma_z}R(x)\right)
    \end{equation}
    satisfies: (1) it is an odd polynomial with degree at most $m$; (2) $|P_\Phi(x)|\leq 1$ for $x\in[-1,1]$; and (3) $|P_\Phi(x)|^2\geq 1-\delta$ for $\gamma\leq |x|\leq 1$. The phases can be computed classically using $O(m)$ elementary-function evaluations, or $O(m~\polylog(m/\delta))$ binary operations. 
\end{proposition}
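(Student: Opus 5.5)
The plan is to use the Yoder--Low--Chuang fixed-point search sequence rather than a generic QSP synthesis, because it gives closed-form phases. Before that, I establish two structural properties for \emph{any} choice of phases.

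\emph{Structure of the product.} Each factor $e^{i\phi_j\sigma_z}R(x)$ is a product of unitaries, so $M_\Phi(x)$ is unitary for every $x\in[-1,1]$. Hence $|P_\Phi(x)|\le 1$, which is property (2).

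\emph{Degree and parity.} I prove by induction on $m$ that the entries of $M_\Phi(x)$ have the form
\begin{equation}
\begin{pmatrix} p(x) & q(x)\sqrt{1-x^2}\\ \tilde q(x)\sqrt{1-x^2} & \tilde p(x)\end{pmatrix}
\end{equation}
for $m$ odd, with $\deg p\le m$, $p$ of parity $m$, and $q$ of parity $m-1$. This is the standard Wx-type quantum signal processing lemma of \cite{gilyen2019quantum}, adapted to the reflection convention $R(x)=\sigma_z\cdot(\text{Wx rotation})$. Multiplying by $R(x)$ raises the degree by one and flips the parity, using $(\sqrt{1-x^2})^2=1-x^2$. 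So for odd $m$, $P_\Phi$ is an odd polynomial of degree at most $m$, which is property (1).

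\emph{Amplification target.} I then choose the phases so that $|P_\Phi(x)|^2 = 1-\delta\,T_L\!\big(T_{1/L}(1/\sqrt{\delta})\sqrt{1-x^2}\big)^2$ with $L=m$. This is the fixed-point function of \cite{yoder2014fixed}, written in terms of Chebyshev polynomials. Yoder et al.\ give the phases explicitly:
\begin{equation}
\phi_j \propto \operatorname{arccot}\!\big(\tan(2\pi j/L)\sqrt{1-w^2}\big), \qquad w^{-1}=T_{1/L}(1/\sqrt\delta).
\end{equation}
Each $\phi_j$ costs $O(1)$ elementary-function evaluations, so $O(m)$ in total, and $O(m\,\polylog(m/\delta))$ bit operations at the precision needed. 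The bound $|T_L(y)|\le1$ for $|y|\le1$ gives $1-|P|^2\le\delta$ whenever $w\sqrt{1-x^2}\le 1$, i.e.\ whenever $|x|\ge\sqrt{1-w^2}$. Using $T_{1/L}(1/\sqrt\delta)=\cosh(\cosh^{-1}(1/\sqrt\delta)/L)$, the condition $\sqrt{1-w^2}\le\gamma$ holds once $L\ge \log(2/\sqrt\delta)/\gamma$, so $m=O(\log(1/\delta)/\gamma)$. I round $L$ up to an odd integer, which gives property (3).

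\emph{Main obstacle.} The hardest step is translating the Yoder et al.\ iterate, which alternates generalized reflections $S_s(\alpha_j)$, $S_t(\beta_j)$ about the initial and target states, into the single-qubit product $\prod_j e^{i\phi_j\sigma_z}R(x)$ with a \emph{single} phase per reflection. I will do this in the two-dimensional invariant subspace, where $x$ is the singular value. Each Grover-type step $S_s S_t$ corresponds to two factors $e^{i\phi\sigma_z}R(x)$, since $R(x)^2=I$ and conjugating by $z$-phases reproduces the phased reflections. The odd endpoint is absorbed by the leading factor. In this step I will check that $|P_\Phi|$, and not only $\operatorname{Re}P_\Phi$, equals the Yoder amplitude, since phase conventions may differ by a global $e^{i\theta}$. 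If this bookkeeping fails, the fallback is to invoke the general existence theorem of \cite{gilyen2019quantum} for the odd sign-approximating polynomial of degree $O(\log(1/\delta)/\gamma)$, accepting a phase-finding cost of $O(m\,\polylog(m/\delta))$.
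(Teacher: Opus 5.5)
Your plan is the paper's proof. Property (2) comes from unitarity of $M_\Phi(x)$, property (1) from the parity structure of the product (your induction is more explicit than the paper's one-line claim), and property (3) from the Yoder--Low--Chuang closed-form phases and their Chebyshev residual. The only difference is which end of the YLC trade-off you fix:
\begin{itemize}
\item You fix the residual, $\delta_{\mathrm{YLC}}=\sqrt{\delta}$. You then show that the threshold $\sqrt{1-w^2}$ falls below $\gamma$ once $L\ge\ln(2/\sqrt{\delta})/\gamma$, using $\cosh^{-1}y\le\ln 2y$ and $\tanh^{-1}\gamma\ge\gamma$.
\item The paper fixes $\gamma_{\mathrm{YLC}}=\sqrt{1-\gamma^2}$. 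The phases are then built from $2\operatorname{arccot}(\gamma\tan(2\pi j/m))$ and the threshold is exactly $\gamma$. It then bounds the residual by $1/T_m(1/\sqrt{1-\gamma^2})^2\le 4e^{-2m\gamma}$.
\end{itemize}
The bookkeeping you call the main obstacle is exactly what the paper's explicit vector $\Phi=(0,-\alpha_l/2,\beta_l/2,\dots,-\alpha_1/2,\beta_1/2)$ encodes. Since $I-(1-e^{i\theta})\Pi=e^{i\theta/2}e^{i(\theta/2)(2\Pi-I)}$, each YLC generalized reflection becomes a single phase equal to half its (signed) angle, and the leading $\tilde\Pi$-phase is $0$. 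The paper does not carry this out further; it cites YLC for the resulting identity.

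What you leave unproved is the $O(m\,\mathrm{polylog}(m/\delta))$ bit-cost claim. Properties (1) and (2) hold for any phases, but (3) is established only for exact phases. You therefore need a perturbation bound for rounded ones. The paper gets it by telescoping: if $|\phi_j-\hat\phi_j|\le 2^{-p}$ for all $j$, then $\|M_\Phi(x)-M_{\hat\Phi}(x)\|\le m2^{-p}$, hence $\bigl||P_\Phi(x)|^2-|P_{\hat\Phi}(x)|^2\bigr|\le 2m2^{-p}$.

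In your parametrization the exact residual already equals $\delta$ at the threshold, so there is no room to absorb this rounding error. Target a smaller residual instead, for example $\delta_{\mathrm{YLC}}=\sqrt{\delta}/2$; this only changes the requirement to $L\ge\ln(4/\sqrt{\delta})/\gamma$. Then take $p=\lceil\log_2(4m/\delta)\rceil$. This reproduces the paper's $\delta/4+\delta/2$ error budget.

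Your fallback also would not give the stated cost. Generic QSP phase-finding for a sign approximation is more expensive; the paper quotes $O(m^3\,\mathrm{polylog}(m/\delta))$. The claimed bit complexity genuinely relies on the closed-form phases.
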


\begin{proof}
We prove by construction, that one can construct the phases such that $P_\Phi(x)$ is a polynomial approximation of the sign function up to a phase for the interval $x\in[-1,1]$. Choose the smallest odd integer $m\geq \frac{\log(4/\sqrt{\delta})}{\gamma}$ and denote $l=(m-1)/2$. Define 
    \begin{equation}
        \alpha_j = 2\arccot(\gamma \tan\frac{2\pi j}{m} ),\quad \beta_j = -\alpha_{l-j+1}, 
    \end{equation}
    and construct $\Phi=(0,-\frac{\alpha_l}{2},\frac{\beta_l}{2},\cdots,-\frac{\alpha_1}{2},\frac{\beta_1}{2})$. We now prove it satisfies the three properties. 

    (1) To show $P_\Phi$ is an odd polynomial degree of at most $m$, we note that using the explicit form of $R(x)$, one can show 
    $
    P_\Phi(x)=\sum_{j=0}^{(m-1)/2} c_j x^{m-2j} (1-x^2)^j
    $
    for some constants $c_j\in\mathbb{C}$. Each summand has degree at most $m$ and is odd. 

    (2) To show $|P_\Phi(x)|\leq 1$ on $[-1,1]$, note that $M_\Phi(x)$ is unitary on $[-1,1]$, and then $\langle 0|M_\Phi(x) M_\Phi^\dg(x)|0\rangle=1$, leading to $|P_\Phi(x)|^2+|\langle 0|M_\Phi(x)|1\rangle|^2=1$ and thus $|P_\Phi(x)|\leq 1$. 

    (3) To show $|P_\Phi(x)|^2\geq 1-\delta$ for $\gamma\leq |x|\leq 1$, note that for $0\leq x\leq 1$, 
    \begin{equation}
    \label{eqn:1-P}
        1-|P_\Phi(x)|^2=\frac{T_m\left(\frac{\sqrt{1-x^2}}{\sqrt{1-\gamma^2}}\right)^2}{T_m\left(\frac{1}{\sqrt{1-\gamma^2}}\right)^2}
    \end{equation}
    which is proven in Yoder-Low-Chuang~\cite{yoder2014fixed} (under the substitutions $L=m,\lambda=x^2,\gamma_{\mathrm{YLC}}=\sqrt{1-\gamma^2},\delta_{\mathrm{YLC}}=1/T_m(1/\sqrt{1-\gamma^2})$), and $T_m$ is the Chebyshev polynomial of the first kind. Now, for $\gamma\leq x\leq 1$, $0\leq \frac{\sqrt{1-x^2}}{\sqrt{1-\gamma^2}}\leq 1$, and for $x\in[-1,1]$, $T_m(x)=\cos(m \cos^{-1} (x))$, leading to $|T_m(x)|\leq 1$ for $x\in[-1,1]$. Therefore, the numerator of Eq.~\eqref{eqn:1-P} is at most 1. 

    For the denominator, $T_m(1/\sqrt{1-\gamma^2})=\cosh(m \tanh^{-1} \gamma)$. Using that $\tanh^{-1}(x)\geq x$ for $x\in[0,1]$, we have $T_m(1/\sqrt{1-\gamma^2})=\cosh(m \tanh^{-1} \gamma)\geq \cosh(m\gamma)\geq \frac{1}{2}e^{m\gamma}$. Therefore, $1-|P_\Phi(x)|^2\leq 4 e^{-2m\gamma}\leq \frac{\delta}{4}\leq \delta$. Finally, oddness of $P_\Phi$ gives $|P_\Phi(-x)|=|P_\Phi(x)|$ so the same bound holds for $-1\leq x\leq -\gamma$. This finishes the proof of the properties of $P_\Phi$. 

    Now, consider the case that we output each $\phi_i$ to $p$ bits, which costs in total $\tilde{O}(m(p+\log m))$. We denote the output phase $\hat{\phi}_i$, and $|\phi_i-\hat{\phi}_i|\leq 2^{-p}$, leading to $\|e^{i\phi_i \sigma_z}-e^{i\hat{\phi}_i \sigma_z}\|\leq 2^{-p}$, where the norm is operator norm (the Schatten infinite norm). 
    Using telescoping, one can show $\|M_\Phi - M_{\hat{\Phi}}\|\leq m 2^{-p}$. Since the operator norm of an operator is the largest singular value, we have $|P_\Phi-P_{\hat{\Phi}}|=|\langle 0| M_\Phi - M_{\hat{\Phi}}|0\rangle|\leq \|M_\Phi - M_{\hat{\Phi}}\|\leq m 2^{-p}$. And $| |P_{\Phi}|^2-|P_{\hat{\Phi}}|^2 |=(|P_{\Phi}|+|P_{\hat{\Phi}}|)(| |P_{\Phi}|-|P_{\hat{\Phi}}| |)\leq 2 | P_{\Phi}-P_{\hat{\Phi}}|\leq 2m 2^{-p}$. 
    Therefore, 
    \[
    1-|P_{\hat{\Phi}}|^2=1-|P_{\Phi}|^2 + (|P_{\Phi}|^2-|P_{\hat{\Phi}}|^2)\leq \frac{\delta}{4} + 2m 2^{-p}. 
    \]
    By choosing $p=\lceil \log_2 \frac{4m}{\delta}\rceil$, we obtain $1-|P_{\hat{\Phi}}|^2\leq \frac{\delta}{4}+\frac{\delta}{2} \leq\delta$, that is, all the properties still apply to $P_{\hat{\Phi}}$. This finishes the proof of classical binary operations being $O(m~\polylog(m/\delta))$. 
\end{proof}

We comment that the above proposition constructs the phases $\Phi$ explicitly for the amplification function $P_\Phi$. For a generic function $P_R\in\mathbb{R}[x]$, Ref.~\cite{gilyen2019quantum} provides a method to find the corresponding phases $\Phi$ for an approximate function $P\in\mathbb{C}[x]$. By a careful error propagation analysis, when requiring $|P_R-\mathrm{Re}[P]|<\delta$, the classical cost of finding the phases is $O(m^3~\polylog (m/\delta))$. In this work, we focus on the amplification function which has a lower classical computational cost of $O(m~\polylog (m/\delta))$. 

\begin{definition}[Phased alternating sequence]
    Let $\mathcal{H}$ be a finite-dimensional Hilbert space and let $U,\Pi,\tilde{\Pi}\in\mathrm{End}(\mathcal{H})$ be linear operators on $\mathcal{H}$ such that $U$ is a unitary, and $\Pi,\tilde{\Pi}$ are orthogonal projectors. Let $\Phi=(\phi_1,\phi_2,\cdots,\phi_m)\in\mathbb{R}^m$ with $m$ being an odd integer. We define the phased alternating sequence $U_\Phi$ as
    \begin{equation}
        U_\Phi=e^{i\phi_1 (2\tilde{\Pi}-\bo)} U \prod_{j=1}^{(m-1)/2} \left(e^{i\phi_{2j}(2\Pi-\bo)}U^\dagger  e^{i\phi_{2j+1}(2\tilde{\Pi}-\bo)}U\right).
    \end{equation}
\end{definition}

The unitary $e^{i\phi (2\Pi-\bo)}$ can be implemented using a single ancilla qubit, two $\C_\Pi\NOT$ gates, and an $e^{-i\phi\sigma_z}$ gate, and similarly for $e^{i\phi (2\tilde{\Pi}-\bo)}$. 

The following theorem states that when a polynomial $P$ can be expressed as $P=\langle 0|M_\Phi|0\rangle$ with $M_\Phi$ in Eq.~\eqref{eqn:define-MPhi}, then $\tilde{\Pi} U_\Phi \Pi$ performs the singular value transformation $P^{(SV)}(\tilde{\Pi}U\Pi)$. 
\begin{theorem}
\label{thm:U_Phi}
    Let $\mathcal{H}$ be a finite-dimensional Hilbert space and let $U,\Pi,\tilde{\Pi}\in\mathrm{End}(\mathcal{H})$ be linear operators on $\mathcal{H}$ such that $U$ is a unitary, and $\Pi,\tilde{\Pi}$ are orthogonal projectors. Let $\Phi\in\mathbb{R}^m$ and $P=\langle 0|M_\Phi|0\rangle$ with $M_\Phi$ defined in Eq.~\eqref{eqn:define-MPhi}, and $m$ being an odd integer. Then,
    \begin{equation}
        P^{(SV)}(\tilde{\Pi}U\Pi)=\tilde{\Pi} U_\Phi \Pi
    \end{equation}
    with $U_\Phi$ being the phased alternating sequence of $\Phi$. 
\end{theorem}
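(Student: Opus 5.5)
The argument is the standard QSVT one: decompose $\mathcal{H}$ into invariant subspaces on which $U$, $\Pi$, $\tilde{\Pi}$ act like the single-qubit matrices in $M_\Phi(x)$, and check the identity one singular value at a time. Write the SVD of $A=\tilde{\Pi}U\Pi$ as $A=\sum_k \sigma_k |\tilde{\psi}_k\rangle\langle\psi_k|$. The right singular vectors $|\psi_k\rangle$ span $\mathrm{im}\,\Pi$ (together with kernel vectors where $\sigma_k=0$), and the left singular vectors $|\tilde\psi_k\rangle$ lie in $\mathrm{im}\,\tilde{\Pi}$. For each $k$ with $0<\sigma_k<1$ we set
\begin{align}
U|\psi_k\rangle=\sigma_k|\tilde\psi_k\rangle+\sqrt{1-\sigma_k^2}\,|\tilde\psi_k^\perp\rangle, \qquad U^\dagger|\tilde\psi_k\rangle=\sigma_k|\psi_k\rangle+\sqrt{1-\sigma_k^2}\,|\psi_k^\perp\rangle,
\end{align}
where $|\tilde\psi_k^\perp\rangle\in\ker\tilde\Pi$ and $|\psi_k^\perp\rangle\in\ker\Pi$ are normalized. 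This is the cosine--sine (Jordan's lemma) decomposition.

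The main step is to show that the pairs $\mathcal{V}_k=\mathrm{span}\{|\psi_k\rangle,|\psi_k^\perp\rangle\}$ and $\tilde{\mathcal{V}}_k=\mathrm{span}\{|\tilde\psi_k\rangle,|\tilde\psi_k^\perp\rangle\}$ behave as required:
\begin{itemize}
\item $U$ maps $\mathcal{V}_k$ onto $\tilde{\mathcal{V}}_k$, and in these bases its matrix is exactly $R(\sigma_k)$;
\item $2\Pi-\bo$ restricted to $\mathcal{V}_k$ and $2\tilde\Pi-\bo$ restricted to $\tilde{\mathcal{V}}_k$ both act as $\sigma_z$, so $e^{i\phi(2\Pi-\bo)}$ and $e^{i\phi(2\tilde\Pi-\bo)}$ act as $e^{i\phi\sigma_z}$.
\end{itemize}
For the first point, compute $U|\psi_k^\perp\rangle$ from unitarity and from $\langle\tilde\psi_k|U|\psi_k^\perp\rangle=\langle\psi_k^\perp|U^\dagger|\tilde\psi_k\rangle^*=\sqrt{1-\sigma_k^2}$. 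This gives $U|\psi_k^\perp\rangle=\sqrt{1-\sigma_k^2}\,|\tilde\psi_k\rangle-\sigma_k|\tilde\psi_k^\perp\rangle$, up to a sign fixed by the choice of $|\tilde\psi_k^\perp\rangle$. Since $R$ is real symmetric, $U^\dagger$ restricted to $\tilde{\mathcal{V}}_k$ also has matrix $R(\sigma_k)$.

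With this in hand, we multiply out $U_\Phi$ factor by factor. The product starts on $\mathcal{V}_k$, alternates between $\tilde{\mathcal{V}}_k$ and $\mathcal{V}_k$, and ends on $\tilde{\mathcal{V}}_k$ because $m$ is odd. Its matrix is therefore exactly $M_\Phi(\sigma_k)$. Projecting with $\tilde\Pi$ on the left and $\Pi$ on the right extracts the $\langle0|\cdot|0\rangle$ entry, so
\[
\tilde\Pi U_\Phi\Pi|\psi_k\rangle = P(\sigma_k)|\tilde\psi_k\rangle .
\]

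Three degenerate cases remain, and handling them is the main obstacle.
\begin{itemize}
\item $\sigma_k=1$: here $U|\psi_k\rangle=|\tilde\psi_k\rangle$ and the subspace is one-dimensional. Each reflection contributes the scalar phase $e^{i\phi_j}$, so the total is $e^{i\sum_j\phi_j}$. We must check that this equals $P(1)=\langle0|\prod_j e^{i\phi_j\sigma_z}\sigma_z|0\rangle$, which holds because $R(1)=\sigma_z$.
\item $\sigma_k=0$, i.e.\ $|\psi\rangle\in\mathrm{im}\,\Pi\cap\ker A$: then $U|\psi\rangle\in\ker\tilde\Pi$. The sequence has odd length, so the output stays in $\ker\tilde\Pi$ after the last reflection, giving $0=P(0)$ since $P$ is odd.
\item Vectors in $\ker\Pi$: both sides annihilate them, by the definition of $P^{(SV)}$ and by the trailing $\Pi$.
\end{itemize}

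Finally, we sum over $k$. By linearity, $\tilde\Pi U_\Phi\Pi=\sum_k P(\sigma_k)|\tilde\psi_k\rangle\langle\psi_k|=P^{(SV)}(A)$.
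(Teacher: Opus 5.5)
Your proof is correct. The paper gives no proof of this statement and simply quotes it from \citet{gilyen2019quantum}, whose argument is the one you give: reduce to the two-dimensional pairs $\mathcal{V}_k\to\tilde{\mathcal{V}}_k$ on which $U$ acts as $R(\sigma_k)$ and the phases act as $e^{i\phi\sigma_z}$, multiply out to obtain $M_\Phi(\sigma_k)$, and handle $\sigma_k\in\{0,1\}$ and $\ker\Pi$ separately. The hedge ``up to a sign'' is unnecessary: substituting $|\psi_k^\perp\rangle=(U^\dagger|\tilde\psi_k\rangle-\sigma_k|\psi_k\rangle)/\sqrt{1-\sigma_k^2}$ gives $U|\psi_k^\perp\rangle=\sqrt{1-\sigma_k^2}\,|\tilde\psi_k\rangle-\sigma_k|\tilde\psi_k^\perp\rangle$ exactly, so the matrix is $R(\sigma_k)$ with no sign choice needed.
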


We note that the notations we adopt in this section are different from~\cite{gilyen2019quantum} by replacing $\delta$ with $\gamma$, and $\epsilon$ with $\delta$. 
With these theorems, we are now ready to present and prove the main result for our purpose. The idea is similar to fixed-point amplitude amplification, while the technical details are different. In particular, the projector $\tilde{\Pi}$ onto the ``good'' state is constructed through a flag qubit, so the construction of $\tilde{\Pi}$ does not require knowing $|\psi_G\rangle$. Furthermore, our construction strengthens~\cite{gilyen2019quantum} by showing that if the flag qubit is measured 0, the data register is not only close to the good state, but is \textit{exactly} the good state. This guarantee is important for our application to the state HSP problem. 

\begin{theorem}
\label{prop:amp-psi0}
    Given state $|\psi_0\rangle=|0\rangle^{\otimes n}\otimes|0\rangle$ and $U$ such that
    \begin{equation}
    U|\psi_0\rangle=a|\psi_G\rangle\otimes |0\rangle+\sqrt{1-a^2}|\psi_B\rangle\otimes|1\rangle,
\end{equation}
with $a\geq\gamma>0$.
Then there is an integer $m=O(\frac{\log(1/\delta)}{\gamma})$ with $\delta\in(0,1)$, and a $\Phi\in\mathbb{R}^m$ with a corresponding unitary $U_\Phi$, such that if measuring the second register of $U_\Phi|\psi_0\rangle$ yields 0, then the state is $|\psi_G\rangle$ (up to an irrelevant global phase). The success probability of measuring 0 in the second register is lower bounded by $1-\delta$. $U_\Phi$ can be implemented using a single ancilla qubit, with $m$ uses of $U,U^\dagger$, $m$ uses of $\C_\Pi\NOT$ and $\C_{\tilde{\Pi}}\NOT$, and $m$ single qubit gates, where $\Pi=|\psi_0\rangle\langle\psi_0|$ and $\tilde{\Pi}=\bo\otimes |0\rangle\langle 0|$. 
\end{theorem}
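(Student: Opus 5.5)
Throughout, $A\defeq\tilde\Pi U\Pi$ with $\Pi=\ketbra{\psi_0}{\psi_0}$ and $\tilde\Pi=\bo\otimes\ketbra{0}{0}$. The plan is to apply \cref{thm:U_Phi} to $A$ with the phases $\Phi$ of \cref{prop:explicit-phase} (parameters $\gamma,\delta$), and take $U_\Phi$ to be the resulting phased alternating sequence. The main step is to compute the singular value decomposition of $A$ explicitly. Since $\Pi$ has rank one,
\begin{align}
A=\tilde\Pi U\ketbra{\psi_0}{\psi_0}=a\,\big(\ket{\psi_G}\ket{0}\big)\bra{\psi_0}.
\end{align}
Thus $A$ has the single nonzero singular value $a$, with left singular vector $\ket{\psi_G}\ket{0}$ and right singular vector $\ket{\psi_0}$; here we use that $a\geq\gamma>0$ is real and nonnegative. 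For odd $P$ the transformation acts only on nonzero singular values, so
\begin{align}
P_\Phi^{(SV)}(A)=P_\Phi(a)\,\ket{\psi_G}\ket{0}\bra{\psi_0}.
\end{align}
I expect the main care to be needed here: in checking that the odd singular value transform of a rank-one operator is exactly this, including when $P_\Phi(a)$ is complex.

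With this, \cref{thm:U_Phi} gives
\begin{align}
(\bo\otimes\ketbra{0}{0})\,U_\Phi\ket{\psi_0}=\tilde\Pi U_\Phi\Pi\ket{\psi_0}=P_\Phi(a)\ket{\psi_G}\ket{0}.
\end{align}
Therefore the flag-$0$ branch of $U_\Phi\ket{\psi_0}$ is exactly proportional to $\ket{\psi_G}$, not merely close to it. After renormalization the post-measurement state is $\ket{\psi_G}$, up to the global phase $P_\Phi(a)/|P_\Phi(a)|$. This gives the exactness claim.

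The outcome probability is $|P_\Phi(a)|^2$. Since $\gamma\le a\le1$, property (3) of \cref{prop:explicit-phase} bounds it below by $1-\delta$, and the same proposition gives $m=O(\log(1/\delta)/\gamma)$. For $a=0$, as in the $c=0$ case of \cref{theorem: fixed point amp amp}, $A=0$, so the flag-$0$ branch vanishes and the flag reads $1$ with certainty.

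For the resources, the definition of $U_\Phi$ uses $U$ and $U^\dagger$ alternately, $m$ times in total. Each factor $e^{i\phi(2\Pi-\bo)}$ or $e^{i\phi(2\tilde\Pi-\bo)}$ is realized by computing $\C_\Pi\NOT$ (respectively $\C_{\tilde\Pi}\NOT$) into one shared ancilla, applying $e^{-i\phi\sigma_z}$, and uncomputing. The ancilla returns to $\ket{0}$ each time, so one ancilla qubit suffices. This gives $O(m)$ controlled-projector gates and $m$ single-qubit rotations.
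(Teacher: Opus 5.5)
Your proposal is correct and follows the paper's proof essentially step for step: you write $\tilde\Pi U\Pi = a\,(\ket{\psi_G}\ket{0})\bra{\psi_0}$ as a rank-one operator, apply \cref{thm:U_Phi} with the phases of \cref{prop:explicit-phase} to get $\tilde\Pi U_\Phi\Pi\ket{\psi_0}=P_\Phi(a)\ket{\psi_G}\ket{0}$, and read off both that the post-measurement state is exactly $\ket{\psi_G}$ and that the success probability is $|P_\Phi(a)|^2\ge 1-\delta$. The point you flagged as needing care is settled by the definition $P^{(SV)}(W\Sigma V^\dagger)=WP(\Sigma)V^\dagger$ together with $P_\Phi(0)=0$ for odd $P_\Phi$, so a complex $P_\Phi(a)$ only contributes the harmless global phase you already identified.
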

\begin{proof}
    Let $P_\Phi$ be the odd polynomial approximation of the amplification function of degree $m=O(\frac{\log(1/\delta)}{\gamma})$, expressed as $P=\langle 0|M_\Phi|0\rangle$ with $M_\Phi$ in Eq.~\eqref{eqn:define-MPhi}, and the phases $\Phi$ chosen as in the proof of~\cref{prop:explicit-phase}. Then, $|P(a)|^2\geq 1-\delta$. 
    
    Given such a $P_\Phi$ with phases $\Phi$, the corresponding phased alternating sequence $U_\Phi$ satisfies $\tilde{\Pi}U_\Phi\Pi=P^{(SV)}(\tilde{\Pi}U\Pi)$ (\cref{thm:U_Phi}). By definition, $\tilde{\Pi}U\Pi=a|\psi_G\rangle\otimes|0\rangle\langle\psi_0|$, and $P^{(SV)}(\tilde{\Pi}U\Pi)$ equals $P(a)|\psi_G\rangle\otimes|0\rangle\langle\psi_0|$. Therefore,  
    \begin{equation}
    \begin{aligned}
        U_\Phi|\psi_0\rangle&=\tilde{\Pi}U_\Phi\Pi|\psi_0\rangle +(\bo-\tilde{\Pi})U_\Phi\Pi|\psi_0\rangle\\
        &=P(a)|\psi_G\rangle\otimes|0\rangle +c|\mathrm{junk}\rangle\otimes|1\rangle
    \end{aligned}
    \end{equation}
    where $c$ is the normalization factor for $|\mathrm{junk}\rangle$, and we note that $|\mathrm{junk}\rangle$ may not be the same as $|\psi_B\rangle$. 
    If by measuring the second register we obtain state $|0\rangle$, then the first register is in state $|\psi_G\rangle$ up to a global phase. The probability of measuring 0 in the second register is $|P(a)|^2\geq 1-\delta$.
\end{proof}

Implementing $\C_\Pi \NOT$ costs $O(n^2)$ elementary gates, or $O(n)$ elementary gates with one ancilla qubit; and implementing $\C_{\tilde{\Pi}}\NOT$ costs $O(1)$ elementary gates. Therefore, the overall cost of elementary gates in the above proposition is $O(n \log(1/\delta)/\gamma)$ with access of one ancilla qubit. Computing the phases $\Phi\in \mathbb{R}^m$ for the amplification function costs $O(m ~\polylog(m/\delta))$ binary operations. 

\section{Subspace identifier inspired by Brassard and H{\o}yer's method}
\label{app:second-approach}
An alternative approach to identify a good $H^\perp \setminus K^\perp$ subspace builds on the exact Simon's method of \citet{brassard1997exact}.
We build a quantum circuit that acts on the group register containing the irrep labels $\ket{\lambda}$ and projects it outside the $K^\perp$ subspace. Equivalently, this circuit maps every label $\lambda \in K^\perp$ in the current span to the $\lambda=\bzr$ label, while it keeps elements that contain new span-increasing generators different from the $\lambda=\bzr$ label. 
Before proceeding, we first note the lemma on the decomposition of $H^\perp$. 
 
\subsection{A decomposition lemma}
 
\begin{lemma}[Decomposition of $H^\perp$]
\label{lemma:decompose-H-perp}
Let $G=\mZ_2^n$ and $H^\perp \leq \widehat{G}$. 
Let $K^{\perp}\le H^\perp$ be generated by $S^{\perp}=\{s^{\perp(1)},\ldots,s^{\perp(i)}\}$, and suppose that the encoding matrix $A \in\mathbb{Z}_2^{i\times n}$ with rows $s^{\perp(1)},\ldots,s^{\perp(i)}$ is in RREF with pivot columns $j_1,\ldots,j_i$, i.e., $A_{k,j_k}=s_{j_k}^{\perp(k)}=1  \text{ for } k=1,\ldots,i$. 
Then $H^\perp$ admits a direct-sum decomposition
$
H^\perp = K^{\perp} \oplus Q,
$
where
\begin{equation}
Q := \bigl\{\lambda=(\lambda_1,\cdots,\lambda_n)\in H^\perp \;\big|\; \lambda_{j_1}=\cdots=\lambda_{j_i}=0\bigr\}.
\end{equation}
\end{lemma}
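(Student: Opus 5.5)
The plan is to verify the two conditions of \cref{def:direct-sum-decomposition} directly, using only linear algebra over the field $\mZ_2$ and the RREF pivot structure. First, $Q$ is a subgroup of $H^\perp$: it is the intersection of $H^\perp$ with the subspace of $\mZ_2^n$ on which the coordinates $j_1,\dots,j_i$ vanish. It is therefore closed under XOR and contains $\bzr$. Similarly, $K^\perp\le H^\perp$ by hypothesis.

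For the trivial intersection, let $\lambda\in K^\perp\cap Q$. Since $\lambda\in K^\perp=\mspan(A)$, write $\lambda=\sum_k c_k A_k$ with $c_k\in\mZ_2$. By the RREF property, column $j_k$ has its unique nonzero entry (equal to $1$) in row $k$, so $\lambda_{j_k}=c_k$. Since $\lambda\in Q$, every $\lambda_{j_k}=0$, hence all $c_k=0$ and $\lambda=\bzr$. This is just the uniqueness part of \cref{prop:membership-testing-RREF} specialized to $M=2$.

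For generation, take any $\lambda\in H^\perp$ and apply the decomposition of \cref{prop:membership-testing-RREF}: $\lambda=\sum_k c_kA_k+q$ with $q_{j_k}=0$ for all $k$, where one can take $c_k=\lambda_{j_k}$. The one point needing care is that $q$ must lie in $H^\perp$, not merely in $\mZ_2^n$. This follows because $q=\lambda-\sum_k c_kA_k$, with $\lambda\in H^\perp$ and each $A_k\in K^\perp\le H^\perp$, and $H^\perp$ is a subgroup. Hence $q\in Q$, so $\lambda\in K^\perp+Q$, giving $H^\perp=K^\perp\oplus Q$.

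I do not expect any real obstacle. The only subtle step is checking that the residual $q$ stays inside the unknown subgroup $H^\perp$, and closure under subtraction handles that. The hypothesis $i<n$ in \cref{prop:membership-testing-RREF} plays no role in the argument; the edge case $i=n$ gives $K^\perp=\mZ_2^n=H^\perp$ and $Q=\{\bzr\}$, which is consistent.
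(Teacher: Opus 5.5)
Your proof is correct and follows the same route as the paper, which obtains generation by applying the membership-testing decomposition $\lambda=\sum_k c_kA_k+q$ with $q_{j_k}=0$ and treats the trivial intersection as immediate. You also spell out two points the paper leaves implicit: $K^\perp\cap Q=\{0\}$ holds because each pivot column of the RREF has a single $1$, so $\lambda_{j_k}=c_k$, and the residual $q$ lies in $H^\perp$ by closure under subtraction.
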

\begin{proof}
    $K^{\perp}\cap~ Q=\{0\}$ is satisfied by definition. Proving that  any $\lambda\in H^\perp$ can be decomposed into $\lambda=\mu+ \nu$ where $\mu\in K^{\perp}$ and $\nu\in Q$ amounts to applying the procedure in membership testing~(\cref{prop:membership-testing-RREF-app}). 
\end{proof}
 
As a result, any $\lambda\in H^\perp$ can be written as 
    $\lambda=(\sum_{k=1}^{i}\iota_k s^{\perp(k)})+ q  $
for some $\iota_k \in\{0,1\}$ and $q\in Q$, where $0s^{\perp(k)}:=0$ and $1s^{\perp(k)}:=s^{\perp(k)}$. From the previous proof $\iota_k=\lambda_{j_k}$. 
 
We now state the main result of this section, the construction of a subspace identifier given $S^\perp$. 
 
\begin{figure}[t]
    \centering
    \scalebox{1}{ 
            \Qcircuit @C=1.2em @R=1.0em {
                \barrier[-0cm]{7} &  & \mbox{Step 1} & \barrier[-0cm]{7} &  & \mbox{Step 2} &  &  &  & \barrier[-0cm]{7} &  & \mbox{Step 3} &  &  \\
              \lstick{\ket{\lambda_{1}}} & \qw & \ctrl{4} & \qw & \qw & \targ & \qw & \qw & \qw & \qw & \qw & \qw & \ctrlo{1} & \qw \\
              \lstick{\ket{\lambda_{2}}} & \qw & \qw & \ctrl{4} & \qw & \qw & \qw & \qw & \targ & \qw & \qw & \qw & \ctrlo{1} & \qw \\
              \lstick{\ket{\lambda_{3}}} & \qw & \qw & \qw & \qw & \qw & \targ & \qw & \qw & \targ & \qw & \qw & \ctrlo{1} & \qw \\
              \lstick{\ket{\lambda_{4}}} & \qw & \qw & \qw & \qw & \qw & \qw & \targ & \qw & \qw & \qw & \qw & \ctrlo{3} & \qw \\
              \lstick{\ket{0}_{a_1}} & \qw & \targ & \qw & \qw & \ctrl{-4} & \ctrl{-2} & \ctrl{-1} & \qw & \qw & \qw & \qw & \qw & \qw \\
              \lstick{\ket{0}_{a_2}} & \qw & \qw & \targ & \qw & \qw & \qw & \qw & \ctrl{-4} & \ctrl{-3} & \qw & \qw & \qw & \qw \\
              \lstick{\ket{0}_f} & \qw & \qw & \qw & \qw & \qw & \qw & \qw & \qw & \qw & \qw & \qw & \targ & \qw
            }
    }
    \caption{Subspace identifier circuit $U_\perp$ for the binary case of \cref{lemma:single-shot-success}. As an example, we take $G=\mZ_2^4$ and let $K^\perp$ be generated by $s^{\perp(1)}=1011$ and $s^{\perp(2)}=0110$. }
    \label{fig:subspace-identifier-approach-1}
\end{figure}

\begin{proposition}[Subspace identifier construction]
\label{lemma:single-shot-success}Let $G=\mZ_2^n$ and $H^\perp \leq \widehat{G}$. Let $S^\perp=\{s^{\perp(1)},\ldots,s^{\perp(i)}\}$ be a generating set for a subgroup $K^\perp \leq H^\perp$, and suppose that the encoding matrix $A\in\mathbb{Z}_2^{i\times n}$ with rows $s^{\perp(1)},\ldots,s^{\perp(i)}$ is in RREF with pivot columns $j_1,\ldots,j_i$.  Then one can construct a subspace identifier $U_\perp$ using $i$ auxiliary qubits and $O(ni)$ elementary gates. 
\end{proposition}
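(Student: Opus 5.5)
We build $U_\perp$ directly from the figure: a reversible map on the irrep register that sends every $\lambda\in K^\perp$ to $\bzr$ and every $\lambda\in H^\perp\setminus K^\perp$ to a nonzero $q\in Q$. A NOR test on the transformed register then sets the flag. The algebraic input is \cref{lemma:decompose-H-perp}. Every $\lambda\in H^\perp$ decomposes uniquely as $\lambda=\sum_k \iota_k s^{\perp(k)}+q$ with $q\in Q$ and $\iota_k=\lambda_{j_k}$. Moreover, $\lambda\in K^\perp$ if and only if $q=\bzr$, by the membership test of \cref{prop:membership-testing-RREF-app}.

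\emph{Step 1 (record coefficients).} For each $k=1,\dots,i$, apply a CNOT from $\ket{\lambda_{j_k}}$ onto a fresh ancilla $\ket{0}_{a_k}$. This copies $\iota_k$ and uses $i$ gates and $i$ ancillae.

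\emph{Step 2 (subtract the span component).} For each $k$ and each coordinate $l$ with $s^{\perp(k)}_l=1$, apply a CNOT controlled on $a_k$ targeting $\ket{\lambda_l}$. Over $\mZ_2$ this XORs $\iota_k s^{\perp(k)}$ into the label, so the register becomes $\lambda+\sum_k\iota_k s^{\perp(k)}=q$. It costs at most $ni$ CNOTs, and the map $\ket{\lambda}\ket{0}\mapsto\ket{q}\ket{\iota}$ is a permutation of basis states. Since $A$ is in RREF, each pivot column $j_k$ has a single nonzero entry, in row $k$. Hence this step clears $\lambda_{j_k}$ using only ancilla $a_k$, and it never disturbs the other controls, which are the ancillae and not the $\lambda$ wires. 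This is why $q_{j_k}=0$ exactly.

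\emph{Step 3 (flag).} Apply an $n$-controlled NOT, with all controls negated, from the label register onto $\ket{0}_f$. The flag becomes $1$ exactly when $q=\bzr$, which holds if and only if $\lambda\in K^\perp$. Using \cite{khattar2025rise}, this costs $O(n)$ gates with one clean ancilla. The total count is $O(ni)$ gates and $i$ auxiliary qubits.

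It remains to verify \eqref{eqn:subspace-identifier-def}. By \eqref{eqn:varphi-lambda}, applying $U_\perp$ to $\ket{\varphi_F}\ket{0}_a\ket{0}_f$ gives
\[
\sum_{\lambda\in H^\perp}\sqrt{P(\lambda)}\,\ket{q(\lambda)}\ket{\iota(\lambda)}_a\ket{\varphi_\lambda}\ket{[\lambda\in K^\perp]}_f .
\]
Grouping by flag value, the flag-$0$ branch has squared norm $P(H^\perp\setminus K^\perp)$ and the flag-$1$ branch has squared norm $P(K^\perp)$. The orthogonality needed for these norms holds because the basis map is injective. The main subtlety, and the step we expect to be the main obstacle, is the definition's requirement that measuring the irrep register of $\ket{\psi_G}$ yields an element of $H^\perp\setminus K^\perp$. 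After Steps 1 and 2 that register holds $q$, which lies in $Q\setminus\{\bzr\}\subseteq H^\perp$. However, $q$ itself may not lie outside $K^\perp$ in a way that is useful to the caller, and the caller expects $\lambda$. We resolve this by uncomputing Step 2 after the flag is set, which restores $\ket{\lambda}\ket{\iota}_a$ while leaving the flag intact. Alternatively, we note that the caller can reconstruct $\lambda$ classically from the measured $q$ and $\iota$. With the uncomputation, the flag-$0$ branch carries exactly the labels $\lambda\in H^\perp\setminus K^\perp$, and the gate count at most doubles, so it stays $O(ni)$.
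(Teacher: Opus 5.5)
Your construction is correct and is essentially the paper's proof. You CNOT the pivot bits $\lambda_{j_k}$ into $i$ ancillas, XOR $s^{\perp(k)}$ into the label controlled on ancilla $a_k$ to reduce $\lambda$ to $q\in Q$, and set the flag with a zero-controlled multi-qubit Toffoli, just as the paper does. The only place you diverge, the final uncomputation of Step 2, is harmless but unnecessary, because the ``subtlety'' you flag is not an obstacle: $q=\lambda+\sum_k\iota_k s^{\perp(k)}\in H^\perp$ and $Q\cap K^\perp=\{\bzr\}$, so every nonzero $q$ already lies in $H^\perp\setminus K^\perp$ (with $\langle K^\perp,q\rangle=\langle K^\perp,\lambda\rangle$), which the paper uses to conclude directly.
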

\begin{proof}
If $S^\perp\neq \emptyset$, then $U^\perp$ is the circuit that starts from the output state of the Fourier sampling circuit $|\varphi_F\rangle$, takes $i$ number of qubits in the ancilla register (third register), one flag qubit, and performs the following:
\begin{enumerate}
    \item Classically read the position of pivotal columns $j_1,\cdots,j_i$. 
Perform $i$ CNOT gates where the $j_k$-th qubit of the first register is the control and the $k$-th qubit of the ancilla register is the target; $k$ runs from $1$ to $i$. By~\Cref{lemma:decompose-H-perp}, $H^\perp$ admits a direct sum decomposition $H^\perp=K^\perp\oplus Q$, and the resulting state vector is 
\begin{equation}
\begin{aligned}
    &\sum_{\lambda\in H^\perp}\sqrt{P(\lambda)} |\lambda\rangle\otimes|\varphi_\lambda\rangle\otimes |\lambda_{j_1}\rangle|\lambda_{j_2}\rangle\cdots|\lambda_{j_{i}}\rangle\otimes|0\rangle_f\\
    =& \sum_{\{\iota\}} \sum_{q\in Q}\sqrt{P((\sum_{k=1}^{i}\iota_k s^{\perp(k)})+ q)} \left|(\sum_{k=1}^{i}\iota_k s^{\perp(k)})+ q\right\rangle \otimes |\varphi_{(\sum_{k=1}^{i}\iota_k s^{\perp(k)})+ q}\rangle\otimes |\iota_1\rangle|\iota_2\rangle\cdots|\iota_{i}\rangle\otimes|0\rangle_f.
\end{aligned}
\end{equation}
 
\item If the $k$-th qubit of the ancilla register is 1, apply $x\ra x + s^{\perp(k)}$ to the first register, $k$ runs from $1$ to $i$. 
This step requires at most $ni$ elementary gates. 
\item Apply a zero-controlled multi-qubit Toffoli, which applies an $X$ gate on the flag qubit, controlled on all irrep-register qubits being in $|0\rangle$. 
\end{enumerate}
 
As an illustration, we show an example of the circuit $U_\perp$ in~\cref{fig:subspace-identifier-approach-1}. The effect of applying $U_\perp$ is
\begin{equation}
  \begin{aligned}U_\perp|\varphi_F\rangle|0\rangle_a|0\rangle_f&=\sum_{\{\iota\}}\sum_{q\in Q,q\neq \bzr} \sqrt{P((\sum_{k=1}^{i}\iota_k s^{\perp(k)})+ q)} \left| q\right\rangle \otimes |\varphi_{(\sum_{k=1}^{i}\iota_k s^{\perp(k)})+ q}\rangle\otimes |\iota_1\rangle|\iota_2\rangle\cdots|\iota_{i}\rangle\otimes|0\rangle_f\\
  &+\sum_{\{\iota\}} \sqrt{P((\sum_{k=1}^{i}\iota_k s^{\perp(k)}))} \left| \bzr\right\rangle \otimes |\varphi_{(\sum_{k=1}^{i}\iota_k s^{\perp(k)})}\rangle\otimes |\iota_1\rangle|\iota_2\rangle\cdots|\iota_{i}\rangle\otimes|1\rangle_f.
  \end{aligned}
\end{equation}
The state in the first line is identified with $|\psi_G\rangle\otimes|0\rangle_f$ and the state in the second line is identified with $|\psi_B\rangle\otimes|1\rangle_f$. If the measurement of the flag register is 0, then measuring the irrep register will return a non-zero element of subgroup $Q$, which is, an element of $H^\perp\setminus K^\perp$. 

If $S^\perp=\emptyset$, then ignore steps 1 and 2 and only perform step 3 to obtain the circuit for $U^\perp$. 
\end{proof}

\subsection{Generalization to other abelian groups}
In this section, we generalize the algorithm to the case where $G=\mathbb{Z}_M^n$ with a generic $M\in\mathbb{Z}_+$. If $M$ is prime, the algorithm is identical to the case of $\mZ_2$ by bringing the encoding matrix $A$ to RREF. If $M$ is non-prime, we bring the encoding matrix $A$ to Howell normal form and use the following lemma. 

\begin{lemma}[Decomposition of $H^\perp$, generalized to other abelian groups]
    Let $G=\mathbb{Z}_M^n$, $K^{\perp} \leq H^\perp \leq \hat{G}$ be such that $K^{\perp} = \mspan(A)$
    for some $A \in \mathbb{Z}_M^{i \times n}$ in Howell normal form, with $i$ nonzero rows and 
    pivot columns $j_1 < \cdots < j_{i}$. Then any $\lambda \in H^\perp$ decomposes 
    uniquely as
    \begin{equation}
        \lambda = \sum_{k=1}^{i} \iota_k A_k + q,
    \end{equation}
    where $\iota_k \in [M/A_{k,j_k}]$ for $k = 1, \ldots, i$, and $q_{j_k}\in[A_{k,j_k}]$ for $k = 1, \ldots, i$. In particular, $q\notin K^{\perp}$ if $q\neq \mathbf{0}$. 
\end{lemma}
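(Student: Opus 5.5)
The plan is to derive the decomposition from the membership-testing result for Howell normal form, \cref{prop:membership-testing-HNF-app}. That proposition already holds for every $g\in\mZ_M^n$, so it applies in particular to $\lambda\in H^\perp\le\hat G\cong\mZ_M^n$.

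Applying it directly gives existence and uniqueness of
\[
\lambda=\sum_{k=1}^{i}\iota_k A_k+q,
\]
with $\iota_k\in[M/A_{k,j_k}]$ and $q_{j_k}\in[A_{k,j_k}]$. The coefficients are computed by the greedy procedure of that proof: $\iota_k=\lfloor q^{(k-1)}_{j_k}/A_{k,j_k}\rfloor$, processing the pivots left to right. Uniqueness follows from the same Euclidean-division argument on successive pivot columns, which is already proved there. Setting $\iota_k:=c_k$ recovers exactly the statement.

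For the final claim, suppose $q\neq\bzr$. The same proposition says that $g\in\mspan(A)$ iff its residual is zero. Now apply the decomposition to $g=q$ itself. Since $q_{j_k}\in[A_{k,j_k}]$ for all $k$, the greedy procedure gives $c_k=0$ for every $k$: at pivot $j_1$ the quotient is $0$, nothing is subtracted, and the argument repeats at each later pivot. Hence the residual of $q$ is $q\neq\bzr$, so $q\notin\mspan(A)=K^\perp$.

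I would also note, as the paper does in the binary case, that $q=\lambda-\sum_k\iota_kA_k$ lies in $H^\perp$ because $K^\perp\le H^\perp$. This is what makes a nonzero $q$ a genuine element of $H^\perp\setminus K^\perp$.

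The only subtle point is the greedy claim that all $c_k$ vanish when applied to $q$. It depends on the quotient at column $j_k$ being computed on $q^{(k-1)}=q$, which is unchanged because earlier quotients were zero. The same fact can be obtained from the uniqueness part of the proposition, since $q=\sum_k 0\cdot A_k+q$ is a valid decomposition. I would use that uniqueness argument, as it avoids re-running the algorithm.
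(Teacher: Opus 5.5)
Your proposal is correct and follows the paper's route: the paper proves this lemma simply by invoking the membership-testing procedure of \cref{prop:membership-testing-HNF-app}, whose existence, uniqueness, and ``iff'' statements give the decomposition exactly as you describe. Your derivation of the final claim, by applying uniqueness to the trivial decomposition $q=\sum_k 0\cdot A_k+q$, is valid; it is equivalent to the extended-row argument in that proposition's proof, which shows directly that a nonzero $q$ with $q_{j_k}\in[A_{k,j_k}]$ cannot lie in $\mspan(A)$.
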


The proof is the procedure in membership testing~(\cref{prop:membership-testing-HNF-app}). 
With the above decomposition,~\cref{lemma:single-shot-success} can be implemented in a similar way. Specifically,  define $Q$ as 
\begin{equation}
    Q:=\{\lambda\in H^\perp| \lambda_{j_k}\in [A_{k,j_k}],\forall k=1,\cdots,i\}. 
\end{equation}
We note that $Q$ is in bijection with the quotient group $H^\perp / K^{\perp}$, although $H^\perp$ is in general not the direct sum of $K^{\perp}$ and $Q$. Using $Q$, we can again replace $\sum_{\lambda\in H^\perp}$ by $\sum_{\iota_1\cdots\iota_{i}} \sum_{q\in Q}$ and the rest of~\cref{lemma:single-shot-success} follows.  
\end{document}